\documentclass[aps,twocolumn,showpacs,reprint,superscriptaddress,floatfix,longbibliography]{revtex4-2}

\usepackage{mathrsfs}
\usepackage{xcolor}
\usepackage[dvipsnames]{xcolor}
\usepackage{multirow}
\usepackage{amsthm}
\usepackage{amsmath}
\usepackage{mdframed}
\usepackage{amsfonts}
\usepackage{bbm}
\usepackage{dsfont}
\usepackage[ruled,vlined]{algorithm2e}
\usepackage{comment}
\usepackage{tabularx}
\usepackage{subfigure}
\usepackage{amssymb}
\usepackage[utf8]{inputenc}
\usepackage{graphicx}
\usepackage[english]{babel}
\usepackage{wasysym}
\usepackage{enumerate}
\usepackage{float}
\usepackage{afterpage}
\usepackage{textgreek}
\usepackage{tikz}
\usepackage[most]{tcolorbox}
\usetikzlibrary{decorations.pathreplacing}
\usepackage{cancel}
\usepackage{genyoungtabtikz}
\usepackage{dcolumn}% Align table columns on decimal point
\usepackage{bm}% bold math
\usepackage{bbm}
\usepackage[colorlinks=true, linkcolor=blue, citecolor=blue, urlcolor=blue]{hyperref}
\newcommand{\e}{\text{e}}
\renewcommand{\i}{\text{i}}
\newcommand{\C}{\mathbb{C}}
\newcommand{\R}{\mathbb{R}}
\newcommand{\F}{\mathcal{F}}

\newcommand{\B}{\mathcal{B}}

\renewcommand{\L}{\mathcal{L}}
\newcommand{\I}{\mathbb{I}}
\renewcommand{\H}{\mathcal{H}}
\renewcommand{\O}{\mathcal{O}}
\newcommand{\tr}{\mathrm{Tr}} 
\newcommand{\Tr}{\mbox{Tr}}
\renewcommand{\dag}{\dagger}

\newcommand{\til}{~}
\newcommand{\bt}{`}

\newcommand{\bluecheck}{\textcolor{blue}{$\checkmark ~ $}}
\renewcommand{\bluecheck}{}
\newcommand{\nit}{\noindent}
\newcommand{\spec}{\textup{spec}}

\newcommand{\specdown}{\textup{spec}^\downarrow}
\newcommand{\bn}{\boldsymbol{n}}
\newcommand{\bv}{\boldsymbol{v}}
\newcommand{\bs}{\boldsymbol{s}}
\newcommand{\btt}{\boldsymbol{t}}
\newcommand{\bb}{\boldsymbol{b}}
\newcommand{\ba}{\boldsymbol{a}}
\newcommand{\bx}{\boldsymbol{x}}
\newcommand{\by}{\boldsymbol{y}}
\newcommand{\bz}{\boldsymbol{z}}
\newcommand{\be}{\boldsymbol{e}}
\newcommand{\bbeta}{\boldsymbol{\beta}}
\newcommand{\btheta}{\boldsymbol{\theta}}
\newcommand{\balpha}{\boldsymbol{\alpha}}
\newcommand{\bP}{\boldsymbol{P}}
\newcommand{\bp}{\boldsymbol{p}}
\newcommand{\blam}{\boldsymbol{\lambda}}
\newcommand{\sbs}{\; | \;} %space - bar - space
\newcommand{\bset}{\mathfrak{B}}

\newcommand*{\exclaeq}{\stackrel{\text{!}}{=}}

\newtheorem{defn}{Definition}

\newtheorem{lemm}{Lemma}
\newtheorem{coro}{Corollary}
\newtheorem{thrm}{Theorem}

\newtheorem{conj}{Conjecture}
\newtheorem{prop}{Proposition}

\newcommand{\hideorblue}[1]{}

\definecolor{mygray}{gray}{0.65}

\newcommand{\bra}[1]{\mbox{$\langle #1 |$}}
\newcommand{\ket}[1]{\mbox{$| #1 \rangle$}}
\newcommand{\braket}[1]{\mbox{$\langle #1 \rangle$}}

\newcommand{\evalat}[2]{%
	#1\,
	{%
		\sbox0{$#1\left|#1\right|$}%
		\vrule height \dimexpr\dp0+0.6ex\relax depth \dimexpr\dp0+0.4ex\relax
	}_{\,#2}
}

\newtheorem*{prop*}{Proposition}
\newtheorem{result}{Result}

\begin{document}

\title{Maximal Orbital Entanglement from Non-Interacting Fermions}

\author{Damiano Aliverti-Piuri}
\affiliation{Department of Physics, Arnold Sommerfeld Center for Theoretical Physics,
	Ludwig-Maximilians-Universit\"at M\"unchen, Theresienstrasse 37, 80333 M\" unchen, Germany}
\affiliation{Munich Center for Quantum Science and Technology (MCQST), Schellingstrasse 4, 80799 M\"unchen, Germany}

\author{Julia Liebert}
\affiliation{Department of Chemistry, Princeton University, Princeton, NJ 08540, USA}
\affiliation{Department of Physics, Arnold Sommerfeld Center for Theoretical Physics,
	Ludwig-Maximilians-Universit\"at M\"unchen, Theresienstrasse 37, 80333 M\" unchen, Germany}
\affiliation{Munich Center for Quantum Science and Technology (MCQST), Schellingstrasse 4, 80799 M\"unchen, Germany}

\author{Christian Schilling}
\email{christian.schilling@unige.ch}
\affiliation{Department of Physical Chemistry, University of Geneva, 30 Quai Ernest Ansermet, 1211 Geneva, Switzerland}
\affiliation{Department of Physics, Arnold Sommerfeld Center for Theoretical Physics,
	Ludwig-Maximilians-Universit\"at M\"unchen, Theresienstrasse 37, 80333 M\" unchen, Germany}
\affiliation{Munich Center for Quantum Science and Technology (MCQST), Schellingstrasse 4, 80799 M\"unchen, Germany}

\date{\today}
\pacs{}

\begin{abstract}
Orbital rotations are used in quantum many-body physics and chemistry to reduce the complexity of quantum states by compressing their correlation into a small set of relevant orbitals. In striking contrast, quantum information processing strives for the opposite: orbitals that maximize entanglement as a resource. Motivated by this, we prove that Slater determinants of $N$ fermions on a lattice of $d_A + d_B = d$ sites, although free of particle correlation, can carry up to $\min\{d_A, d_B, N, d-N\}$ ebits of orbital entanglement. Given enough particles and holes, this equals the maximum over all fermionic states. For arbitrary free states, we derive the maximal orbital total correlation in closed form for all bipartitions, the finest partition, and all partitions of up to five sites, using new majorization relations for Hermitian matrices. Taken together, these results turn mean-field states, the starting point of most many-body methods in physics and chemistry, into a viable resource for quantum information processing.
\end{abstract}

\maketitle

%\tableofcontents

\section{Introduction} \label{sec:intro}

Entanglement, a central feature of quantum mechanics~\cite{EPR1935, schrodinger1935, bell1964}, is a key resource for quantum information processing~\cite{horodecki_entanglement, resource_theory_chitambar, acin2018roadmap}. Its quantification is based on the paradigm of local operations and classical communication (LOCC), which requires a division of the system into subsystems~\cite{horodecki_entanglement}. This division is not unique, and the entanglement of a given state depends on the chosen subsystems~\cite{zanardi, 2004zanardi_ent_relativity_observables}. For identical particles, this \emph{entanglement relativity} is fundamental~\cite{DingSchilling2020paradox, AlivertiPiuri2024QIT}. In second quantization, groups of orthonormal orbitals form proper subsystems, and the natural notion is \emph{orbital} entanglement~\cite{ciancio_zanardi,zanardi_mode_pic, shi_mode_entanglement, friis_mode_ent, Friis_2016,ent_ferm_sys_banuls, wolf_2009_mode_ent_fermions, lexin_orb_ent, kraus_mode_ent, wisemanvaccaro, 2018_experiment_ent_between_modes, Szalay2021fermionicQI}. It naturally appears in tensor-network methods and in quantum chemistry~\cite{white_orb_corr, Szalay2015tensor, orb_ent_quantum_chem}. In first quantization, in contrast, single particles are not subsystems in the LOCC sense~\cite{AlivertiPiuri2024QIT}. \emph{Particle} entanglement~\cite{schliemanncirac,ECKERT200288, 2bosonschmidtdeco,  gigena2015,Iemini_quantumness_of_corr,ieminiquantumness, Li_ent_two_id_parts, vanhala_complexity_2024, yaroslav_classifyingfermionicstatesmanybody, LoFranco_id_particles, Balachandran_GNS_PRL, adesso_yadin_resource,plenio_extracting_ent, Giorgadze2025characterizing} is instead understood as the deviation of a state from the Slater determinants (SDs) or, more generally, free states (number-conserving Gaussian states), and quantifies the intrinsic complexity of a many-body state~\cite{AlivertiPiuri2024QIT, Liao2024complexity}.

\begin{figure}[t]
	\centering
	\includegraphics[width=0.47\textwidth]{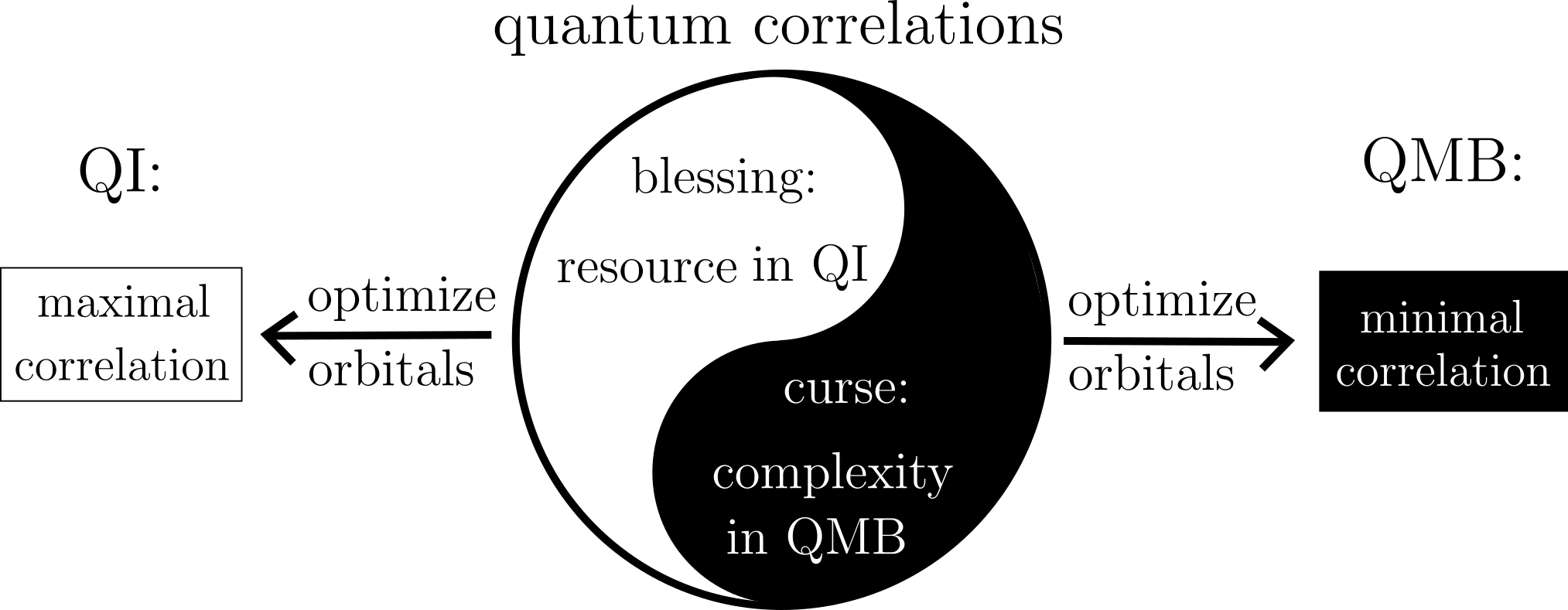}
	\caption{Yin and yang of orbital rotations: maximizing orbital correlation as a resource in quantum information (QI) versus minimizing it to reduce representational complexity in quantum many-body physics (QMB); see text for details.}
	\label{fig:yinyang}
\end{figure}

The two pictures are linked by orbital rotations, i.e., one-particle (1P) unitaries, which leave the particle correlation invariant but change the orbital correlation. In computational physics and chemistry, orbitals are rotated to reduce the representational complexity of a state, i.e., to compress its correlation into a small set of relevant orbitals, e.g., in tensor-network methods~\cite{Krumnow2016orbitalopt, Mate2023compressing, Friesecke2024globalmode}, quantum chemistry~\cite{DingKnechtSchilling2023QICAS, Liao2024complexity, Ding2022faithful}, and quantum algorithms~\cite{Mizukami2020OOUCC, Li2025EMO}. For free states, which are uncorrelated in their natural orbitals, the minimal orbital correlation vanishes~\cite{AlivertiPiuri2024QIT}. Quantum information processing strives for the opposite (Fig.~\ref{fig:yinyang}), namely orbitals that \emph{maximize} the entanglement, rendering a state maximally resourceful from an operational point of view~\cite{DingKnecht2023bonding, ding2026bonding, gesa_lexin_2024, 2025troiani_GME_from_SDs}. Free states are ideal candidates: they are ubiquitous in physics and chemistry~\cite{fetter_walecka, Szab1982ModernQC}, can be prepared efficiently on quantum hardware~\cite{Arute2020HF}, and the orbital rotations connecting them decompose into beam-splitter-type rotations of orbital pairs~\cite{zeilinger_unitary_breakdown, 2013_bocquillon_fermionicbeamsplitters, bloch_review, kim_double_qdot}. However, unlike the minimization, the maximization has been explored only for specific states~\cite{DingKnecht2023bonding, 2025troiani_GME_from_SDs}, on average~\cite{hackl2017ent_eigstates, 2021bianchi_average_ent_over_gauss, 2021Hackl_Gaussian_states, 2022_bianchi_tutorial, yale2024_average_ent_fixed_N, 2025hackl_average_mutual_info}, or for bosonic and distinguishable-particle systems~\cite{ciancio_zanardi, wolf_ent_power, Jevtic2012_unitary_orbits_PRL}. This leaves open a fundamental question that bridges the particle and orbital pictures and addresses the maximization side of Fig.~\ref{fig:yinyang}: \emph{How much orbital entanglement and total correlation can non-interacting fermions, although uncorrelated in the particle picture, carry under optimal orbital rotations?}

\begin{table*}
\begin{center}
\renewcommand{\arraystretch}{1.4}

	\centering
	\begin{tabular}{|llll|}
		\hline
		\multicolumn{4}{|l|}{\textbf{Roadmap of results}}
		\\ \hline
		\multicolumn{1}{|l|}{\textbf{State class}} & \multicolumn{1}{l|}{\textbf{Partition} $\blam$} & \multicolumn{1}{l|}{\textbf{Status}} & \textbf{Result} \\ \hline
		\multicolumn{1}{|l|}{Slater determinants}         & \multicolumn{1}{l|}{$(\lambda_1, \lambda_2)$}       & \multicolumn{1}{l|}{\textcolor{Green}{\textbullet} Solved}       & $E_{\max} = \min \{ \lambda_1, \lambda_2, N, d-N \}$ ebits \hfill [Result~\ref{prop:Emax_SD_bipartition}]         \\ \hline
		\multicolumn{1}{|l|}{Slater determinants}         & \multicolumn{1}{l|}{Any}       & \multicolumn{1}{l|}{\textcolor{YellowOrange}{\textbullet} Partial}       & Trivial-eigenvalue pattern; small-blocks reduction \hfill [App.~\ref{app:SDs}]         \\ \hline
		\multicolumn{1}{|l|}{Free states}         & \multicolumn{1}{l|}{$(\lambda_1, \lambda_2)$}       & \multicolumn{1}{l|}{\textcolor{Green}{\textbullet} Solved}       & $\bb_{\max}$ contains averages $(n_j + n_{d+1-j})/2$ \hfill [Result~\ref{res:bipartitions}]         \\ \hline
		\multicolumn{1}{|l|}{Free states}         & \multicolumn{1}{l|}{$(1, \ldots ,1)$}       & \multicolumn{1}{l|}{\textcolor{Green}{\textbullet} Solved}       & $\bb_{\max} = (\eta \sbs \ldots \sbs \eta)$, with $\eta = \sum_j n_j / d$ \hfill [Result~\ref{res:finest}]         \\ \hline
		\multicolumn{1}{|l|}{Free states}         & \multicolumn{1}{l|}{Any, $d \leq 5$}       & \multicolumn{1}{l|}{\textcolor{Green}{\textbullet} Solved}       & $\bb_{\max}$ explicitly, regime by regime \hfill [Result~\ref{res:small_d}]         \\ \hline
		\multicolumn{1}{|l|}{Free states}         & \multicolumn{1}{l|}{Any, $d \geq 6$}       & \multicolumn{1}{l|}{\textcolor{red}{\textbullet} Open}       & $\bb_{\max}$ conjectured $\prec$-least and with averages of $\bn$ entries \hfill [Conjs.~\ref{conj:1},~\ref{conj:2}]         \\ \hline
		\multicolumn{1}{|l|}{Local particle-number SSR}         & \multicolumn{1}{l|}{$d \leq 3$; $(2,2)$}       & \multicolumn{1}{l|}{\textcolor{YellowOrange}{\textbullet} Partial}       & Same optimal unitary; loss vanishes as $N \to \infty$ \hfill [Sec.~\ref{subsec:extensions_SSR}, App.~\ref{app:SSR}]         \\ \hline
		\multicolumn{1}{|l|}{Pure Gaussian states}         & \multicolumn{1}{l|}{$(k,k)$}       & \multicolumn{1}{l|}{\textcolor{Green}{\textbullet} Solved}       & $E^G_{\max} = k$ ebits \hfill [Prop.~\ref{prop:gaus}]         \\ \hline
		\multicolumn{1}{|l|}{Pure/mixed Gaussian states}         & \multicolumn{1}{l|}{Any}       & \multicolumn{1}{l|}{\textcolor{red}{\textbullet} Open}       & Needs a covariance-matrix-based generalization of our method         \\ \hline
	\end{tabular}
	\caption{Roadmap of the fermionic orbital entanglement and correlation maximization problems addressed in this work. The parameter vector $\bn$ contains the natural occupation numbers, and $\bb_{\max}$ denotes the maximizer of our optimization problem (as recast in Prop.~\ref{prop:Imax_max_over_bset}). The free-state maximizers carry over to bosonic free states, which, however, carry no entanglement (see App.~\ref{app:bosons}).}
	\label{tab:roadmap}
\end{center}
\end{table*}

Answering this question is the central goal and main achievement of this work. To this end, we use the fact that free states are fully determined by their one-particle reduced density matrices, which turns the problem into a spectral optimization for $d \times d$ matrices governed by majorization (Fig.~\ref{fig:strategy}). We first prove that the maximal orbital entanglement of an SD of $N$ fermions on a lattice of $d_A + d_B = d$ sites is $\min\{d_A, d_B, N, d-N\}$ ebits, reached by beam-splitter-type rotations of orbital pairs (Result~\ref{prop:Emax_SD_bipartition}). For enough particles and holes, this equals the maximum over all fermionic states. We then maximize the orbital total correlation of arbitrary, possibly mixed, free states under orbital rotations. We solve this problem for all bipartitions (Result~\ref{res:bipartitions}) based on a new majorization relation for blocks of Hermitian matrices (Thm.~\ref{thrm:mikes_maj_rel}), for the finest partition (Result~\ref{res:finest}), and for all partitions of up to five sites (Result~\ref{res:small_d}), and we conjecture the form of its general solution (Conjs.~\ref{conj:1} and~\ref{conj:2}). We also show that natural occupation numbers bound the orbital bipartite entanglement of any fermionic pure state (Result~\ref{res:entanglement_bound}) and analyze the effect of superselection rules (SSRs)~\cite{wisemanvaccaro, ent_ferm_sys_banuls, Friis_2016}. Table~\ref{tab:roadmap} summarizes all results.

These results identify mean-field states as entanglement reservoirs and complement orbital optimization for minimal complexity with its maximization counterpart. Since bosonic free states are separable (Appendix~\ref{app:bosons}), orbital entanglement is, among free states, a genuinely fermionic resource.

The paper is organized as follows. Section~\ref{sec:main_problem} introduces the basic problem for SDs and states its solution, which is proven in Sec.~\ref{sec:math_structure}; there, we also prepare the mathematical framework for the general problem. Section~\ref{sec:generalizing} formulates the general problem for free states, Sec.~\ref{sec:math_struc_gen_prob} studies its structure, and Sec.~\ref{sec:solutions_general} derives its solutions. Section~\ref{sec:extensions} discusses SSRs and pure fermionic Gaussian states, and Sec.~\ref{sec:conclusions} concludes. Technical details are given in the appendices.

\section{Basic problem: Maximal bipartite orbital entanglement} \label{sec:main_problem}

In this section, we introduce, motivate, and solve the basic problem of this work: the maximal bipartite orbital entanglement attainable by $N$-fermion Slater determinants. After a single-particle example, we define the setting and formulate the problem; its solution is Result~\ref{prop:Emax_SD_bipartition}, proven in Sec.~\ref{sec:math_structure}.

As a first example, consider a single particle delocalized across two spatial regions. Let its orbital state be
\begin{equation} \label{eq:left+right_vector}
	\ket{\psi} = \frac{1}{\sqrt{2}} ( \ket{\phi_1} + \ket{\phi_2} ) \; ,
\end{equation}
where $\ket{\phi_j}$ is supported in region $j$. This state has no particle entanglement, since there is only one particle. Yet the two orthonormal orbitals $\ket{\phi_1}$ and $\ket{\phi_2}$ are entangled: if a particle detector placed in the first region detects a particle, the one in the second region will not. This point of view is captured by the occupation-number representation
\begin{equation}\label{eq:left+right_vector_occupations}
\ket{\psi} =\frac{1}{\sqrt{2}}(\ket{10} + \ket{01})
\end{equation}
of the same state. If the SSRs~\cite{wisemanvaccaro} are ignored, the occupation-number representation~\eqref{eq:left+right_vector_occupations} shows that there is one ebit of entanglement between the two spatial regions. This entanglement value is \emph{maximal}. Indeed, any given 1P pure state can be decomposed as a linear combination of two normalized orbitals, each of which is supported in one of the two regions. This enables any state to be represented as a two-qubit state, by using the occupation-number representation relative to these two orbitals:
\begin{equation}\label{eq:left+right_vector_occupations_alphabeta}
\ket{\psi} =\alpha \ket{10} + \beta \ket{01} \; ,
\end{equation}
where $\alpha, \beta \in \C$ satisfy $|\alpha|^2 + | \beta |^2 = 1$. Most measures (e.g., entanglement entropy) of the orbital entanglement of the above state are maximized if and only if $|\alpha| = |\beta| = 1/\sqrt{2}$, as in Eq.~\eqref{eq:left+right_vector_occupations}.

We now generalize this observation to many-body states. We consider a system of $N$ identical fermions positioned on a bipartite lattice of $d$ sites of arbitrary geometry and dimension, which typically represent nuclear positions. We focus on spinless fermions (or, equivalently, describe spin-orbitals as lattice sites). The lattice is partitioned into a first region, consisting of the sites $j = 1, \dots , \lambda_1$, and a second region, consisting of the sites $j = \lambda_1 + 1, \dots , \lambda_1 + \lambda_2 = d$, where $\blam = (\lambda_1, \lambda_2)$ is a bipartition of the integer $d$. We write $\lambda_{1/2}$ instead of $d_{A/B}$, since this notation extends directly to multipartitions $\blam = (\lambda_1, \ldots, \lambda_k)$ of $d$. Our aim is to study the orbital entanglement between the regions. As in Hubbard-like models, we assume that only one relevant fermionic orbital $\ket{j}$ is associated with any given site $j$: each site can be empty or singly occupied; hence $d \geq N$. The orbitals $\ket{1}, \ldots, \ket{d}$ are assumed orthonormal, and form the reference orthonormal basis (ONB) of the finite-dimensional 1P space $\mathcal{H}_1 \cong \C^d$. The corresponding fermionic Fock space of many-body state vectors will be denoted by $\F_{\H_1}$.

Suppose that the $N$ fermions are in a pure, non-interacting state, i.e., they are described by an SD state vector
\begin{equation}
	\ket{SD_0} = f^{\dag}_{\phi_1} \dots f^{\dag}_{\phi_N} \ket{ 0 } \; ,
\end{equation}
where $\ket{\phi_1}, \dots, \ket{\phi_N} \in \H_1 $ are arbitrary orthonormal orbitals and $f_{\phi_j}^\dagger$ the corresponding creation operators. For example, the fermions may be localized on the lattice sites (e.g., $\ket{\phi_j} = \ket{j}$ for $j = 1, \dots, N$, as in the upper panel of Fig.~\ref{fig:experiment}) or in momentum space (i.e., the vectors $\ket{\phi_k} \propto \sum_{j} \e^{\i 2 \pi j \cdot k/d} \ket{j}$, $k \in \{ 0, \ldots, d-1 \}$, are lattice plane waves). While the particles are unentangled, in general there is \emph{orbital} entanglement between the two lattice regions. This is a consequence of the fact that the occupied orbitals $\phi_j$ are generally delocalized across both regions of the lattice, as in the situation described by Eq.~\eqref{eq:left+right_vector}. We regard this orbital entanglement as a \emph{resource}~\cite{resource_theory_chitambar} to be harvested from the system (e.g., by transferring the system state to a quantum register of qubits~\cite{wiseman_vaccaro_ferreting, gesa_lexin_2024}). To \emph{maximize} this resource, we assume the ability to apply external fields or potentials, use beam splitters, or, more generally, interact with the system in a way that allows the $N$ fermions to undergo an arbitrary 1P (i.e., non-interacting) unitary evolution---see Fig.~\ref{fig:experiment}. In short, for any unitary $u$ on $\H_1$, we assume it is experimentally possible to change the initial SD to
\begin{equation}
	\ket{SD} = u^{\otimes N} \ket{SD_0} = f^{\dag}_{\phi'_1} \dots f^{\dag}_{\phi'_N} \ket{ 0 } \; .
\end{equation}
This is a new, generic SD built from the transformed orbitals $\ket{\phi'_j} = u \ket{\phi_j}$\footnote{Note the common, slight abuse of notation: the operator $u^{\otimes N}$, defined on $\H_1^{\otimes N}$, here acts on $N$-fermion vectors in $\F_{\H_1}$.}.

We now analytically compute the \emph{maximal orbital entanglement entropy} $E_{\max}$ between the lattice regions that can be achieved through 1P unitary evolution, as illustrated in Fig.~\ref{fig:experiment}.

	\begin{figure}[h!]
	\centering
	\includegraphics[width=0.45\textwidth]{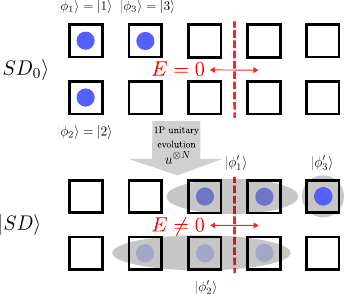}
	\caption{1P unitary evolution of an SD of three fermions (circles) in a lattice of $6 + 4$ sites (squares), numbered from top to bottom and from left to right. The initial SD $\ket{SD_0} = f_{\phi_1}^\dagger f_{\phi_2}^\dagger f_{\phi_3}^\dagger \ket{0}$ (above) exhibits no orbital entanglement. The evolution is described by a unitary on $\H_1$ mapping $\ket{\phi_j} \mapsto u \ket{\phi_j} = \ket{\phi'_j}$. $\ket{SD_0}$ is transformed into $\ket{SD} = f_{\phi'_1}^\dagger f_{\phi'_2}^\dagger f_{\phi'_3}^\dagger \ket{0}$ (below). Due to the delocalization of the new orbitals $\ket{\phi'_1}$ and $\ket{\phi'_2}$, $\ket{SD}$ exhibits exploitable orbital entanglement $E$, which we intend to maximize.}
	\label{fig:experiment}
\end{figure}

We quantify the orbital entanglement of a pure $N$-fermion state vector $\ket{\Psi}$ using the entanglement entropy
\begin{equation} \label{eq:def_entanglement}
	E(\ket{\Psi}) = S(\rho_{1/2}) = - \tr [ \rho_{1/2} \log \rho_{1/2} ] \; ,
\end{equation}
i.e., the von Neumann entropy of the orbital-reduced state $\rho_{1/2}$ of either region of the lattice. These concepts are reviewed in Ref.~\cite{Friis_2016}; see also Appendix~\ref{app:mutual_info_fermions}. Since we are considering a global pure state with a definite particle number (and thus a definite number parity), $\rho_1$ and $\rho_2$ have the same nonzero eigenvalues~\cite{Friis_2016}, and hence the same entropy---just as in non-fermionic bipartite systems. We emphasize that this is generally false for state vectors with both even- and odd-parity components~\cite{Friis_2016}.

For such pure bipartite states, $E$ is directly related to the orbital \emph{total correlation}, i.e., the quantum mutual information $C = S(\rho_1) + S(\rho_2) - S(\rho)$ (see Appendix~\ref{app:mutual_info_fermions}): since $S(\rho) = 0$ and $\rho_1$, $\rho_2$ are isospectral, $C = 2E$. %Geometrically, $C$ and $E$ are the minimal relative entropies of $\rho$ to the uncorrelated states~\cite{ModiVedralrecapgeom} and to the separable states~\cite{VedralPlenio1998} (for fermionic separability, see Ref.~\cite{ent_ferm_sys_banuls}), attained at $\rho_1 \otimes \rho_2$ and at the Schmidt-dephased state, respectively; for pure bipartite states, both are fixed by the spectrum of $\rho_1$.
Maximizing $E$ or $C$ over SDs is thus the same problem.

With this framework established, we turn to the central question of this work.

\begin{tcolorbox}[
colframe=blue!70!black, % Line color
colback=blue!17,         % Background color
boxrule=1.2pt,          % Line width
arc=6pt,                % Corner radius (increase for more "roundness")
boxsep=0pt,
left=10pt,              % Inner margins
right=10pt,
top=7pt,
bottom=7pt,
enhanced            
]
\textbf{Basic Scientific Problem.} For a lattice divided into two subsystems of $\lambda_1$ and $\lambda_2$ sites, determine the maximal orbital entanglement entropy between them that a Slater determinant of $N$ fermions can attain,
\begin{equation} \label{eq:E_max}
	E_{\max} := \max_{u \in \mathcal{U}(\H_1)} E\!\left(u^{\otimes N} \ket{SD_0}\right) = \max_{|SD\rangle} \! E(\ket{SD}) \; .
\end{equation}
\end{tcolorbox}
Here, the first maximization is over all unitaries $u$ on $\H_1$, while the second, equivalent one is over all $N$-fermion Slater determinants $\ket{SD}$. In either formulation, notice that we are maximizing a continuous function over a compact set, so the existence of maximizers and maxima is guaranteed. Note that if we were interested in \emph{minimizing} $E$ over Slater determinants, the result would be trivially zero: for example, Slater determinants obtained by filling some of the lattice site orbitals $\ket{j}$ have zero orbital entanglement.

$E_{\max}$ implicitly depends on the numbers $N$ of fermions, $d$ of lattice sites, and $\lambda_1$ of sites in the first region (or, equivalently, $\lambda_2 = d - \lambda_1$), but not on the initial SD, $\ket{SD_0}$. Additionally, $E_{\max}$ remains unchanged if the number of fermions, $N$, and the number of \emph{holes}, $d - N$, are swapped. Indeed, the orbital entanglement $E$ of any given $N$-particle SD $f^{\dag}_{\phi_1} \dots f^{\dag}_{\phi_N} \ket{ 0 }$ equals the orbital entanglement of the SD $f_{\phi_1} \dots f_{\phi_N} \ket{ F }$ of $N$ holes (i.e., $d-N$ particles), where $\ket{F}$ is the \bt full' state vector of $d$ fermions. This particle-hole symmetry property of our problem will simplify the discussion of Result~\ref{prop:Emax_SD_bipartition}, our first main result.

Following Refs.~\cite{zanardi, 2004zanardi_ent_relativity_observables, ciancio_zanardi} on entanglement relativity, $E_{\max}$ can be reinterpreted from a `passive' rather than `active' point of view, by considering a \emph{fixed} SD and transformations of the \emph{orbitals} that define the two complementary subsystems. To that end, consider a fixed $N$-fermion SD, built from orbitals in $\H_1$. The corresponding orbital entanglement between the first and second regions of the lattice can most conveniently be associated with the \bt splitting' of $\H_1$ as the direct sum of the orthogonal subspaces $V_1 = \text{span}(\ket{1}, \dots , \ket{\lambda_1})$ and $V_2 = \text{span}(\ket{\lambda_1 + 1}, \dots , \ket{d})$. In fact, this entanglement is often viewed as entanglement \emph{between} these two subspaces. Analogously, one can define the orbital entanglement between two arbitrary orthogonal subspaces $W_1$ and $W_2$ of $\H_1$. For example, $W_{1}$ ($W_2$) could be the span of lattice plane waves whose energy lies below (above) a certain threshold. From this point of view, $E_{\max}$ can be interpreted as the bipartite orbital entanglement of the fixed SD between $W_1$ and $W_2$ maximized over \emph{any} pair of orthogonal subspaces $W_1$ and $W_2$ of $\H_1$ such that $W_1 \oplus W_2 = \H_1$, $\text{dim}(W_1) = \lambda_1$, and $\text{dim}(W_2) = \lambda_2$. We will return to this alternative formulation at several points below. This freedom of choosing the orbitals has been exploited in two opposite directions. Minimizing the orbital correlation over orbital rotations simplifies many-body wave functions and their numerical treatment~\cite{Krumnow2016orbitalopt, DingKnechtSchilling2023QICAS, Liao2024complexity}, and its minimum defines the particle correlation~\cite{AlivertiPiuri2024QIT}. Maximizing the orbital entanglement, in turn, reveals chemical bonds as entanglement between atomic-like orbitals~\cite{DingKnecht2023bonding, ding2026bonding}.

The value of $E_{\max}$ for arbitrary $N, d, \lambda_{1}$, and $\lambda_2 = d - \lambda_1$ is presented in the following result.
\begin{result}[Slater determinants] \label{prop:Emax_SD_bipartition}
	For $N$ fermions on a lattice of $d$ sites, the maximal orbital entanglement entropy $E_{\max}$ [Eq.~\eqref{eq:E_max}] that an SD can attain between a region of $\lambda_1$ sites and the complementary region of $\lambda_2 = d - \lambda_1$ sites evaluates to
	\begin{equation} \label{eq:Emax_result}
		E_{\max} = \min\{ \lambda_1, \lambda_2, N, d - N \} \times \log 2 \; ;
	\end{equation}
    %in particular, if $N, d- N \geq \min \{ \lambda_1, \lambda_2 \}$  $E_{\max}$ equals the maximum over arbitrary pure states in $\F_{\H_1}$, i.e., $\min \{ \lambda_1, \lambda_2 \} \times \log 2$.
    in particular, $E_{\max}$ saturates the maximal orbital entanglement entropy of \emph{any} parity-superselected pure state in $\F_{\H_1}$, namely $\min \{ \lambda_1, \lambda_2 \} \times \log 2$, if and only if $\min \{ N, d- N \} \geq \min \{ \lambda_1, \lambda_2 \}$.
    
	Under the non-restrictive assumptions $\lambda_1 \geq \lambda_2$ and $N \leq d - N$, we can construct an SD with orbital entanglement $E_{\max}$ by placing as many fermions as possible in the orbitals 
	\begin{equation}
		\frac{\ket{1} + \ket{\lambda_1 + 1}}{\sqrt{2}}, \frac{\ket{2} + \ket{\lambda_1 + 2}}{\sqrt{2}}, \ldots \; ;
	\end{equation}
	if $N > \lambda_2$, the remaining $N - \lambda_2$ fermions are placed in the orbitals $\ket{\lambda_2 + 1}, \ket{\lambda_2 + 2}, \dots, \ket{N}$ of the first region.
	
\end{result}
The assumptions $\lambda_1 \geq \lambda_2$ and $N \leq d - N$ in the second part of Result~\ref{prop:Emax_SD_bipartition} are non-restrictive in the following sense: if they do not hold, one can simply swap the roles of the first and second regions of the lattice, or of fermions and holes, respectively. In Sec.~\ref{sec:math_structure}, we will prove the first part of Result~\ref{prop:Emax_SD_bipartition}, i.e., Eq.~\eqref{eq:Emax_result}, by reformulating the maximization over $N$-fermion SDs as one over their idempotent rank-$N$ one-particle reduced density matrices (1RDMs). Here, we explain the physical insights behind the construction of the optimal SD described in the result and illustrated in Fig.~\ref{fig:max_SD_construction}. The construction will show that the optimal SD indeed has entanglement $E_{\max}$, as given by Eq.~\eqref{eq:Emax_result}, thus proving the second part of Result~\ref{prop:Emax_SD_bipartition}.
	
\begin{figure}[h!]
	\centering
	\includegraphics[width=0.45\textwidth]{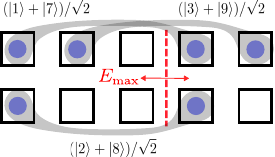}
	\caption{Representation of the orbitals involved in the construction of the optimal SD described in Result~\ref{prop:Emax_SD_bipartition} for three fermions in a lattice of $6 + 4$ sites.}
	\label{fig:max_SD_construction}
\end{figure}

The first key observation is as follows: for $\mu \in \{1, 2\}$, suppose that the number $N$ of fermions is strictly less than the number $\lambda_\mu$ of lattice sites in region $\mu$. Then this region must contain at least $\lambda_\mu - N$ holes, that is, there must be $\lambda_\mu - N$ \emph{empty orbitals} in region $\mu$. Similarly, a number of holes $d - N$ strictly less than $\lambda_\mu$ implies the existence of fully occupied orbitals. This reasoning is clarified by applying Lemma~\ref{lem:evalsubmatrix} to the diagonal blocks of the 1RDM of any SD, whose eigenvalues $0$ and $1$ are highly degenerate (see Appendix~\ref{app:normal_form}). These constraints on the distribution of particles and holes limit the achievable orbital entanglement, because completely empty or completely occupied orbitals are not entangled with the remaining ones. Taking these constraints into account, one aims to arrange the remaining unconstrained particles and holes so as to maximize orbital entanglement between the lattice sites of the first and second regions. One can try to do so by placing a particle in an orbital $\cos \theta_1 \ket{1} +  \sin \theta_1 \ket{\lambda_1 + 1}$ that is a superposition of a basis orbital from the first region and one from the second region; then another in $\cos \theta_2 \ket{2} + \sin \theta_2 \ket{\lambda_1 + 2}$; and so on (see Fig.~\ref{fig:max_SD_construction}). Under the assumptions $\lambda_1 \geq \lambda_2$ and $N \leq d - N$, this procedure continues until one runs out of either particles, if $N \leq  \lambda_2$, or second-region basis orbitals, if $N \geq  \lambda_2$. In the latter case, one simply places the remaining particles in empty orbitals in the first region. To maximize entanglement, it is convenient to choose balanced, or unbiased, superpositions of orbitals, i.e., $\cos \theta_j = \sin \theta_j = 1/\sqrt{2}$, up to irrelevant phases. The entanglement of this SD can be computed by treating each pair of orbitals $\ket{j}$ and $\ket{\lambda_1 + j}$ that share a fermion as a pair of maximally entangled qubits. Each such pair contributes one ebit of entanglement. Straightforward bookkeeping shows that the entanglement is given by the right-hand side of Eq.~\eqref{eq:Emax_result}. For chemical bond states, i.e., SDs of doubly occupied bonding orbitals, the analogous rotations of bonding and antibonding orbital pairs by $\pi/4$ were found to yield maximal orbital entanglement~\cite{DingKnecht2023bonding}.

Starting from an initial SD $\ket{SD_0}$ where the $N$ fermions are positioned on lattice sites, $\ket{SD_0}$ can be transformed into the optimal SD using a sequence of $\min \{ \lambda_1, \lambda_2, N, d-N \}$ elementary two-orbital unitaries. As discussed in greater detail in Sec.~\ref{subsec:2D_space}, these unitaries are straightforward to implement in experiments, which paves the way for entanglement-extraction protocols.

Finally, we remark that the above construction of SDs (with unconstrained angles $\theta_j$) actually recovers \emph{all} $N$-fermion SDs up to local (and thus entanglement-preserving) 1P unitaries, i.e., 1P unitaries that involve orbitals from one region only. This is the subject of Appendix~\ref{app:normal_form}, where a normal form of SDs relative to a bipartite 1P space is presented. As a consequence, the maximally entangled SD constructed above is the unique entanglement maximizer up to local 1P unitaries.	

\section{Mathematical structure of the basic problem} \label{sec:math_structure}
In this section, we first prove Result~\ref{prop:Emax_SD_bipartition} (Sec.~\ref{subsec:proof_of_main_theorem}), based on the description of SDs in terms of their 1RDMs. We then study (Sec.~\ref{subsec:towards_general}) the mathematical structure of our maximization problem in more detail. This prepares the ground for the generalized problem of maximizing the multipartite correlation of a free state over 1P unitaries (see Sec.~\ref{sec:generalizing}).

\subsection{Computation of $E_{\max}$} \label{subsec:proof_of_main_theorem}
Our maximization problem $E_{\max} = \max_{|SD\rangle} E(\ket{SD})$ can be broken down into three tasks:
\begin{enumerate} [ {(}1{)}]
	\item compute the orbital-reduced state $\rho_1$ (or, equivalently, $\rho_2$) of an arbitrary SD, relative to one of the two regions of the lattice;
	\item compute $E(\ket{SD}) = S(\rho_1)$;
	\item maximize $E(\ket{SD})$ over all SDs.
\end{enumerate}

We carry out these tasks using the notion of free states and the corresponding 1RDM formalism. We denote by $\L(\H)$ the set of linear operators on $\H$. We define the free states~\cite{Araki70, BLS94, Bach22, Gottlieboriginal, Gottliebfermbos} as the number-conserving Gaussian states. More precisely, the set of free states on the Fock space $\F_{\H_1}$ is the closure of the set of states of the form
\begin{equation}
	\rho = \frac{\e^{K}}{Z} \; , \;\; Z = \tr [ \e^{K} ] \; .
\end{equation}
Here, $K \in \L(\F_{\H_1})$ is any one-body Hermitian operator, i.e., $K$ can be written as $\sum_{j,k} \kappa_{jk} f^\dagger_j f_k$, with $\overline{\kappa_{kj}} = \kappa_{jk}$, in terms of the creation and annihilation operators $f^{(\dagger)}_j$ associated with the reference ONB of $\H_1$. Taking the closure ensures that the definition incorporates limit points, including SD states, which correspond to the eigenvalues of the matrix $\kappa$ diverging to $\pm \infty$. Free states are mathematically simple to deal with: by Wick's theorem~\cite{wick_gaudin, molinari_wick}, they are entirely specified by their 1P-reduced operator $\gamma \in \L(\H_1)$, i.e., the operator on $\H_1$ whose matrix representation is the 1RDM of $\rho$,
\begin{equation}
	\braket{j | \gamma | k} = \tr [ \rho f^\dagger_k f_j ] \; .
\end{equation}
In particular, SD states are characterized as the states whose 1P-reduced operators are idempotent, $\gamma = \gamma^2$, i.e., orthogonal projectors acting on $\H_1$. We refer to Refs.~\cite{Gottliebfermbos, Peschel_2003, peschel_eisler, Bach22,BLS94} for more on free states and their one-to-one correspondence with their 1P-reduced operators (equivalently, 1RDMs).

Task (1) can be accomplished using the following standard result on the orbital-reduced states of free states~\cite{BLS94,Peschel_2003}, of which SD states are a special case. We defer the proof to Appendix~\ref{app:proofs_for_basic_propositions}.
\begin{prop}[Refs.~\cite{BLS94,Peschel_2003}] \label{prop:marginalsoffreestates} %previously restatable
	Let $\rho \in \L(\F_{\H_1})$ be a fermionic free state, let $\gamma \in \L(\H_1)$ be its 1P-reduced operator, and let $V$ be a linear subspace of $\H_1$. Then the orbital-reduced state of $\rho$ relative to (the orbitals in) $V$ is the free state on the Fock space $\F_{V}$ whose 1P-reduced operator is the compression of $\gamma$ to $V$.
\end{prop}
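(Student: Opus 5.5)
The plan is to use the fact that a free state is fixed by its one-particle data, which comes down to Wick's theorem. A state $\sigma$ on $\F_V$ is uniquely determined by its expectation values $\tr[\sigma A]$ for all $A \in \L(\F_V)$. The orbital-reduced state $\rho_V$ is characterized by
\[
\tr[\rho_V A] = \tr[\rho\, \iota(A)]
\]
for all $A$ in the algebra generated by the $f^{(\dagger)}_v$ with $v \in V$. Here $\iota$ is the natural embedding of this algebra into $\L(\F_{\H_1})$, using the factorization $\F_{\H_1} \cong \F_V \hat\otimes \F_{V^\perp}$. It is convenient to restrict to observables that commute with the parity operator, since these span everything that $\rho$ can see; this is the definition reviewed in Appendix~\ref{app:mutual_info_fermions}. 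It then suffices to check that $\rho_V$ and the free state $\sigma_V$ with 1P-reduced operator $P_V \gamma P_V|_V$ agree on all normal-ordered monomials $f^\dagger_{v_1}\cdots f^\dagger_{v_m} f_{w_n}\cdots f_{w_1}$ with $v_i, w_j \in V$.

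The steps are as follows.

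\begin{enumerate}[(i)]
\item Choose an ONB of $\H_1$ adapted to the splitting $V \oplus V^\perp$.
\item Apply Wick's theorem for the free state $\rho$. The monomial has zero expectation unless $m=n$, by number conservation. For $m=n$ its expectation is $\det[\tr(\rho f^\dagger_{v_i} f_{w_j})]_{ij} = \det[\braket{w_j|\gamma|v_i}]$.
\item Note that these contractions involve only vectors in $V$, so they equal $\braket{w_j|P_V\gamma P_V|v_i}$.
\item Apply Wick's theorem on $\F_V$ for the free state $\sigma_V$. It gives the same determinant with the same compression.
\end{enumerate}

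Hence all expectations agree, and $\rho_V = \sigma_V$. The compression is automatically a valid 1P-reduced operator, since $0 \le P_V\gamma P_V \le \I$.

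For a cleaner, more constructive route, one can also diagonalize. First treat the Gibbs form $\rho = \e^{K}/Z$. Then pass to the closure by continuity, since partial trace and compression are both continuous. This also covers SDs.

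The main obstacle is making Wick's theorem legitimate on the closure of the set of free states, including the degenerate cases where $\gamma$ has eigenvalues $0$ or $1$. I would handle this by proving the determinant formula for $\e^{K}/Z$ directly in the eigenbasis of $\kappa$, where $\rho$ is a product of single-mode thermal states. The formula then extends by continuity to limits. This gives the same statement for every free state, and in particular for every SD.

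A second subtlety is the fermionic sign convention in the partial trace. I would avoid it by working only with even (parity-preserving) operators, for which the embedding $\iota$ is unambiguous.
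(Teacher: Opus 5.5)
Your proposal is correct and is essentially the paper's proof. Both arguments characterize the orbital-reduced state by its expectation values on products of creation/annihilation operators for orbitals in $V$. Both then use Wick's theorem to see that these expectation values depend only on the matrix elements $\langle w|\gamma|v\rangle$ with $v,w\in V$, i.e., on the compression of $\gamma$, which is exactly the data fixing the free state on $\F_V$. Your extra care about extending Wick's theorem to the closure (by continuity from the Gibbs form) and about parity and sign conventions fills in details the paper leaves implicit, but it does not change the argument.
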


Task (2), i.e., the computation of $E(\ket{SD}) = S(\rho_1)$, can then be simplified using the fact that $\rho_1$ is free. Indeed, since a free state is fully determined by its 1P-reduced operator, in particular its entropy is, too. More precisely, the latter is determined by its \emph{natural occupation numbers} (NONs), i.e., the eigenvalues of its 1P-reduced operator. This follows from the unitary invariance of entropy. This yields a simple formula for the entropy of any free state. To that end, we define the binary entropy~\cite{Gottlieboriginal,BLS94} as
\begin{equation} \label{eq:function_h}
	h(x) := -x \log(x)-(1-x)\log(1-x)
\end{equation}
for $x \in (0,1)$, and $h(x) := 0$ for $x \in \{ 0, 1\}$. Also, we define a modified Shannon entropy $H$ as follows:
\begin{equation} \label{eq:function_Hadamard}
	H(\boldsymbol{x}) := \sum_{j=1}^{d} h(x_j)
\end{equation}
for $\boldsymbol{x} \in [0,1]^d$.

The following expression for the entropy of a free state is a standard result~\cite{Gottlieboriginal,BLS94}. For completeness, we provide a proof in Appendix~\ref{app:proofs_for_basic_propositions}.
\begin{prop}[Refs.~\cite{Gottlieboriginal,BLS94}] \label{prop:entropyfreestates} % previously restatable
	Let $\rho \in \L(\F_{\H_1})$ be a fermionic free state. Let $\gamma$ be the 1P-reduced operator of $\rho$ and $n_1,\dots,n_d$ the eigenvalues of $\gamma$, i.e., the NONs of $\rho$, with $d = \emph{dim}(\H_1)$. Then
	\begin{equation} \label{eq:entropy_of_free_state}
		S(\rho) = \tr [h(\gamma)] = H(\bn) \,.
	\end{equation}
\end{prop}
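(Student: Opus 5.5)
The plan is to diagonalize the one-body operator and reduce everything to a product of independent two-level (single-mode) systems, whose entropies simply add. First take a non-degenerate free state $\rho = \e^{K}/Z$ with $K = \sum_{j,k}\kappa_{jk} f^\dagger_j f_k$. Choose a unitary $u$ on $\H_1$ that diagonalizes $\kappa$, so $\kappa = u\,\mathrm{diag}(\varepsilon_1,\dots,\varepsilon_d)\,u^\dagger$. Then introduce the rotated modes $g_m = \sum_j \overline{u_{jm}} f_j$; these still satisfy the canonical anticommutation relations, and in terms of them $K = \sum_m \varepsilon_m g^\dagger_m g_m$.

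Next, use that the number operators $g^\dagger_m g_m$ commute and that the Fock space factorizes with respect to the modes $g_m$, via the occupation-number basis $\ket{s_1,\dots,s_d}$ with $s_m\in\{0,1\}$. With this, $\rho$ is diagonal in that basis and is a product:
\begin{equation*}
\rho = \bigotimes_{m=1}^{d}\bigl[(1-n_m)\ket{0}\!\bra{0} + n_m\ket{1}\!\bra{1}\bigr], \qquad n_m = \frac{\e^{\varepsilon_m}}{1+\e^{\varepsilon_m}} .
\end{equation*}
Here $Z=\prod_m(1+\e^{\varepsilon_m})$. No fermionic sign subtlety arises, because $\rho$ is diagonal in a product basis. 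Computing $\tr[\rho\, g^\dagger_{m'} g_m] = \delta_{mm'} n_m$ shows that the 1P-reduced operator is $\gamma = u\,\mathrm{diag}(\bn)\,u^\dagger$. Hence the $n_m$ are exactly the NONs, and by additivity of von Neumann entropy over the product (its eigenvalues are products $\prod_m n_m^{s_m}(1-n_m)^{1-s_m}$), $S(\rho)=\sum_m h(n_m) = H(\bn) = \tr[h(\gamma)]$.

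Finally, extend to the closure, which includes Slater determinants and partially degenerate limits where some $n_m\in\{0,1\}$. I would argue directly rather than by continuity: any limit of the above states is again a product of diagonal single-mode states in some rotated mode basis. This follows by compactness of the unitary group, passing to a convergent subsequence of the $u$'s and of the $n_m\in[0,1]$. The same formulas then hold with $h(0)=h(1)=0$, since a mode with $n_m\in\{0,1\}$ is in a pure state. Alternatively one can invoke continuity of the entropy on the finite-dimensional $\F_{\H_1}$ together with continuity of $\gamma\mapsto\tr[h(\gamma)]$.

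The main obstacle I expect is this closure step: one must check that limit points of Gibbs-type states really retain the product-in-rotated-modes structure. The subsequence argument above handles it. The rest of the proof is routine bookkeeping.
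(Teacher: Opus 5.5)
Your proposal is correct and follows essentially the same route as the paper's proof: pass to the natural-orbital modes, write the $2^d$ eigenvalues of $\rho$ as products $\prod_m n_m^{s_m}(1-n_m)^{1-s_m}$, and sum the single-mode binary entropies. Your handling of the closure, via compactness of the unitary group or via continuity of $S$ and of $\gamma\mapsto\tr[h(\gamma)]$ in finite dimension, is more careful than the paper's, which only remarks that the NON parametrization stays well defined as $\epsilon_j\to\pm\infty$.
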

The above propositions allow us to reduce the computation of $E_{\max}$ from a maximization task in the $\binom{d}{N}$-dimensional $N$-fermion Hilbert space to one in the $d^2$-dimensional 1P operator space $\L(\H_1)$. Given an SD of $N$ fermions $\ket{SD}$, one can compute $E(\ket{SD})$ as follows: write down the $d \times d$ 1RDM of $\ket{SD}$ in the basis $\{\ket{j}\}_{j=1}^d$; consider its upper-left $\lambda_1 \times \lambda_1$ diagonal block (one may equivalently choose the lower-right block); take the eigenvalues $b_1, \dots, b_{\lambda_1}$ of that block; compute
\begin{equation}
	E(\ket{SD}) = H(\bb) = \sum_{j = 1}^{\lambda_1} h(b_j) \; .
\end{equation}

Task (3) is then about maximizing the above expression. Taking into account that the 1RDMs of $N$-particle SDs are exactly the rank-$N$ orthogonal projection matrices (with complex entries), we can compactly write the maximum as
\begin{equation} \label{eq:Emax_rewritten}
	E_{\max} = \max_{\gamma} \sum_{j = 1}^{\lambda_1} h \left (\specdown_j(B_1^\gamma) \right ) \; ,
\end{equation}
where the optimization is over rank-$N$ orthogonal projection matrices $\gamma$ of size $d \times d$. Here and in the following, $B_\mu^\gamma$ denotes the $\mu$th diagonal block of $\gamma$ relative to an underlying partition of $d$, and $\specdown$ denotes the vector of possibly repeated eigenvalues, in non-increasing order. Notice that many-body states no longer appear. The maximization in Eq.~\eqref{eq:Emax_rewritten} requires a characterization of the possible spectra of a given diagonal block of a Hermitian matrix with given spectrum. Such a characterization is given in Ref.~\cite{Fan1957Imbedding} in terms of linear inequalities (see also Prop.~\ref{prop:bset_bipartitions_SD} in Appendix~\ref{app:main_prob_revisited}): any real numbers $b_1 \geq \ldots \geq b_{\lambda_1}$ constitute the spectrum $\specdown(B_1^\gamma)$ for some rank-$N$ orthogonal projection matrix $\gamma$ of size $d = \lambda_1 + \lambda_2$ if and only if
\begin{equation} \label{eq:block_spec_cond1}
	(b_1, \dots , b_{\lambda_1} ) = (\underbrace{1,\dots,1}_{T_1^P \emph{ times}}, \beta_1,\dots,\beta_R,\underbrace{0,\dots,0}_{T_1^H\emph{ times}} ) \; ,
\end{equation}
where
\begin{equation} \label{eq:block_spec_cond2}
1 \geq \beta_1\geq\dots\geq \beta_R \geq 0
\end{equation}
and
\begin{align} \label{eq:block_spec_cond3}
	T^P_1 &= \max \{ 0, \lambda_{1} - (d - N ) \} \\ \label{eq:block_spec_cond4}
	T^H_1 &= \max \{ 0, \lambda_{1} - N \} \\ \label{eq:block_spec_cond5}
	R &= \min \{ \lambda_1, \lambda_2, N, d - N \} \; . 
\end{align}
The $T^{P/H}_1$ components $b_j$ equal to $1$/$0$ correspond to fully occupied/empty orbitals in the first region of the lattice. These orbitals are necessarily present if $\lambda_1$ is large enough (see also the discussion following Result~\ref{prop:Emax_SD_bipartition}). The superscripts $P$ and $H$ stand for particles and holes, respectively.

Since $h$ has its maximum at input value $1/2$, we finally get
\begin{equation} \label{eq:maximal_sum_h}
	E_{\max} = \max_{\beta_1, \dots, \beta_R} \sum_{j = 1}^R h(\beta_j) = R \, h(1/2) = R \, \log(2) \; ,
\end{equation}
i.e., maximum entanglement is achieved by setting $\beta_1 = \ldots = \beta_R = 1/2$ and is indeed given by Eq.~\eqref{eq:Emax_result}. The choice $\beta_j = 1/2$ corresponds to the unbiased superpositions of orbital pairs described in Fig.~\ref{fig:max_SD_construction} and in the discussion of Sec.~\ref{sec:main_problem}.

This completes the proof of Result~\ref{prop:Emax_SD_bipartition}.

\subsection{Towards the general problem} \label{subsec:towards_general}

Props.~\ref{prop:marginalsoffreestates} and~\ref{prop:entropyfreestates} enable the computation of bipartite orbital entanglement in SD states. Likewise, they will simplify the more general problem of computing the \emph{multipartite} total (classical + quantum) orbital correlation in arbitrary \emph{free} states, the object of Sec.~\ref{sec:generalizing}. However, the \emph{maximization} will, in the general case, constitute a formidable and multifaceted challenge. In that light, we now examine and unpack further the maximization leading to $E_{\max}$ for bipartite entanglement and SDs.

From this point forward, we adopt the \emph{total} correlation $C$ (see Appendix~\ref{app:mutual_info_fermions}) rather than the entanglement entropy $E$ as the orbital correlation measure. Its definition
\begin{equation} \label{eq:total_corr}
	C(\rho) = \left (\sum_{\mu = 1}^k S(\rho_\mu) \right ) - S(\rho) \; , \;\;\; \rho_\mu = \tr_{\setminus \mu} \, \rho \, ,
\end{equation}
remains meaningful for mixed states of $k$-partite systems, as needed for the general problem of Sec.~\ref{sec:generalizing}.
%For bipartite systems, $C$ reduces to the quantum \emph{mutual information}.
%for any $k$, $C(\rho) = D(\rho , \rho_1 \otimes \dots \otimes \rho_k)$ is the minimal relative entropy of $\rho$ to the uncorrelated states (see Appendix~\ref{app:mutual_info_fermions}). 
Beyond pure bipartite states, $C$ no longer reduces to twice the entanglement: it also comprises quantum correlations beyond entanglement and classical correlations. By considering $C$ we thus assess \emph{all} correlations that non-interacting fermions, uncorrelated in the particle picture, exhibit in the orbital picture. Operationally, $C$ quantifies the cost of erasing correlations~\cite{GroismanPopescuWinter2005} and the work extractable from them by global operations~\cite{PerarnauLlobet2015}; moreover, unlike most entanglement measures, it is computable in closed form for all free states. 

As noted in Sec.~\ref{sec:main_problem}, for a pure SD $\ket{SD}$ with orbital-reduced states $\rho_1$ and $\rho_2$, $C$ equals $E$ up to a prefactor:
\begin{align} \nonumber
	C(\ket{SD}\bra{SD}) &= S(\rho_1) + S(\rho_2) - \underbrace{S(\ket{SD}\bra{SD})}_{= 0} \\ \label{eq:mutual_info} 
	&= 2 E(\ket{SD}) \; .
\end{align}
We define
\begin{equation} \label{eq:first_def_of_Imax}
	C_{\max} := \max_{|SD\rangle} C (\ket{SD}\bra{SD}) = 2\, E_{\max} \; ,
\end{equation}
where the maximization is over all $N$-fermion Slater determinants $\ket{SD}$. In close correspondence with Eq.~\eqref{eq:Emax_rewritten}, we recast the maximization as one over rank-$N$ orthogonal projection matrices $\gamma$ of size $d \times d$:
\begin{equation} \label{eq:I_max_rewritten}
	C_{\max} = \max_{\gamma} \left [ \sum_{j = 1}^{\lambda_1} h \left (\specdown_j (B_1^\gamma) \right ) + \sum_{j = 1}^{\lambda_2} h \left (\specdown_j (B_2^\gamma) \right )  \right ] \; .
\end{equation}
For any $\gamma$, the two sums are equal, because they equal the entropies of the two isospectral orbital-reduced states of the SD with 1RDM $\gamma$.

We now proceed a step further than in Sec.~\ref{subsec:proof_of_main_theorem}. The objective function of the above maximization clearly depends on the matrix $\gamma$ only through the eigenvalues of its diagonal blocks. This implies some redundancy and the non-uniqueness of maximizers. For example, given a maximizer $\gamma_{\max}$, we get infinitely many new maximizers by unitary conjugation $\gamma_{\max} \rightarrow u \gamma_{\max} u^\dagger$ with any $d \times d$ unitary $u = u_1 \oplus u_2$ that is block diagonal with respect to the bipartition $\blam$ of $d$ (see also the discussion following Result~\ref{prop:Emax_SD_bipartition}). As a further step toward later addressing the general problem, we get rid of this redundancy by replacing the variable $\gamma$ to be optimized over with its \emph{block spectrum} $\bb = (\specdown (B_1^\gamma) \sbs \specdown (B_2^\gamma) )  \in \R^{\lambda_1} \times \R^{\lambda_2} = \R^d$. We will use the vertical bar to separate the spectra of the blocks, and, with slight abuse of notation, denote with $(\bv, \boldsymbol{w})$ the concatenation of two vectors $\bv, \boldsymbol{w}$. We denote with $\bset \subset \R^d$ the compact set of all blockwise-ordered block spectra $\bb$ of arbitrary $d \times d$ rank-$N$ projection matrices. We finally rewrite Eq.~\eqref{eq:I_max_rewritten} as
\begin{equation} \label{eq:I_max_re_rewritten}
	C_{\max} = \max_{\bb \in \bset} H(\bb) \; .
\end{equation}
This formulation of the maximization highlights that the mathematical problem can be divided into two tasks:
\begin{enumerate} [ {(}1{)}]
	\item the set $\bset$ of block spectra must be, at least partially, characterized;
	\item the objective function, a modified Shannon entropy, must be maximized over $\bset$.		
\end{enumerate}
Task (1) is an adaptation of the quantum marginal problem~\cite{klyachko2004, bravyi2003q_marg_problem, Eisert2008_Gaussian_q_marg_prob} to fermionic states: roughly speaking, characterizing $\bset$ means characterizing the possible pairs of orbital-reduced states of an SD state. This perspective on the initial optimization problem is analogous to breaking down the ground state energy problem for systems with two-body Hamiltonians (e.g., chemical systems) into (1) the $N$-fermion two-body reduced density matrix representability problem~\cite{rep_problem} and (2) the subsequent energy minimization.

In Appendix~\ref{app:main_prob_revisited} we revisit the computation of $C_{\max} = 2E_{\max}$ from this new perspective, in view of generalizing our problem. Crucially, we show that there exists an element $\bb_{\max} \in \bset$ that is \emph{majorized} by all other elements $\bb \in \bset$, $\bb_{\max} \prec \bb$ (majorization is introduced in Appendix~\ref{app:majorization_schur_horn_cauchy}; see also Refs.~\cite{nielsenmajo,MarshallBook}). $\bb_{\max}$ follows as the \emph{unique maximizer} of the right-hand side of Eq.~\eqref{eq:I_max_re_rewritten} due to the strict \emph{Schur-concavity}~\cite{MarshallBook} of the objective function $H$, i.e., if $\bx \succ \by $ and $\bx $ is not a permutation of $\by$ then $  H(\bx) < H(\by)$.

\section{General problem: Maximal multipartite orbital correlation} \label{sec:generalizing}

In this section, we formulate the general problem of this work. It extends the basic problem of Sec.~\ref{sec:main_problem} in two directions: from bipartitions to \emph{arbitrary partitions} of the lattice, and from SDs to arbitrary, possibly mixed, \emph{free} states (see Sec.\til\ref{subsec:proof_of_main_theorem}). We focus on fermions. The analysis carries over to bosonic free states (Appendix~\ref{app:bosons}), which, however, are always separable~\cite{wolf_ent_power, ciancio_zanardi}, so that for bosons the maximized correlation contains no entanglement.

Two main reasons motivate the extension from SDs to arbitrary free states. First, by Prop.~\ref{prop:marginalsoffreestates}, free states include the orbital-reduced state of any orbital subsystem of a fermionic system whose pure state is (modeled as) an SD. A typical example is a fragment---a subset of nuclei and their associated electron clouds---of a molecule that is described via Hartree-Fock theory. Second, the grand canonical Gibbs states of non-interacting fermions at finite temperature are, by definition, free states.

We will use the total correlation of Eq.~\eqref{eq:total_corr} as the orbital correlation measure to be maximized.

As before, we consider identical non-interacting spinless fermions on a lattice of $d$ sites, described by the 1P space $\H_1 = \text{span}(\ket{1}, \dots , \ket{d})$, but we no longer assume the number of fermions to be fixed. The lattice is partitioned into $k \geq 2$ regions. For $1 \leq \mu \leq k$, region $\mu$ comprises $\lambda_\mu$ sites. Here, $(\lambda_\mu)_{\mu = 1}^k = \blam$ is an integer partition of the lattice size $d$, i.e., $\blam \in \mathbb{N}^k$, $\lambda_1 \geq \dots \geq \lambda_k \geq 1$ and $\sum_{\mu = 1}^k \lambda_\mu = d$. Region 1 consists of the sites $j = 1, \dots, \lambda_1$, region 2 consists of the sites $j = \lambda_1 + 1, \dots, \lambda_1 + \lambda_2$, and so on. We do not assume any particular lattice geometry or dimensionality, but we do assume that the $k$ regions are spatially separated and individually accessible by experiments, so that it is meaningful to study the orbital correlations across them. The lattice partition induces the splitting of $\H_1$ as the direct sum of the $k$ orthogonal subspaces $V_\mu$, defined by $V_1 = \text{span}(\ket{1}, \dots, \ket{\lambda_1}), V_2 = \text{span}(\ket{\lambda_1 + 1}, \dots, \ket{\lambda_1 + \lambda_2})$ and so on, i.e., $V_\mu$ is the 1P space for a particle in the lattice region $\mu$.

Let $\rho$ be the initial state of the system, acting on the Fock space $\F_{\H_1}$. For now we only assume that $\rho$ satisfies the number-parity SSR, i.e., that it commutes with the parity operator, as all physical fermionic states do~\cite{wick1952intrinsic,hegerfeldt1968proof,Friis_2016,johansson2016comment}. We consider the problem of maximizing the total correlation $C^{\blam}$---the superscript specifies the underlying splitting of the orbital space---over arbitrary 1P unitary evolutions. The latter are described by unitary 1P operators on the Fock space, now best represented as $U = \e^{- \i \sum_{j,k} \eta_{jk} f^\dag_j f_k}$ with $\eta_{jk} = \overline{\eta_{kj}}$. In other words, we are interested in the following quantity.
\begin{defn}
	Let $\rho$ be a parity-superselected state on $\F_{\H_1}$. For a given integer partition $\blam$ of the dimension $d$ of $\H_1$, we define the maximal orbital correlation
\begin{equation} \label{eq:Iblam_max_def}
	C^{\blam}_{\max}(\rho) := \max_U C^{\blam} ( U \rho U^\dagger ) \; ,
\end{equation}
where the maximization is over all 1P unitary operators $U$ on $\F_{\H_1}$.
\end{defn}
In the special case where $\rho = \ket{SD} \bra{SD}$ is a Slater determinant state and $k = 2$, i.e., $\blam$ is a bipartition of $d$, $C_{\max}^{\blam}(\rho)$ reduces to $C_{\max} = 2 E_{\max}$, as defined in Eqs.\til\eqref{eq:E_max} and\til\eqref{eq:first_def_of_Imax}.

We now restrict our attention to free states $\rho$. As seen in Sec.~\ref{sec:math_structure}, their orbital-reduced states and their entropy can be computed at the level of the 1P space. Let $\gamma \in \L(\H_1)$ be the 1P-reduced operator of $\rho$, and $\bn$ the decreasingly ordered spectrum (with repetitions) of $\gamma$, i.e., the vector of $\rho$'s NONs. Given a unitary $u$ on $\H_1$, let $U$ be the 1P unitary which implements $u$ on the Fock space $\F_{\H_1}$. Then $U \rho U^\dagger$ is the unique free state with 1P-reduced operator $u \gamma u^\dagger$, and thus has the same NONs as $\rho$. Vice versa, any free state with NONs $\bn$ can be represented as $U \rho U^\dagger$ with a suitable 1P unitary $U$. $C_{\max}^{\blam}(\rho)$ can thus be interpreted as the maximal orbital correlation of free states having the same NONs $\bn$ as $\rho$, and it can be expressed as a function of $\bn$ only, as follows. A generalization of the central Eq.\til\eqref{eq:I_max_rewritten} leads to the following compact formulation of the general goal of this work. We denote the $\mu$th diagonal block of a matrix $\gamma$ by $B^\gamma_\mu$, which implicitly depends on the partition $\blam$ of the matrix size $d$.
\begin{tcolorbox}[
colframe=blue!70!black, % Line color
colback=blue!17,         % Background color
boxrule=1.2pt,          % Line width
arc=6pt,                % Corner radius (increase for more "roundness")
boxsep=0pt,
left=10pt,              % Inner margins
right=10pt,
top=7pt,
bottom=7pt,
enhanced            
]
\textbf{General Scientific Problem.} Determine the maximal orbital total correlation of a fermionic free state $\rho$ with given NONs $n_1, \ldots, n_d$ on a $k$-partite lattice of $ d = \sum_{\mu=1}^k \lambda_\mu$ sites, i.e.,
\begin{equation} \label{eq:Iblam_max_rewritten}
	C^{\blam}_{\max}(\rho) =  \max_{\substack{\gamma = \gamma^\dagger\\ \specdown(\gamma) = \bn}} \left [ \sum_{\mu = 1}^k  H(\specdown(B^\gamma_{\mu})) \right ] \; - H(\bn) \; .
\end{equation}
\end{tcolorbox}
The maximization is over all 1RDMs of arbitrary free states $U \rho U^\dagger$ with NONs $\bn$, which clarifies the connection with Eq.~\eqref{eq:Iblam_max_def}. The term $H(\bn)$ in Eq.\til\eqref{eq:Iblam_max_rewritten}, a rewrite of the entropy $S(U \rho U^\dagger) = S(\rho)$, is a constant and was thus moved outside the maximization. The quantity to be maximized depends on $\gamma$ only through its $\blam$-block spectrum, i.e., $\bb = ( \specdown(B^\gamma_{1}) \sbs \specdown(B^\gamma_{2}) \sbs \dots \sbs \specdown(B^\gamma_{k}) )$. The latter can thus be considered the variable to be optimized over, instead of $\gamma$, which leads to the following definition and proposition.
\begin{defn} \label{def:b_set}
	Let $\bn \in \R^d$ be a non-increasingly ordered vector. For a given integer partition $\blam = (\lambda_1, \dots , \lambda_k)$ of $d$, we define
	\begin{align} \nonumber
		\bset^{\blam}_{\bn} := \big \{ & \bb = \left ( \specdown ( B^\gamma_1 ) \sbs \dots \sbs \specdown ( B^\gamma_k ) \right ) \sbs  \\ \label{eq:def_bset}
		& \gamma = \gamma^\dagger \in \textup{Mat}_{d \times d}(\C) \, , \; \specdown(\gamma) = \bn \big \} \; ,
	\end{align}
	where, for $\mu = 1, \dots, k$, $B^\gamma_\mu$ is the $\mu$th diagonal block of $\gamma$ and has size $\lambda_\mu \times \lambda_\mu$.
\end{defn}
\begin{prop} \label{prop:Imax_max_over_bset}
Let $\rho$ be a free state on $\F_{\H_1}$ and let $\bn$ be the vector of its non-increasingly ordered NONs. Then
\begin{equation} \label{eq:Imax_max_over_bset}
	C^{\blam}_{\max} (\rho) = \max_{\bb \in \bset^{\blam}_{\bn}} \left ( H(\bb) \right ) - H(\bn) \; .
\end{equation}
\end{prop}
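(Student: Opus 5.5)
The plan is to chain the reformulations already set up in the text. Start from the definition $C^{\blam}_{\max}(\rho) = \max_U C^{\blam}(U\rho U^\dagger)$ in Eq.~\eqref{eq:Iblam_max_def}. Then compute $C^{\blam}$ for an arbitrary free state $\sigma = U\rho U^\dagger$ with 1P-reduced operator $\gamma' = u\gamma u^\dagger$.

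\emph{Step 1: the marginals.} Apply Prop.~\ref{prop:marginalsoffreestates} with $V = V_\mu$ for each $\mu = 1,\dots,k$. The orbital-reduced state $\sigma_\mu$ is the free state on $\F_{V_\mu}$ whose 1P-reduced operator is the compression of $\gamma'$ to $V_\mu$. In the reference basis this compression is the diagonal block $B^{\gamma'}_\mu$.

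\emph{Step 2: the entropies.} By Prop.~\ref{prop:entropyfreestates}, $S(\sigma_\mu) = H(\specdown(B^{\gamma'}_\mu))$. The same proposition gives $S(\sigma) = H(\specdown(\gamma')) = H(\bn)$, since conjugation by $u$ preserves the spectrum. Substituting into Eq.~\eqref{eq:total_corr} yields
\[
C^{\blam}(\sigma) = \sum_{\mu=1}^k H\big(\specdown(B^{\gamma'}_\mu)\big) - H(\bn) = H(\bb^{\gamma'}) - H(\bn).
\]
Here $\bb^{\gamma'}$ is the $\blam$-block spectrum of $\gamma'$. This uses only that $H$ is additive over the concatenation of the blocks.

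\emph{Step 3: matching the feasible sets.} It remains to show that $\{u\gamma u^\dagger : u \in \mathcal{U}(\H_1)\}$ is exactly the set of Hermitian $d\times d$ matrices with $\specdown = \bn$; this is the spectral theorem. Moreover, every unitary $u$ on $\H_1$ is implemented by some 1P unitary $U$ on Fock space, namely $U=\e^{-\i\sum \eta_{jk} f^\dagger_j f_k}$ with $u=\e^{-\i\eta}$. Conversely, every such $U$ maps free states to free states with conjugated 1RDM, as already noted in Sec.~\ref{sec:generalizing}. Maximizing over $U$ is therefore the same as maximizing $H(\bb^{\gamma'})$ over Hermitian $\gamma'$ with spectrum $\bn$. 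By Definition~\ref{def:b_set}, this is the same as maximizing $H(\bb)$ over $\bb\in\bset^{\blam}_{\bn}$. The constant $-H(\bn)$ then factors out, which gives Eq.~\eqref{eq:Imax_max_over_bset}.

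\emph{Main obstacle.} The only point that needs care is that the maximum is attained, so that "max" rather than "sup" is justified. I would argue this from compactness. The unitary group is compact, and $u\mapsto \bb^{u\gamma u^\dagger}$ is continuous, because ordered eigenvalues depend continuously on matrix entries. Hence $\bset^{\blam}_{\bn}$ is compact, and the continuous function $H$ attains its maximum on it.
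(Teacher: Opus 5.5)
Your proposal is correct and follows essentially the paper's route: the paper obtains Eq.~\eqref{eq:Iblam_max_rewritten} by combining Props.~\ref{prop:marginalsoffreestates} and~\ref{prop:entropyfreestates} with the correspondence $U\rho U^\dagger \leftrightarrow u\gamma u^\dagger$, and then rewrites it via Def.~\ref{def:b_set}, exactly as in your Steps 1--3. Your attainment argument, that $\bset^{\blam}_{\bn}$ is the continuous image of the compact unitary group, is a slightly tighter version of the paper's remark in Sec.~\ref{sec:math_struc_gen_prob}, where $\bset^{\blam}_{\bn}$ is simply asserted to be closed and bounded.
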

Prop.~\ref{prop:Imax_max_over_bset}, a generalization of Eq.~\eqref{eq:I_max_re_rewritten}, is trivially proved by rewriting Eq.\til\eqref{eq:Iblam_max_rewritten} in light of Def.\til\ref{def:b_set}. It implies that $C^{\blam}_{\max} (\rho)$ can be computed by (partially) characterizing the set $\bset^{\blam}_{\bn}$ and maximizing $H$ over it, as in Sec.~\ref{subsec:towards_general}.

The algebraic problem of characterizing $\bset^{\blam}_{\bn}$ has been successfully addressed in Ref.\til\cite{lipoon}: in particular, $\bset^{\blam}_{\bn}$ is a \emph{convex polytope}. This is presumably related to the representability of $\bset^{\blam}_{\bn}$ as the moment polytope of a suitable moment map~\cite{mike_on_kirwan,Kirwan, onKirwan}. In Ref.~\cite{Eisert2008_Gaussian_q_marg_prob}, the characterization of the analogue of $\bset^{\blam}_{\bn}$ for bosonic Gaussian states is discussed, and is referred to as Gaussian quantum marginal problem. Interestingly, both the fermionic and the bosonic problems can be reinterpreted as the problem of characterizing the 1RDMs of the possible orbital-reduced states of an \emph{arbitrary} (i.e., not necessarily free) many-body state whose total 1RDM has a fixed spectrum.

As a side comment, we recall that $C_{\max}^{\blam}(\rho)$ can be reinterpreted as the total correlation of $\rho$ between orthogonal \emph{subspaces} $W_1, \ldots, W_k$ of $\H_1$ maximized over the choice of these subspaces, constrained by dim$(W_\mu) = \lambda_\mu$ for $\mu = 1, \ldots, k$ and $\oplus_{\mu =1}^k W_\mu = \H_1$ (see Sec.~\ref{sec:main_problem}).

\section{Mathematical structure of the general problem} \label{sec:math_struc_gen_prob}

In this section, we establish the mathematical structure underlying all solutions of Sec.~\ref{sec:solutions_general}: existence of maximizers, general bounds, the central role of majorization, and our solution strategy (Fig.~\ref{fig:strategy}). We also collect all closed-form maximizers we found (Tables~\ref{tab:tabledgen}--\ref{tab:tabled5}).

Our approach to tackle the general problem starts from Prop.~\ref{prop:Imax_max_over_bset}: for given $d, \blam, \bn$, we need to find a maximizer $\bb_{\max} \in \bset^{\blam}_{\bn}$ for the right-hand side of Eq.~\eqref{eq:Imax_max_over_bset}. If such a maximizer is given, the maximal orbital correlation of any free state $\rho$ with NONs $\bn$, relative to $\blam$, simply follows as
\begin{equation} \label{eq:obtain_Imax_from_bmax}
	C^{\blam}_{\max}(\rho) = H(\bb_{\max}) - H(\bn) \; .
\end{equation}
Notice that a maximizer always (i.e., $\forall$ $\blam$ and $\bn$) exists, since the objective function $H$ is continuous and $\bset_{\bn}^{\blam}$ is compact---it is closed (i.e., it contains its limit points) and is contained in the bounded set $[n_d, n_1]^d$.

The convex polytope $\bset_{\bn}^{\blam}$ is characterized in Ref.~\cite{lipoon} in terms of facet-defining inequalities. These are complicated to construct and highly redundant. Therefore, we will mostly rely on alternative---and possibly partial---characterizations of $\bset^{\blam}_{\bn}$ that are less general but simpler to handle. In particular, the Schur-Horn theorem~\cite{schur1923klasse, Horn1954DoublySM} and the Cauchy interlacing inequalities~\cite{Fan1957Imbedding} provide a simple characterization of $\bset^{\blam}_{\bn}$ for two special types of partitions $\blam$: trivial partitions $\blam = (1, \ldots, 1)$ and bipartitions of the type $\blam = (d - 1, 1)$, respectively. We recall these well-known results in Appendix~\ref{app:majorization_schur_horn_cauchy}. In particular, the Schur-Horn theorem leads to the following proposition, also proved in Appendix~\ref{app:majorization_schur_horn_cauchy}.
\begin{prop} \label{prop:basic_bounds} %previosuly restatable
	Let $\rho$ be a fermionic free state with NONs $n_1, \ldots, n_d$ and filling fraction $\eta = \sum_{j=1}^d n_j/d$. Then
	\begin{equation} \label{eq:basic_bounds}
		0 \leq C_{\max}^{\blam}(\rho) \leq d \cdot h ( \eta ) - H(\bn) \; .
	\end{equation}
	In particular, the upper bound is saturated if and only if $\bset^{\blam}_{\bn}$ contains the maximally mixed element $(\eta, \ldots, \eta)$.
\end{prop}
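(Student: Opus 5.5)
The plan is to start from Prop.~\ref{prop:Imax_max_over_bset}, which writes $C_{\max}^{\blam}(\rho) = \max_{\bb \in \bset^{\blam}_{\bn}} H(\bb) - H(\bn)$, and to compare every element $\bb \in \bset^{\blam}_{\bn}$ with $\bn$ and with the flat vector $\boldsymbol{\eta} := (\eta, \ldots, \eta)$ via majorization. The key fact is that the diagonal of any $\gamma$ with $\specdown(\gamma) = \bn$ is majorized by $\bn$ (Schur-Horn). This holds for the full matrix and, applied blockwise, for each diagonal block $B^\gamma_\mu$.

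For the lower bound, I would not even need majorization. The unitary $u = \I$ is admissible in Eq.~\eqref{eq:Iblam_max_def}. Alternatively, choose $\gamma = \textup{diag}(\bn)$, which has spectrum $\bn$. Its blocks are diagonal, so $\bb$ is a rearrangement of $\bn$, and hence $H(\bb) = H(\bn)$. This gives $C^{\blam}_{\max}(\rho) \geq 0$. Nonnegativity of the total correlation (subadditivity of entropy) gives the same conclusion.

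For the upper bound, take any $\gamma$ with spectrum $\bn$ and let $\boldsymbol{g}$ be its diagonal in the reference basis. Each $\specdown(B^\gamma_\mu)$ majorizes the corresponding segment of $\boldsymbol{g}$ by Schur-Horn applied to the block. Since $H$ is a sum of the concave function $h$, it is Schur-concave, so $H(\specdown(B^\gamma_\mu)) \leq \sum_{j \in \mu} h(g_j)$. Summing over $\mu$ gives $H(\bb) \leq \sum_j h(g_j)$. Then concavity of $h$ (Jensen), together with $\sum_j g_j = \tr\gamma = d\eta$, gives $\sum_j h(g_j) \leq d\, h(\eta)$. A more direct route is to note $\boldsymbol{\eta} \prec \bb$ for every $\bb$, since $\bb$ sums to $d\eta$, and apply Schur-concavity once.

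The characterization of equality is the only delicate step. The "if" direction is immediate: if $\boldsymbol{\eta} \in \bset^{\blam}_{\bn}$, then $H(\boldsymbol{\eta}) = d\,h(\eta)$ is attained. For "only if", assume the bound is attained at some $\bb_{\max}$, i.e.\ $\sum_j h(b_j) = d\,h(\eta)$ with $\sum_j b_j = d\eta$. Strict concavity of $h$ on $[0,1]$ forces all $b_j = \eta$, so $\bb_{\max} = \boldsymbol{\eta} \in \bset^{\blam}_{\bn}$. I expect this to be the main obstacle only in boundary cases, namely $\eta \in \{0,1\}$, where $h$ is still strictly concave but the problem degenerates. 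There $\bn$ is all zeros or all ones, everything is trivial, and I would check these cases separately. I would also record that $\bn \in [0,1]^d$ for fermions, so that $h$ is well defined on all block eigenvalues.
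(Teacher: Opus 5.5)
Your proof is correct and follows the paper's argument. The upper bound comes from $\bb \succ (\eta,\ldots,\eta)$, which holds for any vector with entry sum $d\eta$, combined with Schur-concavity of $H$; the ``only if'' direction comes from strict concavity of $h$, which forces the attained maximizer to be flat. The one difference is the lower bound: the paper obtains it from $\bn \succ \bb$ for every $\bb \in \bset^{\blam}_{\bn}$ (Schur--Horn applied after diagonalizing the blocks with a block-diagonal unitary), whereas your witness $u=\I$ gives the same inequality more economically. Your concern about $\eta\in\{0,1\}$ is unnecessary, since $h$ is strictly concave on the closed interval $[0,1]$ and strict Jensen applies there without modification.
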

Indeed, there exist both instances $d, \blam, \bn$ such that $\bset^{\blam}_{\bn}$ does contain the maximally mixed vector $(\eta, \ldots, \eta)$, and instances such that it does not (see Appendix\til\ref{app:3D_space}). If $(\eta, \ldots, \eta) \in \bset^{\blam}_{\bn}$, then this is the unique maximizer $\bb_{\max}$ of our objective function.

In more detail, the function $H$ defined on $\{ \bb \in \R^d | \sum_{j} b_j = \sum_{j} n_j \}$ has a unique stationary point, namely the maximally mixed vector. This can be checked by the method of Lagrange multipliers. In the generic case where the polytope $\bset^{\blam}_{\bn}$ has codimension 1 in $\R^d$, this implies that any (local or global) maximizer for our problem is either maximally mixed or lies on the relative \emph{boundary} of $\bset^{\blam}_{\bn}$. Furthermore, convexity of $\bset^{\blam}_{\bn}$ and the strict concavity of $H$ imply that any local maximizer is a \emph{global} maximizer, too; and that there is a \emph{unique} (global) maximizer. We can then put forward the following definition.

% apparently some empty  lines are needed here, theres a latex bug. If whole doc does not compile, try removing the above itemizer and below defn...

\begin{defn}
	We denote by $\boldsymbol{b}^{\boldsymbol{\lambda}}_{\max}(\boldsymbol{n})$ the unique maximizer of $H$ over $\mathfrak{B}^{\boldsymbol{\lambda}}_{\boldsymbol{n}}$.
\end{defn}
We will simplify the notation $\bb^{\blam}_{\max}(\bn)$ to $\bb_{\max}$ when the integer partition $\blam$ and the vector $\bn$ are clear from the context.

It may happen that there is an element in $\bset^{\blam}_{\bn}$ that is majorized by every other $\bb \in \bset^{\blam}_{\bn}$, whether it is the maximally mixed vector or not. If so, such an element is the maximizer $\bb^{\blam}_{\max}(\bn)$, because our objective function is Schur-concave (see Sec.~\ref{subsec:towards_general}). Crucially, this was the case for \emph{every} instance of $d, \blam$ and $\bn$ we studied: we were always able to perform the maximization task by exhibiting a vector that is least in $\bset^{\blam}_{\bn}$ with respect to the majorization preorder, and thus is the maximizer $\bb^{\blam}_{\max}(\bn)$.

\afterpage{
\renewcommand{\arraystretch}{1.3}

%%%%%%%%%%%%%% starting table for arbitrary d %%%%%%%%%%%%%%%%%%%%%%%%%%%%
\begin{table*}
\begin{center}

		\centering
		\begin{tabular}{|lll|}
			\hline
			\multicolumn{3}{|l|}{Arbitrary $d$}
			\\ \hline
			\multicolumn{1}{|l|}{$\blam$} & \multicolumn{1}{l|}{Regime} & $\bb^{\blam}_{\max}(\bn)$ \\ \hline
			\multicolumn{1}{|l|}{$(\lambda_1, \lambda_2)$}         & \multicolumn{1}{l|}{Any $\bn$}       & $(\frac{n_1 + n_d}{2}, \frac{n_2 + n_{d-1}}{2}, \ldots , \frac{n_{\lambda_2} + n_{\lambda_1 + 1}}{2}, n_{\lambda_2+1}, \ldots, n_{\lambda_1} \sbs \frac{n_1 + n_d}{2}, \frac{n_2 + n_{d-1}}{2}, \ldots , \frac{n_{\lambda_2} + n_{\lambda_1 + 1}}{2})$         \\ \hline
			\multicolumn{1}{|l|}{(1, \ldots ,1)}         & \multicolumn{1}{l|}{Any $\bn$}       & $(\eta \sbs \eta \sbs \ldots \sbs \eta)$         \\ \hline
		\end{tabular}
		\caption{Formulas for $\bb^{\blam}_{\max}(\bn)$ (up to a reordering of the entries) for arbitrary $d$ and bipartitions $\blam = (\lambda_1, \lambda_2)$ or trivial partitions $\blam = (1, \ldots , 1)$. Here $\eta = \sum_j n_j / d $ is the filling fraction of the underlying free states.}
		\label{tab:tabledgen} 
\end{center}

\vspace{1.5cm}

	\centering
	\begin{minipage}{0.48\textwidth}
%%%%%%%%%%%% starting table d=2 %%%%%%%%%%%%%%%%%%
	\begin{tabular}{|lll|}
	\hline
	\multicolumn{3}{|l|}{$d = 2$}
	\\ \hline
	\multicolumn{1}{|l|}{$\blam$} & \multicolumn{1}{l|}{Regime} & $\bb^{\blam}_{\max}(\bn)$ \\ \hline
	
	\multicolumn{1}{|l|}{(1,1)}         & \multicolumn{1}{l|}{Any $\bn$}       & $(\frac{n_1+n_2}{2} \sbs \frac{n_1+n_2}{2})$         \\ \hline
	\end{tabular}
	\caption{Formula for $\bb^{\blam}_{\max}(\bn)$ for $d=2$ and the partition $\blam = (1,1)$.}
	\label{tab:tabled2}
	\vspace{.5cm}
	
%%%%%%%%%%%% starting table d=3 %%%%%%%%%%%%%%%%%%
	\begin{tabular}{|lll|}
		\hline
		\multicolumn{3}{|l|}{$d = 3$}
		\\ \hline
		\multicolumn{1}{|l|}{$\blam$} & \multicolumn{1}{l|}{Regime} & $\bb^{\blam}_{\max}(\bn)$ \\ \hline
		
		\multicolumn{1}{|l|}{(2,1)}         & \multicolumn{1}{l|}{Any $\bn$}       & $(\frac{n_1+n_3}{2}, n_2 \sbs \frac{n_1+n_3}{2})$         \\ \hline
		\multicolumn{1}{|l|}{(1,1,1)}         & \multicolumn{1}{l|}{Any $\bn$}       & $(\eta \sbs \eta \sbs \eta)$         \\ \hline
	\end{tabular}
	\caption{Formulas for $\bb^{\blam}_{\max}(\bn)$ (up to a reordering of the entries) for $d=3$ and all partitions $\blam$. Here $\eta = \sum_j n_j / 3 $.}
	\label{tab:tabled3}
	\end{minipage}
	\hfill
	\begin{minipage}{0.48\textwidth}
%%%%%%%%%%%% starting table d=4 %%%%%%%%%%%%%%%%%%
	\begin{tabular}{|lll|}
		\hline
		\multicolumn{3}{|l|}{$d= 4$}                                                                \\ \hline
		\multicolumn{1}{|l|}{$\blam$}                & \multicolumn{1}{l|}{Regime} & $\bb^{\blam}_{\max}(\bn)$ \\ \hline
		\multicolumn{1}{|l|}{(3,1)}                    & \multicolumn{1}{l|}{Any $\bn$}       & $(\frac{n_1+n_4}{2},n_2,n_3 \sbs \frac{n_1+n_4}{2})$        \\ \hline
		\multicolumn{1}{|l|}{(2,2)}                    & \multicolumn{1}{l|}{Any $\bn$}       & $(\frac{n_1+n_4}{2},\frac{n_2+n_3}{2} \sbs \frac{n_1+n_4}{2},\frac{n_2+n_3}{2})$         \\ \hline
		
		\multicolumn{1}{|l|}{(2,1,1)} & \multicolumn{1}{l|}{$n_1\geq \eta > n_2$}       & $(\frac{n_1+n_3+n_4}{3}, n_2 \sbs \frac{n_1+n_3+n_4}{3} \sbs \frac{n_1+n_3+n_4}{3})$        \\ \cline{2-3} 
		\multicolumn{1}{|l|}{}                         & \multicolumn{1}{l|}{$n_2 \geq \eta \geq n_3$}       & $(\eta,\eta \sbs \eta \sbs \eta)$        \\ \cline{2-3} 
		\multicolumn{1}{|l|}{}                         & \multicolumn{1}{l|}{$n_3 > \eta \geq n_4$}       & $(n_3,\frac{n_1+n_2+n_4}{3} \sbs \frac{n_1+n_2+n_4}{3} \sbs \frac{n_1+n_2+n_4}{3})$         \\ \hline
		
		\multicolumn{1}{|l|}{(1,1,1,1)}                & \multicolumn{1}{l|}{Any $\bn$}       & $(\eta \sbs \eta \sbs \eta \sbs \eta)$         \\ \hline
	\end{tabular}
	\caption{Formulas for $\bb^{\blam}_{\max}(\bn)$ (up to a reordering of the entries) for $d=4$ and all partitions $\blam$. Here $\eta = \sum_j n_j / 4 $. For $\blam = (2,1,1)$ we notice the emergence of different regimes (see Appendix~\ref{app:regimes}).}
	\label{tab:tabled4}
	\end{minipage}

	\vspace{1.5cm}

%%%%%%%%%%%% starting table d=5 %%%%%%%%%%%%%%%%%%

	\centering
	\begin{tabular}{|lll|}
		\hline
		\multicolumn{3}{|l|}{$d= 5$}                                                                  \\ \hline
		\multicolumn{1}{|l|}{$\blam$}                  & \multicolumn{1}{l|}{Regime} & $\bb^{\blam}_{\max}(\bn)$ \\ \hline
		
		\multicolumn{1}{|l|}{(4,1)}                      & \multicolumn{1}{l|}{Any $\bn$}       & $(\frac{n_1+n_5}{2},n_2,n_3,n_4 \sbs \frac{n_1+n_5}{2})$         \\ \hline
		
		\multicolumn{1}{|l|}{(3,2)}                      & \multicolumn{1}{l|}{Any $\bn$}       & $(\frac{n_1+n_5}{2},\frac{n_2+n_4}{2},n_3 \sbs \frac{n_1+n_5}{2},\frac{n_2+n_4}{2})$         \\ \hline
		
		\multicolumn{1}{|l|}{(3,1,1)}   & \multicolumn{1}{l|}{$n_1 \geq \frac{n_1+n_2+n_4+n_5}{4} > n_2$}       & $(\frac{n_1+n_4+n_5}{3}, n_2, n_3 \sbs \frac{n_1+n_4+n_5}{3} \sbs \frac{n_1+n_4+n_5}{3})$        \\ \cline{2-3} 
		\multicolumn{1}{|l|}{}                           & \multicolumn{1}{l|}{$n_2 \geq \frac{n_1+n_2+n_4+n_5}{4} \geq n_4$}       & $(\frac{n_1+n_2+n_4+n_5}{4},\frac{n_1+n_2+n_4+n_5}{4},n_3 \sbs \frac{n_1+n_2+n_4+n_5}{4} \sbs \frac{n_1+n_2+n_4+n_5}{4})$        \\ \cline{2-3} 
		\multicolumn{1}{|l|}{}                           & \multicolumn{1}{l|}{$n_4 > \frac{n_1+n_2+n_4+n_5}{4} \geq n_5$}       & $(n_3,n_4,\frac{n_1+n_2+n_5}{3} \sbs \frac{n_1+n_2+n_5}{3} \sbs \frac{n_1+n_2+n_5}{3})$         \\ \hline
		
		\multicolumn{1}{|l|}{(2,2,1)}   & \multicolumn{1}{l|}{$n_1 \geq \eta > \frac{n_2+n_3}{2}$}       & $(\frac{n_1+n_4+n_5}{3},\frac{n_2+n_3}{2} \sbs \frac{n_1+n_4+n_5}{3},\frac{n_2+n_3}{2} \sbs \frac{n_1+n_4+n_5}{3})$        \\ \cline{2-3} 
		\multicolumn{1}{|l|}{}                           & \multicolumn{1}{l|}{$\frac{n_2+n_3}{2} \geq \eta > \frac{n_1+n_4}{2}$}       & $(\frac{n_2+n_3+n_5}{3},\frac{n_1+n_4}{2} \sbs \frac{n_2+n_3+n_5}{3}, \frac{n_1+n_4}{2} \sbs\frac{n_2+n_3+n_5}{3})$        \\ \cline{2-3} 
		\multicolumn{1}{|l|}{}                           & \multicolumn{1}{l|}{$\frac{n_1+n_4}{2}, \frac{n_2 + n_3}{2} \geq \eta \geq \frac{n_3 + n_4}{2}, \frac{n_2+n_5}{2}$}       & $(\eta,\eta \sbs \eta,\eta\sbs\eta)$          \\ \cline{2-3} 
		\multicolumn{1}{|l|}{}                           
		& \multicolumn{1}{l|}{$\frac{n_2+n_5}{2} > \eta \geq \frac{n_3+n_4}{2}$}       
		& $(\frac{n_2+n_5}{2}, \frac{n_1+n_3+n_4}{3} \sbs \frac{n_2+n_5}{2}, \frac{n_1+n_3+n_4}{3} \sbs \frac{n_1+n_3+n_4}{3})$          
		\\ \cline{2-3} 
		\multicolumn{1}{|l|}{}                      
		& \multicolumn{1}{l|}{$\frac{n_3+n_4}{2} > \eta \geq n_5$}       
		& $(\frac{n_3+n_4}{2}, \frac{n_1+n_2+n_5}{3} \sbs \frac{n_3+n_4}{2}, \frac{n_1+n_2+n_5}{3} \sbs \frac{n_1+n_2+n_5}{3})$        
		\\ \hline
		
		\multicolumn{1}{|l|}{(2,1,1,1)} & \multicolumn{1}{l|}{$n_1 \geq \eta > n_2$}       & $(\frac{n_1+n_3+n_4+n_5}{4},n_2 \sbs \frac{n_1+n_3+n_4+n_5}{4} \sbs \frac{n_1+n_3+n_4+n_5}{4} \sbs \frac{n_1+n_3+n_4+n_5}{4})$         \\ \cline{2-3} 
		\multicolumn{1}{|l|}{}                           & \multicolumn{1}{l|}{$n_2 \geq \eta \geq n_4$}       & $(\eta, \eta\sbs\eta\sbs\eta\sbs\eta)$         \\ \cline{2-3} 
		\multicolumn{1}{|l|}{}                           & \multicolumn{1}{l|}{$n_4 > \eta \geq n_5$}       & $(n_4,\frac{n_1+n_2+n_3+n_5}{4}\sbs\frac{n_1+n_2+n_3+n_5}{4}\sbs\frac{n_1+n_2+n_3+n_5}{4} \sbs \frac{n_1+n_2+n_3+n_5}{4})$         \\ \hline
		
		\multicolumn{1}{|l|}{(1,1,1,1,1)}                & \multicolumn{1}{l|}{Any $\bn$}       & $(\eta \sbs \eta \sbs \eta \sbs \eta \sbs \eta)$        \\ \hline
	\end{tabular}
	\caption{Formulas for $\bb^{\blam}_{\max}(\bn)$ (up to a reordering of the entries) for $d=5$ and all partitions $\blam$. Here $\eta = \sum_j n_j / 5 $.} 
	\label{tab:tabled5}
\end{table*}
}

We summarize all formulas we found for $\bb_{\max} = \bb^{\blam}_{\max}(\bn)$ in Tables\til\ref{tab:tabledgen},\til\ref{tab:tabled2},\til\ref{tab:tabled3},\til\ref{tab:tabled4}, and\til\ref{tab:tabled5}. These closed-form maximizers constitute the central results of this work; they are stated as Results~\ref{res:bipartitions}, \ref{res:finest}, and~\ref{res:small_d} and derived in Sec.\til\ref{sec:solutions_general}. These tables show that the entries of $\bb_{\max}$ are always \emph{averages} of a suitable subset of components $n_j$. This fact reflects the specific discrete-Fourier-transform-like structure of the \bt optimal' unitary $u$ that conjugates $\text{diag}(\bn)$ into the matrix $\gamma = u \, \text{diag}(\bn) u^\dagger$ maximizing the right-hand side of Eq.\til\eqref{eq:Iblam_max_rewritten}, i.e., into the matrix $\gamma$ whose block spectrum is precisely $\bb_{\max}$ (see, e.g., Sec.\til\ref{subsec:2D_space}).

Tables\til\ref{tab:tabled4} and\til\ref{tab:tabled5} also show that, for \emph{fixed} $d$ and $\blam$, the \emph{functional} form of the maximizer may depend on $\bn$. This induces a partition of the set of spectra $\bn$ into what we call \bt regimes' (see Appendix~\ref{app:regimes}). Our results display two further features: a manifestation of the particle-hole symmetry of fermionic systems, and the fact that equal-size blocks of $\bb_{\max}$ coincide. We briefly discuss these symmetries of the general problem in Appendix~\ref{app:symmetries}.

Finally, for many $d$ and $\blam$, tackling our maximization numerically in the form of Eq.\til\eqref{eq:Iblam_max_rewritten} was extremely useful as a tool for conjecturing formulas for $\bb_{\max}$. We performed numerical maximization using the Adam optimization algorithm~\cite{adam}. The absence of nonglobal local maximizers allows the numerical computations to converge fast enough for our purposes.

%To summarize, our general strategy to compute $C_{\max}^{\blam}(\rho)$ for given $d$, $\blam$ and NONs $\bn$ of the free state $\rho$ can be broken down as follows:
%\begin{enumerate} [ {(}1{)}]
%	\item partially or completely characterize $\bset^{\blam}_{\bn}$;
%	\item exhibit simple formulas for the components of a vector in $ \bset^{\blam}_{\bn}$ that is majorized by every other $\bb \in \bset^{\blam}_{\bn}$---such a vector is then the maximizer $\bb^{\blam}_{\max}(\bn)$;
%	\item compute $C_{\max}^{\blam}(\rho)$ using Eq.\til\eqref{eq:obtain_Imax_from_bmax}.
%\end{enumerate}

Our general strategy to compute $C_{\max}^{\blam}(\rho)$ for given $d$, $\blam$ and NONs $\bn$ is summarized in Fig.~\ref{fig:strategy}.

\usetikzlibrary{arrows.meta,positioning,calc}
%\usepackage[labelsep=period,font=normalsize]{caption}

% ---- colours of the flowchart -------------------------------------------
\definecolor{boxfill}{RGB}{237,237,252}   % very light lavender
\definecolor{boxline}{RGB}{35,35,205}     % saturated blue

% ---- node / arrow styles -------------------------------------------------
\tikzset{
  stage/.style={
    rectangle, rounded corners=3.5pt,
    draw=boxline, line width=0.9pt,
    fill=boxfill,
    align=center,
    inner xsep=6pt, inner ysep=7pt,
    text width=66mm
  },
  flow/.style={
    -{Stealth[length=3.2mm,width=2.6mm]},
    draw=boxline, line width=1.1pt
  },
  note/.style={align=left, anchor=west, inner sep=0pt}
}

\begin{figure}[t]
\centering
\begin{tikzpicture}[node distance=11mm]

% ------------------------------- the five stages -------------------------
\node[stage] (fock) {%
  \textbf{Fock space} $\mathcal{F}_{\mathcal{H}_1}$, dimension $2^{d}$\\[2pt]
  $\displaystyle\max_{U}\, C^{\bm{\lambda}}(U\rho U^{\dagger})$};

\node[stage, below=of fock] (rdm) {%
  \textbf{1RDM level}, $d\times d$ matrices\\[2pt]
  $ \max_{\gamma} \left [ \sum_{\mu = 1}^k  H(\specdown(B^\gamma_{\mu})) \right ] - H(\bn)$};

\node[stage, below=of rdm] (poly) {%
  \textbf{Spectral polytope} $\mathfrak{B}^{\bm{\lambda}}_{\bm{n}}\subset\mathbb{R}^{d}$\\[2pt]
  $\displaystyle\max_{\bm{b}\in\mathfrak{B}^{\bm{\lambda}}_{\bm{n}}} (H(\bm{b})) - H(\bm{n})$};

\node[stage, below=of poly] (least) {%
  \textbf{Majorization-least element}\\[2pt]
  $\bm{b}_{\max}\prec\bm{b}$ \ for all \ $\bm{b}\in\mathfrak{B}^{\bm{\lambda}}_{\bm{n}}$};

\node[stage, below=of least] (rot) {%
  \textbf{Optimal 1P unitary $U$}\\[2pt]
  product of DFT-type unitaries};

% ------------------------------- the arrows -------------------------------
\newcommand{\arrowx}{-18mm}   % negative = leftwards

\draw[flow] ([xshift=\arrowx]fock.south)  -- ([xshift=\arrowx]rdm.north);
\draw[flow] ([xshift=\arrowx]rdm.south)   -- ([xshift=\arrowx]poly.north);
\draw[flow] ([xshift=\arrowx]poly.south)  -- ([xshift=\arrowx]least.north);
\draw[flow] ([xshift=\arrowx]least.south) -- ([xshift=\arrowx]rot.north);

% ------------------------- the side annotations ---------------------------
\node[note] at ($(fock.south east)!0.5!(rdm.north east)+(-45mm,0)$)
  {(i) $\rho_{\mu}$ free with 1RDM $B^{\gamma}_{\mu}$;\\
   free state entropy formula};

\node[note] at ($(rdm.south east)!0.5!(poly.north east)+(-45mm,0)$)
  {(ii) dependence on $\gamma$ \\
  only through block-spectrum};

\node[note] at ($(poly.south east)!0.5!(least.north east)+(-45mm,0)$)
  {(iii) strict Schur-concavity of $H$};

\node[note] at ($(least.south east)!0.5!(rot.north east)+(-45mm,0)$)
  {(iv) find corresponding unitary};

\end{tikzpicture}

\caption{Strategy for maximizing the orbital total correlation $C^{\bm{\lambda}}$
of a free state $\rho$ over 1P unitaries $U$. (i) Restricting to free states allows us to map the
problem from the $2^{d}$-dimensional Fock space to $d\times d$ 1RDMs $\gamma$ whose eigenvalues equal the NONs $\bn$ of $\rho$.
(ii) Invariance under orbital rotations within each lattice region further reduces the problem to the 1RDMs' block spectra $\bb$, which form a convex polytope
$\mathfrak{B}^{\bm{\lambda}}_{\bm{n}}$. (iii) The strictly Schur-concave objective function $H$ is maximized by the majorization-least element of
$\mathfrak{B}^{\bm{\lambda}}_{\bm{n}}$. (iv) This element corresponds to an optimal 1P unitary built with discrete Fourier transform-type unitaries.}
\label{fig:strategy}
\end{figure}

\section{Solutions of the general problem} \label{sec:solutions_general}

We will now solve several instances of the general problem of computing $C^{\blam}_{\max}(\rho)$ for free states $\rho$. An instance of the problem is specified by the \emph{dimension} of the 1P space $\H_1$, $d$, an \emph{integer partition} of $d$, $\blam = (\lambda_1, \ldots, \lambda_k)$, and the \emph{natural occupation numbers} of the free state $\rho$, $ 1 \geq n_1 \geq \ldots \geq n_d \geq 0$.

The main result of this section, and of the article, is the complete solution for the problem instances where $\blam = (\lambda_1, \lambda_2)$ is a \emph{bipartition} of $d$ (see Result~\ref{res:bipartitions} in Sec.~\ref{subsec:bipartitions}). This is a generalization of Result~\ref{prop:Emax_SD_bipartition}. Furthermore, we provide the solution for the trivial multipartitions $\blam = (1, \ldots, 1)$ (Result~\ref{res:finest}); we solve all instances with $d \leq 5$ (Result~\ref{res:small_d}); and we present a systematic simplification for instances with $\bn = (1, \ldots, 1, 0, \ldots, 0)$ (with arbitrary $d$ and $\blam$), i.e., for SDs. All results are summarized in Tables\til\ref{tab:tabledgen},\til\ref{tab:tabled2},\til\ref{tab:tabled3},\til\ref{tab:tabled4}, and\til\ref{tab:tabled5}; technical details are deferred to the appendices.

We conclude this section with two Conjectures (Conjs.~\ref{conj:1} and~\ref{conj:2}) on the form of a potential general solution, which we support with numerical investigations.

\subsection{1P space dimension $d = 2$} \label{subsec:2D_space}
To illustrate the core concepts, we first examine the simple case $d=2$, with its trivial bipartition $\blam = (1,1)$. Fix NONs $1 \geq n_1 \geq n_2 \geq 0$. The set $\bset^{\blam}_{\bn}$ is the set of all possible diagonals $\bb $ of $2 \times 2$ Hermitian matrices with spectrum $\bn$. It is characterized by the Schur-Horn Theorem\til\ref{thrm:schur-horn} (or by the Cauchy interlacing inequalities of Thm.\til\ref{thrm:interlacing}), as follows: $\bb = (b_1 \sbs b_2) \in \bset^{\blam}_{\bn}$ if and only if $\bn \succ \bb$. The polytope $\bset^{\blam}_{\bn}$ is thus the \emph{segment} in $\R^2$ with endpoints $(n_1, n_2)$ and $(n_2, n_1)$.
% We represent it in Fig.\til\ref{fig:bset_d=2}.
It contains the maximally mixed vector
\begin{equation}
	\boxed{\bb_{\max} = \left ( \frac{n_1 + n_2}{2} \sbs \frac{n_1 + n_2}{2} \right )} \;
\end{equation}
whose two equal components are the average of the entries of $\bn$. According to our discussion in Sec.~\ref{sec:math_struc_gen_prob}, since $\bb_{\max}$ is majorized by every other $\bb \in \bset^{\blam}_{\bn}$, it is indeed the unique maximizer of the objective function of Eq.~\eqref{eq:Imax_max_over_bset}. For any free state $\rho$ with NONs $n_1, n_2$,
\begin{equation}
	C^{\blam}_{\max}(\rho) = 2 h \left ( \frac{n_1 + n_2}{2}  \right ) - h(n_1) - h(n_2)
\end{equation}
follows from Prop.~\ref{prop:Imax_max_over_bset}.
%\begin{figure}
	
%	\input{figures/bset_d=2}
%	\caption{The polytope $\bset^{\blam}_{\bn}$ of block spectra for $\blam = (1,1)$, with the maximally mixed element $\bb_{\max}$.}
%	\label{fig:bset_d=2}
	
%\end{figure}

Next, we determine the two natural orbitals of an optimally rotated fermionic state with NONs $\bn$. Let us restate this question from the experimental perspective of Sec.~\ref{sec:main_problem}. Suppose we have prepared an initial fermionic free state $\rho$ over two spatially separated orbitals $\ket{1}$ and $\ket{2}$, with NONs $\bn$. Suppose that $\ket{1}$ and $\ket{2}$ are precisely the natural orbitals of $\rho$---this is the case if, e.g., the states of the two orbitals have been prepared independently. The 1RDM of $\rho$, relative to the basis $\{ \ket{1}, \ket{2} \}$, is then $\gamma= \text{diag}(n_1, n_2)$. Our question becomes: what 1P unitary (i.e., orbital transformation) should we apply to $\rho$ in order to maximize the correlation between the orbitals $\ket{1}$ and $\ket{2}$? The answer, up to phases, is a two-dimensional \emph{discrete Fourier transform} (or Hadamard transform) of the orbitals. That is, we should implement the 1P unitary $U$ defined by the following transformation of the fermionic creation operators $f^\dagger_{1/2}$ associated with the basis orbitals:
\begin{align} \nonumber
	f^\dagger_j \rightarrow U^\dagger f^\dagger_j U &= \sum_{i=1}^2 f_i^\dagger (\mathrm{DFT}_2)_{ij} \; ,\\
	\mathrm{DFT}_2 & = \frac{1}{\sqrt{2}} \begin{pmatrix} \label{eq:DFT2}
		1 & 1 \\
		1 & -1
	\end{pmatrix} \; .
\end{align}
If the system state $\rho$ transforms accordingly, i.e., $\rho \rightarrow U \rho U^\dagger = \rho'$, then the 1RDM of the output free state $\rho'$ reads
\begin{equation} %YES, THE (here unrelevant) DAGGER IS IN THE RIGHT PLACE...
	\gamma' = \mathrm{DFT}_2^{\dagger} \cdot \gamma \cdot \mathrm{DFT}_2 = 
	\frac{1}{2}
	\begin{pmatrix}
		n_1 + n_2 & n_1 - n_2 \\
		n_1 - n_2 & n_1 + n_2
	\end{pmatrix} \; .
\end{equation}
The diagonal of $\gamma'$ is maximally mixed, i.e., $\text{diag}(\gamma') = \bb_{\max}$, so that the correlation between $\ket{1}$ and $\ket{2}$ contained in $\rho'$ indeed reaches its maximal attainable value. An alternative to $\mathrm{DFT}_2$ is given by
\begin{equation}
	\text{e}^{\text{i} \pi \sigma_x/4} = \frac{1}{\sqrt{2}}
	\begin{pmatrix}
		1 & \text{i} \\
		\text{i} & 1
	\end{pmatrix}
\end{equation}
or by any other unitary $u$ such that $|u_{ij}| = 1/\sqrt{2}$. Under the transformations described by these unitaries, the natural orbitals of the output free state $\rho'$ are \bt unbiased' superpositions of $\ket{1}$ and $\ket{2}$, which indeed ensures maximal correlation. This idea underpins the analysis for all the cases $d, \blam$ that we will examine.

Crucially, the mode-mixing transformations described above can be realized experimentally. In particular, the last matrix describes the unitary evolution induced by the tunneling Hamiltonian $t(f^\dagger_1 f_2 + f^\dagger_2 f_1)$ for a specific time. This evolution can be implemented for, e.g., cold fermionic atoms in optical lattices (with two sites only)\til\cite{bloch_review}, double quantum dots\til\cite{1998divincenzo_quantum_dots,kim_double_qdot} or, for a bosonic analogue, 50:50 beam splitters\til\cite{  zeilinger_unitary_breakdown}.
	
Our analysis of free states with $d=2$ and $\blam = (1,1)$ can be extended to \emph{arbitrary} two-orbital fermionic states that commute with the number parity operator. In particular, the orbital transformations described above maximize the correlation for any initial parity-superselected state with natural orbitals $\ket{1}$ and $\ket{2}$. This extension beyond free states is a general feature of all trivial partitions of the form $\blam = (1, \ldots, 1)$ (see Sec.~\ref{subsec:trivial_partitions}), and is discussed in Appendix\til\ref{app:finest_partition_arbitrary_states}.

\subsection{Bipartitions} \label{subsec:bipartitions}

We now solve the problem of maximal orbital correlation of free states for arbitrary bipartitions $\blam$ of any space dimension $d$, and arbitrary NONs $\bn$. This solution generalizes Result~\ref{prop:Emax_SD_bipartition}, and constitutes the main result of this work. We obtain it by generalizing the analysis for $d=2$ and $\blam=(1,1)$ of Sec.~\ref{subsec:2D_space}. As an intermediate step, the illuminating case $d=3$, $\blam = (2,1)$ is examined in Appendix~\ref{app:3D_space}.

Our solution is stated in the following result, which we prove below.
\begin{result}[Free states, arbitrary bipartitions] \label{res:bipartitions}
Consider a lattice of $d \geq 2$ sites split into two regions of $\lambda_1 \geq \lambda_2$ sites, i.e., a bipartition $\blam = (\lambda_1, \lambda_2)$ of $d$, and let $1 \geq n_1 \geq \ldots \geq n_d \geq 0$. Then the maximizer of $H$ [Eq.~\eqref{eq:function_Hadamard}] over $\bset^{\blam}_{\bn}$ [Def.~\ref{def:b_set}] is
%\begin{figure}[H]
\begin{mdframed}
\begin{align} \nonumber
&\bb_{\max} = \\ \nonumber
& \Bigg(
\frac{n_1 + n_d}{2},\ \frac{n_2 + n_{d-1}}{2},\ \ldots,\ \frac{n_{\lambda_2} + n_{\lambda_1 + 1}}{2}, n_{\lambda_2+1}, \, \ldots, \\ \label{eq:bbest_bip}
& n_{\lambda_1}\ \Big |  \ \frac{n_1 + n_d}{2},\ \frac{n_2 + n_{d-1}}{2},\ \ldots,\ \frac{n_{\lambda_2} + n_{\lambda_1 + 1}}{2}
\Bigg)
\end{align}
\end{mdframed}
%\end{figure}
up to a non-increasing reordering of the entries within each block. This vector is majorized by every other element of $\bset^{\blam}_{\bn}$. Consequently, the maximal orbital total correlation [Eq.~\eqref{eq:Iblam_max_def}] between the two regions of any fermionic free state $\rho$ with NONs $\bn$ is $C^{\blam}_{\max}(\rho) = H(\bb_{\max}) - H(\bn)$. It is attained by applying $\lambda_2$ two-orbital transformations $\mathrm{DFT}_2$ [Eq.~\eqref{eq:DFT2}], the $j$th one mixing the orbitals $\ket{j}$ and $\ket{d+1-j}$, to the free state with 1RDM $\text{diag}(n_1, \ldots, n_d)$ in the lattice-site basis.
\end{result}

First, the relation $\bb_{\max} \in \bset^{\blam}_{\bn}$ follows from the existence of the following Hermitian matrix $\gamma$ with spectrum $\bn$ and block spectrum $\bb_{\max}$:
\begin{widetext}
\begin{equation} \label{eq:big_kaustavs_matrix}
	\gamma =
	\begin{pmatrix}
	\begin{array}{cccccc|ccc}
		\frac{n_1 + n_d}{2} &   &   &   &   &   &   &   & \frac{n_1 - n_d}{2} \\
			& \ddots &   &   &   &   &   &   &   \\
			&   & \frac{n_{\lambda_2} + n_{\lambda_1 + 1}}{2} &   &   &   & \frac{n_{\lambda_2} - n_{\lambda_1 + 1}}{2} &   &   \\
			&   &   & n_{\lambda_2 + 1} &   &   &   &   &   \\
			&   &   &   & \ddots &   &   &   &   \\
			&   &   &   &   & n_{\lambda_1} &   &   &   \\
		\hline
			&   & \frac{n_{\lambda_2} - n_{\lambda_1 + 1}}{2} &   &   &   & \frac{n_{\lambda_2} + n_{\lambda_1 + 1}}{2}  &   &   \\
			&   &   &   &   &   &   & \ddots &   \\
		\frac{n_1 - n_d}{2} &   &   &   &   &   &   &   & \frac{n_1 + n_d}{2}
	\end{array} 
	\end{pmatrix} \; .
\end{equation}
\end{widetext}
Empty spaces indicate zero entries. $\gamma$ can be obtained from diag$(n_1, \ldots, n_d)$ by successive conjugations with unitary matrices $u_1, \ldots, u_{\lambda_2}$, where $u_j$ acts as $\mathrm{DFT}_2$ on $\text{span}(\be_j, \be_{d+1-j})$ and leaves the remaining canonical basis vectors unchanged---here, $\be_j$ is the $j$th element of the canonical basis of $\C^d$. Correspondingly, one can transform an initial free state with 1RDM diag$(n_1, \ldots, n_d)$ (relative to the lattice-site basis) to the free state with 1RDM $\gamma$ via a simple sequence of $\lambda_2$ elementary two-orbital transformations (see Sec.~\ref{subsec:2D_space}).

Second, we prove that $\bb_{\max}$ indeed maximizes $H$ over $\bset^{\blam}_{\bn}$. To that end, it suffices to show that $\bb_{\max} \prec \bb$ $\forall \, \bb \in \bset^{\blam}_{\bn}$. For clarity, we focus on \emph{balanced} bipartitions, $\lambda_1 = \lambda_2 = k$, deferring the general case to Appendix~\ref{app:proof_bipartition_lemma}. We begin with a theorem of independent interest.
\begin{thrm} \label{thrm:mikes_maj_rel}
	Let $A$ and $B$ be $k \times k$ Hermitian matrices and
	\begin{equation} \label{eq:ABM_matrix}
	\gamma =
	\begin{pmatrix}
	\begin{array}{c|c}
		A & M \\
		\hline
		M^{\dagger} & B
	\end{array} 
	\end{pmatrix} \; .
	\end{equation}
	Let $\balpha, \bbeta$ and $\boldsymbol{n}$ be the decreasingly ordered spectra of $A, B$ and $\gamma$, respectively. Let $\tilde{{\boldsymbol{n}}} = (n_1 + n_{2k}, n_2 + n_{2k-1}, \ldots, n_{k} + n_{k+1})$. Then
	\begin{equation} \label{eq:mikes_maj_rel}
		\tilde{\boldsymbol{n}} \prec \balpha + \bbeta \; .
	\end{equation}
\end{thrm}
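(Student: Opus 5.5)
The plan is to reduce Eq.~\eqref{eq:mikes_maj_rel} to Ky Fan-type partial-sum inequalities and then to a statement about compressions of $\gamma$ to suitable $k$-dimensional subspaces. Both sides of Eq.~\eqref{eq:mikes_maj_rel} have the same total, $\sum_i(\alpha_i+\beta_i)=\tr A+\tr B=\tr\gamma=\sum_i \tilde n_i$. So it suffices to show, for every $m\le k$, that the sum of the $m$ largest entries of $\tilde{\bn}$ is at most $\sum_{i\le m}(\alpha_i+\beta_i)$; the vector $\balpha+\bbeta$ is already non-increasing.

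\textbf{Step 1 (reformulation through balanced compressions).} Let $W$ be any unitary on $V_2$. Replacing $M\mapsto MW$ and $B\mapsto W^\dagger BW$ changes neither $\balpha$, $\bbeta$ nor $\bn$, since it is conjugation of $\gamma$ by the block-diagonal unitary $\I\oplus W$. By the Ky Fan inequality for sums of Hermitian matrices, $\specdown(A+W^\dagger BW)\prec\balpha+\bbeta$ for every such $W$. Now $A+W^\dagger BW=2\,X_W^\dagger\gamma X_W$, where $X_W=\tfrac{1}{\sqrt2}\binom{\I}{W}$ is an isometry onto the ``balanced graph'' subspace $G_W=\{(x,Wx)\}\subset V_1\oplus V_2$. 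Hence the claim follows once I find $W$ with
\begin{equation*}
\tilde{\bn}\prec 2\,\specdown(X_W^\dagger\gamma X_W).
\end{equation*}
Transitivity of $\prec$ then gives Eq.~\eqref{eq:mikes_maj_rel}. More generally, I would allow $W$ to be a limit of such unitaries, or take convex combinations over several choices of $W$, since the set of vectors majorized by $\balpha+\bbeta$ is convex and closed.

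\textbf{Step 2 (choice of subspace).} The natural target is the subspace $S$ spanned by $(\phi_i+\phi_{2k+1-i})/\sqrt2$, $i=1,\dots,k$, where $\phi_j$ are the eigenvectors of $\gamma$. The compression of $\gamma$ to $S$ is exactly $\mathrm{diag}(\tilde{\bn})/2$, mirroring the construction in Eq.~\eqref{eq:big_kaustavs_matrix}. Moreover, \emph{any} $k$-dimensional $S'$ spanned by unbiased pairings $(\phi_i+e^{i\theta_i}\phi_{2k+1-i})/\sqrt2$ has a compression with diagonal $\tilde{\bn}/2$. By Schur--Horn, that compression has spectrum majorizing $\tilde{\bn}/2$, which is the wrong direction. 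So I would instead aim for a subspace $S'$ of this unbiased type that is also a balanced graph $G_W$, and show its compression spectrum is \emph{exactly} $\tilde{\bn}/2$. This requires the phases to be chosen so that the off-diagonal couplings vanish. A balanced graph is characterized by the projection onto $S'$ having diagonal blocks both equal to $\I/2$ relative to $V_1\oplus V_2$. So I would search over the torus of phases $\theta_i$ and over the choice of pairing unitaries within $\mathrm{span}(\phi_i,\phi_{2k+1-i})$ for a $k$-dimensional subspace that is simultaneously $\gamma$-invariant-compatible (so its compression is diagonal) and balanced.

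\textbf{Main obstacle and fallback.} Step 2 is the crux. For generic $\gamma$ there is no reason for a subspace that is both balanced with respect to $V_1\oplus V_2$ and adapted to the eigenbasis of $\gamma$. I expect a dimension count or a topological (degree/fixed-point) argument on the phases to be needed, and it may fail. If it does, I would fall back on a direct Ky Fan/minimax argument for each $m$ and each index set $J$ with $|J|=m$. Using $\sum_{i\le m}\alpha_i=\max_{P}\tr(P A)$ and $\sum_{i\le m}\beta_i=\max_Q\tr(QB)$, the right-hand side equals $\max\tr(\gamma\,\Pi_{P\oplus Q})$ over $m$-dimensional $P\subset V_1$ and $Q\subset V_2$. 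I would then choose $P,Q$ by intersecting $V_1,V_2$ with the $2m$-dimensional space $T=\mathrm{span}\{\phi_i,\phi_{2k+1-i}\}_{i\in J}$, together with the top eigenspaces of $\gamma$. The goal is to show $\tr(\gamma\Pi_{P\oplus Q})\ge\sum_{i\in J}(n_i+n_{2k+1-i})$ via a Wielandt-type min-max estimate, essentially a Lidskii--Wielandt inequality for the pinching $\gamma\mapsto A\oplus B$. This is combined with the observation that $\gamma$ and $(\I\oplus-\I)\gamma(\I\oplus-\I)$ are isospectral and average to $A\oplus B$.
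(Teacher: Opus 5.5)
\textbf{There is a genuine gap.} None of your three routes establishes the required inequalities $\sum_{i\le m}(\alpha_i+\beta_i)\ge\sum_{j\in J}(n_j+n_{2k+1-j})$ for $J\subseteq\{1,\dots,k\}$, $|J|=m$, and Step~1 rests on a false identity. In fact $2X_W^\dagger\gamma X_W=A+W^\dagger BW+MW+W^\dagger M^\dagger$; for $k=1$, $W=1$ this is $a+b+2\,\mathrm{Re}\,M$.

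\begin{itemize}
\item \emph{Step 1.} The correct statement is $A+W^\dagger BW=X_W^\dagger\gamma X_W+X_{-W}^\dagger\gamma X_{-W}$, the sum of the compressions of $\gamma$ to $G_W$ and $G_W^\perp=G_{-W}$. Ky Fan applied to it bounds $\specdown(A+W^\dagger BW)$ from above, not from below. Your identity holds only if $MW$ is skew-Hermitian. For generic $M$ that leaves finitely many $W$, so there is nothing to search over in Step~2. For such $W$ the two compressions coincide (both equal $\tfrac12(A+W^\dagger BW)$). The remaining claim $\tilde{\bn}\prec 2\,\specdown(\gamma_{G_W})$ is then just the theorem itself for the splitting $G_W\oplus G_W^\perp$, which is circular.
\item \emph{Step 2.} The span of the vectors $(\phi_j+e^{\i\theta_j}\phi_{2k+1-j})/\sqrt2$ has compression exactly $\mathrm{diag}(\tilde{\bn})/2$, so your Schur--Horn worry is moot. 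But these subspaces form a $k$-parameter family. On it, the balanced-graph condition $\Pi_1P_S\Pi_1=\tfrac12\Pi_1$ imposes $k^2$ real equations, which is overdetermined for $k\ge2$. Here $P_S$ projects onto $S$ and $\Pi_1$ onto $V_1$.
\end{itemize}

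\emph{The fallback.} You correctly rewrite $\sum_{i\le m}(\alpha_i+\beta_i)=\max\tr(\gamma\Pi_{P\oplus Q})$. A Wielandt argument does give $\tr(\gamma\Pi_U)\ge\sum_{j\in J}(n_j+n_{2k+1-j})$ whenever $U$ meets the eigenflag $\mathrm{span}(\phi_1,\dots,\phi_r)$ in the dimensions prescribed by $J\cup(2k+1-J)$. But all the difficulty lies in showing that a \emph{split} subspace $P\oplus Q$ in that Schubert position exists for every position of the eigenbasis relative to $V_1\oplus V_2$. This is exactly what the Littlewood--Richardson positivity in the paper's Li--Poon argument certifies. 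You give no construction: the bound $\dim(V_1\cap T)\ge 2m-k$ is vacuous for $m\le k/2$.

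The averaging identity $A\oplus B=\tfrac12(\gamma+D\gamma D)$, $D=\I\oplus(-\I)$, cannot supply it either. With the lower Lidskii bound $\lambda^\downarrow(X)+\lambda^\uparrow(Y)\prec\lambda(X+Y)$ it gives $(\tilde{\bn}/2\,|\,\tilde{\bn}/2)\prec(\balpha\,|\,\bbeta)$. That already suffices for Result~\ref{res:bipartitions} when $\lambda_1=\lambda_2$. However, it only sees the merged spectrum of $A\oplus B$, whereas Eq.~\eqref{eq:mikes_maj_rel} needs exactly $m$ eigenvalues from each block.

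\emph{The missing idea} is an elementary induction on $k$, which is what Lemma~\ref{lemm:recursive_law_1} and Cor.~\ref{cor:recursive_law_2} do.
\begin{itemize}
\item If $1\notin J$, choose $(k-1)$-dimensional $U_\mu\subset V_\mu$ orthogonal to $\phi_{2k}$. Interlacing gives $\specdown_i(\gamma_{U_1})\le\alpha_i$ and $\specdown_i(\gamma_{U_2})\le\beta_i$. Since $U_1\oplus U_2$ is a hyperplane in $\phi_{2k}^\perp$, it also gives $\specdown_i(\gamma_{U_1\oplus U_2})\ge n_{i+1}$. The inequality for $J$ then follows from the case $k-1$ applied to $\gamma_{U_1\oplus U_2}$ with $J-1$.
\item If $1\in J$, subtract both sides from $\tr\gamma=\sum_i(\alpha_i+\beta_i)$ and apply the first case to $-\gamma$ with $\{1,\dots,k\}\setminus J$.
\end{itemize}
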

In Appendix~\ref{app:proof_bipartition_lemma}, we prove both Thm.~\ref{thrm:mikes_maj_rel} directly, and a remarkable generalization of it, namely Thm.~\ref{thm:mike_main}. To the best of our knowledge, neither theorem is present in the standard literature, although Ref.~\cite{Thompson_Therianos_1972} presents a related result.
%We remark that the inequalities underlying the majorization relation of Eq.~\eqref{eq:mikes_maj_rel} are tight. For instance, there are nontrivial examples of matrices $\gamma, A, B$ s.t. $\tilde{\bn} = \balpha + \bbeta$, such as the matrix of Eq.~\eqref{eq:big_kaustavs_matrix}, if we set $\lambda_1 = \lambda_2 = k$ and assume that $n_1 + n_{2k} \geq n_2 + n_{2k - 1} \geq \ldots \geq n_k + n_{k+1}$.

Under the assumption $\lambda_1 = \lambda_2 = k$, Thm.~\ref{thrm:mikes_maj_rel} directly implies that $\bb_{\max} \prec \bb$ $\forall \, \bb \in \bset^{\blam}_{\bn}$. To verify this, consider a $2k \times 2k $ Hermitian matrix $\gamma$ with spectrum $\bn$, partitioned as in Eq.~\eqref{eq:ABM_matrix}, and let $\balpha, \bbeta$ and $\tilde{\bn}$ be defined as in Thm.~\ref{thrm:mikes_maj_rel}. Then
\begin{equation} \label{eq:maj_eq_even_bip}
	\bb_{\max} \ = \left (\frac{\tilde{\bn}}{2}, \frac{\tilde{\bn}}{2} \right) \prec \left ( \frac{\balpha + \bbeta}{2}, \frac{\balpha + \bbeta}{2} \right ) \prec (\balpha, \bbeta) = \bb \; .
\end{equation}
This implies that $\bb_{\max}$ is the maximizer of $H$ over $\bset^{\blam}_{\bn}$. For any free state $\rho$ with NONs $\bn$, Prop.~\ref{prop:Imax_max_over_bset} then gives
\begin{equation}
	C^{\blam}_{\max}(\rho) = H(\bb_{\max}) - H(\bn) \; .
\end{equation}
This proves Result~\ref{res:bipartitions} for balanced bipartitions; the unbalanced case is proven in Appendix~\ref{app:proof_bipartition_lemma}.

Our analysis also yields the following upper bound on the bipartite orbital entanglement entropy of an arbitrary parity-superselected~\cite{Friis_2016} fermionic pure state, not necessarily free and hence including ground states of interacting systems, in terms of its natural occupation numbers.
\begin{result}[Entanglement bound for arbitrary states] \label{res:entanglement_bound}
    Let $\ket{\Psi}$ be a parity-superselected fermionic pure state over a bipartite lattice of $\lambda_1 + \lambda_2 = d$ sites. Let $n_1 \geq \dots \geq n_d$ be the NONs of $\ket{\Psi}$. Then the orbital entanglement entropy of $\ket{\Psi}$ between the two regions of $\lambda_1$ and $\lambda_2$ sites is bounded as
    \begin{equation} \label{eq:arbitrary_state_ent_bound}
        E(\ket{\Psi}) \leq \frac{1}{2}H(\bb_{\max}) \; ,
    \end{equation}
    where $\bb_{\max}$ is given by Eq.~\eqref{eq:bbest_bip} and $H$ by Eq.~\eqref{eq:function_Hadamard}. In particular, for $\lambda_1 = \lambda_2 = d/2$, this bound reads
    \begin{equation}
        E(\ket{\Psi}) \leq \sum_{j=1}^{d/2} h \left ( \frac{n_j + n_{d+1-j}}{2} \right )
    \end{equation}
    and is stronger than the standard bound $E(\ket{\Psi}) \leq (d/2) \log 2$.
\end{result}
\begin{proof}
    Let $\gamma$ be the 1RDM of $\ket{\Psi}$ and $\rho_{1/2}$ its orbital-reduced 
	states; the 1RDM of $\rho_\mu$ is the diagonal 
	block $B^\gamma_\mu$. Then
	\begin{align} \nonumber
		E(\ket{\Psi}) &\overset{(a)}{=} \; \frac{1}{2} \left[ S(\rho_1) + S(\rho_2) \right] \\ \nonumber
		&\overset{(b)}{\leq} \; \frac{1}{2} \sum_{\mu = 1}^{2} 
			H \! \left( \spec^\downarrow ( B^\gamma_\mu ) \right) \\ \nonumber
		&= \; \frac{1}{2} H(\bb) \, , \quad 
			\bb := \left( \spec^\downarrow ( B^\gamma_1 ) \, \middle| \, 
			\spec^\downarrow ( B^\gamma_2 ) \right) \in \bset^{\blam}_{\boldsymbol{n}} \\
		&\overset{(c)}{\leq} \; \frac{1}{2} H(\bb_{\max}) \; .
	\end{align}
	In $(a)$ we used that $\rho_1$ and $\rho_2$ are isospectral~\cite{Friis_2016}; 
	in $(b)$, that the entropy of a fermionic state is at most that of the free 
	state with the same 1RDM~\cite{gottliebfermions}, i.e.\ $H$ evaluated at its NONs; and $(c)$ is $\bb_{\max} \prec \bb$ 
	(see Result~\ref{res:bipartitions}) with the Schur-concavity of $H$.
\end{proof}

\subsection{Trivial partitions \texorpdfstring{$\boldsymbol{\lambda}$}{lambda} $ = (1,\ldots,1)$} \label{subsec:trivial_partitions}

We now solve our optimization problem for an arbitrary 1P space dimension $d$ that is partitioned trivially, i.e., $\blam = (1, \ldots, 1)$. We build on the case of $d=2$, solved in Sec.~\ref{subsec:2D_space}. The solution for the intermediate case $d=3$, $\blam = (1,1,1)$ is presented in Appendix~\ref{app:3D_space}.

Our solution is stated in the following result.
\begin{result}[Free states, finest partition] \label{res:finest}
Let $\blam = (1, \ldots, 1)$, i.e., each of the $d$ lattice sites forms its own region, and, for given $n_1 \geq \ldots \geq n_d$, let $\eta = (\sum_{j=1}^d n_j)/d$ be the average of all $\bn$ entries. Then the maximally mixed vector
\begin{equation} \label{eq:bbest_trivial_parts}
	\boxed{\bb_{\max} = (\eta | \eta | \ldots | \eta)}
\end{equation}
is in $\bset^{\blam}_{\bn}$, is majorized by every other $\bb \in \bset^{\blam}_{\bn}$, and is thus the maximizer of $H$ over that set. Consequently, the maximal orbital total correlation [Eq.~\eqref{eq:Iblam_max_def}] among the $d$ sites of any fermionic free state $\rho$ with NONs $\bn$ is $C^{\blam}_{\max}(\rho) = d \, h(\eta) - H(\bn)$; it is attained by a free state whose natural orbitals are the discrete plane waves $\propto \sum_{j=1}^d \e^{\i 2\pi k j / d} \ket{j}$, $k = 0, \ldots, d-1$.
\end{result}
For completeness, note that $\bset^{\blam}_{\bn}$ is the set of all possible diagonals of Hermitian matrices with spectrum $\bn$, and is characterized by the Schur-Horn Theorem~\ref{thrm:schur-horn}.

An example of an optimal matrix $\gamma$ with spectrum $\bn$ and diagonal $\bb_{\max}$ is given by
\begin{equation}
	\gamma = \mathrm{DFT}^\dagger_d \text{diag}(\bn) \mathrm{DFT}_d \; ,
\end{equation}
where $\mathrm{DFT}_d$ is the $d$-dimensional discrete Fourier transform matrix. We can interpret this result through the lens of our initial physical problem. Consider a 1D chain of lattice sites partitioned trivially, i.e., each lattice site represents a subsystem on its own (as in Ref.~\cite{Eisert2008_Gaussian_q_marg_prob} on bosonic Gaussian states). Then, for fixed NONs $n_j$, a maximally orbital-correlated free state is one whose natural orbitals are discrete plane waves $\propto \sum_{j=1}^d \e^{\i 2\pi k j / d} \ket{j}$. Here, $k = 0, \ldots, d-1$ and $\ket{j}$ represents a lattice site orbital, as before.

The maximal orbital correlation of any free state $\rho$ with NONs $n_j$ follows from Prop.~\ref{prop:Imax_max_over_bset} as 
\begin{equation} \label{eq:I_max_finest_partition}
	C^{\blam}_{\max}(\rho) = d \, h(\eta) - H(\bn) \; .
\end{equation}

As announced in Sec.\til\ref{subsec:2D_space}, our analysis for trivial partitions can be extended to \emph{arbitrary} states that commute with the number parity operator. We present this extension in Appendix~\ref{app:finest_partition_arbitrary_states}.

\subsection{General multipartitions}

We will now discuss the solutions to a few further simple instances of our optimization problem, and put forward two conjectures regarding the potential form of a general solution. Our solution for all small lattices is stated in the following result.
\begin{result}[Free states, all partitions with $d \leq 5$] \label{res:small_d}
For every splitting of a lattice of $d \leq 5$ sites into at least two regions of $\lambda_1 \geq \lambda_2 \geq \ldots$ sites, i.e., every integer partition $\blam$ of $d$ with at least two parts, and all NONs $1 \geq n_1 \geq \ldots \geq n_d \geq 0$, the maximizer $\bb^{\blam}_{\max}(\bn)$ of $H$ over $\bset^{\blam}_{\bn}$ is majorized by every other element of $\bset^{\blam}_{\bn}$, and its entries are averages of subsets of the NONs. Its closed form is given in Tables~\ref{tab:tabledgen}--\ref{tab:tabled5}; in general, its functional form depends on $\bn$, i.e., it differs between regimes. Consequently, the maximal orbital total correlation [Eq.~\eqref{eq:Iblam_max_def}] among these regions is $C^{\blam}_{\max}(\rho) = H(\bb^{\blam}_{\max}(\bn)) - H(\bn)$ for any fermionic free state $\rho$ with NONs $\bn$.
\end{result}

The first nontrivial multipartition occurs at $d=4$, namely $\blam = (2,1,1)$. The derivation of $\bb_{\max}$ for this case is presented in full detail in Appendix~\ref{app:d=4}. For given $\bn$, we find that
\begin{widetext}
\begin{equation} \label{eq:bbest_2_1_1}
	\boxed{
	\bb_{\max} = 
	\begin{cases} 
		\left( \frac{n_1+n_3+n_4}{3}, n_2 \sbs \frac{n_1+n_3+n_4}{3} \sbs \frac{n_1+n_3+n_4}{3} \right) & \text{if } n_1 \geq \eta > n_2 \\[10pt]
		(\eta, \eta \sbs \eta \sbs \eta) & \text{if } n_2 \geq \eta \geq n_3 \\[10pt]
		\left( n_3, \frac{n_1+n_2+n_4}{3} \sbs \frac{n_1+n_2+n_4}{3} \sbs \frac{n_1+n_2+n_4}{3} \right) & \text{if } n_3 > \eta \geq n_4
	\end{cases}
	}
\end{equation}
\end{widetext}
is in $\bset^{\blam}_{\bn}$, is majorized by every other $\bb \in \bset^{\blam}_{\bn}$, and is thus the maximizer of $H$ over that set. This is the first and simplest scenario where the functional form of $\bb_{\max}$ depends on $\bn$. In other words, the function $\bn \mapsto \bb_{\max}^{\blam}(\bn) = \bb_{\max}$ is piecewise-defined. We refer to the different---here, three---pieces as regimes, discussed in Appendix~\ref{app:regimes}.

The analysis continues in Appendix~\ref{app:d=5} with the full solution for $d=5$ and all of its nontrivial multipartitions, namely $\blam = (3,1,1), (2,2,1)$ and $(2,1,1,1)$. The corresponding formulas for $\bb_{\max}$ are summarized in Table~\ref{tab:tabled5}.

Furthermore, in Appendix~\ref{app:SDs} we show that restricting to SDs, i.e., constraining the free state NONs $n_j$ to be in $\{0,1\}$, enables a systematic simplification of our optimization problem.

Interestingly, our derivations vary substantially across different space dimensions $d$ and partitions $\blam$, highlighting the intricate structure of the problem. Nevertheless, in all cases (i) the optimality of $\bb_{\max}$ stems from it being majorized by every other vector $\bb \in \bset^{\blam}_{\bn}$, and (ii) the entries of $\bb_{\max}$ are averages of a subset of entries of $\bn$. In Conjs.~\ref{conj:1} and~\ref{conj:2}, we propose that these features characterize the general solution. We then provide supporting arguments based on numerical results.

\begin{conj} \label{conj:1}
	For any admissible choice of $d, \blam, \bn$, $\bb_{\max}^{\blam}(\bn)$ is least in $\bset^{\blam}_{\bn}$ relative to majorization. 
\end{conj}
For more on majorization-least vectors, we refer to the book~\cite{MarshallBook}, to the more recent Ref.~\cite{uiterkamp2023leastmajorized}, and to the references therein.

Before putting forward the second conjecture, we introduce some notation. We denote by $|X|$ the cardinality of the set $X$ or the length of a tuple $X$. Also, a (finite) ordered partition $\bP$ of a set $X$ is a tuple $(P_{\alpha})_{\alpha = 1}^{|\bP|}$ such that $P_\alpha \neq \varnothing \ \forall \alpha, \quad \bigcup_\alpha P_\alpha = X$ and $P_\alpha \cap P_{\alpha'} = \varnothing \ \text{for } \alpha \neq \alpha'$.
\begin{defn}
Let $\bP = (P_\alpha)_{\alpha = 1}^{|\bP|}$ be an ordered partition of the set $\{1, \ldots, d\}$. We define the linear map $\textup{avg}_{\bP}: \R^d \rightarrow \R^d$ via
\begin{equation}
	\textup{avg}_{\bP} (\bn) := (\ba_{P_1}(\bn), \ldots, \ba_{P_{|\bP|}}(\bn)) \; , \; \; \bn \in \mathbb{R}^d,
\end{equation}
where
\begin{equation}
	\ba_{P_\alpha}(\bn) = \frac{\sum_{j \in P_\alpha} n_j}{|P_\alpha|}
(
\underbrace{
	1,
	\ldots,1
}_{|P_\alpha| \textup{ times}}
) \; .
\end{equation}
\end{defn}
For example, for $d = 4$, $\bP = ((1,3,4),(2))$ and $\bn \in \R^4$, the above definition reads
\begin{align} \nonumber
	&\textup{avg}_{\bP} (\bn) = \\
	& \;\;\; \left ( \frac{n_1 + n_3 + n_4}{3},\frac{n_1 + n_3 + n_4}{3},\frac{n_1 + n_3 + n_4}{3}, n_2  \right ) \; .
\end{align}
\begin{conj} \label{conj:2}
	For any admissible choice of $d, \blam, \bn$, there exists an ordered partition $\bP$ of $\{ 1, \ldots, d \}$ such that
	\begin{equation}
		\bb_{\max}^{\blam}(\bn) = \textup{avg}_{\bP} (\bn) \;\;\; \textup{   up to a permutation.}
	\end{equation}
\end{conj}
If this conjecture holds true, the partition $\bP$ would depend not only on $\blam$, but also on $\bn$---this is the concept of regimes discussed in Appendix~\ref{app:regimes}. 

In more detail, we expect that one can always find simple procedures, based on generalized discrete Fourier transform matrices, to build optimal unitaries---i.e., unitaries that conjugate diag$(\bn)$ into a matrix with block spectrum $\bb_{\max}^{\blam}(\bn)$. Such a simple structure of the unitaries would be advantageous for experimental applications of our results to maximal entanglement extraction, especially when the optimal unitary can be decomposed as a sequence of two-orbital unitaries~\cite{bloch_review,1998divincenzo_quantum_dots,kim_double_qdot,zeilinger_unitary_breakdown}.

%Suppose Conjecture~\ref{conj:2} holds true. Then, since the number of ordered partitions of $\{ 1, \ldots, d \}$ is finite for any integer $d$, the number of regimes (see Sec.~\ref{app:regimes}) would also be finite for any given $d$ and $\blam$.

We wrote a program that takes as input $d, \blam$ and $\bn$, and returns as output (a numerical approximation of) $\bb_{\max}^{\blam}(\bn)$. The program uses a gradient ascent method that converges fast thanks to the concavity of $H$. For every integer partition $\blam$ with $d = \sum_\mu \lambda_\mu \leq 8$, we ran the program for $\sim$ 1000 different vectors $\bn$. For each instance, we then sampled vectors from $ \bset^{\blam}_{\bn}$ and verified that $\bb_{\max}^{\blam}(\bn)$ was majorized by all of them, thus supporting Conj.~\ref{conj:1}. We also checked that the entries of $\bb_{\max}^{\blam}(\bn)$ are always averages of the entries of $\bn$ within a certain convergence threshold, thus supporting Conj.~\ref{conj:2}. We extended our analysis to $d=9, \blam = (3,3,3)$; $d=10, \blam = (2,2,2,2,2)$; $d=12, \blam = (3,3,3,3)$ and $\blam = (4,4,4)$. All results align with our conjectures.

As an example of the complexity of the problem, the integer $d=8$ has 16 different nontrivial multipartitions $\blam$, and for $\blam = (4,3,1)$ alone we found 31 different regimes numerically.

Setting aside the optimization itself, providing a general and efficient characterization of the polytopes $\bset^{\blam}_{\bn}$ would be a significant step towards finding a general solution to our problem. Ref.~\cite{lipoon} provides such a characterization in terms of linear inequalities, but these are highly redundant. One possible approach would be to generalize the methods in Appendix~\ref{app:proof_bipartition_lemma}.

\section{Extensions} \label{sec:extensions}

We discuss two extensions of the problem of maximal orbital entanglement or total correlation of free states. In Sec.\til\ref{subsec:extensions_SSR}, we analyze the correlation \emph{reduction} due to the local particle-number SSR and the necessary adaptations of our problem. In Sec.\til\ref{subsec:gaussian}, we then obtain the maximal orbital entanglement of a general pure fermionic \emph{Gaussian} state relative to a bipartition of the 1P space.

\subsection{Enforcing the local particle-number superselection rule} \label{subsec:extensions_SSR}
The problem we dealt with so far is not optimally formulated from a physical point of view, as we did not take into account the local particle-number SSR. Physically viable operations on a given set of fermionic orbitals are described by operators that commute with the particle-number operator relative to those orbitals\til\cite{ent_with_ssr_schuch,wisemanvaccaro,wiseman_vaccaro_on_general_SSRs_and_entanglement, wiseman_vaccaro_ferreting, ent_ferm_sys_banuls,Friis_2016}. This operational constraint results in a reduction of the orbital entanglement and total correlation that can be extracted from a fermionic system. Specifically, the entanglement entropy $E$ and the total correlation $C$ do not take the local particle-number SSR into account, and thus overestimate the amount of operationally accessible correlations.

We focus on total correlation, $C$; we recall that $E = C/2$ for pure states. We introduce two versions of an SSR-adapted total correlation, $C^{SSR}$. For the sake of clarity, here we restrict to a bipartite set of fermionic orbitals, $A \cup B$; we refer to Appendix\til\ref{app:SSR} for a thorough analysis, including orbital multipartitions. Denote by $\Pi_{A/B}^{N_{A/B}}$ the projector onto the subspace of the Fock space of states with $N_{A/B}$ particles in the orbitals $A/B$. For a given state $\rho$ on the Fock space, $p_{N_A, N_B} = \tr[\rho \Pi_A^{N_A} \Pi_B^{N_B}]$ is the probability of measuring $N_A$ particles in the orbitals $A$ and $N_B$ in $B$. For $p_{N_A, N_B} \neq 0$, the corresponding post-measurement state is $\rho_{N_A, N_B} = \Pi_A^{N_A} \Pi_B^{N_B}\rho \Pi_A^{N_A} \Pi_B^{N_B} / p_{N_A, N_B}$. The first modified total correlation measure is given by the standard total correlation of the \bt projected' state $\rho^{SSR} = \sum_{N_A, N_B} p_{N_A, N_B} \rho_{N_A, N_B}$~\cite{wiseman_vaccaro_ferreting, gesa_lexin_2024}:
\begin{equation}
	C^{SSR}_{proj} (\rho) : = C (\rho^{SSR})  \; .
\end{equation}
The second SSR-adapted measure is the averaged total correlation of the \bt blocks' $\rho_{N_A, N_B}$~\cite{wisemanvaccaro}:
\begin{equation}
	C^{SSR}_{block} (\rho) : = \sum_{N_A, N_B} p_{N_A, N_B} C (\rho_{N_A, N_B}) \; .
\end{equation}

In Appendix\til\ref{app:SSR}, we motivate these definitions, address the \emph{maximization} of these quantities over free states with fixed NONs, and discuss similarities and differences with the SSR-agnostic problem of Sec.\til\ref{sec:generalizing}. We also solve some simple cases of the maximization problem: there, the optimal unitary rotation obtained by ignoring the SSR (see Sec.\til\ref{sec:generalizing}) is still effective, if not optimal, for extracting SSR-reduced correlations. Furthermore, in these example cases the loss of correlations due to the SSR is negligible in the limit of large number of particles and of orbitals (see Sec.~\ref{subsec:SSR_asymptotics} and Refs.~\cite{wisemanvaccaro,botero_normal_form_ferm_Gaussian_states}). Consequently, \emph{the solution to the SSR-agnostic problem may serve as a simpler yet reliable proxy for the SSR-constrained solution}.

Depending on the context, the less restrictive \emph{local-parity} SSR can be considered. The analysis of Appendix\til\ref{app:SSR} can be readily adapted to the local-parity SSR case. Furthermore, the two SSRs imply exactly the same modifications to our problem for small 1P space dimension $d \leq 3$.

\subsection{General pure fermionic Gaussian states} \label{subsec:gaussian}

We now extend the optimization of entanglement over SDs (see Sec.\til\ref{sec:main_problem}) to general pure fermionic Gaussian states\til\cite{2021Hackl_Gaussian_states} which need not have a definite number of particles. An analogous entanglement optimization problem for bosonic Gaussian states is addressed in Ref.\til\cite{ciancio_zanardi}, where entanglement is quantified using logarithmic negativity.

For the sake of clarity, we restrict to balanced bipartitions, and ignore the reduction of the physically available entanglement due to the relevant local-parity SSR\til\cite{ent_ferm_sys_banuls}. Consider $d = 2k$ orbitals of spinless fermions, evenly shared between two parties. Let $f^\dagger_{A(B),1}, \ldots, f^\dagger_{A(B),k}$ be the creation operators for the orbital subset $A$ ($B$). For any definite-parity $\ket{\Psi}$, we define $E(\ket{\Psi})$ as in Eq.\til\eqref{eq:def_entanglement}. Gaussian states are eigenstates of parity\til\cite{2021Hackl_Gaussian_states}, so their entanglement is well-defined\til\cite{Friis_2016}. We define and calculate
\begin{equation} \label{eq:Emax_gauss}
	E_{\max}^G := \max_{|\Psi \rangle \; \text{Gaussian}} E(\ket{\Psi}) \; .
\end{equation}
We can rewrite this quantity as
\begin{equation}
	E_{\max}^G = \max_{U \; \text{B.T.}} E(U \ket{0}) \; ,
\end{equation}
where the maximization is over general Bogoliubov transformations $U$\til\cite{fetter_walecka, Robinson65-metric,BLS94,Bach22}, and $\ket{0}$ is the particle vacuum or any other reference Gaussian state vector.

For SDs, the maximally entangled state was obtained by pairing each orbital from the first subset with one orbital from the second subset (see Fig.\til\ref{fig:max_SD_construction}). This suggests the following BCS-like guess\til\cite{fetter_walecka} for a Gaussian state vector that is a maximizer for the right-hand side of Eq.\til\eqref{eq:Emax_gauss}:
\begin{equation} \label{eq:Gaussian_states_normal_form}
	\ket{\Psi_{\btheta}} = \prod_{j=1}^k (\cos \theta_j + \sin \theta_j f_{A,j}^\dagger f_{B,j}^\dagger) \ket{0} \; .
\end{equation}
As shown in Ref.\til\cite{botero_normal_form_ferm_Gaussian_states}, \emph{any} Gaussian state vector can be brought to the above normal form via two local Bogoliubov unitaries acting separately on $A$ and $B$. See Appendix\til\ref{app:normal_form} for the analogous result for SDs. Since local unitaries preserve entanglement, we may restrict the optimization to state vectors of the form\til\eqref{eq:Gaussian_states_normal_form}. The entanglement of such states is easily computed, using the orbital-pairing structure:
\begin{equation}
	E(\ket{\Psi_{\btheta}}) = \sum_{j = 1}^k h(\cos^2 \theta_j) \; .
\end{equation}
It is maximal for $\cos^2 \theta_j = 1/2 \; \forall \, j$, which leads to $k$ ebits of entanglement. The following proposition, a straightforward consequence of Ref.\til\cite{botero_normal_form_ferm_Gaussian_states}, summarizes our discussion.
\begin{prop}[cf.~Ref.~\cite{botero_normal_form_ferm_Gaussian_states}] \label{prop:gaus}
	The fermionic Gaussian state vector
	\begin{equation}
	\ket{\Psi_{\btheta = (\frac{\pi}{4}, \ldots, \frac{\pi}{4})}} = 2^{-k/2} \prod_{j=1}^k (1 + f_{A,j}^\dagger f_{B,j}^\dagger) \ket{0}
\end{equation}
	is a maximizer for the right-hand side of Eq.\til\eqref{eq:Emax_gauss}. The maximal entanglement of pure fermionic Gaussian states follows as
	\begin{equation}
		E_{\max}^G = k \log 2 \; .
	\end{equation}
\end{prop}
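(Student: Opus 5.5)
The plan is to show two things for Prop.~\ref{prop:gaus}: the upper bound $E(\ket{\Psi}) \leq k\log 2$ for every pure Gaussian state, and that the stated state attains it. The upper bound is immediate and needs nothing Gaussian. For any definite-parity pure state on $d=2k$ orbitals split into $A$ and $B$, the orbital-reduced state $\rho_A$ acts on the Fock space $\F_{A}$ of dimension $2^k$. Hence $E(\ket{\Psi}) = S(\rho_A) \leq \log 2^k = k\log 2$. Since the Gaussian states form a compact set (the orbit of $\ket{0}$ under Bogoliubov transformations), the maximum in Eq.~\eqref{eq:Emax_gauss} exists and is at most $k\log 2$.

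For attainment, I will compute the entanglement of $\ket{\Psi_{\btheta}}$ in Eq.~\eqref{eq:Gaussian_states_normal_form} directly. The factors $\cos\theta_j + \sin\theta_j f^\dagger_{A,j} f^\dagger_{B,j}$ are even and commute, so the state is a product over pairs $j$. Use the occupation-number basis ordered as $A,1,\ldots,A,k,B,1,\ldots,B,k$. Expanding the product writes $\ket{\Psi_{\btheta}}$ as
\[
\sum_{S\subseteq\{1,\ldots,k\}} \pm \prod_{j\in S}\sin\theta_j \prod_{j\notin S}\cos\theta_j \,\ket{\bn_S}_A\ket{\bn_S}_B ,
\]
where $\bn_S$ is the indicator vector of $S$. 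The only care needed is the fermionic sign from reordering creation operators. That sign depends on $S$ alone, so it can be absorbed into the $B$-basis vectors by a local diagonal sign change, which does not affect the spectrum of $\rho_A$.

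This expansion is a Schmidt decomposition. Its squared coefficients form the product distribution $\prod_j(\cos^2\theta_j,\sin^2\theta_j)$, so $S(\rho_A)=\sum_j h(\cos^2\theta_j)$. Choosing $\theta_j=\pi/4$ gives $k\log 2$, matching the upper bound. The state is Gaussian because it equals $\e^{\i\sum_j \theta_j(f^\dagger_{A,j}f^\dagger_{B,j}+\mathrm{h.c.})}\ket{0}$ up to phase conventions, which is a Bogoliubov transformation of the vacuum. Checking this exponential form on each two-mode even sector $\{\ket{00},\ket{11}\}$ is the one routine verification I would include.

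The normal form of Ref.~\cite{botero_normal_form_ferm_Gaussian_states} is not needed for the value of $E^G_{\max}$. It is only needed to describe all maximizers: up to local Bogoliubov transformations every maximizer must have $\cos^2\theta_j=1/2$ for all $j$. The main obstacle is therefore only the sign bookkeeping in the Schmidt decomposition, which the local sign absorption handles.
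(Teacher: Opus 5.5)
Your proof is correct, and it takes a different, more elementary route than the paper. The attainment step is the same in both: you compute $E(\ket{\Psi_{\btheta}})=\sum_j h(\cos^2\theta_j)$ from the pairing structure. The difference is the optimality step.

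The paper obtains optimality from the Botero--Reznik normal form. Every pure Gaussian state is mapped to some $\ket{\Psi_{\btheta}}$ by local Bogoliubov unitaries on $A$ and $B$, so the maximization reduces to $\max_{\btheta}\sum_j h(\cos^2\theta_j)$.

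You instead take the upper bound from $\dim\F_A=2^k$. This bound holds for every definite-parity pure state and uses no Gaussian structure. You then need only the explicit Schmidt decomposition of one state, where the reordering sign $(-1)^{|S|(|S|-1)/2}$ depends on $|S|$ alone and is absorbed locally.

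What your route buys:
\begin{itemize}
\item It is self-contained, with no appeal to the normal-form theorem.
\item It shows that pure Gaussian states reach the maximum over all parity-superselected pure states, the analogue of the saturation statement in Result~\ref{prop:Emax_SD_bipartition}.
\end{itemize}

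What the paper's route buys:
\begin{itemize}
\item The entanglement of every pure Gaussian state, namely $\sum_j h(\cos^2\theta_j)$ in its normal form.
\item Uniqueness of the maximizer up to local Bogoliubov transformations, which, as you say, your argument does not give.
\end{itemize}

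One small correction to your Gaussianity check. The generator you wrote gives $\e^{\i\theta(f^\dagger_{A}f^\dagger_{B}+f_Bf_A)}\ket{0}=\cos\theta\ket{0}+\i\sin\theta\, f^\dagger_Af^\dagger_B\ket{0}$. To get real coefficients, either absorb the $\i$ with a local phase rotation on $B$, or use the anti-Hermitian generator $\theta_j(f^\dagger_{A,j}f^\dagger_{B,j}-f_{B,j}f_{A,j})$ with a real exponent. The latter yields $\cos\theta_j+\sin\theta_j f^\dagger_{A,j}f^\dagger_{B,j}$ on each pair.
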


% One can further generalize our analysis by considering arbitrary \emph{mixed} fermionic Gaussian states and orbital \emph{multipartitions}, as was done in Sec.\til\ref{sec:generalizing} for free states (i.e., number-conserving Gaussian states). For a systematic approach, one describes Gaussian states with their $2d \times 2d$ covariance matrix\til\cite{botero_normal_form_ferm_Gaussian_states}. This generalizes the $d\times d$ 1RDM $\gamma$ to non-nonumber-conserving states by including information on the expectation values of the operators $ f_j f_k $ and $f^\dagger_j f^\dagger_k$. The analysis then follows and generalizes Sec.\til\ref{sec:math_struc_gen_prob}.

% To avoid overestimating the entanglement that can physically be extracted from fermionic Gaussian states, one needs to consider the \emph{local-parity SSR}\til\cite{ent_ferm_sys_banuls}.

\section{Summary and outlook} \label{sec:conclusions}

Orbital rotations are commonly used to minimize orbital correlation and thereby reduce the complexity of many-body states. In this work, we addressed the opposite task, motivated by quantum information processing: maximizing orbital entanglement and correlation over all orbital rotations for non-interacting fermions. The resulting optimized state then serves as a resource: for example, its entanglement can be transferred to qubit registers~\cite{wiseman_vaccaro_ferreting, gesa_lexin_2024} and used for quantum information processing~\cite{horodecki_entanglement,acin2018roadmap}. We solved this maximization problem in several key settings, with analytical closed-form results throughout.

As a first key result, we proved that the maximal orbital entanglement of Slater determinants of $N$ fermions on a lattice of $d_A + d_B = d$ sites is $\min\{d_A, d_B, N, d-N\}$ ebits (Result~\ref{prop:Emax_SD_bipartition}). It is attained by simple beam-splitter-type rotations of orbital pairs and, given enough particles and holes, equals the maximum over \emph{all} fermionic states. We then determined the maximal orbital total correlation of arbitrary number-conserving fermionic Gaussian states (free states) for every bipartition and arbitrary natural occupation numbers (Result~\ref{res:bipartitions}), for the finest partition in any dimension (Result~\ref{res:finest}), and for every partition of up to five sites (Result~\ref{res:small_d}). In all these cases, the optimal block spectrum of the corresponding 1RDM-level maximization problem is majorization-least.
%so the same optimal states maximize every Schur-concave function of the block spectra, not only the total correlation.
Its entries are averages of natural occupation numbers. Numerics for all partitions with $d \leq 8$ and selected ones up to $d = 12$ support our conjectures that this structure holds in general (Conjs.~\ref{conj:1} and~\ref{conj:2}). The underlying majorization relations between the spectrum of a Hermitian matrix and the spectra of its diagonal blocks (Thms.~\ref{thrm:mikes_maj_rel} and~\ref{thm:mike_main}) are new results of independent interest in matrix analysis. Beyond free states, natural occupation numbers bound the orbital entanglement of any fermionic pure state, e.g., the ground state of an interacting system (Result~\ref{res:entanglement_bound}). Finally, in the cases we analyzed under the local particle-number SSR, the rotations that are optimal without SSRs remain effective, if not optimal. Table~\ref{tab:roadmap} on page two summarizes all results.

These results answer the question posed in the introduction: states without any \emph{particle} correlation can carry substantial \emph{orbital} entanglement and \emph{orbital} correlation, and for enough particles and holes even as much as any fermionic state. This has consequences for several fields. For quantum information processing, mean-field states become a readily available entanglement resource: they are ubiquitous, efficiently prepared, and their optimal rotations are elementary two-orbital operations, realizable with beam splitters in electron or photon optics, tunneling in optical lattices and quantum dots, or two-qubit gates on superconducting quantum processors. For quantum chemistry and many-body physics, our results provide the maximization counterpart of orbital optimization, completing the picture of Fig.~\ref{fig:yinyang}, together with entanglement bounds that follow directly from natural occupation numbers. On a conceptual level, they make entanglement relativity~\cite{zanardi, 2004zanardi_ent_relativity_observables} quantitative for identical particles and identify orbital entanglement, among free states, as a genuinely fermionic resource, since bosonic free states are separable.

Natural next steps include proving our conjectures for arbitrary multipartitions, using proper entanglement measures such as the negativity~\cite{vol_set_sep_states, eisert_bounds_on_negativity} beyond pure bipartite states, and extending our analysis to interacting systems, with free states as analytical references~\cite{bellomia2024quasilocal_ent, bellomia2026classical_correlations}.

Beyond these questions, our results point to concrete applications. Slater determinants are routinely prepared on quantum simulators and quantum processors~\cite{Arute2020HF, bloch_review}.
Combined with our optimal rotations, which mostly act on disjoint orbital pairs and can thus be applied in parallel, they offer a low-depth route to maximal orbital entanglement.
Finally, in quantum chemistry, the maximal orbital entanglement of a molecular mean-field state could serve as a quantitative descriptor of electronic structure and chemical bonding~\cite{DingKnecht2023bonding, ding2026bonding}. 

\begin{acknowledgments}
We thank Kaustav Chatterjee for discussions during the early stages of this work, Chih-Chun Wang for helpful input on the general result for lattice bipartitions, and Chi-Kwong Li for insightful discussions on spectral sets. We used AI (Claude Opus 5) to recheck the derivations, to help polish the language (primarily in the introduction and conclusion), for Fig.~\ref{fig:strategy}, and to conceive the bound of Eq.~\eqref{eq:arbitrary_state_ent_bound}. We acknowledge financial support from the German Research Foundation (Grant SCHI 1476/1-1), the Munich Center for Quantum Science and Technology, the ERC-2025-COG under grant agreement No.~101231616 beyondDFT (C.S.), and the International Max Planck Research School for Quantum Science and Technology (IMPRS-QST) (J.L.). This research is also part of the Munich Quantum Valley, which is supported by the Bavarian state government with funds from the Hightech Agenda Bayern Plus.
\end{acknowledgments}

\bibliography{refs}

\appendix

\section{Orbital total correlation and entanglement in fermionic states} \label{app:mutual_info_fermions}

In this Appendix we motivate in the fermionic context the definition of the orbital total correlation $C$. For a mixed state $\rho$ over an orbital space partitioned into $k$ orthogonal orbital subspaces, it is given by Eq.~\eqref{eq:total_corr}. We focus on bipartitions $A + B$ of the underlying reference orbital basis, and only consider parity-superselected states~\cite{wisemanvaccaro, ent_ferm_sys_banuls,Friis_2016}.

For bipartite systems, $C$ reduces to the quantum mutual information
\begin{equation} \label{eq:app_total_bip_corr}
	C(\rho) = S(\rho_A) + S(\rho_B) - S(\rho) \; ,
\end{equation}
and to the entanglement entropy $E = C/2$ for pure states, see Eq.~\eqref{eq:def_entanglement}. We provide a representation of the orbital total correlation $C(\rho)$ of a state $\rho$ on the Fock space as the minimal relative entropy of $\rho$ with respect to arbitrary uncorrelated, or product, states. The minimizer is the product state constructed with the orbital-reduced states $\rho_{A/B}$ of $\rho$. In particular, this representation of $C(\rho)$ implies $C(\rho)\geq 0$. Our analysis adapts the formalism of Ref.~\cite{ModiVedralrecapgeom} from distinguishable particles to fermionic orbitals (see also Ref.~\cite{lexin_orb_ent}).

We work with creation and annihilation operators relative to orthogonal orbitals, which we here denote with $a^{\dagger}_{j}$, $j = 1, \ldots, \lambda_1$ and $b^{\dagger}_{k}$, $k = 1, \ldots, \lambda_2$. The former refer to the first subset $A$ of orbitals, the latter to the subset $B$ of the remaining orbitals.

We first need to introduce some notation. In the following, the local parity operators $P_{A/B}$ are given by
\begin{align}
	P_A = \e^{\i \pi N_A} 	\; ,
	P_B = \e^{\i \pi N_B} \; ,
\end{align}
where $N_{A} = \sum_{j=1}^{\lambda_1} a_j^\dagger a_j$ and $N_{B} = \sum_{k=1}^{\lambda_2} b_k^\dagger b_k$ are the local number operators. The total parity operator follows as
\begin{equation}
	P = P_A P_B = P_B P_A = \e^{\i \pi (N_A + N_B)} \; .
\end{equation}
Also, we denote with $\F_{A/B}$ the local Fock spaces, and with $\F$ the total Fock space. Any given operator $\O_{A/B}$ on $\F_{A/B}$ can be lifted unambiguously to an operator on $\F$ (by representing it in terms of creation and annihilation operators), which we still denote with $\O_{A/B}$.

We recall that any operator $\O$ on $\F$ can be decomposed into an `even' part $\O^e$ and an `odd' part $\O^o$, as follows. Let $\Pi^{e}$ ($\Pi^{o}$) the projector onto the even (odd)-parity subspace of $\F$. Then
\begin{align} \nonumber
	\O &= (\Pi^{e} + \Pi^{o}) \O (\Pi^{e} + \Pi^{o}) \\ \nonumber
	& = \underbrace{\Pi^{e} \O \Pi^{e} + \Pi^{o} \O \Pi^{o}}_{\O^e} \\
	& + \underbrace{\Pi^{e} \O \Pi^{o} + \Pi^{o} \O \Pi^{e}}_{\O^o} \; .
\end{align}
$\O^e$ ($\O^o$) commutes (anticommutes) with $P$ and is a linear combination of products of an even (odd) number of creation and annihilation operators. The same decomposition can be done for local operators $\O_{A/B}$ on $\F_{A/B}$ by referring to the local parity operator $P_{A/B}$.

The above decomposition allows us to easily prove the following result.
\begin{prop} \label{prop:rho_ssr_implies_rhoA_ssr}
	Let $\rho$ be a state on $\F$ and $\rho_{A}$ its orbital-reduced states relative a subset $A$ of orbitals. If $[P, \rho]=0$ then $[P_{A}, \rho_{A}]=0$, i.e., if $\rho$ is parity-superselected, then $\rho_{A}$ is, too.
\end{prop}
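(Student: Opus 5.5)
The plan is to express the partial trace over $B$ in terms of its action on products of local operators, and then use the even/odd decomposition just introduced. For an operator $\O_A$ on $\F_A$, the reduced state $\rho_A$ is defined, as is standard for fermionic orbital subsystems \cite{Friis_2016}, by the requirement
\begin{equation}
\tr_{\F_A}[\rho_A \O_A] = \tr_{\F}[\rho\, \O_A] \quad \text{for all } \O_A .
\end{equation}
Here $\O_A$ is lifted to $\F$ as a polynomial in the $a_j^{(\dagger)}$. Since $\rho_A$ is uniquely determined by these expectation values, it suffices to show that the expectation value of every \emph{odd} local operator $\O_A^o$ vanishes in $\rho_A$. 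This vanishing is equivalent to $\rho_A = \Pi_A^e \rho_A \Pi_A^e + \Pi_A^o \rho_A \Pi_A^o$, i.e., to $[P_A,\rho_A]=0$.

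\textbf{Step 1 (odd operators detect parity-breaking).} Write $\rho_A = \rho_A^e + \rho_A^o$ relative to $P_A$. For any $\O_A$, the cyclicity of the trace and the projector structure give
\begin{equation}
\tr[\rho_A \O_A] = \tr[\rho_A^e \O_A^e] + \tr[\rho_A^o \O_A^o] .
\end{equation}
Hence $\rho_A^o = 0$ if and only if $\tr[\rho_A \O_A^o]=0$ for all odd $\O_A^o$. To see the "if" direction, choose $\O_A^o = (\rho_A^o)^\dagger$, which is again odd. The hypothesis then gives $\tr[\rho_A^o(\rho_A^o)^\dagger]=0$, and so $\rho_A^o=0$.

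\textbf{Step 2 (lifted odd local operators are odd globally).} Take an odd $\O_A^o$ on $\F_A$. Its lift to $\F$ is a linear combination of products of an odd number of $a$-operators. Each $a_j^{(\dagger)}$ anticommutes with $P = \e^{\i\pi(N_A+N_B)}$, because it changes $N_A+N_B$ by $\pm1$. Therefore the lift satisfies $P\O_A^o P = -\O_A^o$, so it is odd with respect to the global parity. This is exactly where the lift via creation and annihilation operators matters: $P_B$ commutes with the $a$'s, and $P_A$ anticommutes with them, so $P=P_AP_B$ anticommutes.

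\textbf{Step 3 (conclusion).} Using $[P,\rho]=0$, $P^2=\I$, and cyclicity,
\begin{equation}
\tr[\rho\,\O_A^o] = \tr[P\rho P\, \O_A^o] = \tr[\rho\, P\O_A^o P] = -\tr[\rho\,\O_A^o] .
\end{equation}
So the expectation value vanishes, and by Step 1 we obtain $\rho_A^o=0$, i.e., $[P_A,\rho_A]=0$. The same argument applies verbatim to $B$.

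The only real obstacle is conceptual rather than computational. One must be careful that the fermionic partial trace is defined through the lifted local operator algebra, and not through a naive tensor factorization, which is ambiguous for odd operators. With the definition via expectation values of lifted operators, which is the standard one and consistent with Prop.~\ref{prop:marginalsoffreestates}, the remaining steps are routine.
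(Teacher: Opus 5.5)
Your proof is correct and follows the paper's argument: both reduce $[P_A,\rho_A]=0$ to the vanishing of $\tr_{\F}[\rho\,\O]$ for every odd local operator $\O$, which holds because such an operator is odd with respect to the global parity $P$ while $\rho$ is even. The differences are cosmetic. You use the test operator $(\rho_A^o)^\dagger$, whereas the paper uses $\Pi_A^o\O_A\Pi_A^e$ for arbitrary $\O_A$. You also spell out explicitly why a lifted local odd operator anticommutes with $P$, which the paper takes for granted.
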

\begin{proof}
	The equality $[P, \rho]=0 \Rightarrow [P_{A}, \rho_{A}]=0$ is equivalent to $\Pi_A^e \rho_A \Pi_A^o = 0$, which is turn equivalent to
	\begin{equation}
		\tr_{\F_A}[\Pi_A^e \rho_A \Pi_A^o \O_A] = 0 \;\;\; \forall \, \O_A \text{ on } \F_A \; .
	\end{equation}
	This equality is proven as follows, using the definition of the orbital partial trace given in Ref.~\cite{Friis_2016}:
	\begin{align} \nonumber
		\tr_{\F_A}[\Pi_A^e \rho_A \Pi_A^o \O_A] &= \tr_{\F_A}[\rho_A \Pi_A^o \O_A \Pi_A^e ] \\ 
		&= \tr_{\F}[\rho \Pi_A^o \O_A \Pi_A^e ] = 0 \; .
	\end{align}
	The last equality follows from the fact that $\Pi_A^o \O_A \Pi_A^e$ is an odd operator on $\F$, i.e., it changes the global parity, while $\rho$ is even, i.e., it preserves it.
\end{proof}
We are now in a position to introduce and analyze states with no correlations between $A$ and $B$.
\begin{defn}
	An \emph{uncorrelated state} is a state $\sigma$ on $\F$ such that $[P, \sigma]=0$ and such that for any operators $\O_{A/B}$ on $\F_{A/B}$
	\begin{equation} \label{eq:def_uncorrelated}
		\tr_{\F}[\sigma \O_A \O_B] = \tr_{\F}[\sigma \O_B \O_A] = \tr_{\F}[\sigma \O_A] \tr_{\F}[\sigma \O_B] \; .
	\end{equation}
\end{defn}
The parity-superselection requirement in the above definition, i.e., $[P, \sigma]=0$, is dictated by both physical meaningfulness, and mathematical convenience. The last expression in Eq.~\eqref{eq:def_uncorrelated} can also be written in terms of the orbital-reduced states of $\sigma$ as $\tr_{\F_A}[\sigma_A \O_A] \tr_{\F_B}[\sigma_B \O_B]$.

A less physically meaningful but more mathematically practical characterization of the uncorrelated states is given by the following definition and proposition.
\begin{defn}
A \emph{product state} is a state $\sigma$ on $\F$ that can be written as
\begin{equation}
\sigma = \sigma_A \sigma_B = \sigma_B \sigma_A
\end{equation}
for some states $\sigma_{A/B}$ on $\F_{A/B}$ that satisfy $[P_{A/B}, \sigma_{A/B}]=0$ (and therefore commute when lifted to operators on $\F$).
\end{defn}
We emphasize that for \emph{any} two local states $\sigma_{A/B}$ on $\F_{A/B}$ satisfying $[P_{A/B}, \sigma_{A/B}]=0$ the operator $\sigma_A \sigma_B = \sigma_B \sigma_A$ on $\F$ is a valid product state on $\F$.

We also remark that leaving out the requirement $[P_{A/B}, \sigma_{A/B}]=0$ may result in $\sigma_A \sigma_B \neq \sigma_B \sigma_A$, which makes the definition of the corresponding product state ambiguous.
\begin{prop}
	A state on $\F$ is uncorrelated if and only if it is a product state. More in detail, any uncorrelated state $\sigma$ with orbital-reduced states $\sigma_{A/B}$ satisfies $\sigma = \sigma_A \sigma_B = \sigma_B \sigma_A$.
\end{prop}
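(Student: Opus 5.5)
The equivalence has two directions, and I would prove them separately, using the even/odd decomposition introduced above as the main tool.

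\emph{Product $\Rightarrow$ uncorrelated.} Let $\sigma = \sigma_A \sigma_B$ with $[P_{A/B}, \sigma_{A/B}] = 0$. Then $\sigma$ is even, so $[P,\sigma]=0$. Given $\O_A$ and $\O_B$, I would split each into even and odd parts.

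For the even--even terms, even local operators commute with all operators of the other subsystem. The trace then factorizes. I would see this by writing $\F \cong \F_A \otimes \F_B$ through the Jordan--Wigner-type ordering (first the $a$-modes, then the $b$-modes). Under this map, even operators of $A$ and all operators of $B$ act as tensor factors up to the twist $P_A$.

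Every term containing an odd factor vanishes on both sides. If exactly one factor is odd, the operator $\sigma\O_A\O_B$ changes global parity, so its trace is zero. On the right-hand side, $\tr[\sigma_A\O_A^o]=0$ because $\sigma_A$ is even. If both factors are odd, I would write $\O_A^o\O_B^o$ in the tensor picture as roughly $\O_A^o P_A\otimes \O_B^o$, whose trace against $\sigma_A\otimes\sigma_B$ factorizes into $\tr[\sigma_A \O_A^o P_A]\,\tr[\sigma_B\O_B^o]=0$, since $\sigma_B\O_B^o$ is odd. The same argument covers the reversed order $\O_B\O_A$.

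\emph{Uncorrelated $\Rightarrow$ product.} Let $\sigma$ be uncorrelated. By Prop.~\ref{prop:rho_ssr_implies_rhoA_ssr}, its reduced states $\sigma_{A/B}$ are parity-superselected, so $\tau := \sigma_A\sigma_B$ is a valid product state. By the first direction, $\tau$ satisfies $\tr[\tau\O_A\O_B] = \tr[\sigma_A\O_A]\tr[\sigma_B\O_B]$, and by the uncorrelation hypothesis this equals $\tr[\sigma\O_A\O_B]$. To conclude $\sigma = \tau$, I would use that the products $\O_A\O_B$ span $\L(\F)$: in the mode picture, monomials in the $a^{(\dagger)}$ times monomials in the $b^{(\dagger)}$ form a basis. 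The Hilbert--Schmidt pairing is nondegenerate, so $\sigma = \tau$. Commutativity $\sigma_A\sigma_B=\sigma_B\sigma_A$ holds because both factors are even.

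\emph{Main obstacle.} The step I expect to be hardest is the clean handling of the fermionic sign structure. Specifically, one has to check that the Jordan--Wigner identification turns even $A$-operators and arbitrary $B$-operators into commuting tensor factors, and that the odd--odd cross terms vanish. This requires choosing the ordering of modes consistently with how the lifts $\O_{A/B}$ and the partial trace of Ref.~\cite{Friis_2016} are defined. The remaining steps are routine once that identification is in place.
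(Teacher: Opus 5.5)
Your proposal is correct. Your ``only if'' direction is the paper's: compare $\sigma$ and $\sigma_A\sigma_B$ on all products $\O_A\O_B$ and conclude by linearity. You make the spanning and Hilbert--Schmidt nondegeneracy step explicit, which the paper leaves as ``by linearity''.

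The difference lies in the ``if'' direction. The paper never fixes a global Jordan--Wigner identification. It first discards all parity-odd contributions at once, citing $[\sigma_{A/B},P_{A/B}]=0$. Implicitly, the mixed terms are globally odd, and the odd--odd term flips sign under conjugation by $P_A$, which leaves $\sigma$ invariant. It then builds an explicit eigenbasis $A_j^\dagger B_k^\dagger|0\rangle$ of $\sigma$ from parity-definite eigenbases of $\sigma_A$ and $\sigma_B$, with $A_j^\dagger=|\Phi_j\rangle_A\langle 0|$ and $B_k^\dagger=|\Psi_k\rangle_B\langle 0|$. Signs are tracked by hand through the even/odd character of these operators. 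In effect, the paper constructs a $\sigma$-adapted graded tensor-product structure from scratch. This avoids any mode-ordering convention and yields the spectrum $\{\alpha_j\beta_k\}$ of $\sigma$ as a by-product.

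You instead import the standard identification $\F\cong\F_A\otimes\F_B$. Trace factorization and the identification of the marginals of $\sigma_A\sigma_B$ then become one-line tensor identities. The price is exactly the sign-compatibility check you flag as the main obstacle.

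One small correction to your description of that identification. With the $a$-modes ordered first, \emph{all} $A$-operators become $X\otimes\I$, not only the even ones, while odd-degree $B$-monomials become $P_A\otimes Y$. Your odd--odd formula $\O_A^oP_A\otimes\O_B^o$ already reflects this, so nothing breaks. You could also dispose of the odd--odd term without the tensor picture, using the $P_A$-conjugation argument above.
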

\begin{proof}
	\textbf{`If' direction.} Let $\sigma = \sigma_A \sigma_B=\sigma_B \sigma_A$ be a product state. It commutes with $P = P_A P_B$ because $[ \sigma_{A/B}, P_{A/B} ] = 0$. Next, we show Eq.~\eqref{eq:def_uncorrelated} holds true for any local operators $\O_{A/B}$. While that equation seems obvious for product states, its rigorous derivation requires extra care. Let $\{ \ket{\Phi_{j}} \}_{j=1}^{2^{\lambda_{1}}}$/$\{ \ket{\Psi_{k}} \}_{k=1}^{2^{\lambda_{2}}}$ be an ONB of $\F_{A/B}$ of eigenstates of $\sigma_{A/B}$, with corresponding eigenvalues $\alpha_j$/$\beta_k$. Since $\sigma_{A/B}$ commutes with parity, i.e., is an even operator, we can and do choose these eigenstates to be eigenstates of $P_{A/B}$, too. We need to lift them to eigenstates in $\F$ of $\sigma$. We can do that via the unambiguous inclusion of operators on $\F_{A/B}$ into the set of operators on $\F$, as follows. We introduce the operators on $\F_{A/B}$
	\begin{align}
		A_j^\dagger &= \ket { \Phi_{ j} } {}_A \bra{0} \\
		B_k^\dagger &= \ket { \Psi_{k} } {}_B \bra{0} \; ,
	\end{align}
	where $\ket{0}_{A/B}$ is the vacuum in $\F_{A/B}$ (unique up to a phase). These operators are all even or odd, because the vacua, the $\ket { \Phi_{ j} }$'s and the $\ket { \Psi_{ k} }$'s are eigenstates of the corresponding local parity operator. It is then easy to check that
	\begin{equation}
		\sigma A_j^\dagger B_k^\dagger \ket{0} = \alpha_j \beta_k A_j^\dagger B_k^\dagger \ket{0} \; ,
	\end{equation}
	where we have lifted the $A_j^\dagger$'s and $B_k^\dagger$'s to operators on $\F$; $\ket{0}$ is the vacuum in $\F$. We have thus found an orthonormal eigenbasis of $\sigma$. We now use it to compute the traces needed to prove Eq.~\eqref{eq:def_uncorrelated} for given local operators $\O_{A/B}$, as follows.
	\begin{align} \nonumber
		& \tr_{\F} [\sigma \O_A \O_B] = \nonumber \\ \nonumber
		&=\tr_{\F} [\sigma_A \sigma_B \O_A \O_B] \\ \nonumber
		&= \tr_{\F} [\sigma_A \sigma_B \O_A^e \O_B^e] \\ \nonumber
		&= \sum_{j,k} \braket{0| B_k A_j \sigma_A \sigma_B \O_A^e \O_B^e A^\dagger_j B^\dagger_k |0 } \\ \nonumber
		&= \sum_{j,k} \alpha_j \beta_k \braket{0| B_k A_j \O_A^e \O_B^e A^\dagger_j B^\dagger_k |0 } \\ \nonumber
		&= \sum_{j,k} \alpha_j \beta_k \braket{0| A_j \O_A^e A^\dagger_j B_k \O_B^e B^\dagger_k |0 }  \\ \nonumber
		&= \sum_{j,k} \alpha_j \beta_k \braket{0| \Pi_B^{\ket{0}} A_j \O_A^e A^\dagger_j B_k \O_B^e B^\dagger_k \Pi_A^{\ket{0}} |0 }  \\ \nonumber
		&= \sum_{j,k} \alpha_j \beta_k \braket{0| A_j \O_A^e A^\dagger_j \ket{0} \bra{0} B_k \O_B^e B^\dagger_k | 0} \\ \label{eq:longest_align}
		& = \underbrace{\left ( \sum_{j} \alpha_j  \braket{0| A_j \O_A^e A^\dagger_j | 0} \right )}_{t_A} \underbrace{\left ( \sum_{k} \beta_k \langle 0| B_k \O_B^e B^\dagger_k | 0 \rangle \right )}_{t_B} \; .
	\end{align}
The second equality follows from $[\sigma_{A/B}, P_{A/B}] = 0$. In the fifth equality, we used the fact that the operators $A_j^{(\dagger)}$ and $B_k^{(\dagger)}$ are even or odd, and thus for given indices $j,k$ $A_j^{(\dagger)}$ and $B_k^{(\dagger)}$ exactly commute or anticommute; also, $\O^e_A$ is even and thus commutes with $B_k$. In the sixth equality, we introduced the projectors $\Pi_A^{\ket{0}} = \prod_{j=1}^{\lambda_1} a_j a^\dagger_j$ and $\Pi_B^{\ket{0}} = \prod_{k=1}^{\lambda_2} b_k b^\dagger_k$ onto local vacua. For the seventh equality, we commuted $\Pi_{A/B}^{\ket{0}}$ with operators relative to modes in $B/A$, and used $\Pi_{A}^{\ket{0}}\Pi_{B}^{\ket{0}} = \Pi_{B}^{\ket{0}}\Pi_{A}^{\ket{0}} = \ket{0} \bra{0}$. Next, we compute $t_A$:
\begin{align} \nonumber
	t_A &=  \sum_{j} \alpha_j  \braket{0| A_j \O_A^e A^\dagger_j | 0} \\ \nonumber
	&=  \sum_{j,k} \alpha_j \beta_k  \braket{0| A_j \O_A^e A^\dagger_j | 0}   \\ \nonumber
	&=  \sum_{j,k} \alpha_j \beta_k  \braket{0| A_j \O_A^e A^\dagger_j \Pi^{\ket{0}}_B | 0}   \\ \nonumber
	&=  \sum_{j,k} \alpha_j \beta_k  \braket{0| A_j \O_A^e A^\dagger_j B_k B_k^\dagger | 0}  \\ \nonumber
	&=  \sum_{j,k} \alpha_j \beta_k  \braket{0| B_k A_j \O_A^e A^\dagger_j B_k^\dagger | 0}  \\ \nonumber
	&=  \sum_{j,k}  \braket{0| B_k A_j \sigma \O_A^e A^\dagger_j B_k^\dagger | 0}   \\ \nonumber
	&=  \tr_{\F} [ \sigma \O_A^e ]   \\ 
	&= \tr_{\F} [ \sigma \O_A ] \; .
\end{align}
The second equality follows from $\sum_k \beta_k = 1$. For the fourth equality, we used $B_k B_k^\dagger = \ket { 0 }_B \braket{\Psi_{k} | \Psi_{k} } {}_B \bra{0} = \Pi^{\ket{0}}_B$, valid for $k = 1, \ldots, 2^{\lambda_2}$. In the fifth and eight equalities, we used similar lines of reasoning as in the derivation~\eqref{eq:longest_align}. A similar calculation yields
\begin{equation}
	t_B = \tr_{\F} [ \sigma \O_B ] \; .
\end{equation}
Combining these results yields
\begin{equation} \label{eq:result_first_part}
	\tr_{\F} [\sigma \O_A \O_B] = t_A t_B = \tr_{\F} [ \sigma \O_A ] \tr_{\F} [ \sigma \O_B ] \; .
\end{equation}
To finish proving Eq.~\eqref{eq:def_uncorrelated}, we observe that
\begin{align} \nonumber
	\tr_{\F}[\sigma \O_A \O_B] &= \tr_{\F}[\sigma \O^e_A\O^e_B] \\ \nonumber
	&= \tr_{\F}[\sigma \O^e_B \O^e_A]\\
	&= \tr_{\F}[\sigma \O_B \O_A] \; .
\end{align}
We remark that $t_{A/B}$, by its definition, also equals $\tr_{\F_{A/B}}[\sigma_{A/B} \O_{A/B}]$, so that
\begin{equation} \label{eq:marginals_of_prod_state}
	\tr_{\F_{A/B}}[\sigma_{A/B} \O_{A/B}] = \tr_{\F} [\sigma \O_{A/B}] \; .
\end{equation}
This confirms that $\sigma_{A}$ and $\sigma_B$ are the orbital-reduced states of $\sigma = \sigma_A \sigma_B$.

\textbf{`Only if' direction.} Let $\sigma$ be an uncorrelated state, and denote with $\sigma_{A/B}$ its orbital-reduced state on $\F_{A/B}$. By Prop.~\ref{prop:rho_ssr_implies_rhoA_ssr}, $[\sigma_{A/B}, P_{A/B}] = 0$ (and thus $\sigma_A$ and $\sigma_B$ commute when lifted to operators on $\F$). Therefore, if
\begin{equation} \label{eq:sigma=sigmaAsigmaB}
\sigma = \sigma_A \sigma_B
\end{equation}
then $\sigma$ satisfies the definition of product states. We prove Eq.~\eqref{eq:sigma=sigmaAsigmaB} by showing that, $\forall \; \O$ on $\F$,
\begin{equation} \label{eq:target_onlyif}
	\tr_{\F} [\sigma \O] = \tr_{\F} [\sigma_A \sigma_B \O] \; .
\end{equation}
We may assume
\begin{equation}
\O = \O_A \O_B \; ;
\end{equation}
the general case follows by linearity. On one hand, the fact that $\sigma$ is uncorrelated implies
\begin{align} \nonumber
	\tr_{\F} [\sigma \O] &= \tr_{\F} [\sigma \O_A] \tr_{\F} [\sigma \O_B]\\ \label{eq:tek1}
	& = \tr_{\F_A} [\sigma_A \O_A] \tr_{\F_B} [\sigma_B \O_B] \; .
\end{align}
On the other hand, based on Eqs.~\eqref{eq:result_first_part} and~\eqref{eq:marginals_of_prod_state},
\begin{align} \nonumber
	\tr_{\F} [\sigma_A \sigma_B \O] &= \tr_{\F} [\sigma_A \sigma_B \O_A] \tr_{\F} [\sigma_A \sigma_B \O_B]\\ \label{eq:tek2}
	& = \tr_{\F_A} [\sigma_A \O_A] \tr_{\F_B} [\sigma_B \O_B] \; .
\end{align}
Eqs.~\eqref{eq:tek1} and~\eqref{eq:tek2} imply Eq.~\eqref{eq:target_onlyif}, and thus Eq.~\eqref{eq:sigma=sigmaAsigmaB}.
\end{proof}
We are now ready to connect the above construction to the bipartite total correlation of a parity-superselected state $\rho$ on $\F$, defined in Eq.~\eqref{eq:app_total_bip_corr}. We denote the quantum relative entropy of a state $\rho$ with respect to another state $\sigma$ as
\begin{equation}
	D(\rho, \sigma) := \tr[\rho (\log \rho - \log \sigma)] = - S(\rho) - \tr[\rho \log \sigma] \;.
\end{equation}
\begin{prop}
	Given a state $\rho$ on $\F$ s.t. $[\rho, P] = 0$,
	\begin{equation}
		C(\rho) = \min_{\sigma} D(\rho , \sigma) \; ,
	\end{equation}
	where the minimization is over all uncorrelated states, i.e., product states $\sigma = \sigma_A \sigma_B$. Furthermore, the unique minimizer is
	\begin{equation}
		\sigma^* = \rho_A \rho_B = \rho_B \rho_A \; ,
	\end{equation}
	where $\rho_{A}$ and $\rho_B$ are the orbital-reduced states of $\rho$. Therefore,
	\begin{equation}
		C(\rho) = D(\rho , \rho_A \rho_B) \; .
	\end{equation}
\end{prop}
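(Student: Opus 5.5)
The plan is the standard Pythagorean decomposition of relative entropy, adapted to the fermionic setting via the product-state machinery above. Fix an arbitrary product state $\sigma = \sigma_A \sigma_B$ with $[P_{A/B},\sigma_{A/B}]=0$. First restrict to faithful $\sigma_{A/B}$; otherwise $D(\rho,\sigma)=+\infty$ or is handled by a limiting argument. The key identity to establish is
\begin{equation}
\log(\sigma_A\sigma_B) = \log\sigma_A + \log\sigma_B \; ,
\end{equation}
where both sides are operators on $\F$ and the local logarithms are lifted from $\F_{A/B}$. This holds because $\sigma_A$ and $\sigma_B$ are even, hence their lifts commute. Concretely, the joint eigenbasis $A_j^\dagger B_k^\dagger\ket{0}$ constructed in the proof of the previous proposition diagonalizes $\sigma$ with eigenvalues $\alpha_j\beta_k$. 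In that basis the lifted $\log\sigma_A$ has eigenvalues $\log\alpha_j$, which follows by the same commuting-past-$B_k$ argument used there, so the identity can be read off from $\log(\alpha_j\beta_k)=\log\alpha_j+\log\beta_k$.

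Next, compute $\tr_{\F}[\rho\log\sigma_A]$. Since $\log\sigma_A$ is a lifted local operator on $\F_A$, the defining property of the orbital partial trace (Ref.~\cite{Friis_2016}, as used in the proof of Prop.~\ref{prop:rho_ssr_implies_rhoA_ssr}) gives $\tr_{\F}[\rho\,\O_A]=\tr_{\F_A}[\rho_A\O_A]$. Hence
\begin{equation}
-\tr[\rho\log\sigma] = -\tr_{\F_A}[\rho_A\log\sigma_A]-\tr_{\F_B}[\rho_B\log\sigma_B] \; .
\end{equation}
Adding and subtracting $S(\rho_A)+S(\rho_B)$ then yields
\begin{equation}
D(\rho,\sigma_A\sigma_B) = C(\rho) + D(\rho_A,\sigma_A) + D(\rho_B,\sigma_B) \; .
\end{equation}
Applying the same identity with $\sigma_{A/B}=\rho_{A/B}$, which is admissible as a product state by Prop.~\ref{prop:rho_ssr_implies_rhoA_ssr}, shows $D(\rho,\rho_A\rho_B)=C(\rho)$.

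Klein's inequality gives $D(\rho_{A/B},\sigma_{A/B})\ge 0$, with equality iff $\sigma_{A/B}=\rho_{A/B}$. It follows that the minimum over product states equals $C(\rho)$, is attained exactly at $\sigma^*=\rho_A\rho_B$, and is unique. Since the minimizer and the minimum are the same for every uncorrelated state, the preceding proposition (uncorrelated $\Leftrightarrow$ product) transfers the statement to the minimization over uncorrelated states; this also shows $C(\rho)\ge0$.

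The main obstacle is the rigorous handling of the lifts. One must justify that the lift of $f(\sigma_A)$ equals $f$ applied to the lift of $\sigma_A$, in particular for $f=\log$. One must also justify that $\tr_{\F}[\rho\,\O_A]=\tr_{\F_A}[\rho_A\O_A]$ holds for the even operator $\log\sigma_A$. Both points rely on parity superselection, and I would discharge them via the eigenbasis $A_j^\dagger B_k^\dagger\ket{0}$ of the previous proof. A minor secondary point is non-faithful $\sigma$: there one uses the convention that $D=\infty$ unless $\mathrm{supp}\,\rho\subseteq\mathrm{supp}\,\sigma$, and applies the decomposition on supports.
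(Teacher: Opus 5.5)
Your proposal is correct and follows essentially the same route as the paper. You split $\log(\sigma_A\sigma_B)=\log\sigma_A+\log\sigma_B$ using that even local states commute, pass to $\rho_A,\rho_B$ via the defining property of the orbital partial trace, and conclude by Klein's inequality; your identity $D(\rho,\sigma_A\sigma_B)=C(\rho)+D(\rho_A,\sigma_A)+D(\rho_B,\sigma_B)$ is the paper's ``split the maximization and apply the standard argument'' step written out, which makes the uniqueness claim and the faithfulness caveat (glossed over in the paper) transparent.
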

\begin{proof}
	We have
	\begin{equation} \label{eq:min_rel_entr}
		\min_{\sigma \text{ product state}} D(\rho , \sigma) = -S(\rho) - \max_{\sigma_A, \sigma_B}  \tr_{\F} [\rho \log ( \sigma_A \sigma_B)] \; ,
	\end{equation}
	where the maximization is over local states $\sigma_{A/B}$ satisfying $[\sigma_{A/B}, P_{A/B}] = 0$. Since this implies $[\sigma_A , \sigma_B] = 0$, we can break down the logarithm and let the orbital-reduced states of $\rho$ come into play:
	\begin{align} \nonumber
		\tr_{\F} [\rho \log ( \sigma_A \sigma_B)] &= \tr_{\F} [\rho ( \log  \sigma_A + \log  \sigma_B)] \\
		&= \tr_{\F_A} [\rho_A \log  \sigma_A ] + \tr_{\F_B} [\rho_B \log  \sigma_B ] \; .
	\end{align}
	Therefore the maximization can be split:
	\begin{align} \nonumber
		\max_{\sigma_A, \sigma_B} & \tr_{\F} [\rho \log ( \sigma_A \sigma_B)] \\
		&= \max_{\sigma_A}  \tr_{\F_A} [\rho_A \log ( \sigma_A)] + \max_{\sigma_B}  \tr_{\F_B} [\rho_B \log ( \sigma_B)] \; .
	\end{align}
	A standard argument (see, e.g., Ref.~\cite{ModiVedralrecapgeom}) shows that the unique extremizers are $\sigma^*_A = \rho_A$ and $\sigma^*_B = \rho_B$, respectively, with respective maxima $-S(\rho_A)$ and $- S(\rho_B)$. Plugging this result into Eq.~\eqref{eq:min_rel_entr} yields
	\begin{align} \nonumber
		\min_{\sigma \text{ product state}} D(\rho , \sigma) &= D(\rho, \rho_A \rho_B) \\
		&= -S(\rho) + S(\rho_A) + S(\rho_B) = C(\rho) \; .
	\end{align}
\end{proof}
The following corollary follows from the non-negativity of the quantum relative entropy.
\begin{coro} \label{cor:mutual_info_geq_0}
	For any state $\rho$ on $\F$ s.t. $[\rho, P] = 0$,
	\begin{equation}
		C(\rho) \geq 0 \; .
	\end{equation}
\end{coro}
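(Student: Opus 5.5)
The corollary follows directly from the preceding proposition together with Klein's inequality. By the preceding proposition, for any parity-superselected $\rho$ on $\F$ we have $C(\rho) = D(\rho, \rho_A \rho_B)$. So it suffices to show that $\rho_A\rho_B$ is a legitimate density operator on $\F$ and that the relative entropy between two density operators is non-negative.

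First, I will check that $\sigma^* = \rho_A \rho_B$ is a state. By Prop.~\ref{prop:rho_ssr_implies_rhoA_ssr}, $[P_{A/B}, \rho_{A/B}] = 0$. The `if' direction of the product-state proposition then says that $\rho_A\rho_B = \rho_B\rho_A$ is a valid product state. Concretely, it is diagonal in the lifted eigenbasis $A_j^\dagger B_k^\dagger\ket{0}$, with eigenvalues $\alpha_j\beta_k \geq 0$ summing to one. Hence it is positive semidefinite with unit trace.

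Second, I will invoke Klein's inequality: $D(\rho,\sigma) \geq 0$ for any density operators $\rho, \sigma$ on a finite-dimensional Hilbert space (here $\F$, of dimension $2^d$), with equality iff $\rho = \sigma$. A standard route is the operator convexity of $x\log x$. Equivalently, write both operators in their eigenbases and use $\log x \leq x - 1$ on the doubly stochastic overlap matrix.

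One technical point needs care: if $\operatorname{supp}\rho \not\subseteq \operatorname{supp}(\rho_A\rho_B)$, then $D = +\infty$. This cannot happen, because $D$ equals the finite quantity $S(\rho_A)+S(\rho_B)-S(\rho)$ by the previous proposition. The support inclusion can also be argued directly: the kernel of $\rho_A$, lifted to $\F$, annihilates $\rho$. I do not expect any real obstacle.
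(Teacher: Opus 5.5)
Your proposal is correct and follows the same route as the paper, which deduces the corollary in one line from the preceding representation $C(\rho)=\min_{\sigma} D(\rho,\sigma)=D(\rho,\rho_A\rho_B)$ together with the non-negativity of the quantum relative entropy (Klein's inequality).

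Your extra checks are details the paper leaves implicit. These are that $\rho_A\rho_B$ is a genuine density operator, and that $\operatorname{supp}\rho\subseteq\operatorname{supp}(\rho_A\rho_B)$, so that $D$ is finite. For the support inclusion, prefer your direct argument via the lifted kernel projectors of $\rho_A$ and $\rho_B$. Appealing to finiteness via the previous proposition is slightly circular, since that proposition's proof tacitly assumes it.
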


\section{Normal form of bipartite Slater determinants} \label{app:normal_form}

We now adapt the normal form of pure fermionic Gaussian states relative to a direct-sum bipartition of the underlying 1P space (presented in Ref.~\cite{botero_normal_form_ferm_Gaussian_states}; see also Ref.~\cite{botero_normal_form_bos_Gaussian_states} and Sec.~\ref{subsec:gaussian}) to the particular case of \bt bipartite' Slater determinants.

Once the 1P space $\H_1 \cong \C^d$ has been fixed, there is a one-to-one correspondence between all $N$-fermion SD states $\ket{SD}\bra{SD}$ and their 1P reduced operators, i.e., rank-$N$ orthogonal projectors $\gamma$ on $\H_1$. In that light, we will first introduce in Prop.~\ref{prop:normal_form_gamma} a normal form of orthogonal projectors relative to a direct-sum bipartition of $\H_1$ into orthogonal subspaces, $\H_1 = V_1 \oplus V_2$. Roughly speaking, the point of this normal form is twofold: (1) the $N$-dimensional image of any orthogonal projector $\gamma$ and $V_{1/2}$ have an intersection that has a known minimal dimension $T^P_{1/2} \geq 0$, and the same applies to the kernel of $\gamma$; (2) after reducing the 1P space dimension by discarding these intersections, one can always find an ONB of the remaining subspace of the image of $\gamma$ (or of its kernel) with elements $\cos\theta_j \ket{\phi_{1,j}} + \sin\theta_j \ket{\phi_{2,j}}$ that can be expressed in terms of \emph{orthonormal} vectors $\ket{\phi_{1,j}} \in V_1$ and $\ket{\phi_{2,j}} \in V_2$. 

In Corollary~\ref{coro:normal_form_SD} we will then translate this result into a normal form of bipartite SDs, which resembles very much the construction presented at the end of Sec.~\ref{sec:main_problem} and in Fig.~\ref{fig:max_SD_construction}.

We start with a simple and interesting result in linear algebra.
\begin{lemm} \label{lem:evalsubmatrix}
	Let $A$ be a linear operator on a (complex or real) vector space $V$ of dimension $d$, and let $V'$ be a subspace of $V$ of dimension $d'$. Suppose $A$ has an eigenvalue $\alpha$ with geometric multiplicity $m$, i.e., $A$ has an $\alpha$-eigenspace $E_{\alpha}$ of dimension $m$, and $m > d-d'$. Then
	\begin{equation} \label{eq:dimensionseigenspaces}
		\emph{dim}(V' \cap E_{\alpha}) \geq m - (d-d') >0 \; .
	\end{equation}
	In particular the compression of $A$ to $V'$ has the eigenvalue $\alpha$ with geometric multiplicity $m' \geq m - (d-d')$, and we say it has $m - (d-d')$ \emph{trivial eigenvalues} $\alpha$.
\end{lemm}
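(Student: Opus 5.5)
The plan is a dimension count inside $V$, followed by a short argument about the compression. Throughout, write $P$ for the projection onto $V'$ and define the compression as $A' = PA|_{V'}$; when $V$ is not equipped with an inner product, any projection onto $V'$ will serve.

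\textbf{Step 1: the dimension bound.} Both $V'$ and $E_\alpha$ are subspaces of $V$, so $V' + E_\alpha \subseteq V$ and hence $\dim(V' + E_\alpha) \leq d$. Grassmann's formula then gives
\begin{equation}
\dim(V' \cap E_\alpha) = \dim V' + \dim E_\alpha - \dim(V'+E_\alpha) \geq d' + m - d .
\end{equation}
This is positive exactly when $m > d - d'$, which is Eq.~\eqref{eq:dimensionseigenspaces}.

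\textbf{Step 2: eigenvectors of the compression.} Take any $v \in V' \cap E_\alpha$. Since $v \in E_\alpha$, we have $Av = \alpha v$, and since $v \in V'$, we have $Pv = v$. Therefore
\begin{equation}
A'v = PAv = \alpha P v = \alpha v .
\end{equation}
So $V' \cap E_\alpha$ lies inside the $\alpha$-eigenspace of $A'$. Combining this with Step 1 shows that the geometric multiplicity $m'$ of $\alpha$ as an eigenvalue of $A'$ satisfies $m' \geq m-(d-d')$.

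\textbf{Expected obstacle.} The only delicate point is making the notion of ``compression'' precise. One might worry that $A'$ has further eigenvectors for $\alpha$ that are not in $V'\cap E_\alpha$. This is harmless, because the statement asks only for a lower bound on $m'$.

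For the Hermitian 1RDM application, I will also note that in that setting the trivial eigenvalues coincide with their algebraic multiplicity. This follows because a compression of a Hermitian operator is again Hermitian, so its geometric and algebraic multiplicities agree.
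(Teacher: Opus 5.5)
Your proof is correct and follows the paper's argument, namely the same Grassmann dimension count $\dim(V'\cap E_\alpha)=\dim V'+\dim E_\alpha-\dim(V'+E_\alpha)\geq m-(d-d')$. Your Step 2, showing $PAv=\alpha v$ for every $v\in V'\cap E_\alpha$, makes explicit the claim about the compression; the paper states that claim without detail and only notes that it can also be obtained from Cauchy's interlacing inequalities.
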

\begin{proof}
	$\text{dim}(V' \cap E_{\alpha}) = \text{dim}(V') + \text{dim}(E_{\alpha}) - \text{dim}(V' +E_{\alpha} )  \geq m - (d-d')$. The fact that the compression of $A$ to $V'$ has at least $m-(d-d')$ eigenvalues $\alpha$ can also be shown from Cauchy's interlacing inequalities~\cite{Fan1957Imbedding}.
\end{proof}
We now present the normal form of orthogonal projectors, relative to a bipartition of the underlying space as a direct sum.
\begin{prop} \label{prop:normal_form_gamma}
	Let $\gamma$ be a rank-$N$ orthogonal projector on $\H_1 \cong \C^d$. Let $V_1 \cong \C^{\lambda_1}$ and $V_2 \cong \C^{\lambda_2}$ be subspaces of $\H_1$ such that $V_1 \perp V_2$ and $V_1 \oplus V_2 = \H_1$. For $\mu = 1, 2$, set
	\begin{align} \label{eq:TPmu}
		T^P_{\mu} &= \max \{ 0, \lambda_{\mu} - (d - N ) \} \\
		T^H_{\mu} &= \max \{ 0, \lambda_{\mu} - N \} \\ \label{eq:R_bis}
		R &= \min \{ \lambda_1, \lambda_2, N, d - N \} \; .
	\end{align}
	Then there is an ordered ONB of $\H_1$ whose first (last) $\lambda_{1}$ ($\lambda_2$) elements span $V_{1}$ ($V_2$) with respect to which $\gamma$ is represented by a matrix with the following form,
		\begin{equation} \label{eq:gamma_6_blocks}
		\gamma \rightarrow \begin{pmatrix}
			\begin{array}{ccc|ccc}
				\I_{T^P_1} & & & & & \\
				& \mathbf{0}_{T^H_1} & & & & \\
				& & C^2 & & & C \cdot S \\
				\hline
				& & & \I_{T^P_2} & & \\
				& & & & \mathbf{0}_{T^H_2} & \\
				& & C \cdot S & & & S^2
			\end{array}
		\end{pmatrix} ~ ,
	\end{equation}
	where the $2 \times 2$-block structure refers to the direct sum $\H_1 = V_1 \oplus V_2$, empty spaces indicate zero submatrices, and
	\begin{align} \label{eq:diag_C}
		C &= \textup{diag}(\cos \theta_1, \ldots, \cos \theta_R) \\ \label{eq:diag_S}
		S &= \textup{diag}(\sin \theta_1, \ldots, \sin \theta_R) \; ,
	\end{align} 
	with $\theta_1, \ldots, \theta_R \in [0, \pi/2]$.
\end{prop}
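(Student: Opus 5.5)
The plan is to derive the normal form from the two-subspace (Jordan/Halmos) decomposition, realized through a singular value decomposition of the off-diagonal block. Let $P_\mu$ be the orthogonal projector onto $V_\mu$ and $W=\mathrm{im}(\gamma)$, so $\dim W=N$ and $\dim W^\perp=d-N$.

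\textbf{Step 1: trivial eigenvectors.} Apply Lemma~\ref{lem:evalsubmatrix} to $\gamma$ with $V'=V_\mu$.
\begin{itemize}
\item The eigenvalue $1$ has eigenspace $W$ of dimension $N$. Hence $\dim(V_\mu\cap W)\ge N-(d-\lambda_\mu)$.
\item The eigenvalue $0$ has eigenspace $W^\perp$ of dimension $d-N$. Hence $\dim(V_\mu\cap W^\perp)\ge \lambda_\mu-N$.
\end{itemize}
The dimension actually needed is a different quantity. Region $\mu$ must contain $T^P_\mu=\max\{0,\lambda_\mu-(d-N)\}$ vectors lying in $W$, which comes from $\dim(V_\mu\cap W)\ge \lambda_\mu+N-d$. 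This is the same inequality as above, since $N-(d-\lambda_\mu)=\lambda_\mu+N-d$. The target count $T^P_\mu$ equals $\max\{0,\lambda_\mu-(d-N)\}$, so the bounds match.

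Choose ONBs of subspaces of $V_\mu\cap W$ of dimension $T^P_\mu$, and of $V_\mu\cap W^\perp$ of dimension $T^H_\mu$. These vectors are eigenvectors of $\gamma$, with eigenvalues $1$ and $0$, that lie inside $V_\mu$. Therefore $\gamma$ has no matrix elements between them and anything else, and they produce the $\I_{T^P_\mu}$ and $\mathbf 0_{T^H_\mu}$ blocks.

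\textbf{Step 2: the remaining sizes.} Let $V_\mu'$ be the orthogonal complement of these vectors inside $V_\mu$. A case check on the orderings of $\lambda_1,\lambda_2,N,d-N$ gives
\begin{equation}
\lambda_\mu-T^P_\mu-T^H_\mu=\min\{\lambda_1,\lambda_2,N,d-N\}=R \quad\text{for both } \mu .
\end{equation}
For example, take $\lambda_1\ge\lambda_2$ and $N\le d-N$. If $N\le\lambda_2$, then $T^P_\mu=0$ and $T^H_\mu=\lambda_\mu-N$. If $N>\lambda_2$, then $T^H_2=0$. The other cases are similar.

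The reduced space $\H_1'=V_1'\oplus V_2'$ is invariant under $\gamma$, because its complement is spanned by eigenvectors. So $\gamma'=\gamma|_{\H_1'}$ is a rank-$(N-T^P_1-T^P_2)$ projector on a $2R$-dimensional space.

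\textbf{Step 3: the $C,S$ structure.} Write $\gamma'=\begin{pmatrix}A&M\\M^\dagger&B\end{pmatrix}$. Take an SVD $M=U\Sigma W^\dagger$ and change basis in $V_1'$ by $U$ and in $V_2'$ by $W$, so that $M$ becomes diagonal. Idempotency gives
\begin{equation}
A^2+MM^\dagger=A,\qquad B^2+M^\dagger M=B,\qquad AM+MB=M .
\end{equation}
The first two show that $A$ commutes with $MM^\dagger$ and $B$ commutes with $M^\dagger M$. One can then simultaneously diagonalize, refining the bases inside the singular-value eigenspaces, and use the third relation to match the eigenvectors of $A$ and $B$ pairwise. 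The outcome is $A=\mathrm{diag}(a_j)$, $B=\mathrm{diag}(b_j)$ and $M=\mathrm{diag}(m_j)$, with each $2\times2$ block $\begin{pmatrix}a_j&m_j\\m_j&b_j\end{pmatrix}$ a projector.

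\textbf{Step 4: excluding degenerate pairs.} A $2\times2$ projector block is $0$, $\I$, or rank one. Step 1 already extracted all the trivial eigenvectors. Those counts are maximal: any further such vector, combined with the relation $\mathrm{rank}\,\gamma'=N-T^P_1-T^P_2$, would force the count contradiction $\dim\H_1'<2R$. I expect this to need the most care, and it amounts to showing that the rank of $\gamma'$ is exactly $R$.

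Once every block is known to be rank one, it equals $\begin{pmatrix}\cos^2\theta_j&\cos\theta_j\sin\theta_j\\ \cdot&\sin^2\theta_j\end{pmatrix}$. Phases can be absorbed into the $V_2'$ basis to make the off-diagonal entry nonnegative, which places $\theta_j\in[0,\pi/2]$. Diagonal entries equal to $0$ or $1$ correspond to the angles $\theta_j\in\{0,\pi/2\}$, which are allowed. Reordering the basis then yields Eq.~\eqref{eq:gamma_6_blocks}.

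The main obstacle is Step 3/4, namely establishing the pairwise matching when singular values are degenerate. To get around it, I would instead use Halmos' two-projection theorem applied to the pair $P_1$ and $\gamma'$ on $\H_1'$.
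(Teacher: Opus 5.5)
Your Steps 1--2 are exactly the paper's reduction (Part 2 of its proof), and Step 3 is a workable alternative to its Part 1. Step 4, however, rests on a false claim. The numbers $T^{P/H}_\mu$ are only \emph{lower} bounds on $\dim(V_\mu\cap\mathrm{im}\,\gamma)$ and $\dim(V_\mu\cap\ker\gamma)$. Further trivial eigenvectors inside $V_\mu'$ lead to no contradiction at all, and your own closing remark that entries $0$ or $1$ give $\theta_j\in\{0,\pi/2\}$ presupposes that they exist. Take $d=4$, $\blam=(2,2)$, $N=2$, $\gamma=\mathrm{diag}(1,0\,|\,1,0)$. Then all $T^{P/H}_\mu$ vanish, $R=2$, $\H_1'=\H_1$ and $M=0$, so the SVD pairing is arbitrary. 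Pairing $e_1$ with $e_3$ produces the blocks $\I_2$ and $\mathbf{0}_2$, which are not of the form in Eq.~\eqref{eq:gamma_6_blocks}. So the $2\times 2$ blocks are not automatically rank one; you must \emph{re-pair} on the part where $M$ vanishes.

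The missing step is a count followed by this re-pairing. On the index set where $\Sigma=0$, say of size $k$, $A$ and $B$ restrict to projectors of ranks $r_A$ and $r_B$. Since $M$ vanishes there, the bases on the $V_1'$ and $V_2'$ sides can be chosen independently. Each index with $\sigma>0$ contributes a rank-one block, so $\mathrm{rank}\,\gamma'=(R-k)+r_A+r_B$. You also have $\mathrm{rank}\,\gamma'=N-T^P_1-T^P_2=R$, which is the same elementary case check as in Step 2, not the delicate point you expect. Together these give $r_A+r_B=k$. Hence the eigenvalues $1$ of $A$ and the eigenvalues $0$ of $B$ on that index set are equinumerous, and can be matched into $\theta=0$ blocks, with the rest matched into $\theta=\pi/2$ blocks.

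By contrast, the degenerate nonzero singular values you single out are harmless. $A$ and $B$ preserve each $\sigma$-eigenspace, and $AM+MB=M$ gives $A+B=\I$ there. One unitary diagonalizing $A$, applied on both sides, then keeps $M=\sigma\I$.

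Your Halmos fallback needs the same count. It leaves the four intersections $V_\mu'\cap\mathrm{im}\,\gamma'$ and $V_\mu'\cap\ker\gamma'$ to be paired, which requires $\dim(V_1'\cap\mathrm{im}\,\gamma')=\dim(V_2'\cap\ker\gamma')$ and vice versa.

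The paper gets this pairing for free. From $U_{11}^\dagger U_{11}+U_{21}^\dagger U_{21}=\I_R$, the full spectra of the two diagonal blocks, trivial eigenvalues included, pair as $\beta_j,1-\beta_j$. The identity $\langle\alpha|\gamma|\alpha'\rangle=(\cos^2\theta_j+\sin^2\theta_k)\langle\alpha|\gamma|\alpha'\rangle$ then block-diagonalizes $\gamma$ by angle before any SVD is taken.
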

\begin{proof}
	\textbf{Part 1.} We first consider the particular case where
	\begin{equation} \label{eq:simple_case_constraints}
		\lambda_1 = \lambda_2 = N = d - N  = R \til.
	\end{equation}
	This case is simpler than the general case in that it involves no \bt trivial eigenvalues' $0$ and $1$ of the compression of $\gamma$ to either $V_1$ or $V_2$, i.e.,
	\begin{equation}
		T^{P}_{1} =T^{H}_{1} =T^{P}_{2} =T^{H}_{2} = 0 \til .
	\end{equation}
	By fixing an arbitrary ONB of $V_1$ and one of $V_2$, we can represent $\gamma$ by highlighting the block structure relative to the bipartition of $\H_1$, as follows:
	\begin{equation}
		\gamma \rightarrow
		\begin{pmatrix}
			\begin{array}{c|c}
				\gamma_{11} & \gamma_{12}  \\
				\hline
				\gamma_{21} & \gamma_{22}
			\end{array}
		\end{pmatrix} \til .
	\end{equation}
	The blocks $\gamma_{jk}$ are $R \times R$ matrices. We now focus on the two diagonal ones, $\gamma_{11}$ and $\gamma_{22}$, which represent the compressions $\evalat{\gamma}{V_{1}}$ and $\evalat{\gamma}{V_{2}}$ of $\gamma$ to $V_1$ and $V_2$, respectively. We aim to show that the eigenvalues of $\gamma_{11}$ and $\gamma_{22}$ come in pairs that add up to $1$. To that end, we consider the diagonalization equation
	\begin{equation}
		\begin{pmatrix}
			\begin{array}{c|c}
				\gamma_{11} & \gamma_{12}  \\
				\hline
				\gamma_{21} & \gamma_{22}
			\end{array}
		\end{pmatrix}
		=
		\underbrace{\begin{pmatrix}
			\begin{array}{c|c}
				U_{11} & U_{12}  \\
				\hline
				U_{21} & U_{22}
			\end{array}
		\end{pmatrix}}_{U}
		\cdot
		\begin{pmatrix}
			\begin{array}{c|c}
				\I_{R} & 0  \\
				\hline
				0 & 0
			\end{array}
		\end{pmatrix}
		\cdot
		\underbrace{\begin{pmatrix}
			\begin{array}{c|c}
				U_{11}^\dagger & U_{21}^\dagger  \\
				\hline
				U_{12}^\dagger & U_{22}^\dagger
			\end{array}
		\end{pmatrix}}_{U^\dagger} \; ,
	\end{equation}
	where $U$ is a $d \times d$ unitary matrix. At this stage, a cosine-sine decomposition of $U$ would quickly lead to the desired representation of $\gamma$. We follow a different, more insightful approach that directly targets $\gamma$. We get
		\begin{align}
		\gamma_{11} &= U_{11} U_{11}^{\dag} \\
		\gamma_{22} &= U_{21} U_{21}^{\dag} \til.
	\end{align}
	The unitarity of $U$ implies
	\begin{equation}
		U_{11}^\dagger U_{11} + U_{21}^\dagger U_{21} = \I_R \; ,
	\end{equation}
	which implies that eigenvalues of $U_{11}^\dagger U_{11}$ and of $U_{21}^\dagger U_{21}$ come in pairs of the form $\beta_j, 1 - \beta_j$ (with possible repetitions). A standard argument then shows that the eigenvalues of $U_{11}^\dagger U_{11}\;(U_{21}^\dagger U_{21})$ are the same as those of $ U_{11} U_{11}^{\dag} = \gamma_{11}\;(U_{21} U_{21}^{\dag} = \gamma_{22})$, implying a pairing of the eigenvalues of $\gamma_{11}$ and $\gamma_{22}$. Finally, given any eigenvalue $\beta_j$ of $\gamma_{11}$, we have the standard bounds
	\begin{equation}
		0 = \min(\text{spec}(\gamma)) \leq \beta_j \leq \max(\text{spec}(\gamma))=1 \; .
	\end{equation}
	Therefore, we can indicate the $j$th pair $\beta_j, 1 - \beta_j$ of eigenvalues of, respectively, $\gamma_{11}$ and $\gamma_{22}$ in terms of an angle $\theta_j \in [0, \pi/2]$, i.e., $\beta_j = \cos^2 \theta_j$ and $1 - \beta_j = \sin^2 \theta_j$.
	
	The pairing of the eigenvalues of $\gamma_{11}$ and $\gamma_{22}$, i.e., of the compressions $\evalat{\gamma}{V_1}$ and $\evalat{\gamma}{V_2}$, corresponds to an analogous pairing of the corresponding eigenvectors. Representing $\gamma$ in the basis of these eigenvectors will result in the matrix structure of Eq.\til\eqref{eq:gamma_6_blocks}. Let $\ket{\alpha} \in V_1$ be an eigenvector of $\evalat{\gamma}{V_1}$ with eigenvalue $\cos^2 \theta_j$ and $\ket{\alpha'} \in V_2$ be an eigenvector of $\evalat{\gamma}{V_2}$ with eigenvalue $\sin^2 \theta_k$, with $1 \leq j, k \leq R$. Also, we introduce the projectors $\Pi_{1/2}$ onto $V_{1/2}$. We get
	\begin{align} \nonumber
		\braket{\alpha | \gamma | \alpha'} &= \braket{\alpha | \gamma^2 | \alpha'}\\ \nonumber
		&= (\bra{\alpha} \gamma) (\gamma \ket{\alpha'})\\ \nonumber
		&= (\cos^2 \theta_j \bra{\alpha} + \bra{\alpha} \gamma \Pi_2) (\Pi_1 \gamma \ket{\alpha'} + \sin^2 \theta_k \ket{\alpha'})\\ \nonumber
		&= \cos^2 \theta_j \bra{\alpha}\Pi_1 \gamma \ket{\alpha'} + \sin^2 \theta_k \bra{\alpha} \gamma \Pi_2 \ket{\alpha'} \\
		&= (\cos^2 \theta_j + \sin^2 \theta_k) \braket{\alpha | \gamma | \alpha'} \; ,
	\end{align} 
implying that either $\theta_j = \theta_k$, or $\braket{\alpha | \gamma | \alpha'} = 0$. That is, $\gamma$ has zero matrix elements between eigenvectors of $\evalat{\gamma}{V_1}$ and eigenvectors of $\evalat{\gamma}{V_2}$ corresponding to different angles. Therefore, if we pick any ONB of $\H_1$ that consists of eigenvectors of $\evalat{\gamma}{V_1}$ and $\evalat{\gamma}{V_2}$, the corresponding matrix representation of $\gamma$ has a block-diagonal structure: there is one block for each angle $\theta \in \{\theta_j\}_{j=1}^R$, and the size of the block equals the \bt degeneracy' of the corresponding $\theta$, i.e., the number of indices $j \in \{ 1, \ldots, R \}$ s.t. $\theta_j = \theta$. For a given angle $\theta$ with degeneracy $\delta$, let us denote the corresponding block with
\begin{equation} \label{eq:theta_block}
	B_\theta = \begin{pmatrix}
		\begin{array}{c|c}
			\cos^2 \theta \I_\delta & X \\
			\hline
			X^\dagger & \sin^2 \theta \I_\delta
		\end{array}
	\end{pmatrix} \; ,
\end{equation}
where the block structure is inherited from the initial bipartition of $\H_1$, and $X$ is a $\delta \times \delta$ matrix. Consider the SVD decomposition of $X$ in terms of unitaries $V_1$ and $V_2$ and a diagonal matrix $\Delta$ with real, nonnegative entries, $X = V_1 \Delta V_2^\dagger$. Then
\begin{equation}
	B_\theta = \begin{pmatrix}
		\begin{array}{c|c}
			V_1 & 0  \\
			\hline
			0 & V_2
		\end{array}
	\end{pmatrix}
	\cdot
		\begin{pmatrix}
			\begin{array}{c|c}
				\cos^2 \theta \I_\delta & \Delta  \\
				\hline
				\Delta & \sin^2 \theta \I_\delta
			\end{array}
		\end{pmatrix}
		\cdot
	\begin{pmatrix}
		\begin{array}{c|c}
			V_1^\dagger & 0  \\
			\hline
			0 & V_2^\dagger
		\end{array}
	\end{pmatrix} \; .
\end{equation}
This implies that, up to separate basis transformations within the eigenspaces of $\evalat{\gamma}{V_1}$ and $\evalat{\gamma}{V_2}$, we may take $X$ to be diagonal with nonnegative entries in Eq.\til\eqref{eq:theta_block}. This further breaks down the block $B_\theta$ into $\delta$ smaller $2 \times 2$ blocks, one per $\theta_j$ (or better, one per label value $j \in \{ 1, \ldots, R \}$, to account for repetitions). The $j$th $2 \times 2$ block,
\begin{equation}
	B'_j = \begin{pmatrix}
		\begin{array}{c|c}
			\cos^2 \theta & x  \\
			\hline
			x & \sin^2 \theta
		\end{array}
	\end{pmatrix} \; ,
\end{equation}
must have eigenvalues in $\{0,1\}$ because these are eigenvalues of $\gamma$, as well. In view of $\tr(B'_j) = 1$, this implies
\begin{equation}
	x = \cos \theta \sin \theta \; .
\end{equation}

To sum up, we discovered that there exists an ONB of $\H_1$ that consists of elements of $V_1$ and $V_2$ (namely, appropriately chosen eigenvectors of $\evalat{\gamma}{V_1}$ and $\evalat{\gamma}{V_2}$) such that the corresponding matrix representation of $\gamma$ has the following form,
\begin{equation}
	\gamma \rightarrow 
	\begin{pmatrix}
		\begin{array}{c|c}
			C^2 & C \cdot S  \\
			\hline
			C \cdot S & S^2
		\end{array}
	\end{pmatrix} \; ,
\end{equation}
where the diagonal matrices $C$ and $S$ are given by Eqs.\til\eqref{eq:diag_C} and\til\eqref{eq:diag_S}.

\textbf{Part 2.} We now lift the assumption of Eq.\til\eqref{eq:simple_case_constraints} and extend our proof from the particular to the general case. The strategy is to use the high degeneracies of the eigenvalues $0$ and $1$ of the projector $\gamma$ and Lemma\til\ref{lem:evalsubmatrix} to reduce to the particular case. Lemma\til\ref{lem:evalsubmatrix} implies that, for $\mu = 1, 2$, the compression $\evalat{\gamma}{V_\mu}$ has (at least) $T^P_{\mu}$ eigenvalues 1 and (at least) $T^H_{\mu}$ eigenvalues 0. Furthermore, the corresponding eigenvectors are not only eigenvectors of $\evalat{\gamma}{V_\mu}$, \emph{but also of $\gamma$}. For example, given $\ket{\phi} \in V_1$ of $\evalat{\gamma}{V_1}$ s.t. $\evalat{\gamma}{V_1} \ket{\phi} = \ket{\phi}$, we have
\begin{align} \nonumber
	|| \gamma \ket{\phi} ||^2 &= \bra{\phi}\gamma^2 \ket{\phi} \\ \nonumber
	&= \bra{\phi}\gamma \ket{\phi} \\ \nonumber
	&= \bra{\phi}\evalat{\gamma}{V_1} \ket{\phi} \\ \nonumber
	&= \braket{\phi | \phi} \\ \nonumber
	&= || \evalat{\gamma}{V_1} \ket{\phi} ||^2 \; ,
\end{align}
which implies that $\gamma \ket{\phi}$ has no component in $V_2$, and thus
\begin{equation}
	\gamma \ket{\phi} = \evalat{\gamma}{V_1} \ket{\phi} = \ket{\phi} \; .
\end{equation}
This leads us to consider any ONB $\B$ of $\H_1$ that includes $T^P_1$ 1-eigenvectors of $\evalat{\gamma}{V_1}$, i.e., orthonormal vectors in $V_1$ that are eigenvectors of $\evalat{\gamma}{V_1}$ relative to the eigenvalue 1, $T^H_1$ 0-eigenvectors of $\evalat{\gamma}{V_1}$, $T^P_2$ 1-eigenvectors of $\evalat{\gamma}{V_2}$, and $T^H_2$ 0-eigenvectors of $\evalat{\gamma}{V_2}$. Up to reordering the ONB $\B$, the matrix representation of $\gamma$ relative to $\B$ has the same first, second, fourth, and fifth \bt block rows' and \bt block columns' as the matrix in Eq.\til\eqref{eq:gamma_6_blocks}.

One can then discard these trivial blocks of the matrix and focus on the nontrivial ones. These blocks still need to be brought to normal form by appropriately choosing the remaining elements of the ONB $\B$. In other words, one has to analyze the remaining eigenvalues and eigenvectors of $\gamma$. Some bookkeeping shows that there are
\begin{equation}
	N' = N - T^P_1 - T^P_2 = R
\end{equation}
remaining eigenvalues 1, and
\begin{equation}
	d' - N' = (d - N) - T^H_1 - T^H_2 = R
\end{equation}
remaining eigenvalues 0. The corresponding remaining eigenvectors span a reduced one-particle space $\H_1' \subseteq \H_1$ of dimension $d' = 2R$, on which the rank-$N'$ orthogonal projector $\gamma' = \evalat{\gamma}{\H_1'}$ acts. $\H_1'$ comes with an inherited partition, $\H_1' = V_1' \oplus V_2'$, into subspaces $V_1' \subseteq V_1$ and $V_2' \subseteq V_2$ of dimension
\begin{equation}
	\lambda_1' = \lambda_1 - T_1^P - T_1^H = R
\end{equation}
and
\begin{equation}
	\lambda_2' = \lambda_2 - T_2^P - T_2^H = R \; ,
\end{equation}
respectively. Our \bt primed' notation highlights the following: using Lemma\til\ref{lem:evalsubmatrix} to get rid of the trivial eigenvalues 0 and 1, we reduced the general case of arbitrary $N, d$ and $\lambda_{1/2}$ to the particular case discussed in part (1) of this proof. Applying the result of part (1) to the projector $\gamma'$ acting on $\H_1'$, with respect to the partition $\H_1' = V'_1 \oplus V'_2$, finally allows us to complete the ONB $\B$ of $\H_1$ such that the corresponding matrix representation of $\gamma$ takes the form of Eq.\til\eqref{eq:gamma_6_blocks}.
\end{proof}
Going from the level of 1RDMs back to the level of SDs yields a normal form of bipartite SDs:
\begin{coro} \label{coro:normal_form_SD}
	Let $\ket{SD} \in \F_{\H_1}$ be an $N$-fermion Slater determinant. Let $V_1 \cong \C^{\lambda_1}$ and $V_2 \cong \C^{\lambda_2}$ be subspaces of the 1P space $\H_1 \cong \C^d$ such that $V_1 \perp V_2$ and $V_1 \oplus V_2 = \H_1$. Let $T^P_1, T^P_2$ and $R$ be defined as in Eqs.~\eqref{eq:TPmu} and~\eqref{eq:R_bis}. Then there exist orthonormal orbitals
	\begin{align} \nonumber
		& \ket{\phi_1}, \ldots , \ket{\phi_{T^P_1 + R}} \in V_1 \\ \nonumber
		& \ket{\psi_1}, \ldots , \ket{\psi_{T^P_2 + R}} \in V_2
	\end{align}
	and angles
	\begin{equation}
		\theta_1 , \ldots , \theta_R \in [0, \pi/2] 
	\end{equation}
	such that
	\begin{equation} \label{eq:SD_rep_normal}
		\ket{SD} = f^\dagger_{\phi_1} \ldots f^\dagger_{\phi_{T^P_1}} f^\dagger_{\psi_1} \ldots f^\dagger_{\psi_{T^P_2}} f^\dagger_{\chi_1} \ldots f^\dagger_{\chi_R} \ket{0} \; ,
	\end{equation}
	where
	\begin{equation}
		\ket{\chi_j} = \cos \theta_j \ket{\phi_{T^P_1 + j}} + \sin \theta_j \ket{\psi_{T^P_2 + j}}, \;\;\; j = 1, \ldots , R \; .
	\end{equation}
\end{coro}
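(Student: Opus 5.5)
The plan is to read Corollary~\ref{coro:normal_form_SD} off Prop.~\ref{prop:normal_form_gamma}, using the one-to-one correspondence between $N$-fermion SDs and rank-$N$ orthogonal projectors $\gamma$ (the SD is, up to phase, the wedge product of any ONB of $\mathrm{im}\,\gamma$).

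\emph{Step 1: apply the proposition.} Let $\gamma$ be the 1P-reduced operator of $\ket{SD}$. Prop.~\ref{prop:normal_form_gamma} gives an ONB adapted to $V_1 \oplus V_2$ in which $\gamma$ has the form of Eq.~\eqref{eq:gamma_6_blocks}. Label the basis vectors as follows.
\begin{itemize}
\item In $V_1$: call the $\I_{T^P_1}$ vectors $\ket{\phi_1},\ldots,\ket{\phi_{T^P_1}}$ and the vectors of the $C^2$ block $\ket{\phi_{T^P_1+1}},\ldots,\ket{\phi_{T^P_1+R}}$.
\item In $V_2$: call the $\I_{T^P_2}$ vectors $\ket{\psi_1},\ldots,\ket{\psi_{T^P_2}}$ and the vectors of the $S^2$ block $\ket{\psi_{T^P_2+1}},\ldots,\ket{\psi_{T^P_2+R}}$.
\end{itemize}
The vectors belonging to the $\mathbf{0}_{T^H_\mu}$ blocks play no role. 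By construction these orbitals are orthonormal and lie in the stated subspaces.

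\emph{Step 2: identify the image of $\gamma$.} The matrix is a direct sum of the following pieces:
\begin{itemize}
\item the identity on the $\phi_1,\ldots,\phi_{T^P_1}$ and $\psi_1,\ldots,\psi_{T^P_2}$;
\item zero on the $T^H_\mu$ vectors;
\item for each $j \le R$, the $2\times 2$ block $\begin{pmatrix}\cos^2\theta_j & \cos\theta_j\sin\theta_j\\ \cos\theta_j\sin\theta_j & \sin^2\theta_j\end{pmatrix}$ on $\mathrm{span}(\phi_{T^P_1+j},\psi_{T^P_2+j})$.
\end{itemize}
Each $2\times 2$ block equals $\ket{\chi_j}\bra{\chi_j}$ with $\ket{\chi_j}=\cos\theta_j\ket{\phi_{T^P_1+j}}+\sin\theta_j\ket{\psi_{T^P_2+j}}$. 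Hence
\[
\gamma = \sum_{i\le T^P_1}\ket{\phi_i}\bra{\phi_i}+\sum_{i\le T^P_2}\ket{\psi_i}\bra{\psi_i}+\sum_{j\le R}\ket{\chi_j}\bra{\chi_j},
\]
and these $T^P_1+T^P_2+R$ vectors are orthonormal. The $\chi_j$ have disjoint supports from each other and from the other vectors. Their number equals $N$, which is the bookkeeping identity $N-T^P_1-T^P_2=R$ already established in Part 2 of the proof of Prop.~\ref{prop:normal_form_gamma}. So they form an ONB of $\mathrm{im}\,\gamma$.

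\emph{Step 3: back to the SD.} The SD is determined up to a global phase by its 1RDM. Concretely, $f^\dagger_{\varphi_1}\cdots f^\dagger_{\varphi_N}\ket{0}$ depends on the ONB $\{\varphi_i\}$ of $\mathrm{im}\,\gamma$ only through a determinant phase. Therefore Eq.~\eqref{eq:SD_rep_normal} holds up to a phase, which is absorbed by multiplying, say, $\ket{\phi_1}$ by a phase; if $T^P_1=0$, one of the other orbitals is rephased instead. If $R=0$ or some $T^P_\mu=0$, the corresponding lists are simply empty.

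There is no real obstacle. The only points needing care are this phase and the ordering convention of the creation operators, which only changes the state by a sign.
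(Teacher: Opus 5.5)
Your proposal is correct and follows the paper's proof. Like the paper, you apply Prop.~\ref{prop:normal_form_gamma} to the 1RDM $\gamma$ of $\ket{SD}$, read off from Eq.~\eqref{eq:gamma_6_blocks} the orthonormal basis $\{\phi_i\}_{i\le T^P_1}\cup\{\psi_i\}_{i\le T^P_2}\cup\{\chi_j\}_{j\le R}$ of $\mathrm{im}\,\gamma$, and use that an SD is fixed up to a phase by the range of its 1RDM, absorbing that phase into one orbital. Your extra checks only make explicit what the paper leaves implicit: each $2\times 2$ block is $\ket{\chi_j}\bra{\chi_j}$, the count is $T^P_1+T^P_2+R=N$, and the phase needs handling when $T^P_1=0$. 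In that last case, if you absorb the phase into some $\chi_j$, rephase both $\phi_{T^P_1+j}$ and $\psi_{T^P_2+j}$ so that $\chi_j$ keeps its real $\cos\theta_j$, $\sin\theta_j$ coefficients.
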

\begin{proof}
	Let $\gamma$ be the 1P-reduced operator of $\ket{SD}$. $\gamma$ is a rank-$N$ orthogonal projector on $\H_1$. This allows us to use Prop.~\ref{prop:normal_form_gamma} and find an ordered ONB $\B = ( \ket{j} )_{j=1}^d$ of $\H_1$, made up of an ONB $(\ket{j})_{j=1}^{\lambda_1}$ of $V_1$ and an ONB $(\ket{j})_{j=\lambda_1 + 1}^{d}$ of $V_2$, with respect to which the matrix representation of $\gamma$ takes the form given by Eqs.~\eqref{eq:gamma_6_blocks},~\eqref{eq:diag_C} and~\eqref{eq:diag_S}. From that matrix form, we can read off the elements of an ONB of the range of $\gamma$, namely
	\begin{align} \nonumber
		\ket{\phi_j} &= \ket{j} \in V_1 , \;\;\; j= 1, \ldots, T^P_1  \\ \nonumber
		\ket{\psi_j} &= \ket{\lambda_1 + j} \in V_2, \;\;\; j= 1, \ldots, T^P_2 \\
		\ket{\chi_j} &= \cos \theta_j \ket{\phi_{T^P_1 + j}} + \sin \theta_j \ket{\psi_{T^P_2 + j}}, \;\;\; j = 1, \ldots , R \; ,
	\end{align}
	where in the last equation we simply renamed
	\begin{align} \nonumber
	\ket{\phi_{T^P_1 + j}} &= \ket{\lambda_1 - R +  j} \in V_1 \\
	\ket{\psi_{T^P_2 + j}} &= \ket{d - R + j} \in V_2 \; ,
	\end{align}
	to match the notation of the corollary that we are proving. This ONB - as any other ONB of the range of $\gamma$ - can be used to represent the Slater determinant $\ket{SD}$ in terms of creation operators applied to the vacuum vector, leading to Eq.~\eqref{eq:SD_rep_normal} (up to a phase factor that can be absorbed in, e.g., $\ket{\phi_1}$).
\end{proof}
The above Corollary can be used to show that the two orbital-reduced states of any SD are isospectral, which is actually true for arbitrary parity-superselected fermionic pure states~\cite{Friis_2016}.

We now prove another direct consequence of Prop.~\ref{prop:normal_form_gamma}, namely Prop.~\ref{prop:bset_bipartitions_SD}, stated and discussed in Appendix~\ref{app:main_prob_revisited}. We refer to Sec.~\ref{subsec:towards_general} for the notation. Pick
\begin{equation}
	\bb = (b_1, \dots , b_{\lambda_1} \sbs b_{\lambda_1 + 1}, \dots , b_{\lambda_1 + \lambda_2}) \in \bset
\end{equation}
and a $d \times d$ rank-$N$ projection matrix $\gamma$ with block-spectrum $\bb$. Then Prop.~\ref{prop:normal_form_gamma} implies that $\gamma$ can be brought to the form given by Eqs.~\eqref{eq:gamma_6_blocks},~\eqref{eq:diag_C} and~\eqref{eq:diag_S} by conjugation with a block-diagonal unitary, which does not change the spectrum of the two diagonal blocks. Since the diagonal blocks of the transformed matrix are diagonal, their spectra can be easily read off, and are indeed of the form explained in Eqs.~\eqref{eq:constraint1} and~\eqref{eq:constraint2}. Viceversa, given a vector $\bb$ with the form of Eqs.~\eqref{eq:constraint1} and~\eqref{eq:constraint2}, one can construct the rank-$N$ projection matrix of Eq.~\eqref{eq:gamma_6_blocks}, observe that the decreasingly-ordered spectra $\sigma_1$ and $\sigma_2$ of its two diagonal blocks indeed satisfy $\bb = (\sigma_1 | \sigma_2)$, and conclude that $\bb \in \bset$.

\section{Proofs of the propositions on free states} \label{app:proofs_for_basic_propositions}
We report from Sec.~\ref{subsec:proof_of_main_theorem} and prove Props.~\ref{prop:marginalsoffreestates} and~\ref{prop:entropyfreestates}.
% previously restatable \marginalfreestate*
\begin{prop*}
	Let $\rho \in \L(\F_{\H_1})$ be a fermionic free state, let $\gamma \in \L(\H_1)$ be its 1P-reduced operator, and let $V$ be a linear subspace of $\H_1$. Then the orbital-reduced state of $\rho$ relative to (the orbitals in) $V$ is the free state on the Fock space $\F_{V}$ whose 1P-reduced operator is the compression of $\gamma$ to $V$.
\end{prop*}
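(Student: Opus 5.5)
The proof proceeds in three steps: reduce to the free states with finite $K$, factorize $\rho$ in a basis adapted to $V$, and compare the reduced state with the candidate free state through all normally ordered correlation functions.

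\textbf{Reduction to finite $K$.} The set of free states is a closure, and the partial trace is continuous. The map from a free state to its 1P-reduced operator is continuous, and so is compression to $V$. Finally, the correspondence between a compressed 1P-reduced operator and the free state on $\F_V$ is continuous. It therefore suffices to treat $\rho = \e^{K}/Z$ with $K = \sum_{j,k}\kappa_{jk} f^\dagger_j f_k$ finite. For such states $0<\gamma<\I$, and the relation is $\gamma = (\I + \e^{-\kappa})^{-1}$, up to transposition conventions. Limit points, including Slater determinants, then follow by approximation.

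\textbf{Factorization and candidate state.} Choose an ONB of $\H_1$ that splits as an ONB of $V$ followed by an ONB of $V^\perp$. Let $\sigma$ denote the free state on $\F_V$ whose 1P-reduced operator is the compression $\gamma_V$; it exists because $0 \le \gamma_V \le \I$. The reduced state $\rho_V$ is characterized by the identity $\tr_{\F_V}[\rho_V \O_V] = \tr_{\F}[\rho\, \O_V]$ for all operators $\O_V$ on $\F_V$ lifted to $\F$, as in the partial trace of Ref.~\cite{Friis_2016}. So it suffices to show that $\rho$ and $\sigma$ give the same expectation values for every operator built from the $f^{(\dagger)}_v$, $v \in V$.

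\textbf{Comparison of correlation functions.} Such operators are spanned by normally ordered monomials $f^\dagger_{v_1}\cdots f^\dagger_{v_p} f_{w_q}\cdots f_{w_1}$. Both $\rho$ and $\sigma$ commute with the respective total number operators, so monomials with $p \neq q$ have vanishing expectation in both states. For $p=q$, apply Wick's theorem to $\rho$. The expectation equals $\det[\braket{w_i|\gamma|v_j}]$, up to the index convention of the paper. These are precisely matrix elements of $\gamma_V$, since all vectors lie in $V$. Applying Wick's theorem to $\sigma$ on $\F_V$ yields the same determinant. Hence all moments coincide, and these moments span the dual of $\L(\F_V)$, so $\rho_V = \sigma$.

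\textbf{Main obstacle.} I expect the hard part to be stating Wick's theorem cleanly in determinant form for mixed free states and checking that it applies to $\sigma$ as well. I would handle this by diagonalizing $\kappa$, writing $\rho$ as a product of commuting single-mode Gibbs factors, and verifying the determinant formula mode by mode, after which it extends by linearity. A second, minor point is the fermionic sign and ordering issue in lifting $\O_V$ to $\F$. It is harmless here because monomials in $V$-operators are unambiguous under the splitting $V \oplus V^\perp$.
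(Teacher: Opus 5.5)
Your proposal is correct and follows essentially the same route as the paper: both reduce the claim to equality of the expectation values of products of creation and annihilation operators on $V$, which Wick's theorem expresses solely through matrix elements of $\gamma$ between vectors of $V$, i.e., through the compression $\gamma_V$. Your additional reduction to Gibbs states with finite $K$ and the explicit determinant form (with the $p\neq q$ terms vanishing) are harmless refinements, whereas the paper applies Wick's theorem directly to all free states, including limit points.
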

\begin{proof}
	Let $\rho_V \in \L(\F_{V})$ denote the orbital-reduced state of $\rho$ relative to $V$ (as introduced in Ref.~\cite{Friis_2016}), and let $\rho_{\gamma_V} \in \L(\F_{V})$ denote the unique~\cite{Gottliebfermbos} free state whose 1P-reduced operator is the compression $\gamma_V$ of $\gamma$ to $V$. We prove $\rho_V = \rho_{\gamma_V}$ by showing that
	\begin{equation} \label{eq:target_average_eq}
		\tr_{\F_V}[\rho_V O] = \tr_{\F_V}[\rho_{\gamma_V} O]
	\end{equation}
	holds true for any operator $O$ on $\F_{V}$. By linearity, it suffices to prove the above equality for operators $O$ which are products of creation/annihilation operators relative to orbitals in $V$. Given such a product $O$, by definition, $\tr_{\F_V}[\rho_V O] = \tr_{\F_{\H_1}}[\rho O]$ (with slight abuse of notation, we regard the operator $O \in \L (\F_V)$ as an operator on $\F_{\H_1}$, too). Since $\rho$ is free, Wick's theorem~\cite{wick_gaudin, molinari_wick} allows one to express $\tr_{\F_{\H_1}}[\rho O]$ in terms of matrix elements $\braket{\phi | \gamma | \psi} = \tr_{\F_{\H_1}} [\rho f^\dagger_{\psi} f_\phi]$ relative to orbitals $\ket{\phi},\ket{\psi} \in V$. Similarly, freeness of $\rho_{\gamma_V}$ allows us to express $\tr_{\F_V}[\rho_{\gamma_V} O]$ in the same way and in terms of the same matrix elements, hence Eq.~\eqref{eq:target_average_eq}.
\end{proof}
% previously restatable \entropyfreestate*
\begin{prop*}
	Let $\rho \in \L(\F_{\H_1})$ be a fermionic free state. Let $\gamma$ be the 1P-reduced operator of $\rho$ and $n_1,\dots,n_d$ the eigenvalues of $\gamma$, i.e., the NONs of $\rho$, with $d = \emph{dim}(\H_1)$. Then
	\begin{equation}
		S(\rho) = \tr [h(\gamma)] = H(\bn) \,.
	\end{equation}
\end{prop*}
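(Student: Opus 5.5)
The plan is to reduce to the diagonal form of $\rho$ via a 1P unitary and then factorize over the natural orbitals. Let $u$ be a unitary on $\H_1$ diagonalizing $\gamma$, $u^\dagger \gamma u = \textup{diag}(n_1,\ldots,n_d)$, and let $U$ be the corresponding 1P unitary on $\F_{\H_1}$. Then $\rho' = U^\dagger \rho U$ is free with 1RDM $\textup{diag}(\bn)$, and $S(\rho') = S(\rho)$ by unitary invariance of the von Neumann entropy. Likewise $\tr[h(\gamma)] = \sum_j h(n_j) = H(\bn)$ by the spectral theorem. So it suffices to treat a diagonal $\gamma$ with eigenvalues $n_j$ in the natural-orbital basis $\{f_j^\dagger\}$.

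First I treat the non-degenerate case $0<n_j<1$ for all $j$. I claim that the free state with 1RDM $\textup{diag}(\bn)$ is
\begin{equation*}
\rho' = \prod_{j=1}^d \left[ n_j\, f_j^\dagger f_j + (1-n_j)\, f_j f_j^\dagger \right],
\end{equation*}
which can be written as $\e^{K}/Z$ with $K = \sum_j \log\!\big(n_j/(1-n_j)\big) f_j^\dagger f_j$. The check goes as follows. The factors are even operators and commute with one another, so $\rho'$ is of Gibbs form and hence free. Its 1RDM is diagonal with entries $\tr[\rho' f_j^\dagger f_j] = n_j$, and the off-diagonal entries vanish by number conservation in each mode. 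By the one-to-one correspondence between free states and 1RDMs, this $\rho'$ is the state in question.

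In the occupation-number basis $\ket{s_1 \ldots s_d}$ with $s_j \in \{0,1\}$, the operator $\rho'$ is diagonal with eigenvalues $\prod_j n_j^{s_j}(1-n_j)^{1-s_j}$. This is a product distribution, so its Shannon entropy is additive:
\begin{equation*}
S(\rho') = \sum_j \big[-n_j\log n_j-(1-n_j)\log(1-n_j)\big] = H(\bn).
\end{equation*}

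The only delicate point is the boundary case where some $n_j \in \{0,1\}$, for which $K$ does not exist, e.g.\ for SDs. Here I use that free states are defined as a closure. The state with 1RDM $\textup{diag}(\bn)$ is the limit of the states above with $n_j^{(\epsilon)} \to n_j$. The product formula for $\rho'$ passes to the limit entrywise in the fixed occupation basis, and it remains the correct state by uniqueness of the free state with a given 1RDM. Alternatively, the eigenvalue formula can be verified directly via Wick's theorem for the limiting state. Entropy is continuous on finite-dimensional state spaces, and $h$ is continuous on $[0,1]$ with $h(0)=h(1)=0$. Hence both sides of $S(\rho') = H(\bn)$ are continuous in $\bn$, and the identity extends to all $\bn \in [0,1]^d$.
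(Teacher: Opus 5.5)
Your proof is correct and takes the same route as the paper. In the natural-orbital basis the free state factorizes over modes with eigenvalues $\prod_j n_j^{m_j}(1-n_j)^{1-m_j}$, additivity of the Shannon entropy of this product distribution gives $H(\bn)$, and the boundary case $n_j\in\{0,1\}$ is handled by a limit. The differences are cosmetic: you start from the 1RDM and recover the Gibbs form via uniqueness rather than starting from $\e^{-\sum_j \epsilon_j f_j^\dagger f_j}/Z$ and reading off $n_j = 1/(\e^{\epsilon_j}+1)$, and you spell out the continuity argument at the boundary that the paper leaves implicit.
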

\begin{proof}
	We recall the main passages of the calculation of $S(\rho)$ as presented in Ref.~\cite{gottliebfermions}. Assuming $\rho$ is not on the boundary of the set of free states, it can be expressed in terms of fermionic operators $f^{(\dagger)}_j$ relative to its natural orbitals as $\rho = \e^{-\sum_{j=1}^d \epsilon_j f^\dagger_j f_j}/Z$, with $d = \text{dim} (\H_1)$, $\epsilon_j \in \R$ and $Z$ the normalization constant given by $Z = \prod_{j=1}^d Z_j $, with $Z_j = 1 + \e^{-\epsilon_j}$. The $2^d$ eigenvalues of $\rho$, indexed by $d$ occupation numbers $\boldsymbol{m} = (m_1, \dots, m_d) \in \{0,1\}^d$, are given by
	\begin{equation}
		\lambda_{\boldsymbol{m}} = \prod_{j=1}^d \frac{\e^{- m_j \epsilon_j}}{Z_j} = \prod_{j=1}^d n_j^{m_j} (1 - n_j)^{1 - m_j} \; .
	\end{equation}
	In the last equality, we rewrote each factor in terms of the NONs $n_j = 1/(\e^{\epsilon_j} + 1)$ of $\rho$~\cite{Peschel_2003}, which are well-defined also in the case that $\rho$ lies on the boundary of the set of free states (that is, $\epsilon_j \rightarrow \pm \infty$, or $n_j \in \{ 0, 1 \}$). Plugging the last equation into the entropy
	\begin{equation}
		S(\rho) = - \sum_{\boldsymbol{m} \in \{0,1\}^d} \lambda_{\boldsymbol{m}} \log(\lambda_{\boldsymbol{m}})
	\end{equation} 
	and a short calculation yield Eq.~\eqref{eq:entropy_of_free_state}.
\end{proof}

\section{Alternative computation of $E_{\max}$} \label{app:main_prob_revisited}

We revisit the computation of $C_{\max} = 2 E_{\max}$ following the discussion of Sec.~\ref{subsec:towards_general}. Task (1), i.e., the characterization of $\bset$, is the aim of the following proposition:
\begin{prop} \label{prop:bset_bipartitions_SD}
	Let $\bb = (b_1, \dots , b_{\lambda_1} \sbs b_{\lambda_1 + 1}, \dots , b_{\lambda_1 + \lambda_2})$ $ \in \R^d$. Then $\bb \in \bset$ if and only if
	\begin{align} \label{eq:constraint1}
		(b_1, \dots , b_{\lambda_1} ) &= (\underbrace{1,\dots,1}_{T^P_1 \emph{ times}}, \beta_1,\dots,\beta_R,\underbrace{0,\dots,0}_{T^H_1\emph{ times}} ) \\ \label{eq:constraint2}
		(b_{\lambda_1 + 1}, \dots , b_{\lambda_1 + \lambda_2} ) &=  (\underbrace{1,\dots,1}_{T^P_2\emph{ times}}, 1-\beta_R,\dots,1 - \beta_1,\underbrace{0,\dots,0}_{T^H_2\emph{ times}}) \til ,
	\end{align}
	where $1 \geq \beta_1\geq\dots\geq \beta_R \geq 0$ and, for $\mu = 1, 2$,
	\begin{align}
		T^P_{\mu} &= \max \{ 0, \lambda_{\mu} - (d - N ) \} \\
		T^H_{\mu} &= \max \{ 0, \lambda_{\mu} - N \} \\ \label{eq:R}
		R &= \min \{ \lambda_1, \lambda_2, N, d - N \} \; .
	\end{align}
\end{prop}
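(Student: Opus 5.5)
The plan is to derive Prop.~\ref{prop:bset_bipartitions_SD} directly from the normal form of Prop.~\ref{prop:normal_form_gamma}. The key observation is that the block spectrum $\bb$ is invariant under conjugation of $\gamma$ by block-diagonal unitaries $u_1 \oplus u_2$. Such unitaries are exactly the changes of ONB that keep the first $\lambda_1$ basis vectors in $V_1$ and the last $\lambda_2$ in $V_2$. So it suffices to read off the block spectra of the matrices of the form~\eqref{eq:gamma_6_blocks}, and to check that every such matrix is a rank-$N$ orthogonal projector.

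For the ``only if'' direction, take $\bb \in \bset$ and a rank-$N$ projector $\gamma$ realizing it. By Prop.~\ref{prop:normal_form_gamma} there is a block-diagonal unitary bringing $\gamma$ to the form~\eqref{eq:gamma_6_blocks}. In that form the diagonal blocks are themselves diagonal:
\begin{itemize}
\item $B_1^\gamma \simeq \I_{T^P_1} \oplus \mathbf{0}_{T^H_1} \oplus C^2$;
\item $B_2^\gamma \simeq \I_{T^P_2} \oplus \mathbf{0}_{T^H_2} \oplus S^2$.
\end{itemize}
Set $\beta_j = \cos^2\theta_j \in [0,1]$ and relabel so that $\beta_1 \geq \dots \geq \beta_R$. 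Then the spectrum of $B_2^\gamma$ contains $\sin^2\theta_j = 1-\beta_j$, and in non-increasing order these read $1-\beta_R \geq \dots \geq 1-\beta_1$. Reordering within each block gives exactly~\eqref{eq:constraint1}--\eqref{eq:constraint2}. We must also check that the block sizes are consistent, i.e. $T^P_\mu + T^H_\mu + R = \lambda_\mu$ for $\mu = 1,2$. This is a short case analysis on which of $\lambda_1, \lambda_2, N, d-N$ is smallest, using $\lambda_1 + \lambda_2 = d$.

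For the ``if'' direction, start from given $\beta_1 \geq \dots \geq \beta_R$ and choose $\theta_j = \arccos\sqrt{\beta_j}$. Build the matrix~\eqref{eq:gamma_6_blocks}. It decomposes into the following pieces:
\begin{itemize}
\item diagonal $0/1$ entries;
\item $R$ two-by-two blocks $\begin{pmatrix} c_j^2 & c_j s_j \\ c_j s_j & s_j^2 \end{pmatrix}$, each a rank-one projector onto $(c_j, s_j)^T$.
\end{itemize}
Hence the matrix is an orthogonal projector of rank $T^P_1 + T^P_2 + R$. Its block spectra are, by the same reading as before, the vector $\bb$ in the required form. It then remains only to verify $T^P_1 + T^P_2 + R = N$.

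The only real obstacle is this integer bookkeeping: the two identities $T^P_\mu + T^H_\mu + R = \lambda_\mu$ and $T^P_1 + T^P_2 + R = N$. I would handle them by assuming without loss of generality $\lambda_1 \geq \lambda_2$ and $N \leq d-N$, using the particle-hole and region-swap symmetries. This leaves two cases, $N \leq \lambda_2$ and $N > \lambda_2$. In each case all the $\max$ and $\min$ expressions in~\eqref{eq:TPmu}--\eqref{eq:R_bis} evaluate explicitly, and the identities follow by direct substitution. These identities are also implicit in Part~2 of the proof of Prop.~\ref{prop:normal_form_gamma}, so that computation can be cited.
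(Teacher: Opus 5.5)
Your proposal is correct and is the paper's own argument: conjugate a realizing projector into the normal form of Prop.~\ref{prop:normal_form_gamma} by a block-diagonal unitary and read off the now-diagonal blocks, and conversely build the matrix \eqref{eq:gamma_6_blocks} from the given $\beta_j$. The only difference is that you make explicit two points the paper leaves implicit: that the constructed matrix is a projector, and the integer identities $T^P_\mu+T^H_\mu+R=\lambda_\mu$ and $T^P_1+T^P_2+R=N$, which the paper only uses in Part~2 of the normal-form proof.
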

The characterization of $\bset$ presented above emerges naturally from our discussion in Appendix~\ref{app:normal_form}, where it is derived in the last lines.

Notice that the four coefficients $T^{P/H}_{\mu}$ are related: at least two of them must vanish, for any choice of $N, d$ and $\lambda_{1/2}$. The block spectra $\bb \in \bset$ are thus characterized in terms of three conditions: (i) $\bb$ should contain a certain amount $T^{P/H}_{1} + T^{P/H}_{2}$ of `trivial' block eigenvalues 1 and 0, corresponding, respectively, to the occupied and empty orbitals (as previously stated in Eqs.~\eqref{eq:block_spec_cond1},~\eqref{eq:block_spec_cond3} and~\eqref{eq:block_spec_cond4}); (ii) the remaining $R + R$ `nontrivial' block eigenvalues $b_j$ should be in the interval $[0,1]$, which can be seen as a consequence of the Pauli principle (as stated in Eqs.~\eqref{eq:block_spec_cond2} and~\eqref{eq:block_spec_cond5}); (iii) each nontrivial eigenvalue $b_j$ associated with the first block, i.e., $j \leq \lambda_1$, should correspond to a nontrivial eigenvalue $b_{j'}$ associated with the second block, i.e., $j' > \lambda_1$, such that the pair sums to 1. This characterization of $\bset$, and, in particular, the pairing of nontrivial eigenvalues, is best understood in terms of the normal form of SDs and their 1RDMs relative to a splitting of the 1P space (see Ref.~\cite{botero_normal_form_ferm_Gaussian_states} and Prop.~\ref{prop:normal_form_gamma} in Appendix~\ref{app:normal_form}). Incidentally, note that the particular structure of the block spectrum of the 1RDM of any SD ensures that the two orbital-reduced states of any SD are isospectral, which is known to be true for arbitrary parity-superselected fermionic pure states~\cite{Friis_2016}.

Task (2), i.e., the maximization of Eq.~\eqref{eq:I_max_re_rewritten}, is now straightforward. The maximum value of $H(\bb)$ over $\bset$ is attained for the block spectrum $\bb$ whose nontrivial eigenvalues all equal $1/2$ (compare with Eq.~\eqref{eq:maximal_sum_h}). However, with a view to addressing the general problem of maximal multipartite correlation in general free states, we state this result from the point of view of \emph{majorization}~\cite{nielsenmajo}, as follows.	
\begin{coro} \label{cor:bbets_SDs}
	Let $\bb_{\max} \in \bset $ have the correct numbers $T^{P}_\mu$ and $T^H_\mu$ of trivial entries 1 and 0 for $\mu = 1, 2$, as explained in Prop.~\ref{prop:bset_bipartitions_SD}, and with all nontrivial entries equal to $1/2$. Then $\bb_{\max} \prec \bb$ $\forall \bb \in \bset$.
\end{coro}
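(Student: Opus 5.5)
The plan is to use Prop.~\ref{prop:bset_bipartitions_SD}, which fixes every $\bb \in \bset$ up to the free parameters $1 \geq \beta_1 \geq \dots \geq \beta_R \geq 0$. Every element of $\bset$ has the same trivial entries: $T^P_1+T^P_2$ ones and $T^H_1+T^H_2$ zeros. The remaining $2R$ nontrivial entries are $(\beta_1,\dots,\beta_R,1-\beta_R,\dots,1-\beta_1)$, and their sum is always $R$. So, after sorting the whole vector, any $\bb \in \bset$ is a concatenation of a common fixed part $\boldsymbol{t} = (1,\dots,1,0,\dots,0)$ and a variable part $\bv_{\bbeta} = (\beta_1,\dots,\beta_R,1-\beta_1,\dots,1-\beta_R)$. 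The candidate $\bb_{\max}$ is the concatenation of $\boldsymbol{t}$ with $\bv_{\max} = (1/2,\dots,1/2)$. It lies in $\bset$ by the same proposition, taking $\beta_j = 1/2$ for all $j$.

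The key step is the elementary fact that the constant vector is majorized by every vector of the same length and sum. Since $\bv_{\max}$ and $\bv_{\bbeta}$ both have length $2R$ and sum $R$, this gives $\bv_{\max} \prec \bv_{\bbeta}$. To prove the fact, note that the sum of the $m$ largest entries of any vector with average $1/2$ is at least $m/2$. This is the standard averaging argument: the $m$ largest entries have average at least the overall average. The total sums agree, so all partial-sum inequalities hold. Alternatively, one can apply the doubly stochastic matrix $\frac{1}{2R}J$, with $J$ the all-ones matrix, to $\bv_{\bbeta}$.

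It remains to lift majorization from the variable part to the full vectors, i.e. to show that $\bx \prec \by$ implies $(\boldsymbol{t},\bx) \prec (\boldsymbol{t},\by)$. I would use the doubly stochastic characterization (Hardy--Littlewood--P\'olya, Appendix~\ref{app:majorization_schur_horn_cauchy}). If $\bx = D\by$ with $D$ doubly stochastic, then $(\boldsymbol{t},\bx) = (\I \oplus D)(\boldsymbol{t},\by)$, and $\I \oplus D$ is again doubly stochastic. This sidesteps any bookkeeping about how the trivial entries 1 and 0 interleave with the nontrivial ones when the whole vector is sorted. Since majorization is permutation invariant, the blockwise ordering used in $\bset$ does not matter.

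The only point needing care is this concatenation step, because partial-sum arguments get messy once the entries are reordered. The doubly stochastic route avoids that, so I do not expect a genuine obstacle. Finally, the strict Schur-concavity of $H$ recovers $E_{\max} = R\log 2$ as a consistency check.
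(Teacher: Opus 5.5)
Your proposal is correct and follows the paper's proof: split off the common trivial 1's and 0's, then use that the constant vector $(1/2,\ldots,1/2)$ of length $2R$ is majorized by the nontrivial part of any $\bb \in \bset$, since both sum to $R$. The only difference is the lifting step; the paper notes that all entries lie in $[0,1]$, so the common 1's and 0's sit at the top and bottom of both sorted vectors, the partial-sum conditions reduce directly to the nontrivial parts, and the ``interleaving'' you worry about never occurs, whereas your $\I \oplus D$ argument via Hardy--Littlewood--P\'olya (standard, but not actually stated in Appendix~\ref{app:majorization_schur_horn_cauchy}) would work for an arbitrary common fixed part.
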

\begin{proof}
	The largest $T^P_1 + T^P_2$ entries of $\bb_{\max}$ and those of $\bb$ all equal $1$; analogously, the smallest $T^H_1 + T^H_2$ entries of the two vectors all equal $0$. Therefore, the majorization condition $\bb_{\max} \prec \bb$ can be simplified to $\bb_{\max}' \prec \bb'$, where
	\begin{equation}
		\bb_{\max}' = \underbrace{(1/2, \ldots, 1/2)}_{2R \text{ times}}
	\end{equation}
	and $\bb'$ contains the $2R$ nontrivial entries of $\bb$ (with $R$ given by Eq.~\eqref{eq:R}). The simplified majorization condition holds because, by Prop.~\ref{prop:bset_bipartitions_SD},
	\begin{equation}
		\sum_{j=1}^{2R} b'_{\max,j} = \sum_{j=1}^{2R} b'_{j} = R \; ,
	\end{equation}
	and because $\bb_{\max}'$ is maximally mixed.
\end{proof}
The strict Schur-concavity~\cite{MarshallBook} of the function $H$ implies the following, by definition:
\begin{coro} \label{cor:bbest_SD_bipartitions}
	The vector $\bb_{\max}$ introduced in Cor.~\ref{cor:bbets_SDs} is the unique maximizer for Eq.~\eqref{eq:I_max_re_rewritten}. $C_{\max}$ follows as $2 R \log(2)$, with $R = \min \{ \lambda_1, \lambda_2, N, d - N \}$.
\end{coro}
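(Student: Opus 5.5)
The plan is to combine Cor.~\ref{cor:bbets_SDs} with the strict Schur-concavity of $H$ recalled in Sec.~\ref{subsec:towards_general}, and then evaluate $H$ at $\bb_{\max}$ directly. First, $\bb_{\max}\in\bset$ by Prop.~\ref{prop:bset_bipartitions_SD}: it is the vector of Eqs.~\eqref{eq:constraint1} and~\eqref{eq:constraint2} with $\beta_1=\dots=\beta_R=1/2$. Cor.~\ref{cor:bbets_SDs} gives $\bb_{\max}\prec\bb$ for every $\bb\in\bset$. Since $H(\bx)=\sum_j h(x_j)$ is a symmetric sum of a strictly concave function, it is Schur-concave, so $H(\bb_{\max})\geq H(\bb)$ for all $\bb\in\bset$. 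Hence $\bb_{\max}$ is a maximizer of the right-hand side of Eq.~\eqref{eq:I_max_re_rewritten}.

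For uniqueness, I will use the strict form: if $\bb\in\bset$ attains $H(\bb)=H(\bb_{\max})$, then $\bb$ must be a permutation of $\bb_{\max}$. Then I need to show that such a $\bb$ equals $\bb_{\max}$ itself, and I expect this small bookkeeping step to be the only point needing care.

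\begin{itemize}
\item By Prop.~\ref{prop:bset_bipartitions_SD}, $\bb$ already contains $T^P_1+T^P_2$ forced entries $1$ and $T^H_1+T^H_2$ forced entries $0$ in the trivial positions.
\item The multiset of entries of $\bb_{\max}$ consists of exactly these trivial entries together with $2R$ copies of $1/2$.
\item Therefore the remaining $2R$ nontrivial entries $\beta_j$ and $1-\beta_j$ of $\bb$ must all equal $1/2$. In particular, none of the $\beta_j$ can take the values $0$ or $1$, since the supply of such entries is exhausted by the trivial positions.
\item With the blockwise non-increasing ordering fixed in the definition of $\bset$, this gives $\bb=\bb_{\max}$.
\end{itemize}

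Finally, I compute the value. Since $h(0)=h(1)=0$, only the $2R$ nontrivial entries contribute, so $H(\bb_{\max})=2R\,h(1/2)=2R\log 2$. Then Eq.~\eqref{eq:I_max_re_rewritten} gives $C_{\max}=2R\log 2$ with $R=\min\{\lambda_1,\lambda_2,N,d-N\}$. This is consistent with $C_{\max}=2E_{\max}$ and Eq.~\eqref{eq:maximal_sum_h}.
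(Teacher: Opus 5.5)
Your proposal is correct and follows the paper's route: $\bb_{\max}\in\bset$ via Prop.~\ref{prop:bset_bipartitions_SD}, $\bb_{\max}\prec\bb$ from Cor.~\ref{cor:bbets_SDs}, strict Schur-concavity of $H$ for optimality and uniqueness, and $h(0)=h(1)=0$, $h(1/2)=\log 2$ for the value $2R\log 2$. Your bookkeeping step makes explicit a detail the paper leaves implicit: a permutation of $\bb_{\max}$ that lies in $\bset$ must coincide with $\bb_{\max}$, because the forced trivial entries already use up all the $0$'s and $1$'s.
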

We have thus retrieved Result~\ref{prop:Emax_SD_bipartition}. To fully complete the connection to that result, we now show that the maximally entangled SD introduced therein has as 1RDM a rank-$N$ projection matrix $\gamma$ whose block spectrum is indeed $\bb_{\max}$. For the sake of clarity, we consider the case $N = 2, d = 5, \lambda_1 = 3$ and $\lambda_2 = 2$. As prescribed by Result~\ref{prop:Emax_SD_bipartition}, the $N = 2$ filled orbitals $\ket{\phi_1}, \ket{\phi_2}$ constituting a maximally entangled SD $f^\dagger_{\phi_1}f^\dagger_{\phi_2} \ket{0}$ are then all balanced superpositions of a left and a right basis orbital, e.g., $\ket{\phi_j} = (\ket{j} + \ket{\lambda_1 + j})/\sqrt{2}$, for $j = 1, 2$. As a consequence, the 1P-reduced operator $\sum_{j=1}^2 \ket{\phi_j}\bra{\phi_j}$ of the SD is represented, in the reference basis, by the rank-$2$ projection matrix
\begin{equation} \label{eq:gamma_block_1/2_evals}
	\gamma = \begin{pmatrix}
		\begin{array}{ccc|cc}
			1/2 & 0 & 0 & 1/2 & 0 \\
			0 & 1/2 & 0 & 0 & 1/2  \\
			0 & 0 & 0 & 0 & 0 \\
			\hline
			1/2 & 0  & 0 & 1/2 & 0  \\
			0 & 1/2  & 0 & 0 & 1/2  \\
		\end{array}
	\end{pmatrix} ~ .
\end{equation}
The highlighted block structure refers to the splitting of the reference basis of $\H_1$ into left and right basis orbitals. For the chosen values of $N, d, \lambda_1$ and $\lambda_2$, the numbers of trivial eigenvalues of the two diagonal blocks are $T_1^P = T_2^P = T_2^H = 0$ and $T_1^H = 1$, and indeed the upper-left block $B_1^\gamma$ has one eigenvalue $0$. All other $R + R = 2 + 2$ block eigenvalues equal $1/2$. In other words, the block spectrum of $\gamma$ is given by $\bb = (1/2, 1/2, 0 \sbs 1/2, 1/2) = \bb_{\max}$, as expected. This simple calculation can be extended in a straightforward manner to arbitrary values of $N, d, \lambda_1$ and $\lambda_2$, thus establishing the equivalence of Result~\ref{prop:Emax_SD_bipartition} and Cor.~\ref{cor:bbest_SD_bipartitions}. Similarly, at the level of matrices, rank-$N$ projection matrices $\gamma$ given by a simple generalization of Eq.\til\eqref{eq:gamma_block_1/2_evals} are (non-unique) maximizers for Eq.~\eqref{eq:I_max_rewritten}.

\section{Schur-Horn theorem and Cauchy interlacing inequalities} \label{app:majorization_schur_horn_cauchy}
We recall two well-known theorems and their consequences on the characterization of the set $\bset^{\blam}_{\bn}$ discussed in Sec.~\ref{sec:math_struc_gen_prob}.

First, we introduce the concept of majorization~\cite{MarshallBook,nielsenmajo}. Given $x, y \in \R^d$, $x$ majorizes $y$, written $x \succ y$, if and only if
	\begin{equation} \label{eq:majorization_eq}
		\sum_{j = 1}^d x_j = \sum_{j = 1}^d y_j
	\end{equation}
	and
	\begin{equation} \label{eq:majorization_ineq}
		\sum_{j = 1}^k x^{\downarrow}_j \geq \sum_{j = 1}^k y^{\downarrow}_j \, , \;\;\; k = 1, \ldots, d - 1 \; .
	\end{equation}
\begin{thrm} \label{thrm:schur-horn}
	\emph{(Schur-Horn~\cite{schur1923klasse, Horn1954DoublySM}).} Let $d \geq 2$, $\bn = \bn^\downarrow \in \R^d, \bb \in \R^d$ and $\blam = (1, \ldots, 1)$. Then $\bb$ is the diagonal of a Hermitian matrix with spectrum $\bn$, i.e., $\bb \in \bset^{\blam}_{\bn}$, if and only if $\bn \succ \bb$.
\end{thrm}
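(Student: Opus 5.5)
The plan is to prove the two implications separately: Schur's direction (necessity) by a doubly-stochastic argument, and Horn's direction (sufficiency) by induction on $d$ using a single two-orbital rotation at each step. This rotation is the same $\mathrm{DFT}_2$-type mixing used in Sec.~\ref{subsec:2D_space}.

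\emph{Necessity.} Let $\gamma = u\,\mathrm{diag}(\bn)\,u^\dagger$ have diagonal $\bb$. Then $b_i = \sum_j |u_{ij}|^2 n_j$, so $\bb = D\bn$ with $D_{ij}=|u_{ij}|^2$ unistochastic, hence doubly stochastic. Equality of the total sums, Eq.~\eqref{eq:majorization_eq}, is immediate from $\tr\gamma$. For the partial sums, Eq.~\eqref{eq:majorization_ineq}, take any index set $I$ with $|I|=k$. Then $\sum_{i\in I} b_i = \sum_j c_j n_j$ with weights $c_j=\sum_{i\in I}|u_{ij}|^2\in[0,1]$ and $\sum_j c_j = k$. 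A linear function with such weights is maximized by putting weight $1$ on the $k$ largest $n_j$. Hence $\sum_{i\in I} b_i \le \sum_{j\le k} n_j$. Equivalently, one can invoke the Ky Fan variational principle for the trace of the compression of $\gamma$ to $\mathrm{span}\{\be_i\}_{i\in I}$.

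\emph{Sufficiency.} We argue by induction on $d$.
\begin{itemize}
\item \emph{Base case $d=2$.} Given $n_1\ge b_1\ge n_2$, choose $\theta$ with $\cos^2\theta\, n_1+\sin^2\theta\, n_2=b_1$. Conjugating $\mathrm{diag}(n_1,n_2)$ by the real rotation of angle $\theta$ gives diagonal $(b_1,n_1+n_2-b_1)=(b_1,b_2)$.
\item \emph{Inductive step.} Order $\bb$ non-increasingly; permuting the diagonal is harmless, since it is realized by a permutation unitary. If $b_1=n_1$, the majorization $\bn\succ\bb$ reduces to $(n_2,\dots,n_d)\succ(b_2,\dots,b_d)$, and we apply the induction hypothesis to the complementary block. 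Otherwise $n_1>b_1\ge b_d\ge n_d$, so there is an index $k\ge 2$ with $n_{k-1}\ge b_1\ge n_k$.
\item \emph{The rotation.} Rotate in $\mathrm{span}(\be_1,\be_k)$ as in the base case, so that the $(1,1)$ entry becomes $b_1$ and the $(k,k)$ entry becomes $n_1+n_k-b_1$. The only off-diagonal entries created couple indices $1$ and $k$. Deleting row and column $1$ therefore leaves a \emph{diagonal} $(d-1)\times(d-1)$ matrix with entries
\[
\bn'=(n_2,\dots,n_{k-1},\,n_1+n_k-b_1,\,n_{k+1},\dots,n_d).
\]
\item \emph{Closing the induction.} If $\bn'\succ\bb':=(b_2,\dots,b_d)$, the induction hypothesis gives a unitary $W$ on the last $d-1$ coordinates with $\mathrm{diag}(W\,\mathrm{diag}(\bn')W^\dagger)=\bb'$. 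Conjugating the whole matrix by $1\oplus W$ keeps the $(1,1)$ entry equal to $b_1$ and produces diagonal $\bb$, which completes the step.
\end{itemize}

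The main obstacle is verifying $\bn'\succ\bb'$. The total sums agree by construction. The new entry satisfies $n_k\le n_1+n_k-b_1\le n_1$, and $b_1\ge b_j$ for all $j$. The partial sums of the $t$ largest entries of $\bn'$ split into two cases.
\begin{itemize}
\item \emph{Case $t\le k-2$.} The sum is at least $n_2+\dots+n_{t+1}\ge t\,b_1\ge b_2+\dots+b_{t+1}$, using $n_j\ge b_1$ for $j\le k-1$.
\item \emph{Case $t\ge k-1$.} The $t$ largest entries of $\bn'$ include $n_2,\dots,n_{k-1}$, the new entry, and $n_{k+1},\dots,n_{t+1}$. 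Their sum is $\sum_{j\le t+1}n_j-b_1\ge\sum_{j\le t+1}b_j-b_1=\sum_{j=2}^{t+1}b_j$, by $\bn\succ\bb$.
\end{itemize}
This bookkeeping, rather than the construction itself, is where care is needed, in particular keeping track of the reordering of $\bn'$ when the new entry is inserted. Together with necessity, this would prove that $\bb\in\bset^{\blam}_{\bn}$ for $\blam=(1,\ldots,1)$ if and only if $\bn\succ\bb$.
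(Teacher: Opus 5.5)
Your proof is correct and complete. The paper itself never proves this statement: in Appendix~\ref{app:majorization_schur_horn_cauchy} it is recalled as a classical result and delegated to the citations of Schur and Horn, so your argument supplies a self-contained derivation of something the paper outsources. The necessity half is Schur's unistochastic argument. The sufficiency half is the standard inductive construction with one planar rotation per step.

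The delicate points are all handled correctly:
\begin{itemize}
\item $b_d\ge n_d$ guarantees that the index $k$ exists.
\item $\bn'\succ\bb'$ holds in both cases, $t\le k-2$ and $t\ge k-1$. In the second case you only need the $t$ displayed entries as a lower bound for the sum of the $t$ largest entries of $\bn'$. They are in fact the $t$ largest, since $n_1+n_k-b_1\ge n_k\ge n_{k+1}$.
\item Since $\bn'$ is unsorted, the induction hypothesis is formally applied to $(\bn')^{\downarrow}$, with the sorting permutation absorbed into $W$. This is harmless.
\end{itemize}

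Compared with the bare citation, your route gives constructive information. The realizing matrix can be taken real symmetric. It is reached from $\mathrm{diag}(\bn)$ by at most $d-1$ real two-orbital rotations (up to relabeling of sites), each of the $\mathrm{DFT}_2$/beam-splitter type of Sec.~\ref{subsec:2D_space}.

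Applied to the target $(\eta,\ldots,\eta)\prec\bn$, this gives an alternative to the single $\mathrm{DFT}_d$ used for Result~\ref{res:finest}, built only from pairwise orbital rotations. It is essentially the mechanism the paper uses ad hoc in regime (ii) of Appendix~\ref{app:d=4}, where diagonal entries are set to $\eta$ one at a time by rotations in two-dimensional coordinate subspaces whose $n$-entries bracket $\eta$. The paper's DFT realization is more symmetric and yields plane-wave natural orbitals, which its physical interpretation of Result~\ref{res:finest} relies on.
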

\begin{thrm} \label{thrm:interlacing}
	\emph{(Cauchy interlacing inequalities~\cite{Fan1957Imbedding}).} Let $d \geq 2$, $\bn = \bn^\downarrow \in \R^d$, $\tilde{\bb} = \tilde{\bb}^\downarrow \in \R^{d-1}$ and $\blam = (d-1, 1)$. Then $\tilde{\bb}$ is the spectrum of a $(d-1) \times (d-1)$ principal submatrix of a Hermitian matrix with spectrum $\bn$, i.e., $\bb = (\tilde{\bb} \sbs \sum_{j=1}^d n_j - \sum_{j=1}^{d-1} \tilde{b}_j ) \in \bset^{\blam}_{\bn}$, if and only if $\tilde{\bb}$ interlaces $\bn$, i.e.,
	\begin{equation} \label{eq:interlacing_ineq}
		n_1 \geq \tilde{b}_1 \geq n_2 \geq \tilde{b}_2 \geq \ldots \geq \tilde{b}_{d-1} \geq n_d \; .
	\end{equation}
\end{thrm}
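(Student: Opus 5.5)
The plan is to prove the two directions separately: necessity by the Courant--Fischer min-max principle, and sufficiency by an explicit bordered (arrowhead) matrix. Throughout, the last entry $\sum_j n_j - \sum_j \tilde b_j$ of $\bb$ is forced by the trace, so only the $(d-1)$-block needs discussion.

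\emph{Necessity.} Let $\gamma$ be Hermitian with spectrum $\bn$ and let $V'$ be the $(d-1)$-dimensional coordinate subspace carrying the first block. The compression $\gamma|_{V'}$ has spectrum $\tilde\bb$. By Courant--Fischer,
\[
n_j=\max_{\dim W=j}\ \min_{0\ne x\in W}\frac{\langle x|\gamma|x\rangle}{\langle x|x\rangle},
\]
and $\tilde b_j$ is the same expression with $W$ restricted to subspaces of $V'$. Restricting the maximization can only lower the value, so $\tilde b_j\le n_j$. For the other inequality, let $E$ be the span of the top $j+1$ eigenvectors of $\gamma$. Then $E\cap V'$ has dimension at least $j$, by the dimension count of Lemma~\ref{lem:evalsubmatrix}. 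On this intersection the Rayleigh quotient is at least $n_{j+1}$, and hence $\tilde b_j\ge n_{j+1}$.

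\emph{Sufficiency.} Given an interlacing $\tilde\bb$, I would set $c=\sum_j n_j-\sum_j\tilde b_j$ and look for a matrix of the form
\[
\gamma=\begin{pmatrix}\mathrm{diag}(\tilde\bb) & z\\ z^\dagger & c\end{pmatrix},\qquad z\in\C^{d-1}.
\]
Its characteristic polynomial is
\[
p(x)=(x-c)\prod_i(x-\tilde b_i)-\sum_i|z_i|^2\prod_{k\ne i}(x-\tilde b_k).
\]
We want $p(x)=q(x):=\prod_j(x-n_j)$. First assume the $\tilde b_i$ are pairwise distinct. Evaluating at $x=\tilde b_i$ forces
\[
|z_i|^2=-\frac{q(\tilde b_i)}{\prod_{k\ne i}(\tilde b_i-\tilde b_k)}.
\]
Interlacing makes $q(\tilde b_i)$ have sign $(-1)^i$, since exactly the factors with $j\le i$ are non-positive. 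The denominator has sign $(-1)^{i-1}$, so the right-hand side is nonnegative and $z_i$ exists. With this choice, $p$ and $q$ are monic of degree $d$, agree at $d-1$ distinct points, and have equal $x^{d-1}$ coefficients because $\tr\gamma=\sum_j n_j$. Hence $p-q$ has degree at most $d-2$ with $d-1$ roots, so $p=q$ and $\gamma$ has spectrum $\bn$.

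\emph{Degenerate case (main obstacle).} When some $\tilde b_i$ coincide, the formula above breaks down. I would handle this by a closure argument. The set of pairs (spectrum of $\gamma$, spectrum of its $(d-1)$-block) is closed, because it is the continuous image of a compact set once $\bn$ is fixed: it is obtained from $\{u\,\mathrm{diag}(\bn)u^\dagger\}$ with $u$ unitary. So it suffices to approximate a degenerate interlacing pair by strictly interlacing pairs with the same $\bn$, perturbing $\tilde\bb$ slightly inside the intervals $[n_{j+1},n_j]$. If $n_{j+1}=n_j$, interlacing forces $\tilde b_j=n_j$, so ties in $\bn$ cannot be separated in this way. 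In that case I would deflate instead: split off the common value as a $1\times1$ direct summand of both $\gamma$ and its block, and induct on $d$.
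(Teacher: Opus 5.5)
The paper does not prove this statement. It recalls it as a classical result from Ref.~\cite{Fan1957Imbedding} and uses it as a black box, e.g.\ in App.~\ref{app:3D_space} and App.~\ref{app:d=5}. So there is no paper argument to compare with; your proposal is a correct, self-contained proof along standard lines.

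Necessity via Courant--Fischer together with the count $\dim(E\cap V')\geq \dim E-1$ is right. In the sufficiency part, the following all check out:
\begin{itemize}
\item the arrowhead ansatz and the formula for $|z_i|^2$;
\item the sign bookkeeping, $(-1)^i q(\tilde b_i)\geq 0$ against a denominator of sign $(-1)^{i-1}$;
\item the degree argument, which uses that $\tr\gamma=\sum_j n_j$ fixes the $x^{d-1}$ coefficient.
\end{itemize}
Beyond the citation, your route gives an explicit bordering. Conjugating the arrowhead by $w\oplus 1$, with $w$ diagonalizing a prescribed Hermitian block of spectrum $\tilde{\bb}$, realizes that block exactly. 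This is the form in which the paper actually invokes the theorem in App.~\ref{app:d=5}. It also follows from the spectral statement by a block-diagonal conjugation, so it is not exclusive to your proof.

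Two points to tighten in the degenerate case.

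First, the perturbation must go into the \emph{open} intervals $(n_{j+1},n_j)$, e.g.\ $\tilde b_j^{(\epsilon)}=(1-\epsilon)\tilde b_j+\epsilon\,(n_j+n_{j+1})/2$. For strictly decreasing $\bn$ these intervals are nonempty and pairwise disjoint, and that is what makes the perturbed entries distinct.

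Second, the deflation check is immediate. The terms $n_j$ and $\tilde b_j$ are adjacent in the chain~\eqref{eq:interlacing_ineq}, so deleting both leaves a valid interlacing chain in dimension $d-1$ by transitivity. The trace-fixed last diagonal entry is unchanged, and the base case $d=2$ is covered directly by the arrowhead formula.

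In fact you can drop the compactness argument altogether. A tie $\tilde b_i=\tilde b_k$ with $i<k$ forces $\tilde b_i=n_{i+1}$. So whenever the arrowhead formula breaks down, some $\tilde b_j$ equals an adjacent entry $n_j$ or $n_{j+1}$. Deflating that adjacent pair handles every degenerate case constructively, including ties in $\bn$.
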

Thm.~\ref{thrm:schur-horn} has the following quite general implication.
\begin{prop}
	Let $d \geq 2$, $ \blam$ a partition of $d$ and $\bn = \bn^\downarrow \in \R^d$. Let $\eta = \sum_{j=1}^d n_j / d$. Pick $\bb \in \bset^{\blam}_{\bn}$. Then
	\begin{equation} \label{eq:basic_maj_relations}
	\bn \succ \bb \succ (\underbrace{\eta, \ldots, \eta}_{\textup{$d$ times}}) \; .
	\end{equation}
\end{prop}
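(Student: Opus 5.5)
The plan is to prove the two majorizations in Eq.~\eqref{eq:basic_maj_relations} separately. Both reduce to the Schur-Horn Theorem~\ref{thrm:schur-horn}, combined with the fact that passing from a Hermitian matrix to its block-diagonal part is a pinching, i.e., a unital trace-preserving averaging over unitary conjugations.

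For the left relation $\bn \succ \bb$, fix $\gamma$ with spectrum $\bn$ and block spectrum $\bb$. For each block $\mu$, choose a unitary $u_\mu$ on $V_\mu$ that diagonalizes $B^\gamma_\mu$. Set $u = u_1 \oplus \dots \oplus u_k$ and $\gamma' = u^\dagger \gamma u$. Then $\gamma'$ still has spectrum $\bn$, since conjugation is unitary. Its diagonal blocks are $u_\mu^\dagger B^\gamma_\mu u_\mu$, which are diagonal, so the diagonal of $\gamma'$ equals $\bb$ up to a permutation. The Schur-Horn Theorem~\ref{thrm:schur-horn} (the ``only if'' direction) then gives $\bn \succ \mathrm{diag}(\gamma')$. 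Majorization is invariant under permutations of the entries, so $\bn \succ \bb$.

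For the right relation $\bb \succ (\eta,\ldots,\eta)$, first note that the traces agree. Indeed, $\sum_j b_j = \sum_\mu \tr B^\gamma_\mu = \tr\gamma = \sum_j n_j = d\eta$. It then remains to show that the partial sums of $\bb^\downarrow$ dominate $m\eta$ for every $m$. This is the standard fact that the uniform vector is majorized by every vector with the same sum. To see it, observe that the average of the $m$ largest entries of $\bb$ is at least the average of all $d$ entries, which is $\eta$. Hence $\sum_{j=1}^m b^\downarrow_j \geq m\eta$ for each $m$.

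No step is a real obstacle. The only point needing care is the first step, namely checking that a block-diagonal unitary conjugation leaves both the spectrum of $\gamma$ and the multiset of block eigenvalues unchanged, so that the Schur-Horn theorem can be applied to the diagonal of $\gamma'$ in place of $\bb$.
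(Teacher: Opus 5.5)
Your proposal is correct and matches the paper's proof: a block-diagonal unitary that diagonalizes each block makes $\bb$ the diagonal of a matrix with spectrum $\bn$, Schur--Horn then gives $\bn \succ \bb$, and $\bb \succ (\eta,\ldots,\eta)$ follows because the average of the $m$ largest entries is at least $\eta$. The only cosmetic differences are that you get the trace identity directly from $\tr\gamma$ rather than from the Schur--Horn majorization, and that the pinching remark in your opening sentence is never used.
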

\begin{proof}
	Let $\gamma$ be a $d \times d$ Hermitian matrix with spectrum $\bn$ and $\blam$-block spectrum $\bb$. Conjugating $\gamma$ with a block-diagonal unitary $u$ yields a matrix $\gamma' = u \gamma u^\dagger$ with the same spectrum $\bn$ and block spectrum $\bb$ as $\gamma$. Also, a suitable choice of the diagonal blocks of $u$ ensures that the diagonal blocks of $\gamma'$ are diagonal, implying that $\bb$ is the diagonal of $\gamma'$ (up to a permutation of the components). By Thm.~\ref{thrm:schur-horn}, we get $\bn \succ \bb$. In particular, we get $\sum_{j=1}^d b_j = \sum_{j=1}^d n_j = \sum_{j=1}^d \eta$. The relation $\bb \succ (\eta, \ldots, \eta)$ then follows from the fact that the average of the $k$ largest components of $\bb$ is higher than or equal to the average $\eta$ of all $d$ components of $\bb$.
\end{proof}
The above proposition implies the basic bounds on the maximal orbital correlation $C_{\max}^{\blam}(\rho)$ of free states $\rho$ stated in Prop.~\ref{prop:basic_bounds}, which we copy here.
% previously restatable \propbasicbounds*
\begin{prop*}
	Let $\rho$ be a fermionic free state with NONs $n_1, \ldots, n_d$ and filling fraction $\eta = \sum_{j=1}^d n_j/d$. Then
	\begin{equation}
		0 \leq C_{\max}^{\blam}(\rho) \leq d \cdot h ( \eta ) - H(\bn) \; .
	\end{equation}
	In particular, the upper bound is saturated if and only if $\bset^{\blam}_{\bn}$ contains the maximally mixed element $(\eta, \ldots, \eta)$.
\end{prop*}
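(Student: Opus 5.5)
The plan is to combine Prop.~\ref{prop:Imax_max_over_bset}, the two-sided majorization relation of Eq.~\eqref{eq:basic_maj_relations}, and the Schur-concavity of $H$; essentially nothing new is needed beyond organizing these facts.

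For the lower bound $C^{\blam}_{\max}(\rho) \geq 0$, note that $\bn$ itself belongs to $\bset^{\blam}_{\bn}$. It is realized by $\gamma = \text{diag}(\bn)$, whose diagonal blocks are diagonal with entries drawn from $\bn$. Hence $\max_{\bb \in \bset^{\blam}_{\bn}} H(\bb) \geq H(\bn)$, and Prop.~\ref{prop:Imax_max_over_bset} gives $C^{\blam}_{\max}(\rho) \geq 0$. Equivalently, one could invoke Cor.~\ref{cor:mutual_info_geq_0}, extended to $k$ parts by iterating the bipartite statement, applied to any rotated state. The first route is cleaner.

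For the upper bound, I take an arbitrary $\bb \in \bset^{\blam}_{\bn}$. By Eq.~\eqref{eq:basic_maj_relations}, $\bb \succ (\eta, \ldots, \eta)$. Since $h$ is concave, $H$ is a symmetric concave function and therefore Schur-concave, so $H(\bb) \leq H(\eta, \ldots, \eta) = d\, h(\eta)$. Maximizing over $\bb$ and subtracting $H(\bn)$ via Prop.~\ref{prop:Imax_max_over_bset} yields the bound. The only point to check is that $\eta \in [0,1]$, which holds because every $n_j \in [0,1]$, so $h(\eta)$ is well defined.

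The saturation statement is the only step needing care. The ``if'' direction is immediate: if $(\eta, \ldots, \eta) \in \bset^{\blam}_{\bn}$, the maximum of $H$ over the set is at least $d\,h(\eta)$ and hence equal to it. For ``only if'', I will use strict Schur-concavity. Because $h$ is strictly concave on $[0,1]$, $H$ is strictly Schur-concave, as stated in Sec.~\ref{subsec:towards_general}. So if $\bb \succ (\eta, \ldots, \eta)$ and $\bb$ is not a permutation of $(\eta,\ldots,\eta)$, that is, $\bb \neq (\eta,\ldots,\eta)$, then $H(\bb) < d\,h(\eta)$. The set $\bset^{\blam}_{\bn}$ is compact and $H$ is continuous, so the maximum is attained at some $\bb^\ast$. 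Equality in the upper bound forces $H(\bb^\ast) = d\,h(\eta)$, hence $\bb^\ast = (\eta,\ldots,\eta) \in \bset^{\blam}_{\bn}$.

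Alternatively, I could avoid citing strict Schur-concavity by applying Jensen's inequality with strict concavity of $h$ directly, since $\sum_j b_j = d\eta$. Equality in $\sum_j h(b_j) \leq d\,h(\eta)$ then forces all $b_j = \eta$. The boundary values $h(0) = h(1) = 0$ keep $h$ continuous and strictly concave on the closed interval, so this works without extra cases.
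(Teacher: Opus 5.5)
Your proposal is correct and follows the paper's proof essentially step by step. Both bounds come from Prop.~\ref{prop:Imax_max_over_bset}, the majorization chain $\bn \succ \bb \succ (\eta,\ldots,\eta)$ and the Schur-concavity of $H$, while the \emph{only if} part uses strict concavity of $h$ to make $(\eta,\ldots,\eta)$ the unique maximizer on $\{\bb \in [0,1]^d : \sum_j b_j = \sum_j n_j\}$. The one cosmetic difference is the lower bound: you exhibit $\bn \in \bset^{\blam}_{\bn}$ directly via $\text{diag}(\bn)$ instead of invoking $\bn \succ \bb$, which is equally valid and slightly more elementary.
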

\begin{proof}
	We use the representation of $C_{\max}^{\blam}(\rho)$ from Prop.\til\ref{prop:Imax_max_over_bset}. The first (second) bound of Eq.\til\eqref{eq:basic_bounds} follows from the first (second) majorization relation in Eq.\til\eqref{eq:basic_maj_relations}, because the function $H$ is Schur-concave.
	
	If $(\eta, \ldots, \eta) \in \bset^{\blam}_{\bn}$, the upper bound is clearly saturated. The opposite implication follows from the \emph{strict} concavity of $h$. In fact, the function $H$ with domain $\tilde{\bset} = \{ \bb \in [ 0, 1 ]^d | \sum_{j} b_j = \sum_{j} n_j \}$ strictly increases under nontrivial convex combinations of its input variables. This implies that the maximally mixed vector $(\eta, \ldots, \eta)$ is its unique maximizer over $\tilde{\bset}$, of which $\bset^{\blam}_{\bn}$ is a subset.
\end{proof}

\section{Bosonic free states} \label{app:bosons}
We show that the general analysis of Sec.~\ref{sec:generalizing} and~\ref{sec:math_struc_gen_prob} and our results of Sec.~\ref{sec:solutions_general} for fermionic free states can be directly adapted to bosonic free states, i.e., (mixed) bosonic number-conserving Gaussian states. We stress that, in contrast to fermionic free states, such bosonic states are separable with respect to any partition of the mode space~\cite{wolf_ent_power, ciancio_zanardi}. For bosons, the maximal orbital total correlation computed below therefore contains no entanglement.

We use the same notation as for fermions. The key observation is the following: maximizing the orbital multipartite total correlation over bosonic free states with fixed NONs $\bn = (n_1, \ldots, n_d)$, $\infty > n_1 \geq \ldots \geq n_d \geq 0$, boils down to the maximization over the set $\bset^{\blam}_{\bn}$ (as defined in Def.~\ref{def:b_set}) of the function 
\begin{equation}
	\tilde H(\bb) := \sum_{j=1}^d [ (b_j + 1) \log (b_j + 1) - b_j \log b_j ]\; .
\end{equation}
More precisely, Prop.~\ref{prop:Imax_max_over_bset} holds true for bosons, up to modifying $H \rightarrow \tilde H$. Crucially, $\tilde H$ is strictly Schur-concave, just as its fermionic counterpart $H$. Also, for each 1P space dimension $d$ and partition $\blam$ that we investigated, there is $\forall \, \bn \in \R^{d, \downarrow}$ a unique element $\bb_{\max} \in \bset^{\blam}_{\bn}$ that is majorized by any other $\bb \in \bset^{\blam}_{\bn}$ and is thus the unique maximizer of any Schur-concave function defined on $\bset^{\blam}_{\bn}$---see Tables\til\ref{tab:tabledgen},\til\ref{tab:tabled2},\til\ref{tab:tabled3},\til\ref{tab:tabled4}, and\til\ref{tab:tabled5}. The bottom line is that the results of those tables apply to bosons as well as to fermions. The only difference is the numerical value of the maximal total correlation: $H(\bb_{\max}) - H(\bn)$ for a fermionic $\bn$ (see Eq.~\eqref{eq:obtain_Imax_from_bmax}), and $\tilde H(\bb_{\max}) - \tilde H(\bn)$ for a bosonic $\bn$.

We now briefly revisit the derivation of Prop.~\ref{prop:Imax_max_over_bset}, adapted to bosons. We omit technical details. The definition of $C^{\blam}$ and $C^{\blam}_{\max}$ are unchanged for bosons, and so is Prop.~\ref{prop:marginalsoffreestates} for computing the orbital-reduced states of a free state. Instead, an adaptation of Prop.~\ref{prop:entropyfreestates} for computing the entropy of a free state is in order. Given a bosonic free state $\rho$ over $d$ modes with NONs (i.e., 1P-reduced operator eigenvalues) $n_1, \ldots, n_d$, the following standard calculation yields $S(\rho)$.
Let $b^{\dagger}_1, \ldots, b^{\dagger}_d$ be bosonic creation operators relative to an ONB of `normal' modes in the 1P space such that $\rho \propto \exp ( - \sum_{j = 1}^d \epsilon_j b^\dagger_j b_j)$. Then $\rho$ can be decomposed as a product state, where the tensor product structure refers to the normal modes~\cite{zanardi,ciancio_zanardi}:
\begin{equation}
	\rho = \bigotimes_{j = 1}^d \frac{\exp ( - \epsilon_j b^\dagger_j b_j)}{\tr [ \exp ( - \epsilon_j b^\dagger_j b_j) ]} = \bigotimes_{j = 1}^d \rho_j \; .
\end{equation}
One can show that the NON $n_j = \tr [ \rho b^\dagger_j b_j ] $ is related to the 1P energy $\epsilon_j$ via the Bose-Einstein formula
\begin{equation}
	n_j = \frac{1}{\exp(\epsilon_j) - 1} \; .
\end{equation}
By inverting this formula and using the additivity of the entropy of product states, one eventually gets
\begin{equation}
	S(\rho) = \sum_{j = 1}^d S(\rho_j) = \sum_{j = 1}^d \left [ (n_j+1) \log (n_j+1) - n_j \log n_j \right ] \; ,
\end{equation}
i.e.,
\begin{equation}
	S(\rho) = \tilde H (\bn) \; .
\end{equation}
Using this equation, the analysis of Sec.~\ref{sec:generalizing} yields the following bosonic adaptation of the result of Prop.~\ref{prop:Imax_max_over_bset}:
\begin{equation}
	C^{\blam}_{\max} (\rho) = \max_{\bb \in \bset^{\blam}_{\bn}} \left ( \tilde H(\bb) \right ) - \tilde H(\bn) \; \; \; \text{for bosons.}
\end{equation}
The analysis of Sec.~\ref{sec:math_struc_gen_prob} is unchanged.

\section{Regimes} \label{app:regimes}
We now clarify the notion of regimes, which first appears in the case $d=4$, $\blam=(2,1,1)$. There, a first example was considered where the function $\bn \mapsto \bb_{\max}^{\blam} (\bn)$ is piecewise-defined, i.e., defined by cases. We referred to the cases as regimes. To be slightly more precise, but without the aim of a fully rigorous analysis, we may define regimes along the following lines. For given $d$ and $\blam$, we call regimes the subsets into which the $\bn$-domain
\begin{equation}
	\text{domain}(\bb_{\max}^{\blam}) = \{ \bn \in [0,1]^d \text{ s.t. } n_1 \geq \ldots \geq n_d  \}
\end{equation}
gets partitioned based on the distinct functional form of $\bb_{\max}^{\blam} (\bn)$. For example, for $d=4$ and $\blam = (2,1,1)$, based on Eq.~\eqref{eq:bbest_2_1_1} the regimes are $\{ \bn \in [0,1]^4 \text{ s.t. } n_1 \geq \ldots \geq n_4, n_1 \geq \sum_j n_j/4 \geq n_2 \}, \ldots$ . We represent these regimes in Fig.~\ref{fig:regimes}. There, we fix $n_1$ and $n_4$ and show the corresponding two-dimensional slice of the $\bn$-domain in the $n_2,n_3$-plane. The specific values of $n_1$ and $n_4$ are not relevant, as changing them only results in an isotropic dilation and a shift of the shape of Fig.~\ref{fig:regimes}. This is due to the invariance of the underlying mathematical problem under any affine transformation of the type $n_j \rightarrow n'_j = \alpha n_j + \beta$ (with $\alpha, \beta \in \R$).
\begin{figure}
	
	\begin{center}
\begin{tikzpicture}[
    x=6cm, y=6cm, 
    font=\sffamily\large,
    axes/.style={thick,->,>=stealth},
    mainline/.style={very thick}
]

% Faint grid mimicking the graph paper (1 square = 0.05 units)
% \draw[step=0.05, blue!30, very thin] (-0.1, -0.1) grid (1.05, 1.05);

% Axes
\draw[axes] (-0.1, 0) -- (1.05, 0) node[below right] {$n_2$};
\draw[axes] (0, -0.1) -- (0, 1.05) node[above left] {$n_3$};

% Tick marks on x-axis
\draw[thick] (0.2, 0.015) -- (0.2, -0.015) node[below, yshift=-1mm] {$n_4$};
\draw[thick] (0.55, 0.015) -- (0.55, -0.015) node[below, yshift=-1mm, xshift =+2mm] {$\frac{n_1 + n_4}{2}$};
\draw[thick] (0.9, 0.015) -- (0.9, -0.015) node[below, yshift=-1mm] {$n_1$};
\draw[thick] (0.43333, 0.015) -- (0.43333, -0.015) node[below, yshift=-1mm, xshift =-2mm] {$\frac{n_1 + 2 n_4}{3}$};

% Tick marks on y-axis
\draw[thick] (0.015, 0.2) -- (-0.015, 0.2) node[left, xshift=-1mm] {$n_4$};
\draw[thick] (0.015, 0.55) -- (-0.015, 0.55) node[left, xshift=-1mm] {$\frac{n_1 + n_4}{2}$};
\draw[thick] (0.015, 0.9) -- (-0.015, 0.9) node[left, xshift=-1mm] {$n_1$};
\draw[thick] (0.015, 0.66667) -- (-0.015, 0.66667) node[left, xshift=-1mm] {$\frac{2 n_1 + n_4}{3}$};

% Coordinate Definitions based on your prompt
\coordinate (Origin) at (0.2, 0.2);
\coordinate (TopRight) at (0.9, 0.9);
\coordinate (BotRight) at (0.9, 0.2);
\coordinate (A) at (0.43333, 0.2);
\coordinate (B) at (0.55, 0.55);
\coordinate (C) at (0.9, 0.66667);

% Draw the main outer triangle boundaries
\draw[mainline] (Origin) -- (TopRight) -- (BotRight) -- cycle;

% Light background fills for the three regions
\fill[blue!20] (Origin) -- (A) -- (B) -- cycle;
\fill[green!20] (B) -- (C) -- (TopRight) -- cycle;
\fill[red!20] (A) -- (BotRight) -- (C) -- (B) -- cycle;

% Draw the interior lines dividing the regions
\draw[thick] (A) -- (B);
\draw[thick] (B) -- (C);

% Dashed projection lines to the axes
\draw[dashed, thick] (0.9, 0.2) -- (0.9, 0);
\draw[dashed, thick] (0.9, 0.9) -- (0, 0.9);
\draw[dashed, thick] (Origin) -- (0.2, 0);
\draw[dashed, thick] (Origin) -- (0, 0.2);
\draw[dashed, thick] (A) -- (0.43333, 0);
\draw[dashed, thick] (B) -- (0.55, 0);
\draw[dashed, thick] (B) -- (0, 0.55);
\draw[dashed, thick] (C) -- (0, 0.66667);

% Labels for points A, B, C
%\node[below, yshift=-2mm] at (A) {$A$};
%\node[above left] at (B) {$B$};
%\node[right, xshift=1mm] at (C) {$C$};

% Text defining the mathematical regions (rotated to roughly match the line slopes)
\node[rotate=45] at (0.37, 0.3) {$\eta \geq n_2$};
\node[rotate=45] at (0.755, 0.695) {$\eta \leq n_3$};
\node[rotate=45] at (0.71, 0.42) {$n_2 \geq \eta \geq n_3$};

\end{tikzpicture}
\end{center}
	\caption{A slice of the $\bn$-domain for $d = 4$ and $\blam = (2,1,1)$, for fixed $n_1$ and $n_4$. The blue, red, and green regions are slices of the first, second, and third regimes, respectively, as shown in Eq.~\eqref{eq:bbest_2_1_1}. We emphasize the symmetry of the figure with respect to the line $n_2 + n_3 = n_1 + n_4$.}
	\label{fig:regimes}
	
\end{figure}

To validate the concept of regimes and our formulas for $\bb_{\max}^{\blam} (\bn)$, see Tables~\ref{tab:tabled4} and~\ref{tab:tabled5}, we checked for $d \leq 5$ and for each given partition $\blam$ the following:
\begin{itemize}
	\item Each regime has non-zero measure;
	\item The partition into regimes is complete, i.e., the union of the regimes is the whole $\bn$-domain;
	\item Regimes intersect only on their boundaries, i.e., the intersection of any two regimes is contained in the intersection of their boundaries (and thus has zero measure);
	\item The formulas for $\bb_{\max}^{\blam}(\bn)$ agree on the intersection of any two regimes; incidentally, this implies $\bb_{\max}^{\blam}$ is a continuous function.
\end{itemize}
In view of the last point, in the Appendices we harmlessly regard $\bn$ vectors that are on the boundary of two neighboring regimes as elements of both regimes. Note, for example, the irrelevant differences in the regime definitions of Eqs.~\eqref{eq:bbest_2_1_1} and~\eqref{eq:bbest_2_1_1_repeated} (in the former, some inequalities are strict).

\section{Symmetries of the general problem} \label{app:symmetries}

We observe a symmetry of the general problem of Sec.~\ref{sec:generalizing}, namely the particle-hole symmetry of fermionic systems\til\cite{Zirnbauer_2021}. Consider two free states that are related by a particle-hole transformation, i.e., they have the same natural orbitals, and NON vectors $ \bn = (n_1, \ldots, n_d)$ and $ \mathbf{1} - \bn = (1-n_1, \ldots, 1-n_d)$, respectively. These two states are equivalent from the point of view of correlations. More precisely, for given $\blam$, we have $\bb^{\downarrow_B} \in \bset^{\blam}_{\bn^\downarrow} \Leftrightarrow \mathbf{1} - \bb^{\uparrow_B} \in \bset^{\blam}_{\mathbf{1} -\bn^\uparrow}$, where the superscripts $\uparrow_B,\downarrow_B$ indicate blockwise ordering. Furthermore, the function $H$ to be maximized over $\bset^{\blam}_{\bn^\downarrow}$/$\bset^{\blam}_{\mathbf{1}-\bn^\uparrow}$ is invariant under $\bb \rightarrow \mathbf{1}-\bb$. Therefore, the instances $d, \blam, \bn^\downarrow$ and $d, \blam, \mathbf{1} - \bn^\uparrow$ of our maximization problem are equivalent, and the optimal vector $\bb^{\blam}_{\max}(\bn^\downarrow)$ for one instance will mirror the one $\bb^{\blam}_{\max}(\mathbf{1}-\bn^\uparrow)$ for the other. Indeed, this symmetry property can be traced throughout the formulas for $\bb_{\max}$ in Tables\til\ref{tab:tabledgen},\til\ref{tab:tabled2},\til\ref{tab:tabled3},\til\ref{tab:tabled4}, and\til\ref{tab:tabled5}.

This symmetry observation can be in part adapted to the bosonic case.

Another symmetry of our problem concerns the case of any partition $\blam$ with a repetition. Whenever $\lambda_\mu = \lambda_\nu$, then the corresponding blocks of $\bb_{\max}^{\blam}(\bn)$ coincide for any $\bn$. This follows from the invariance of $\bset^{\blam}_{\bn} $ and of $H$ under swapping equal-size blocks, the convexity of $\bset^{\blam}_{\bn} $, and the strict concavity of $H$.

\section{Trivial partitions and arbitrary parity-superselected states} \label{app:finest_partition_arbitrary_states}

As anticipated in Sec.\til\ref{subsec:2D_space} and Sec.~\ref{subsec:trivial_partitions}, we show that our results for trivial partitions $\blam = (1, \ldots, 1)$ can be extended to \emph{arbitrary parity-superselected states}. These are the states on the Fock space that commute with the number parity operator~\cite{Friis_2016}.

Let $\rho$ be one such state, let $\bn$ be its NONs, and denote with $\eta$ their average. Then
\begin{equation}\label{eq:I_max_finest_partition_general_PSSR}
	C^{\blam}_{\max}(\rho) = d \, h(\eta) - S(\rho) \; .
\end{equation}
If $\rho$ is free, then $S(\rho) = H(\bn)$ and the above equation reduces to Eq.~\eqref{eq:I_max_finest_partition}. To prove Eq.~\eqref{eq:I_max_finest_partition_general_PSSR}, the key quantity is the single-orbital reduced density operator $\sigma_j$ of a parity-superselected state $\sigma$ relative to the $j$th lattice site. A simple computation yields
\begin{equation}
	\sigma_j \rightarrow
	\begin{pmatrix}
		1 - \gamma_{jj} & 0 \\
		0 & \gamma_{jj}
	\end{pmatrix} \; ,
\end{equation}
where the matrix is expressed in the empty–occupied orbital basis, and $\gamma$ is the 1RDM of $\sigma$ expressed in the lattice site basis. The off-diagonal terms vanish because $\sigma$ commutes with the parity operator. This allows us to compute the entropy of $\sigma_j$ as we would if $\sigma$ were a free state:
\begin{equation}
	S(\sigma_j) = h(\gamma_{jj}) \; .
\end{equation}
Then,
\begin{equation}
	C^{\blam}(\sigma) = \left ( \sum_{j=1}^d S(\sigma_j) \right ) - S(\sigma) = \left ( \sum_{j=1}^d h(\gamma_{jj}) \right ) - S(\sigma) .
\end{equation}
Finally, we use this equation for $\sigma = U \rho U^\dagger$ and maximize $C^{\blam}(U \rho U^\dagger)$ over one-particle unitaries $U$ to get $C_{\max}^{\blam}(\rho)$. The maximization is the same as for free states, and yields Eq.~\eqref{eq:I_max_finest_partition_general_PSSR}. As for free states, optimal one-particle unitaries $U$ are those that ensure that $U \rho U^\dagger$ has discrete plane waves as natural orbitals.

\section{1P space dimension $d = 3$} \label{app:3D_space}
We explicitly solve the problem of free state maximal orbital correlation for $d=3$. The partitions of 3 (other than $\blam = (3)$) are $(2,1)$ and $ (1,1,1)$. These are, respectively, a bipartition and a trivial multipartition, and thus represent a special case of Sec.~\ref{subsec:bipartitions} and~\ref{subsec:trivial_partitions}, respectively.

\subsection{\texorpdfstring{$\boldsymbol{\lambda}$}{lambda} $ = (2,1)$}
For $\blam = (2,1)$ and given NONs $1 \geq n_1 \geq n_2 \geq n_3 \geq 0$, we can characterize $\bset^{\blam}_{\bn}$ using the Cauchy interlacing inequalities of Thm.\til\ref{thrm:interlacing}. More in detail, it is convenient to parametrize the elements $\bb = (b_1, b_2 \sbs b_3) \in \bset^{\blam}_{\bn}$ by the first two components $b_1, b_2$. $b_3$ follows from $\sum_{j=1}^3 b_j = \sum_{j=1}^3 n_j$. We thus define
\begin{align} \nonumber
	\tilde{\bset}^{\blam}_{\bn} = \{ &\specdown(B_1^\gamma) \sbs \gamma = \gamma^\dagger \in \text{Mat}_{3 \times 3}(\C) \, , \\
	&\specdown(\gamma) = \bn \} \; ,
\end{align}
where $B_1^\gamma$ is the upper-left $2 \times 2$ principal submatrix of $\gamma$. $\tilde{\bset}^{\blam}_{\bn}$ is in 1-to-1 correspondence with $\bset^{\blam}_{\bn}$. By Thm.\til\ref{thrm:interlacing}, it is the rectangle 
\begin{equation} \label{eq:bset_lamb_21}
	\tilde{\bset}^{\blam}_{\bn} = \{ (b_1, b_2) \in \R^2 \sbs n_1 \geq b_1 \geq n_2 \geq b_2 \geq n_3 \} \; , 
\end{equation}
represented in Fig.\til\ref{fig:bset_d=3_bip}.
\begin{figure}
	
	\begin{center}
\begin{tikzpicture}[scale=2.5]
    % Draw axes
    \draw[->] (-0.3,0) -- (2.2,0) node[below] {$b_1$};
    \draw[->] (0,-0.3) -- (0,1.5) node[left] {$b_2$};
    \draw[dashed] (0,0) -- (1.5,1.5);
    
    % Draw rectangles
    \fill[red!20]  (1.1,.2) rectangle (1.4,1.1);
    
    % Label coordinates
    \draw (1.1,0) node[below] {$\, \, n_2$} -- ++(0,0.07);
    \draw (1.4,0) node[below] {$\, \, n_1$} -- ++(0,0.07);

    \draw (0,.2) node[left] {$n_3$} -- ++(0.07,0.);
    \draw (0,1.1) node[left] {$n_2$} -- ++(0.07,0.);
    \draw (0,0.8) node[left] {$\frac{n_1 + n_3}{2}$} -- ++(0.07,0.);
    
    % Label rectangle 
    \draw[red] (1.65,0.65) node {$\tilde{\bset}^{\blam}_{\bn}$};
    
    % maximum
    \fill[red] (1.1, .8) circle (1pt) node[below left] {$\tilde{\bb}_{best}$};
    
    % other things...

%    \draw (0,.9) node[left] {$\eta$} -- ++(0.07,0.);   
%    \draw (.9,0.) node[below] {$\eta \,$} -- ++(0,0.07);  

\end{tikzpicture}
\end{center}
	\caption{The set $\tilde{\bset}^{\blam}_{\bn}$, a projection of the polytope $\bset^{\blam}_{\bn}$ of block spectra for $\blam = (2,1)$. The point $\tilde{\bb}_{\max}$ corresponding to the $\succ$-least element $\bb_{\max} \in \bset^{\blam}_{\bn}$ is highlighted.}
	\label{fig:bset_d=3_bip}
	
\end{figure}

We now show that there is an element $\bb_{\max} \in \bset^{\blam}_{\bn}$ that is majorized by any other $\bb \in \bset^{\blam}_{\bn}$, namely
\begin{equation} \label{eq:bbset_lamb_21}
	\boxed{\bb_{\max} = \left ( \frac{n_1 + n_3}{2} , n_2 \sbs \frac{n_1 + n_3}{2} \right )}
\end{equation}
\newline \newline
up to decreasingly reordering the first two entries. As for $d=2$ and $\blam = (1,1)$, the components of $\bb_{\max}$ are averages of $\bn$ entries, as will always be the case. The corresponding element $\tilde{\bb}_{\max} \in \tilde{\bset}^{\blam}_{\bn}$ is highlighted in Fig.\til\ref{fig:bset_d=3_bip}.

First, let us show that $\bb_{\max} \in \bset^{\blam}_{\bn}$, i.e., we exhibit a matrix $\gamma = \gamma^\dagger$ with spectrum $\bn$ and block spectrum $\bb_{\max}$. An example is
\begin{equation} \label{eq:best_matrix_2,1}
	\gamma =
\begin{pmatrix}
	\begin{array}{cc|c}
		\frac{n_1 + n_3}{2} & 0 & \frac{n_1 - n_3}{2} \\
		
		0 & n_2 & 0 \\
		\hline
		\frac{n_1 - n_3}{2} &  0  & \frac{n_1 + n_3}{2}  \\
	\end{array} 
\end{pmatrix} \; .
\end{equation}
Following the same ideas of Sec.\til\ref{subsec:2D_space}, $\gamma$ can be obtained from $\text{diag}(n_1, n_2, n_3)$ by conjugation with the unitary matrix implementing $\mathrm{DFT}_2$ on $\text{span}(\be_1, \be_3)$. Here and throughout the article, $\be_j$ denotes the $j$th element of the canonical basis of $\C^d$. In terms of fermionic orbitals, this corresponds to transforming $\ket{1}$ and $\ket{3}$ to $(\ket{1} \pm \ket{3})/\sqrt{2}$.

Next, we show that
\begin{equation}
	\bb_{\max} \prec \bb
\end{equation}
for any given $\bb \in \bset^{\blam}_{\bn}$. We may consider only the case
\begin{equation}
	n_2 \geq \frac{n_1 + n_3}{2} \; .
\end{equation}
In fact, the case $n_2 < \frac{n_1 + n_3}{2}$ can be reduced to the other case by substituting $\bn \rightarrow \mathbf{1} - \bn^\uparrow$ (see Sec.\til\ref{sec:math_struc_gen_prob} for more on this symmetry argument). With this ordering assumption, the inequalities\til\eqref{eq:majorization_ineq} defining the relation $\bb_{\max} \prec \bb$ read
\begin{align}
	n_2 &\leq \max_j b_j \\
	n_2 + \frac{n_1 + n_3}{2} &\leq \max_{j\neq k}( b_j + b_k ) \; .
\end{align}
The first inequality directly follows from the interlacing inequalities of Eq.\til\eqref{eq:bset_lamb_21}. These also allow us to prove the second inequality, as follows:
\begin{align} \nonumber
n_2 + \frac{n_1 + n_3}{2} &= \frac{1}{2}\left(n_2 + \sum_{j=1}^3 n_j\right)	\\ \nonumber
&\leq \frac{1}{2}\left(b_1 + \sum_{j=1}^3 b_j\right)\\ \nonumber
&= b_1 + \frac{b_2+b_3}{2} \\
&\leq \max_{j\neq k}( b_j + b_k ) \; .
\end{align}
This concludes the mathematical treatment for the case $d=3$ and $\blam = (2,1)$, a good example for the techniques we will use throughout the following sections. In particular, we generalize the present case to arbitrary bipartitions $\blam$ of any space dimension $d$ in Sec.\til\ref{subsec:bipartitions}.

Returning to the physical interpretation, for any free state $\rho$ with NONs $\bn$, the maximal orbital correlation $C_{\max}^{\blam}(\rho)$ follows from Eq.\til\eqref{eq:bbset_lamb_21} using Eq.\til\eqref{eq:obtain_Imax_from_bmax}.

\subsection{\texorpdfstring{$\boldsymbol{\lambda}$}{lambda} $ = (1,1,1)$}
For $\blam = (1,1,1)$, the analysis is analogous to that for $d = 2$ and $\blam = (1,1)$ (see Sec.\til\ref{subsec:2D_space}). For given $1 \geq n_1 \geq n_2 \geq n_3 \geq 0$, the set $\bset_{\bn}^{\blam}$ contains the maximally mixed element
\begin{equation}
	\boxed{\bb_{\max} = \left ( \eta \sbs \eta \sbs \eta \right )} \; ,
\end{equation}
where $\eta = \sum_{j=1}^3 n_j/3$. In fact, observe that, e.g., the Hermitian matrix
\begin{equation}
	\mathrm{DFT}_3^\dagger \cdot \text{diag}(n_1, n_2, n_3) \cdot \mathrm{DFT}_3 \; ,
\end{equation}
with
\begin{equation}
	\mathrm{DFT}_3 =
	\frac{1}{\sqrt{3}}
	\begin{pmatrix}
		1 & 1 & 1 \\
		1 & \text{e}^{- 2 \pi \text{i}/3} & \text{e}^{2 \pi \text{i}/3} \\
		1 & \text{e}^{ 2 \pi \text{i}/3} & \text{e}^{- 2 \pi \text{i}/3}
	\end{pmatrix} \; ,
\end{equation}
has spectrum $\bn$ and diagonal $\bb_{\max}$. Alternatively, one can check $\bb_{\max} \in \bset^{\blam}_{\bn}$ using the full characterization of $\bset^{\blam}_{\bn}$ given by the Schur-Horn Theorem\til\ref{thrm:schur-horn}. We represent a projection of $\bset^{\blam}_{\bn}$ in Fig.\til\ref{fig:bset_d=3_trip}.

Fig.\til\ref{fig:bset_d=3_trip} also highlights the inclusion relation $\bset^{\blam'}_{\bn} \subset \bset^{\blam}_{\bn}$, with $\blam' = (2,1)$. In fact, a simple argument shows that an analogous relation actually holds true for any two partitions $\blam$, $\blam'$ of any given $d$ such that $\blam$ is a refinement of $\blam'$, i.e., $\blam$ can be obtained from the partition $\blam'$ by further partitioning some of the integers $\lambda'_j$. 

\begin{figure}
	
	\begin{center}
	\begin{tikzpicture}[scale=2.5]
		% remember: n = (1.4, 1.1, 0.2)
		
		% Draw axes
		\draw[->] (-0.3,0) -- (2.2,0) node[below] {$b_1$};
		\draw[->] (0,-0.3) -- (0,1.5) node[left] {$b_2$};
		
		%pentagon for (1,1,1)
		\fill[color=blue!10] (.65,.65) -- (1.25,1.25) -- (1.4,1.1) -- (1.4,.2) -- (1.1,.2) -- cycle;
		\draw[dashed] (0,0) -- (1.5,1.5);
		
		%label some vertices
		\fill[black] (.65,.65) circle (.5pt);
		\draw[black] (.4,.74) node[scale=0.6] {$(\tfrac{n_2+n_3}{2},\tfrac{n_2+n_3}{2})$};
		\fill[black] (1.25,1.25) circle (.5pt);
		\draw[black] (0.95,1.3) node[scale=0.6] {$(\tfrac{n_1+n_2}{2},\tfrac{n_1+n_2}{2})$};

		% Draw rectangles
		\fill[violet!20]  (1.1,.2) rectangle (1.4,1.1);
		
		% Label coordinates
		\draw (1.1,0) node[below] {$\, \, n_2$} -- ++(0,0.07);
		\draw (1.4,0) node[below] {$\, \, n_1$} -- ++(0,0.07);
		
		\draw (0,.2) node[left] {$n_3$} -- ++(0.07,0.);
		\draw (0,1.1) node[left] {$n_2$} -- ++(0.07,0.);
		
		% Label rectangle 
		\draw[red!50] (1.65,0.65) node {$\tilde{\bset}^{\blam'}_{\bn}$};
		
		% other things...
		
		\draw (0,.9) node[left] {$\eta$} -- ++(0.07,0.);   
		\draw (.9,0.) node[below] {$\eta \,$} -- ++(0,0.07);  
		\draw[blue] (.75,.35) node {$\tilde{\bset}^{\blam}_{\bn}$};
		\fill[blue] (.9,.9) circle (1pt) node[below] {$\;\;\;\;\; \tilde{\bb}_{best}$};
		
	\end{tikzpicture}
\end{center}

\begin{comment}

\begin{center}
\begin{tikzpicture}[scale=2.5]

    % Label rectangle 
    \draw[red] (.7,1.6) node {$\tilde{V}_w^Y$};

    % other things...

    \draw (0,.9) node[left] {$\eta$} -- ++(0.07,0.);   
    \draw (.9,0.) node[below] {$\eta \,$} -- ++(0,0.07);  
    \draw[blue] (.7,.3) node {$\tilde{V}_w^{Y'}$};
\fill[blue] (.9,.9) circle (1pt) node[below , font=\tiny] {$\; \, \; \; \; \; \;  \tilde{v}^{mm}$};

\end{tikzpicture}
\end{center}

\end{comment}
	\caption{The set $\tilde{\bset}^{\blam}_{\bn} = \{ (b_1, b_2) \, \text{s.t.} \, (b_1, b_2 \sbs \sum_{j=1}^3 n_j - \sum_{j=1}^2 b_j) \in \bset^{\blam}_{\bn} \}$ (blue), a projection of the polytope $\bset^{\blam}_{\bn}$ of block spectra for $\blam = (1,1,1)$. The point $\tilde{\bb}_{\max}$ corresponding to the maximally mixed element $\bb_{\max} \in \bset^{\blam}_{\bn}$ is highlighted. We also represent the subset $\tilde{\bset}^{\blam'}_{\bn}$ for $\blam' = (2,1)$ (red, faded), copied from Fig.\til\ref{fig:bset_d=3_bip}.}
	\label{fig:bset_d=3_trip}
	
\end{figure}

\section{Proof of Thm.~\ref{thrm:mikes_maj_rel} and generalization for unbalanced bipartitions}\label{app:proof_bipartition_lemma}

We provide a proof of Thm.~\ref{thrm:mikes_maj_rel} using the characterization of $\bset^{\blam}_{\bn}$ provided in Ref.~\cite{lipoon} in Sec.~\ref{subsec:mikes_thm_from_lipoon}, and prove a generalization of it from first principles in Sec.~\ref{subsec:mike_general}. The generalization allows us to prove that $\bb_{\max}$ as defined in Eq.~\eqref{eq:bbest_bip} is majorized by any other $\bb \in \bset^{\blam}_{\bn}$ for the case of unbalanced bipartitions, too.

\subsection{Proof of Thm.~\ref{thrm:mikes_maj_rel} using the Li-Poon inequalities} \label{subsec:mikes_thm_from_lipoon}

We aim to prove Eq.~\eqref{eq:mikes_maj_rel}. Since this relation is invariant under shifts, we assume without loss of generality that $\gamma$ is positive semi-definite, i.e., $n_i \geq 0 \, \forall i$. This will allow us to use the results of Ref.\til\cite{lipoon}. The equality $\sum_{j=1}^k \tilde{n}_j = \sum_{j=1}^k (\alpha_j + \beta_j)$ follows from $\sum_{j=1}^{2k} n_j = \tr (\gamma)$. We then have to prove the following inequalities: for $1 \leq q \leq k - 1$ and $1 \leq i_1 < i_2 < \ldots < i_q \leq k$
\begin{equation} \label{eq:target_ineq}
	\sum_{l = 1}^q (n_{i_l} + n_{2k + 1 - i_l}) \leq \sum_{i = 1}^q (\alpha_i + \beta_i) \; .
\end{equation}

First, we introduce some concepts and notations. Ref.~\cite{lipoon} gives a characterization of the possible triplets of decreasingly ordered spectra $(\balpha, \bbeta, n) \in \mathbb{R}^k \times \mathbb{R}^k \times \mathbb{R}^{2k}$ that are related as in Thm.~\ref{thrm:mikes_maj_rel}. Namely, the necessary and sufficient condition is that (i)
\begin{equation}
	\sum_{i = 1}^{2k} n_i = \sum_{i = 1}^k (\alpha_i + \beta_i)
\end{equation}
and, simultaneously, (ii) the inequality
\begin{equation} \label{eq:lipoon_ineq}
	\sum_{i \in J_0} n_i \leq \sum_{i \in J_1} \alpha_i + \sum_{i \in J_2} \beta_i \;\;\; (\alpha_i, \beta_i := 0 \text{ for } k + 1 \leq i \leq 2k)
\end{equation}
holds true for all triplets $(J_0, J_1, J_2)$ that satisfy a certain property \textbf{P}, detailed below. Here, $J_0, J_1, J_2$ are tuples of equal length $r \in \{ 1, \ldots, 2k - 1  \}$ of increasingly ordered indices, i.e.,
\begin{equation}
	J_0 = (i_1, \ldots, i_r)
\end{equation}
with $1 \leq i_1 < i_2 < \ldots < i_r \leq 2k$, and similar for $J_1$ and $J_2$. Note that applying the Li-Poon result is difficult because it is difficult to list all triplets $(J_0, J_1, J_2)$ that satisfy the property \textbf{P}.

Indeed, even stating the property \textbf{P} requires some effort. To that end, we define the $r$-tuple $\lambda(J)$ associated to a given index tuple $J = (i_1, \ldots, i_r)$ via
\begin{equation} \label{eq:def_lambda}
	\lambda(J) := (i_r - r, \ldots, i_1 -1)
\end{equation}
(see Ref.~\cite{lipoon}, Sec.~2). We will identify such non-increasing tuples of nonnegative integers with their associated partition, also denoted with $\lambda(J)$. By definition, a given triplet $(J_0, J_1, J_2)$ has the property \textbf{P} if and only if the partitions $\lambda(J_0),\lambda(J_1)$ and $\lambda(J_2)$ have a strictly positive Littlewood-Richardson coefficient $c^{\lambda(J_0)}_{\lambda(J_1), \lambda(J_2)} > 0$, i.e., if and only if it is possible to generate $\lambda(J_0)$ from $\lambda(J_1)$ and $\lambda(J_2)$ following certain rules. The Littlewood-Richardson coefficient and these rules are defined in Ref.~\cite{fulton2000}, Sec.~3, and we will assume the reader is familiar with it.

We now set out to prove the inequality~\eqref{eq:target_ineq} for given $1 \leq q \leq k - 1  $ and $1 \leq i_1 < i_2 < \ldots < i_q \leq k$.

Consider the $2q$-tuples
\begin{align}
	J_0 &:= (i_1, \ldots, i_q, 2k + 1 - i_q, \ldots, 2k + 1 - i_1) \\
	J_1 &:= J_2 := (1, \ldots, q, k+1, \ldots, k + q) \; .
\end{align}
With this, the inequality~\eqref{eq:lipoon_ineq} becomes precisely the same as the inequality~\eqref{eq:target_ineq} to be proven.

To show that this inequality holds true, we now check that $(J_0, J_1, J_2)$ satisfies the property \textbf{P}. First, using Eq.~\eqref{eq:def_lambda} we compute
\begin{align} \nonumber
	\lambda(J_0) = (&2k + 1 - i_1 - 2q, \ldots, 2k + 1 - i_q - (q+1), \\
	&i_q - q, \ldots, i_1 - 1) \; ,
\end{align}
i.e., for $1 \leq l \leq q$ we have
\begin{align}
	&\lambda(J_0)_l = 2k + 1 - i_l - (2q + 1 - l) \\
	&\lambda(J_0)_{2q + 1 - l} = i_l - l \; .
\end{align}
We get
\begin{equation} \label{eq:pairing_lambdaJ0}
	\lambda(J_0)_l + \lambda(J_0)_{2q + 1 - l} = 2(k - q) \; .
\end{equation}
Also, using $i_l - l \leq i_q - q \leq k - q$, we get
\begin{align} \label{eq:upper_rows_lambdaJ0}
	&\lambda(J_0)_l \geq k - q \\
	&\lambda(J_0)_{2q + 1 - l} \leq k - q \; .
\end{align}
Therefore, $\lambda(J_0)$ as a partition has the following structure: \\

\Yboxdim{1cm}
\begin{tikzpicture}[scale=.5]
	\tyng(0cm,0cm,10,9,9,7,6,6,6,4,4,4,3,1,1,0)
	
	% Side braces for height = q
	\draw [decorate,decoration={brace,amplitude=6pt},thick]
	(-0.2,-5.9) -- (-0.2, 1.) node[midway,xshift=-10pt,left] {\small $q$};
	
	\draw [decorate,decoration={brace,amplitude=6pt},thick]
	(-0.2,-13.) -- (-0.2, -6.1) node[midway,xshift=-10pt,left] {\small $q$};
	
	% Top braces for width = k - q
	\draw [decorate,decoration={brace,amplitude=6pt},thick]
	(0,1.2) -- (4.9,1.2) node[midway,above,yshift=4pt] {\small $k - q$};
	
	\draw [decorate,decoration={brace,amplitude=6pt},thick]
	(5.1,1.2) -- (10,1.2) node[midway,above,yshift=4pt] {\small $k - q$};
	
	\draw[<->](7.5,-2.5)--(8.5,-2.5)%
	to node[auto,swap, right]{$2(k - q)$ boxes}%
	(4.5,-9.5)--(3.5,-9.5);
	
	\draw[red, very thick] (0,1) -- (5,1);
	\draw[red, very thick] (0,1) -- (0,-6);
	
	\draw[red, very thick] (0,-6) -- (5,-6);
	\draw[red, very thick] (5,1) -- (5,-6);
\end{tikzpicture}

We highlighted in red the boxes of $\lambda(J_0)$ that must be present to abide by Eq.~\eqref{eq:upper_rows_lambdaJ0}. In this example, $k = 12, q = 7$ and
\begin{equation} \label{eq:our_indices}
	(i_1, \ldots, i_q) = (1,3,4,7,9,10,11) \; .
\end{equation}

Next, we compute
\begin{equation}
	\lambda(J_1) = \lambda(J_2) = (\underbrace{k - q, \ldots, k-q}_{q\text{ times}}, \underbrace{0, \ldots, 0}_{q\text{ times}}) \; .
\end{equation}
We see that 
\begin{equation}\label{eq:lambda_boxes_count}
	\sum_{l=1}^{2q}\lambda(J_0)_l = \sum_{l=1}^{2q} ( \lambda(J_1)_l + \lambda(J_2)_l) \; ,
\end{equation}
and that $\lambda(J_0)$, as a partition, contains $\lambda(J_1)$, i.e.,
\begin{equation} \label{eq:y_diag_inclusion}
	\lambda(J_0)_l \geq \lambda(J_1)_l \,, \;\;\; l  = 1, \ldots, 2q \; . 
\end{equation}
In the setting of the diagram depicted above, $\lambda(J_1)$ is the red rectangle. Eqs.~\eqref{eq:lambda_boxes_count} and~\eqref{eq:y_diag_inclusion} are necessary for the possibility of generating $\lambda(J_0)$ from $\lambda(J_1)$ and $\lambda(J_2)$, i.e., for $c^{\lambda(J_0)}_{\lambda(J_1), \lambda(J_2)} > 0$, i.e., for $(J_0, J_1, J_2)$ to satisfy the property \textbf{P}.

To see that the latter three equivalent facts indeed hold true, we now show one procedure to generate $\lambda(J_0)$ from $\lambda(J_1)$ and $\lambda(J_2)$. By that, we mean a procedure to fill the boxes of $\lambda(J_0) \smallsetminus \lambda(J_1)$ with natural numbers abiding by the conditions (i)-(iv), relative to $\lambda(J_1)$, listed in Ref.~\cite{fulton2000}, Sec.~3. By $\lambda(J_0) \smallsetminus \lambda(J_1)$ we mean the diagram obtained by removing from $\lambda(J_0)$ the boxes of $\lambda(J_1)$, e.g., the boxes outside the red rectangle in the picture above. Here is a picture for the procedure:\\

\begin{tikzpicture}[scale=.5]
	% gray boxes
	\Ylinecolour{black!30!white}
	\Ynodecolour{black!30!white}
	\tyoung(8cm,-3cm,r,\vdots,\vdots,q)
	
	% partition
	\Ylinecolour{black}
	\Ynodecolour{black}
	\tyoung(0cm, 0cm,
	\ \ \ \ \ 1\dots\dots\dots1,\ \ \ \ \ 2\dots\dots2,\ \ \ \ \ <\vdots>\ \ \vdots,\ \ \ \ \ \ \ ,\ \ \ \ \ \ ,\ \ \ \ \ \vdots,\ \ \ \ \ q,\ r\ \vdots,\ \vdots\ <q \!\! - \!\!\! 1>,\vdots \vdots\vdots q,<q \!\! - \!\!\! 2>q\dots,<q \!\! - \!\!\! 1>,q)
	\draw (0, -12) -- (0, -13);
	
	% Side braces for height = q
	\draw [decorate,decoration={brace,amplitude=6pt},thick]
	(-0.2,-5.9) -- (-0.2, 1.) node[midway,xshift=-10pt,left] {\small $q$};
	\draw [decorate,decoration={brace,amplitude=6pt},thick]
	(-0.2,-13.) -- (-0.2, -6.1) node[midway,xshift=-10pt,left] {\small $q$};
	
	% Top braces for width = k - q
	\draw [decorate,decoration={brace,amplitude=6pt},thick]
	(0,1.2) -- (4.9,1.2) node[midway,above,yshift=4pt] {\small $k - q$};
	\draw [decorate,decoration={brace,amplitude=6pt},thick]
	(5.1,1.2) -- (10,1.2) node[midway,above,yshift=4pt] {\small $k - q$};
	
	%black rectangle
	\draw[very thick] (0,1) -- (5,1);
	\draw[very thick] (0,1) -- (0,-6);
	\draw[very thick] (0,-6) -- (5,-6);
	\draw[very thick] (4.95,1) -- (4.95,-6);
			
	%diagonals
	\draw[very thick] (0,1) -- (5,-6);
	\draw[very thick] (5,1) -- (0,-6);
	
	%blue rectangle
	\draw[very thick, blue] (5,1) -- (10,1);
	\draw[very thick, blue] (5.05,1) -- (5.05,-6);
	\draw[very thick, blue] (5,-6) -- (10,-6);
	\draw[very thick, blue] (10,1) -- (10,-6);
	
	% red ellipses
	\draw[red, thick] (1.5,-8) ellipse [x radius=0.7, y radius=2.3];
	\draw[red, thick] (8.5,-4) ellipse [x radius=0.7, y radius=2.3];
	\node[red] at (9.1,-2,5) {$E_l$};
	
	% red transportation arrow
	\draw[->, red, thick ](8.5,-6.33)%
	to node[auto,swap, right]{}%
	(8.5,-7.5)--(1.5,-11.5)%
	to node[auto,swap, below=0.3cm]{}%
	(1.5,-10.5);
		
\end{tikzpicture}

One first fills the blue rectangle with $k-q$ $1$'s in the first row, $k-q$ $2$'s in the second row, ..., $k-q$ $q$'s in the $q$th row. Thereafter, one moves the boxes in excess within the blue rectangle (i.e., those that do not belong to $\lambda(J_0)$) below the black crossed rectangle, according to the following rule. For $l = 1, \ldots, k-q$, one takes the box excess $E_l$ in the $l$th column from the right of the blue rectangle, and repositions it below the $l$th column from the left of the black crossed rectangle. Once all excess boxes have been moved, the resulting partition (in black) is indeed $\lambda(J_0)$, as a consequence of Eq.~\eqref{eq:pairing_lambdaJ0}. For our specific example ($k = 12, q = 7$ and $i_l$'s given by Eq.~\eqref{eq:our_indices}), the resulting version of $\lambda(J_0) \smallsetminus \lambda(J_1)$ filled with numbers is as follows:\\

\begin{tikzpicture}[scale=.5]

	% partition
	\Ylinecolour{black}
	\Ynodecolour{black}
	\tyoung(0cm, 0cm,
	\ \ \ \ \ 11111,\ \ \ \ \ 2222,\ \ \ \ \ 3333,\ \ \ \ \ 44,\ \ \ \ \ 5,\ \ \ \ \ 6,\ \ \ \ \ 7,2445,3556,4667,577,6,7)
	\draw (0, -12) -- (0, -13);
	
	%black rectangle
	\draw[very thick] (0,1) -- (5,1);
	\draw[very thick] (0,1) -- (0,-6);
	\draw[very thick] (0,-6) -- (5,-6);
	\draw[very thick] (4.95,1) -- (4.95,-6);
	
	%diagonals
	\draw[very thick] (0,1) -- (5,-6);
	\draw[very thick] (5,1) -- (0,-6);

\end{tikzpicture}

We now check that our construction abides by rules (i)-(iv) detailed in Ref.~\cite{fulton2000}, Sec. 3.

(i) Clearly, the entries in any row are weakly increasing from left to right.

(ii) By construction, the entries in each column are strictly increasing from top to bottom.

(iii) For $l = 1, \ldots, 2q$, $l$ appears in exactly $\lambda(J_2)_l$ boxes, i.e., the numbers $1, \ldots, q$ appear $k - q$ times each and the numbers $q + 1, \ldots, 2q$ never appear, by construction.

(iv) Let $1\leq p \leq \sum_{l} \lambda(J_2)_l = q ( k-q)$. With regards to the relevant ordering~\cite{fulton2000}, consider the first $p$ boxes of $\lambda(J_0) \smallsetminus \lambda(J_1)$. Let $1 \leq l \leq q-1$. We claim that the number of times $l$ occurs in the first $p$ boxes is at least as large as the number of times $l+1$ occurs in these first $p$ boxes. First, suppose that the $p$th box is in the upper-right part of the diagram (blue rectangle). Then, our claim is visually clear. Now, instead, suppose the $p$th box is in the lower left part of the diagram (below the black, crossed rectangle). Recall that $\lambda(J_2)_l = \lambda(J_2)_{l+1}$, i.e., $l$ occurs as many times as $l+1$ does. Therefore, checking our claim is equivalent to checking that the number of times $l$ occurs in the last $q ( k-q) - p$ boxes is \emph{at most} as large as the number of times $l+1$ occurs in those last $q ( k-q) - p$ boxes. From this perspective, the claim becomes visually clear.

This concludes the proof.

\subsection{Generalization of Thm.~\ref{thrm:mikes_maj_rel}: unbalanced bipartitions} \label{subsec:mike_general}
We now derive Thm.~\ref{thm:mike_main}, which is to be regarded as an extension of Thm.~\ref{thrm:mikes_maj_rel} to the case of arbitrary bipartitions $\blam = (\lambda_1, \lambda_2)$ of any space dimension $d$. This will require introducing new tools and notations, leading to mathematical results of independent interest. For the goals of our work, the main result is presented in Cor.\til\ref{cor:maj_relation_uneven_bip} on the optimality of $\bb_{\max}$, as defined in Eq.~\eqref{eq:bbest_bip}. 

For a hermitian operator $\gamma$ on an inner product space $V$, let $\gamma_{U}$ denote the orthogonal compression of $\gamma$ onto a subspace $U\subset V$.

In the following, $\bx,\by,\bz$ denote binary sequences. ``$\newmoon$'' is synonymous to $1$, and ``$\fullmoon$'' to $0$. Juxtaposition indicates concatenation. For a binary sequence $\bx$, we will write $|\bx|$ to denote its length.

We start by defining a family of logical statements parametrized by three binary sequences (as well as by the direction of inequality):

\begin{defn}
Let $\bx,\by, \bz$ be binary sequences such that $|\bx|+|\by| = | \bz|$. Then
the string $\bx | \by \ge  \bz$ is defined to be the following statement:\\
``For all hermitian operator $\gamma$ on an $| \bz|$-dimensional inner product space $V$,
for all orthogonal decompositions $V = V_1\oplus V_2$ with $\dim V_1 =|\bx|, \dim V_2 = |\by|$,
it holds that 
\begin{equation}
	\sum_i x_i  \alpha_i + \sum_i y_i \beta_i \ge \sum_i z_i n_i,
\end{equation}
where $\balpha := \specdown(\gamma_{V_1})$, $\bbeta:=\specdown(\gamma_{V_2})$, $\bn:=\specdown(\gamma)$.''

Define $\bx|\by \le  \bz$ in the same way.
\end{defn}

In the following, ``$\forall \bx,\by, \bz$'' will always be a shorthand
for ``for all binary sequences $\bx,\by, \bz$ such that
$|\bx|+|\by|=| \bz|$.

\newcommand{\recursivelawa}{
\bx|\by \ge  \bz \Rightarrow \bx \fullmoon|\by\fullmoon \ge \fullmoon \bz\fullmoon
}

\begin{lemm}
\label{lemm:recursive_law_1}
$\forall\bx,\by, \bz:\recursivelawa$.
\end{lemm}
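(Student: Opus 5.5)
The plan is to reduce the statement on a $(|\bz|+2)$-dimensional space to the hypothesis $\bx|\by \ge \bz$ on a suitable $|\bz|$-dimensional subspace, using Cauchy interlacing (Thm.~\ref{thrm:interlacing}) twice. The subspace is chosen so that it is orthogonal to a bottom eigenvector of $\gamma$.

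Fix a Hermitian $\gamma$ on $V$ with $\dim V = m := |\bz|+2$, and a decomposition $V = V_1 \oplus V_2$ with $\dim V_1 = |\bx|+1$ and $\dim V_2 = |\by|+1$. Let $\balpha, \bbeta, \bn$ be the spectra of $\gamma_{V_1}, \gamma_{V_2}, \gamma$. Unwinding the definition, the claim $\bx\fullmoon|\by\fullmoon \ge \fullmoon\bz\fullmoon$ reads
\begin{equation*}
\sum_{i=1}^{|\bx|} x_i \alpha_i + \sum_{i=1}^{|\by|} y_i \beta_i \ \ge\ \sum_{i=1}^{|\bz|} z_i\, n_{i+1}.
\end{equation*}

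\textbf{Step 1: choose the subspace.} Let $\ket{e}$ be a normalized eigenvector of $\gamma$ with the smallest eigenvalue $n_m$. Write $\ket{e} = \Pi_1\ket{e} + \Pi_2\ket{e}$, where $\Pi_{1/2}$ project onto $V_{1/2}$. Pick unit vectors $v_1 \in V_1$ and $v_2 \in V_2$ with $v_\mu \propto \Pi_\mu \ket{e}$; if $\Pi_\mu\ket{e}=0$, pick $v_\mu$ arbitrarily. Set $V_\mu' := V_\mu \ominus \mathrm{span}(v_\mu)$ and $W := V_1' \oplus V_2'$. Then $\dim V_1' = |\bx|$, $\dim V_2' = |\by|$, $\dim W = |\bz|$, and $W \perp \ket{e}$ because $\ket{e} \in \mathrm{span}(v_1, v_2)$.

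\textbf{Step 2: apply the hypothesis to the compression.} Let $\balpha', \bbeta', \bn'$ be the spectra of $\gamma_{V_1'}, \gamma_{V_2'}, \gamma_W$. Since $\gamma_{V_\mu'}$ is the compression of $\gamma_W$ to $V_\mu'$, the hypothesis $\bx|\by\ge\bz$, applied to $\gamma_W$ on $W = V_1'\oplus V_2'$, gives
\begin{equation*}
\sum_i x_i\alpha'_i + \sum_i y_i\beta'_i \ \ge\ \sum_i z_i n'_i .
\end{equation*}

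\textbf{Step 3: compare spectra by interlacing.} First, $V_\mu'$ has codimension one in $V_\mu$, so Thm.~\ref{thrm:interlacing} gives $\alpha'_i \le \alpha_i$ and $\beta'_i \le \beta_i$. Since $x_i, y_i \in \{0,1\}$ are nonnegative, the left-hand side is therefore at most $\sum_i x_i\alpha_i+\sum_i y_i\beta_i$.

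Second, $e^\perp$ is $\gamma$-invariant, and $\gamma$ restricted to $e^\perp$ has spectrum $(n_1,\dots,n_{m-1})$. The subspace $W$ has codimension one in $e^\perp$, so interlacing gives $n'_i \ge n_{i+1}$ for $i = 1,\dots,|\bz|$. Since $z_i\ge0$, this yields $\sum_i z_i n'_i \ge \sum_i z_i n_{i+1}$. Chaining the three inequalities gives the claim.

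The only delicate point is Step 1. A naive choice of codimension-one subspaces in each block gives $W$ of codimension two in $V$, and interlacing then only yields $n'_i \ge n_{i+2}$. Removing the $V_1$- and $V_2$-components of the \emph{bottom} eigenvector is what makes $W$ codimension one inside an invariant subspace whose spectrum is shifted in the favourable direction. The choice of the bottom rather than the top eigenvector matters: with the top eigenvector, the inequality $n'_i \le n_{i+1}$ would point the wrong way.
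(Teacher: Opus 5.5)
Your proof is correct and follows essentially the same route as the paper's. The paper's subspaces $U_1\subset V_1$, $U_2\subset V_2$ orthogonal to the bottom eigenvector $\psi_{|\bz|+2}$ are your $V_1'$, $V_2'$, and its invariant subspace $U_1\oplus U_2\oplus R$ with spectrum $(n_1,\dots,n_{|\bz|+1})$ is your $e^\perp$; both arguments then apply the hypothesis to the compression and use interlacing twice, and your explicit construction of $V_\mu'$ via $\Pi_\mu\ket{e}$ supplies the step the paper only asserts (``we can find subspaces'').
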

\begin{proof}
	Let $V$ be any $(| \bz|+2)$-dimensional inner product space and let $\gamma$ be a hermitian operator.
	Let $V = V_1\oplus V_2$ be an orthogonal decomposition with $\dim V_1 = |\bx|+1$ and $\dim V_2 = |\by|+1$.
	Let $\gamma = \sum_{i=1}^{| \bz|+2}n_i \ket{\psi_i}\bra{\psi_i}$ be the spectral decomposition.
	We can find subspaces $U_1\subset V_1, U_2\subset V_2$ such that
	$\psi_{| \bz|+2}\perp U_1$ and $\psi_{| \bz|+2}\perp U_2$ and $\dim U_1 = |\bx|$, $\dim U_2 = |\by|$.
	We then have an orthogonal decomposition $V = U_1 \oplus U_2 \oplus R
	\oplus \mathbb{C}\psi_{| \bz|+2}$, where $R$ is a one-dimensional
	subspace.

	Let $\balpha':= \specdown(\gamma_{U_1}), \bbeta':= \specdown(\gamma_{U_2})$. We then have
	\begin{equation}
	\sum_{i=1}^{|\bx|} x_i  \alpha_i  + 0\cdot \alpha_{|\bx|+1}\ge \sum_{i=1}^{|\bx|}x_i  \alpha_i'
	\end{equation}
	\begin{equation}
	\sum_{i=1}^{|\by|} y_i \beta_i  + 0\cdot \beta_{|\by|+1}\ge \sum_{i=1}^{|\by|}y_i \beta_i'
	\end{equation}
	Taking the sum and using $\bx|\by \ge  \bz$, we get
	\begin{equation}
	\sum_{i}x_i \alpha_i + \sum_i y_i\beta_i \ge \sum_{i=1}^{| \bz|}z_i n_i',
	\end{equation}
	where $\bn':= \specdown(\gamma_{U_1\oplus U_2})$. Since
	$\specdown(\gamma_{U_1\oplus U_2\oplus R})= (n_1, \dotsb,
	n_{| \bz|+1})$, we have $n_i'\ge n_{i+1}$, so we have
	\begin{equation}
	\sum_{i}x_i \alpha_i + \sum_i y_i\beta_i \ge \sum_{i=1}^{| \bz|}z_i n_{i+1},
	\end{equation}
	It follows that $\bx\fullmoon | \by\fullmoon \ge \fullmoon \bz\fullmoon$.
\end{proof}

For a binary sequence $\bx$, we will use the following notation:
\begin{itemize}
\item $\dot\bx$ is the reversed sequence. That is, $\dot x_i = x_{|\bx|-i+1}$.
\item $\bar \bx$ is the bit-flipped sequence. That is, $\bar x_i = 1-x_i$.
\end{itemize}

\begin{lemm}
$\forall\bx,\by, \bz: \bx|\by \ge  \bz\Leftrightarrow \dot\bx|\dot\by\le \dot \bz$.
\end{lemm}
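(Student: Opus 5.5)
The plan is to use the substitution $\gamma \mapsto -\gamma$. Negation reverses every decreasingly ordered spectrum, and it sends compressions to compressions. Fix binary sequences $\bx,\by,\bz$ with $|\bx|+|\by|=|\bz|$. Let $\gamma$ be a Hermitian operator on $V=V_1\oplus V_2$, with $\balpha,\bbeta,\bn$ its block and total spectra as in the definition. Set $\gamma' := -\gamma$. Then $(\gamma')_{V_\mu} = -\gamma_{V_\mu}$, so the decreasingly ordered spectra of $\gamma'$ are
\begin{equation}
\alpha'_i = -\alpha_{|\bx|+1-i}, \quad \beta'_i = -\beta_{|\by|+1-i}, \quad n'_i = -n_{|\bz|+1-i}.
\end{equation}
The key observation is the reindexing identity
\begin{equation}
\sum_i \dot x_i \alpha'_i = -\sum_i x_{|\bx|+1-i}\,\alpha_{|\bx|+1-i} = -\sum_i x_i \alpha_i ,
\end{equation}
together with the analogous identities for $\by,\bbeta$ and $\bz,\bn$.

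For the forward direction, assume $\bx|\by \ge \bz$. Take an arbitrary Hermitian $\gamma$ and decomposition $V=V_1\oplus V_2$ with $\dim V_1=|\dot\bx|$ and $\dim V_2=|\dot\by|$. Apply the hypothesis to $-\gamma$ with the same decomposition, which is admissible because reversal preserves lengths. This gives $\sum_i x_i\alpha'_i + \sum_i y_i\beta'_i \ge \sum_i z_i n'_i$, where primes now denote the spectra of $-\gamma$. By the reindexing identity (used with the roles of $\gamma$ and $-\gamma$ exchanged), the left side equals $-(\sum_i \dot x_i\alpha_i + \sum_i \dot y_i \beta_i)$ and the right side equals $-\sum_i \dot z_i n_i$. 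Multiplying by $-1$ yields $\sum_i \dot x_i\alpha_i+\sum_i\dot y_i\beta_i \le \sum_i \dot z_i n_i$. Since $\gamma$ and the decomposition were arbitrary, this is $\dot\bx|\dot\by\le\dot\bz$.

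The converse follows by the same argument with the inequalities reversed. Alternatively, apply the forward argument to $\dot\bx,\dot\by,\dot\bz$ in its $\le$-to-$\ge$ form and use $\ddot\bx=\bx$.

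There is no real obstacle here. The only point needing care is the index bookkeeping: that $\specdown(-A)_i = -\specdown(A)_{m+1-i}$ for an $m\times m$ Hermitian $A$, and that compression commutes with negation. Both are immediate.
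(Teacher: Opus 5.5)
Your proof is correct and follows the paper's argument: apply the hypothesis to $-\gamma$, use that negation reverses the decreasingly ordered spectra of $\gamma$ and of its compressions, and reindex the sums. Your explicit treatment of the converse, via the $\le$-to-$\ge$ version together with $\ddot\bx=\bx$, spells out what the paper covers only with its one-line remark that it suffices to prove the forward implication.
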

\begin{proof}
	It suffices to prove $\bx|\by \ge  \bz \Rightarrow \dot\bx|\dot\by\le \dot \bz$.

	Let $\gamma$ be a hermitian operator on a $| \bz|$-dimensional inner product space $V$
	with orthogonal decomposition $V_1\oplus V_2$ such that $\dim V_1 = |\bx|,\dim V_2 = |\by|$.
	Let $\balpha:=\specdown(\gamma_{V_1}), \bbeta:=\specdown(\gamma_{V_2}), \bn:= \specdown(\gamma)$.

	Apply $\bx|\by\ge  \bz$ to $-\gamma$ to get
	\begin{equation}
	-\sum_i x_i \alpha_{|\bx|-i+1} - \sum_i y_i \beta_{|\by|-i+1}
	\ge -\sum_i z_i n_{| \bz|-i+1}
	\end{equation}
	or
	\begin{equation}
	\sum_i \dot x_i \alpha_i + \sum_i \dot y_i \beta_i \le \sum_i \dot z_i n_i.
	\end{equation}
	This shows $\dot\bx|\dot\by\le \dot \bz$.

\end{proof}

\begin{lemm}
$\forall\bx,\by, \bz:$
$\bx|\by \ge  \bz\Leftrightarrow \bar\bx| \bar\by\le \bar \bz$.
\end{lemm}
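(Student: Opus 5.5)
The plan is to mirror the proof of the previous lemma, with the complement operation replacing the sign flip. It suffices to show $\bx|\by \ge \bz \Rightarrow \bar\bx|\bar\by \le \bar\bz$. The reverse implication then follows by applying the same argument to the flipped sequences, since $\bar{\bar\bx} = \bx$ and the roles of $\ge$ and $\le$ are symmetric in the argument below.

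The key input is the trace identity. For any hermitian $\gamma$ on $V = V_1 \oplus V_2$, with $\balpha = \specdown(\gamma_{V_1})$, $\bbeta = \specdown(\gamma_{V_2})$ and $\bn = \specdown(\gamma)$, we have
\begin{equation}
\sum_{i=1}^{|\bx|} \alpha_i + \sum_{i=1}^{|\by|} \beta_i = \tr \gamma_{V_1} + \tr \gamma_{V_2} = \tr \gamma = \sum_{i=1}^{|\bz|} n_i ,
\end{equation}
because the trace of $\gamma$ equals the sum of the traces of its two diagonal blocks. Now fix such $\gamma$, $V_1$ and $V_2$, and assume $\bx|\by \ge \bz$. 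That assumption gives
\begin{equation}
\sum_i x_i \alpha_i + \sum_i y_i \beta_i \ge \sum_i z_i n_i .
\end{equation}
Subtracting this inequality from the trace identity, and using $\bar x_i = 1 - x_i$ (and likewise for $\by$ and $\bz$), yields
\begin{equation}
\sum_i \bar x_i \alpha_i + \sum_i \bar y_i \beta_i \le \sum_i \bar z_i n_i .
\end{equation}
This is exactly the statement $\bar\bx|\bar\by \le \bar\bz$. The lengths are unchanged by bit-flipping, so $|\bar\bx| + |\bar\by| = |\bar\bz|$ and the flipped statement is well-formed. Since $\gamma$ and the decomposition were arbitrary, the implication holds.

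For the converse, apply the same subtraction to $\bar\bx|\bar\by \le \bar\bz$. Subtracting a $\le$ inequality from an equality reverses the direction, so this gives $\bx|\by \ge \bz$.

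There is no real obstacle here. The only point to state explicitly is that the ordering of the spectra plays no role in the trace identity, because all coefficients there equal one.
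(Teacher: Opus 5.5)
Your proof is correct and follows the paper's argument exactly: you subtract the assumed inequality from the trace identity $\sum_i\alpha_i+\sum_i\beta_i=\sum_i n_i$, which yields the bit-flipped inequality with the direction reversed. Your explicit treatment of the converse, subtracting the $\le$ inequality for the flipped sequences, spells out what the paper leaves implicit in ``it suffices to show'' one direction.
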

\begin{proof}
	It suffices to show  $\bx|\by \ge  \bz\Rightarrow\bar\bx| \bar\by\le \bar \bz$.

	Let $V=V_1\oplus V_2$, $\gamma$, $\bn$, $\balpha$, $\bbeta$ be as always. Then
	\begin{equation}
	\sum_{i=1}^{|\bx|}  \alpha_i + \sum_{i=1}^{|\by|} \beta_i  = \sum_{i=1}^{| \bz|} n_i
	\end{equation}
	\begin{equation}
	\sum_{i=1}^{|\bx|} x_i  \alpha_i + \sum_{i=1}^{|\by|}  y_i \beta_i \ge \sum_{i=1}^{| \bz|} z_i n_i.
	\end{equation}
	Taking the difference, we get
	\begin{equation}
	\sum_{i=1}^{|\bx|} \bar x_i  \alpha_i + \sum_{i=1}^{|\by|}  \bar y_i
	\beta_i \le \sum_{i=1}^{| \bz|} \bar z_i n_i.
	\end{equation}

\end{proof}

\newcommand{\recursivelawb}{\bx|\by \ge  \bz \Rightarrow \newmoon\bx|\newmoon\by \ge \newmoon \bz\newmoon}

\begin{coro}
\label{cor:recursive_law_2}
$\forall\bx,\by, \bz:\recursivelawb$.
\begin{proof}
	\begin{equation}
	\bx|\by\ge \bz \Rightarrow \dot{\bar\bx}|\dot{\bar\by} \ge \dot{\bar \bz}\Rightarrow 
	\dot{\bar\bx}\fullmoon | \dot{\bar\by}\fullmoon \ge \fullmoon \dot{\bar \bz}\fullmoon
	\Rightarrow \newmoon \bx | \newmoon \by \ge \newmoon \bz\newmoon.
	\end{equation}
\end{proof}
\end{coro}

We will use exponents to indicate repetition. For example, $\newmoon^3 = \newmoon\newmoon\newmoon$. We get the following significant extension of Thm.~\ref{thrm:mikes_maj_rel}.

\begin{thrm} \label{thm:mike_main}
$\newmoon^u\fullmoon^v \mid \newmoon^{u}\by\fullmoon^v\ge \bx \by \dot\bx$ and, equivalently, $\newmoon^u\by\fullmoon^v \mid \newmoon^{u}\fullmoon^v\ge \bx \by \dot\bx$ for any binary sequences $\bx,\by$ with $|\bx|= u+v$ and $\sum_i x_i = u$.
\begin{proof}
	Starting with the trivially true statement $|\by \ge \by$, we repeatedly apply Lemma~\ref{lemm:recursive_law_1} ($\recursivelawa$) and Corollary~\ref{cor:recursive_law_2} ($\recursivelawb$).
\end{proof}
\end{thrm}
In particular, the above theorem allows us to prove the following generalization of Eq.\til\eqref{eq:maj_eq_even_bip}, which implies the optimality of the vector of Eq.~\eqref{eq:bbest_bip} for arbitrary bipartitions $\blam$.
\begin{coro} \label{cor:maj_relation_uneven_bip}
	Let $A$ and $B$ be $\lambda_1 \times \lambda_1$ and $\lambda_2 \times \lambda_2$ Hermitian matrices, respectively, and
	\begin{equation}
	\gamma =
	\begin{pmatrix}
	\begin{array}{c|c}
		A & M \\
		\hline
		M^{\dagger} & B
	\end{array} 
	\end{pmatrix} \; .
	\end{equation}
	Let $\balpha, \bbeta$ and $\boldsymbol{n}$ be the decreasingly ordered spectra of $A, B$ and $\gamma$, respectively. Let $d = \lambda_1 + \lambda_2$, and
	\begin{align} \nonumber
&\bb_{\max} = \\ \nonumber
& \Bigg(
\frac{n_1 + n_d}{2},\ \frac{n_2 + n_{d-1}}{2},\ \ldots,\ \frac{n_{\lambda_2} + n_{\lambda_1 + 1}}{2}, n_{\lambda_2+1}, \, \ldots, \\ 
& n_{\lambda_1}\ ,  \ \frac{n_1 + n_d}{2},\ \frac{n_2 + n_{d-1}}{2},\ \ldots,\ \frac{n_{\lambda_2} + n_{\lambda_1 + 1}}{2}
\Bigg) \; .
\end{align}
Then
\begin{equation}
	\bb_{\max} \prec (\balpha, \bbeta) \; .
\end{equation}
\end{coro}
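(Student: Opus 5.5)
The plan is to derive the partial-sum inequalities of $\bb_{\max} \prec (\balpha, \bbeta)$ directly from Thm.~\ref{thm:mike_main}, in its second (equivalent) form
\begin{equation*}
\newmoon^u\by\fullmoon^v \mid \newmoon^{u}\fullmoon^v\ge \bx \by \dot\bx \; .
\end{equation*}
Assume $\lambda_1 \geq \lambda_2$ and take $|\bx| = u+v = \lambda_2$ and $|\by| = \lambda_1 - \lambda_2$. Then the first block has length $u+v+|\by| = \lambda_1$ and the second has length $\lambda_2$, as required. The key observation is the meaning of the right-hand side $\bx\by\dot\bx$, read against $\bn$. The entry $x_j$ ($j \leq \lambda_2$) multiplies both $n_j$ and $n_{d+1-j}$, while the entries of $\by$ multiply the "middle" NONs $n_{\lambda_2+1}, \ldots, n_{\lambda_1}$. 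This is precisely the pattern of $\bb_{\max}$. The trace equality $\sum_j (\bb_{\max})_j = \sum_j n_j = \tr A + \tr B$ is immediate.

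For the inequalities, fix $1 \leq q \leq d-1$. First consider the case where the $q$ largest entries of $\bb_{\max}$ consist of $a$ pair averages $\frac{n_j+n_{d+1-j}}{2}$, $j \in S$, each taken with both of its copies, together with $c$ middle entries $n_t$, $t \in T$, so that $2a + c = q$. Their sum is
\begin{equation*}
\sum_{j\in S}(n_j+n_{d+1-j}) + \sum_{t\in T} n_t \; .
\end{equation*}
Choose $\bx$ to be the indicator of $S$ (so $u = a$) and $\by$ the indicator of $T$. Thm.~\ref{thm:mike_main} then bounds this sum by $\sum_{i\le a+c}\alpha_i + \sum_{i\le a}\beta_i$. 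The latter is a sum of $q$ distinct entries of $(\balpha,\bbeta)$, so it is at most the sum of the $q$ largest entries of $(\balpha,\bbeta)$.

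The remaining case is where the top $q$ entries contain only one copy of some pair average $\frac{n_p+n_{d+1-p}}{2}$. This is the step I expect to need care, because of ties and ordering. I would handle it by concavity. The top-$q$ sum of $\bb_{\max}$ then equals the average of two configurations of the first type: $(S\cup\{p\},T)$, of size $q+1$, and $(S,T)$, of size $q-1$. By the first case this average is at most $\tfrac12(\sigma_{q+1}+\sigma_{q-1})$, where $\sigma_r$ denotes the sum of the $r$ largest entries of $(\balpha,\bbeta)$. Since $\sigma_r$ is concave in $r$ (its increments are the decreasingly ordered entries), $\tfrac12(\sigma_{q+1}+\sigma_{q-1}) \le \sigma_q$.

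One must also check that the top-$q$ set of $\bb_{\max}$ can always be chosen in one of these two forms. This holds because the two copies of each pair average are equal, so they can be adjacent in the decreasing order, and ties can be broken so that at most one pair is split. Finally, the case $\lambda_1 < \lambda_2$ follows by swapping the roles of $A$ and $B$, using the first form of Thm.~\ref{thm:mike_main}.
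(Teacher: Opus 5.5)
Your proposal is correct and follows essentially the same route as the paper. You apply Thm.~\ref{thm:mike_main} in its second form, with $\bx$ the indicator of the fully included pairs and $\by$ that of the middle entries, and you handle a split pair by averaging the bounds at $q\pm 1$ and using the concavity of the partial sums of $(\balpha,\bbeta)^\downarrow$. The one small difference is that the paper takes $\by=\newmoon^m\fullmoon^{|\by|-m}$, so its first-block bound is exactly $\sum_{i\le a+c}\alpha_i$; with a general indicator $\by$ you need the trivial extra step that any $a+c$ distinct $\alpha_i$ sum to at most the largest $a+c$, which your wording already implicitly covers.
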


\begin{proof}
	Let 
	\begin{align} \nonumber
		\bs &= \bb_{\max}^\downarrow \\
		\btt &= (\balpha, \bbeta)^\downarrow \; .
	\end{align}
	We have to prove
	\begin{equation} \label{eq:target_maj_ineq_s_t}
		\sum_{i=1}^k t_i \geq \sum_{i=1}^k s_i
	\end{equation}
	for $k = 1, \dots, d - 1$.
	First, suppose that $k$ is such that
	\begin{itemize}
		\item either $s_k$ is one of the $n_i$'s, for some $i \in \{ \lambda_2 + 1, \dots, \lambda_1 \}$,
		\item or $(s_{k-1}, s_k)$ is one of the couples $(\frac{n_i + n_{d+1-i}}{2},\frac{n_i + n_{d+1-i}}{2})$.
	\end{itemize}
	Equivalently, suppose that neither or each of the elements of every couple $(\frac{n_i + n_{d+1-i}}{2},\frac{n_i + n_{d+1-i}}{2})$ appears in $(s_1, \dots, s_k)$. Then
	\begin{equation}
		\sum_{i=1}^k s_i = \sum_{i=1}^d(\bx \by \dot \bx)_i n_i 
	\end{equation}
	for some bit sequences $\bx, \by$ s.t.
	\begin{align}
		|\bx| &= \lambda_2 \\
		|\by| &= \lambda_1 - \lambda_2\\
		2 \sum_{i=1}^{|\bx| } x_i + \sum_{i=1}^{|\by| } y_i &= k \; .
	\end{align}
	More precisely, $\by$ has the form $\newmoon^m \fullmoon^{|\by| - m}$. Let
	\begin{align}
		u &= \sum_{i=1}^{|\bx| } x_i \\
		v &= |\bx| - u \; .
	\end{align}
	We can apply Thm.\til\ref{thm:mike_main} and get $\newmoon^u \by \fullmoon^v | \newmoon^u  \fullmoon^v \geq \bx \by \dot\bx$. For our matrix $\gamma$, this implies
	\begin{align} \nonumber
		\sum_{i=1}^k t_i &\geq \sum_{i=1}^{k-u} \alpha_i + \sum_{i=1}^u \beta_i \\ \nonumber
		&= \sum_{i=1}^{\lambda_1}(\newmoon^u \by \fullmoon^v)_i \alpha_i +  \sum_{i=1}^{\lambda_2}(\newmoon^u \fullmoon^v)_i \beta_i \\ \nonumber
		&\geq \sum_{i=1}^d(\bx\by\dot\bx)_i n_i \\
		&= \sum_{i=1}^k s_i \; ,
	\end{align}
	thus proving Eq.\til\eqref{eq:target_maj_ineq_s_t}.

	The remaining case is for $k$ s.t. $(s_k, s_{k+1})$ is one of the couples $(\frac{n_i + n_{d+1-i}}{2},\frac{n_i + n_{d+1-i}}{2})$. Importantly, in this case $s_k = s_{k+1}$. For that case, Eq.\til\eqref{eq:target_maj_ineq_s_t} follows from Eq.\til\eqref{eq:target_maj_ineq_s_t} itself relative to a shifted maximum index $k' = k-1, k+1$, which we proved above:
	\begin{align}\nonumber
		\sum_{i=1}^k t_i &\geq \frac{1}{2} \left ( \sum_{i=1}^{k+1} t_i + \sum_{i=1}^{k-1} t_i \right ) \\ \nonumber
		&\geq \frac{1}{2} \left ( \sum_{i=1}^{k+1} s_i + \sum_{i=1}^{k-1} s_i \right ) \\
		&= \sum_{i=1}^k s_i \; ,
	\end{align}
	where in the last passage we used $s_k = s_{k+1}$.
\end{proof}

Going back to the main text notation, the above corollary states that $\bb_{\max}$ as defined in Eq.~\eqref{eq:bbest_bip} is majorized by any other $\bb \in \bset^{\blam}_{\bn}$ for the case of unbalanced bipartitions, too. This completes the proof of Result~\ref{res:bipartitions} and concludes the analytical discussion for the case of bipartitions.

\section{1P space dimension $d = 4$: the nontrivial multipartition \texorpdfstring{$\boldsymbol{\lambda}$}{lambda} $ = (2,1,1)$ and the emergence of regimes} \label{app:d=4}

We examine the case $d = 4$ and $\blam = (2,1,1)$, the only nontrivial multipartition of $4$. Fix $n_1 \geq \dots \geq n_4$, let $\eta = \sum_{j=1}^4 n_j/4$. $\bb_{\max} = \bb_{\max}^{\blam}(\bn)$ is given by Eq.~\eqref{eq:bbest_2_1_1}, which we report here:
\begin{widetext}
\begin{equation} \label{eq:bbest_2_1_1_repeated}
	\boxed{
	\bb_{\max} = 
	\begin{cases} 
		\left( \frac{n_1+n_3+n_4}{3}, n_2 \sbs \frac{n_1+n_3+n_4}{3} \sbs \frac{n_1+n_3+n_4}{3} \right) & \text{if } n_1 \geq \eta \geq n_2 \\[10pt]
		(\eta, \eta \sbs \eta \sbs \eta) & \text{if } n_2 \geq \eta \geq n_3 \\[10pt]
		\left( n_3, \frac{n_1+n_2+n_4}{3} \sbs \frac{n_1+n_2+n_4}{3} \sbs \frac{n_1+n_2+n_4}{3} \right) & \text{if } n_3 \geq \eta \geq n_4 \; .
	\end{cases}
	}
\end{equation}
\end{widetext}
Independent of the regime, we observe once again that the entries of $\bb_{\max}$ are averages of $\bn$ entries. As we will show, the corresponding optimal unitaries are combinations of discrete Fourier transforms, or generalizations thereof.

We now prove that $\bb_{\max}$ as given in Eq.~\eqref{eq:bbest_2_1_1} is in $\bset^{\blam}_{\bn}$ and is majorized by any other $\bb\in \bset^{\blam}_{\bn}$---and therefore maximizes $H$ over $\bset^{\blam}_{\bn}$. We start out with a lemma.

\begin{lemm} \label{lemm:technical_maj_rel}
	Let $\ba = (\underbrace{x, \ldots, x}_{\ell \text{ times}}, \underbrace{y, \ldots, y}_{d-\ell \text{ times}}) \in \R^d$, with $x \geq y$, and let $\bb \in \R^d$. Suppose $\sum_{j=1}^d a_j = \sum_{j=1}^d b_j$, and that $\sum_{j=1}^\ell b_j^\downarrow \geq \ell x$ (equivalently, $\sum_{j=\ell+1}^d b_j^\downarrow \leq (d-\ell) y$). Then $\ba \prec \bb$.
\end{lemm}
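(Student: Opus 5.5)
We have to verify the partial-sum inequalities $\sum_{j=1}^k b^\downarrow_j \geq \sum_{j=1}^k a^\downarrow_j$ for $k=1,\ldots,d-1$. The total sums agree by assumption, and since $x\geq y$ the decreasing rearrangement of $\ba$ is $\ba$ itself, so $\sum_{j=1}^k a_j = kx$ for $k\leq \ell$ and $\sum_{j=1}^k a_j = \ell x + (k-\ell)y$ for $k>\ell$. The equivalence of the two forms of the hypothesis is immediate: subtract $\sum_{j=1}^\ell b_j^\downarrow \geq \ell x$ from the equal totals $\sum_j b_j=\ell x+(d-\ell)y$. The plan is to split into the two ranges $k\leq\ell$ and $k>\ell$, and in each range to use that averages of a decreasing sequence are monotone.

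\emph{Case $k\leq \ell$.} Since $\bb^\downarrow$ is non-increasing, the average of its first $k$ entries is at least the average of its first $\ell$ entries:
\begin{equation}
\frac{1}{k}\sum_{j=1}^k b_j^\downarrow \;\geq\; \frac{1}{\ell}\sum_{j=1}^\ell b_j^\downarrow \;\geq\; x .
\end{equation}
Multiplying by $k$ gives $\sum_{j=1}^k b^\downarrow_j\geq kx=\sum_{j=1}^k a_j$.

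\emph{Case $\ell<k\leq d-1$.} Here we argue through the complementary tail sums. The inequality to prove is equivalent to $\sum_{j=k+1}^d b_j^\downarrow \leq (d-k)y$, because the totals coincide. Again by monotonicity, the average of the last $d-k$ entries of $\bb^\downarrow$ is at most the average of its last $d-\ell$ entries, and the latter is at most $y$ by the second form of the hypothesis. Hence $\sum_{j=k+1}^d b_j^\downarrow\leq (d-k)y$, and therefore $\sum_{j=1}^k b_j^\downarrow \geq \ell x+(k-\ell)y$.

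There is no substantive obstacle. The only points requiring care are the bookkeeping of the complementary sums, and the degenerate values $\ell=0$ or $\ell=d$. In those cases $\ba$ is constant and the claim $\ba\prec\bb$ holds trivially, since a constant vector is majorized by any vector with the same total sum.
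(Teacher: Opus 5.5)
Your proof is correct and matches the paper's argument. For $k\le\ell$ you compare the average of the top $k$ entries of $\bb^\downarrow$ with the average of the top $\ell$; for $k>\ell$ you make the same comparison on the complementary tail sums against $y$. Your remarks on why the two forms of the hypothesis are equivalent, and on the degenerate cases $\ell\in\{0,d\}$, are small additions that the paper leaves implicit.
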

\begin{proof}
	For $k = 1, \ldots, \ell$,
	\begin{equation}
		\sum_{j=1}^k a_j = k x \leq k \frac{\sum_{j=1}^\ell b_j^\downarrow}{\ell} \leq k \frac{\sum_{j=1}^k b_j^\downarrow}{k} = \sum_{j=1}^k b_j^\downarrow \; .
	\end{equation}
	For $k = \ell+1, \ldots, d-1$, the inequality $\sum_{j=1}^k a_j \leq \sum_{j=1}^k b_j^\downarrow$ is equivalent to
	\begin{equation}
		\sum_{j=k+1}^d a_j \geq \sum_{j=k+1}^d b_j^\downarrow \; ,
	\end{equation}
	and this holds since
	\begin{align} \nonumber
		\sum_{j=k+1}^d a_j &= (d-k) y \geq (d-k)\frac{\sum_{j=\ell+1}^d b_j^\downarrow}{d-\ell} \\
		&\geq (d-k) \frac{\sum_{j=k+1}^d b_j^\downarrow}{d-k} = \sum_{j=k+1}^d b_j^\downarrow \; .
	\end{align}
\end{proof}
Next, we consider the three regimes separately. \\

(i) \textbf{Regime $n_1 \geq \eta \geq n_2$.} Let $U$ be the $4 \times 4$ unitary acting as the discrete Fourier transform on span$(\be_1, \be_3, \be_4)$ and leaving $\be_2$ unchanged. Then the $\blam$-block spectrum of $U\text{diag}(\bn)U^\dagger$ is
\begin{equation}
	\left( \frac{n_1+n_3+n_4}{3}, n_2 \sbs \frac{n_1+n_3+n_4}{3} \sbs \frac{n_1+n_3+n_4}{3} \right) = \bb_{\max} \; ,
\end{equation}
which proves $\bb_{\max} \in \bset^{\blam}_{\bn}$. We now prove that $\bb_{\max} \prec \bb$ for any $\bb \in \bset^{\blam}_{\bn}$. Write $\bb = (b_1 , b_2 \sbs b_3 \sbs b_4)$, with $b_1 \geq b_2$. Also, we can write
\begin{equation}
	\bb_{\max}^{\blam,\downarrow} = (x,x,x,y) \; ,
\end{equation}
with $x =  \frac{n_1+n_3+n_4}{3} \geq n_2 = y$. The Cauchy interlacing inequalities, see Thm.~\ref{thrm:interlacing}, imply
\begin{equation}
	\min_i b_i \leq b_2 \leq n_2 = y \; .
\end{equation}
We can then apply Lemma~\ref{lemm:technical_maj_rel} (with $\ba = \bb_{\max}^{\blam,\downarrow}$ and $\ell = 3$), and obtain $\bb_{\max} \prec \bb$.\\

(ii) \textbf{Regime $n_2 \geq \eta \geq n_3$.} In this case $\bb_{\max} = (\eta, \eta \sbs \eta \sbs \eta)$ is maximally mixed. So we only need to prove $\bb_{\max} \in \bset^{\blam}_{\bn}$, while the majorization relation is trivial. We start with diag$(\bn)$ and subsequently conjugate it with three unitaries, diag($\bn$)$\rightarrow \gamma \rightarrow \gamma'\rightarrow \gamma''$, to obtain a matrix $\gamma''$ whose $\blam$-block spectrum is $\bb_{\max}$:
\begin{enumerate}
	\item the first unitary acts only on span($\be_1, \be_3$) and ensures $\gamma_{11} = \eta$---this can be achieved because $n_1 \geq \eta \geq n_3$;
	\item the second unitary acts only on span($\be_2, \be_4$) and ensures $\gamma'_{22} = \eta$---this can be achieved because $n_2 \geq \eta \geq n_4$;
	\item the third unitary acts only on span($\be_3, \be_4$) and ensures $\gamma''_{33} = \gamma''_{44} = \eta$.
\end{enumerate}
Note that the unitaries are not uniquely defined. It is easy to check that this construction ensures $\gamma''_{12}=\gamma''_{21}=0$. Therefore the $\blam$-block spectrum of $\gamma''$ is the same as its diagonal, namely $(\eta, \eta \sbs \eta \sbs \eta ) = \bb_{\max}$, as desired.\\

(iii) \textbf{Regime $n_3 \geq \eta \geq n_4$.} The same discussion as in the first regime applies. Equivalently, one can \bt map' the third regime to the first one via the \bt particle-hole' change of variable $\bn \rightarrow \bn' = \mathbf{1} - \bn $.

\section{1P space dimension $d = 5$} \label{app:d=5}
We study the 1P space dimension $d = 5$ and its partitions $\blam$ that were not covered in the main text, i.e., the nontrivial multipartitions $\blam = (3,1,1), (2,2,1)$ and $(2,1,1,1)$.

\subsection{\texorpdfstring{$\boldsymbol{\lambda}$}{lambda} $ = (3,1,1)$}

Set $\blam = (3,1,1)$. Fix $n_1 \geq \dots \geq n_5$. Then $\bb_{\max} = \bb_{\max}^{\blam}(\bn)$ is given by

\begin{widetext}
\begin{equation} \label{eq:bbest_3_1_1}
\boxed{
\bb_{\max} = 
\begin{cases} 
	\left( \frac{n_1+n_4+n_5}{3}, n_2, n_3 \sbs \frac{n_1+n_4+n_5}{3} \sbs \frac{n_1+n_4+n_5}{3} \right) & \text{if } n_1 \geq \frac{n_1+n_2+n_4+n_5}{4} \geq n_2 \\[12pt]
	\left( \frac{n_1+n_2+n_4+n_5}{4}, \frac{n_1+n_2+n_4+n_5}{4}, n_3 \sbs \frac{n_1+n_2+n_4+n_5}{4} \sbs \frac{n_1+n_2+n_4+n_5}{4} \right) & \text{if } n_2 \geq \frac{n_1+n_2+n_4+n_5}{4} \geq n_4 \\[12pt]
	\left( n_3, n_4, \frac{n_1+n_2+n_5}{3} \sbs \frac{n_1+n_2+n_5}{3} \sbs \frac{n_1+n_2+n_5}{3} \right) & \text{if } n_4 \geq \frac{n_1+n_2+n_4+n_5}{4} \geq n_5 \; ,
\end{cases}
}
\end{equation}
\end{widetext}
up to ordering decreasingly the entries within the first block of the spectrum. We notice that in this case there are three different regimes, as for the case $d=4$ and $\blam = (2,1,1)$. Actually, if we `pin' $n_3$ as an entry of the first block, then the structure of $\bb_{\max}$ is exactly the same as the one of Eq.~\eqref{eq:bbest_2_1_1}. This is not a coincidence: the case $d=5$, $\blam = (3,1,1)$ can be reduced to $d=4$, $\blam = (2,1,1)$ thanks to the following result.
\begin{prop} \label{prop:technical_3_1_1}
	For $\blam = (3,1,1)$, pick $\bb \in \bset^{\blam}$. Then there exists a Hermitian matrix $\gamma'$ with the block-diagonal form
	\begin{equation} \label{eq:1_4_block_diag_matrix}
		\gamma' = (n_3) \oplus  \gamma'_{2:5}
	\end{equation}
	with spectrum $\bn$ and $\blam$-block spectrum $\bb'$ which satisfies $\bb'\prec \bb$.  
\end{prop}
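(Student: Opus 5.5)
The plan is to replace $\gamma$ by a matrix that splits off the eigenvalue $n_3$ in the first block, at the cost of a single majorization-decreasing T-transform of the block spectrum. Let $\gamma$ be Hermitian with spectrum $\bn$ and $\blam$-block spectrum $\bb=(\alpha_1,\alpha_2,\alpha_3\sbs\beta_4\sbs\beta_5)$, with $\alpha_1\ge\alpha_2\ge\alpha_3$. Conjugating by a block-diagonal unitary, we may assume the upper-left $3\times3$ block is $\mathrm{diag}(\alpha_1,\alpha_2,\alpha_3)$. Applying the Cauchy interlacing inequalities (Thm.~\ref{thrm:interlacing}) twice, to pass from $5\times5$ to $3\times3$, gives
\begin{equation}
n_1\ge\alpha_1\ge n_3,\qquad n_2\ge\alpha_2\ge n_4,\qquad n_3\ge\alpha_3\ge n_5 .
\end{equation}
In particular $\alpha_3\le n_3\le\alpha_1$.

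The first step is to choose the candidate vector
\begin{equation}
\bb'=\bigl(n_3,\ \alpha_2,\ \alpha_1+\alpha_3-n_3\ \sbs\ \beta_4\ \sbs\ \beta_5\bigr).
\end{equation}
Both $n_3$ and $\alpha_1+\alpha_3-n_3$ lie in $[\alpha_3,\alpha_1]$ and have the same sum as $(\alpha_1,\alpha_3)$. Hence $(n_3,\alpha_1+\alpha_3-n_3)\prec(\alpha_1,\alpha_3)$, this being a T-transform. Since majorization is preserved when the remaining entries are kept fixed, this yields $\bb'\prec\bb$ immediately. If this particular choice turns out to be infeasible in the next step, the fallback is to keep $n_3$ in the first block and use the remaining freedom: redistribute the other two first-block entries, and if needed $\beta_4,\beta_5$, along further T-transforms staying inside the ranges allowed by interlacing.

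The second, and main, step is realizability. We must exhibit a $4\times4$ Hermitian $\gamma'_{2:5}$ on $\mathrm{span}(\be_2,\ldots,\be_5)$ with spectrum $\bm{m}=(n_1,n_2,n_4,n_5)$ whose $(2,1,1)$-block spectrum is $(\alpha_2,\alpha_1+\alpha_3-n_3\sbs\beta_4\sbs\beta_5)$. Then $\gamma'=(n_3)\oplus\gamma'_{2:5}$ has spectrum $\bn$ and block spectrum $\bb'$. I would check membership in $\bset^{(2,1,1)}_{\bm{m}}$ via its defining inequalities, namely Li--Poon for $d=4$, which is a small finite list. The three families are:
\begin{itemize}
\item interlacing of the $2\times2$ block with $\bm{m}$, i.e.\ $n_1\ge\max(\alpha_2,\alpha_1+\alpha_3-n_3)$ and $\min(\ldots)\ge n_5$, plus the middle two-step interlacing;
\item the Schur--Horn type conditions mixing the singletons $\beta_4,\beta_5$ with the block, e.g.\ $\beta_4,\beta_5\in[n_5,n_1]$ and $\beta_4+\beta_5\le n_1+n_2$;
\item the mixed inequalities, such as $\max(\ldots)+\beta_j\le n_1+n_2$ and the trace identity.
\end{itemize}
Each of these should follow from the corresponding inequality satisfied by the original $\bb\in\bset^{\blam}_{\bn}$, which is itself a $(3,1,1)$ Li--Poon inequality or a Schur--Horn bound for $\gamma$ after diagonalizing its first block. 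We would combine this with $\alpha_3\le n_3\le\alpha_1$ and $n_2\ge\alpha_2\ge n_4$.

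An alternative constructive route for this step, which I would try if the inequality bookkeeping becomes unwieldy, is the following. Take the $4$-dimensional subspace $W=\be_3^{\perp}\cap\ldots$, obtained by rotating within $V_1$ so that a unit vector of $V_1$ becomes orthogonal to the $n_3$-eigenvector of $\gamma$. Then use Lemma~\ref{lem:evalsubmatrix}-style dimension counting to build $\gamma'_{2:5}$ from a compression of $\gamma$, correcting its spectrum by a rank-one modification.

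The main obstacle is this realizability step. The first step is a one-line T-transform argument, whereas showing that the modified block spectrum is attainable with the isolated eigenvalue $n_3$ requires an actual characterization of $\bset^{(2,1,1)}$. I expect the most delicate inequalities to be those involving $\alpha_1+\alpha_3-n_3$ together with one of the singleton entries, since there the original inequality involved $\alpha_1$ alone.
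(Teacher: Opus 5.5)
Your T-transform step is correct. The realizability step, however, fails for your candidate, and that step is the whole content of the proposition. Take distinct NONs $n_1>\dots>n_5$ and $\gamma=\mathrm{diag}(n_2,n_3,n_4,n_1,n_5)$, so that $\bb=(n_2,n_3,n_4\sbs n_1\sbs n_5)$.

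Your $\bb'$ then requires a $4\times4$ Hermitian $\gamma'_{2:5}$ with the following properties:
\begin{itemize}
\item its spectrum is $(n_1,n_2,n_4,n_5)$;
\item $\gamma'_{44}=n_1$ and $\gamma'_{55}=n_5$;
\item its $2\times2$ block has spectrum $\{n_3,\,n_2+n_4-n_3\}$.
\end{itemize}
A diagonal entry equal to the largest (respectively smallest) eigenvalue forces $\be_4$ (respectively $\be_5$) to be an eigenvector. The $2\times2$ block would then have spectrum exactly $\{n_2,n_4\}$, which is a contradiction. Equivalently, the valid $(2,1,1)$ constraint $\beta_4+\min(\text{block})\le n_1+n_4$ is violated.

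So your expectation that every Li--Poon inequality for $\bset^{(2,1,1)}_{(n_1,n_2,n_4,n_5)}$ ``should follow'' from the corresponding one for $\bb$ is false. Deleting $n_3$ from the spectrum tightens the constraints on the remaining block entries, so leaving $\alpha_2$ untouched is not admissible in general. Your fallback and the rank-one-correction route are too unspecified to close this gap.

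The missing ingredient is the spectrum $\tilde{\bn}$ of the $4\times4$ compression $\gamma_{1:4}$, which interlaces $\bn$ and is interlaced by $\balpha$.
\begin{itemize}
\item Since $\tilde n_2\ge n_3\ge\tilde n_3$, the vector $\sigma=(\tilde n_1,\,s,\,\tilde n_4)$ with $s=\tilde n_2+\tilde n_3-n_3$ interlaces $(n_1,n_2,n_4,n_5)$.
\item Realizability is therefore a single use of the converse in Thm.~\ref{thrm:interlacing}. Any $3\times3$ matrix on $\mathrm{span}(\be_2,\be_3,\be_4)$ with spectrum $\sigma$ can be bordered to spectrum $(n_1,n_2,n_4,n_5)$, and the trace then fixes the $(5,5)$ entry to $\gamma_{55}$.
\end{itemize}
The paper takes an explicit $\gamma'_{2:4}$ with a $\mathrm{DFT}_2$ coupling of $\be_2$ and $\be_4$. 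It obtains $\bb'\prec\bb$ by first applying Result~\ref{res:bipartitions} for the $(3,1)$ bipartition to $\gamma_{1:4}$, and then performing a T-transform on $(\tilde n_2,\tilde n_3)$. Note that this route changes the singleton $\beta_4$ as well.

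Your instinct of keeping $\beta_4,\beta_5$ fixed can be rescued once $\sigma$ is available. Choose the block eigenvalues $x\in[s,\alpha_1]$ and $y\in[\alpha_3,s]$ with $x+y=\alpha_1+\alpha_2+\alpha_3-n_3$; this is possible precisely because $\balpha$ interlaces $\tilde{\bn}$. Then $(x,y)$ interlaces $\sigma$, the $(4,4)$ entry equals $\beta_4$, and $(n_3,x,y)\prec\balpha$. Without $\tilde{\bn}$, however, your proposal has no way to select these entries correctly.
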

\begin{proof}
Let $\gamma$ be a matrix with spectrum $\bn$ and $(3,1,1)$-block spectrum $\bb$. Let $\tilde{\bn}$ be the decreasingly ordered spectrum of $\gamma_{1:4}$. We define $\gamma'$ as a Hermitian matrix with the form
\begin{equation}
	\gamma' = 
	\begin{pmatrix}
		n_3 & 0 & 0 & 0 & 0 \\
		0 & \frac{\tilde{n}_1 + \tilde{n}_4}{2} & 0 & \frac{\tilde{n}_1 - \tilde{n}_4}{2} & a \\
		0 & 0 & \tilde{n}_2+\tilde{n}_3-n_3 & 0 & b \\
		0 & \frac{\tilde{n}_1 - \tilde{n}_4}{2} & 0 & \frac{\tilde{n}_1 + \tilde{n}_4}{2} & c \\
		0 & \bar{a} & \bar{b} & \bar{c} & d
	\end{pmatrix}
\end{equation}
for some $a, b, c \in \C, d \in \R$ such that $\specdown (\gamma') = \bn$, i.e., such that $\specdown (\gamma'_{2:5}) = (n_1, n_2, n_4, n_5)$. Such a matrix $\gamma'$ exists thanks to Thm.~\ref{thrm:interlacing}, because
\begin{equation}
	\specdown (\gamma'_{2:4}) = (\tilde{n}_1,  \tilde{n}_2+\tilde{n}_3-n_3, \tilde{n}_4)
\end{equation}
interlaces $(n_1, n_2, n_4, n_5)$, i.e., one can show that
\begin{equation}
	n_1 \geq \tilde{n}_1 \geq n_2 \geq  \tilde{n}_2+\tilde{n}_3-n_3 \geq n_4 \geq \tilde{n}_4 \geq n_5 \; .
\end{equation}
These inequalities follow from the fact that $\tilde{\bn} = \specdown(\gamma_{1:4})$, in turn, interlaces $\bn = \specdown(\gamma)$.

Moreover, $\tr[\gamma'] = \tr[\gamma]$ implies $\gamma'_{55} = d = \gamma_{55}$.

The $(3,1,1)$-block spectrum of $\gamma'$ is
\begin{eqnarray}
	\bb'= \left( n_3, \frac{\tilde{n}_1 + \tilde{n}_4}{2}, \tilde{n}_2+\tilde{n}_3-n_3 \sbs \frac{\tilde{n}_1 + \tilde{n}_4}{2} \sbs \gamma_{55} \right)  , 
\end{eqnarray}
up to reordering the entries in the first block decreasingly. We now show that $\bb \succ \bb'$.
First, notice that
\begin{equation} \label{eq:bb_majs_bb*}
	\bb \succ \left( \frac{\tilde{n}_1 + \tilde{n}_4}{2}, \tilde{n}_2, \tilde{n}_3 \sbs \frac{\tilde{n}_1 + \tilde{n}_4}{2} \sbs \gamma_{55} \right) = \bb^* \;.
\end{equation}
This follows from the fact that $b_5 = \gamma_{55} = b^*_5$, so that checking $\bb \succ \bb^*$ reduces to checking $\bb_{1:4} \succ \bb^*_{1:4}$, which is proven by applying our result of Sec.~\ref{subsec:bipartitions} to $\gamma_{1:4}$ with the bipartition $(3,1)$. Next, we observe that $\bb'$ can be obtained from $\bb^*$ by mixing the components $b^*_2 = \tilde{n}_2$ and $b^*_3 = \tilde{n}_3$ so as to obtain $b'_1 = n_3$ and $b'_3 = \tilde{n}_2 + \tilde{n}_3 - n_3$ (in view of the interlacing inequality $\tilde{n}_2 \geq n_3 \geq \tilde{n}_3$). Therefore
\begin{equation}
	\bb^* \succ \bb' \; .
\end{equation} 
Along with Eq.~\eqref{eq:bb_majs_bb*}, this implies $\bb \succ \bb'$. This concludes the proof.
\end{proof}
Prop.~\ref{prop:technical_3_1_1} allows us to restrict the maximization of $H$ over the generic block spectra $\bb \in \bset^{\blam}_{\bn}$ to one over block spectra of matrices of the type of Eq.~\eqref{eq:1_4_block_diag_matrix}. These are the same as the block spectra in $\bset^{(2,1,1)}_{(n_1,n_2,n_4,n_5)}$, up to appending the entry $n_3$. Eq.~\eqref{eq:bbest_3_1_1} then follows from Eq.~\eqref{eq:bbest_2_1_1}. More precisely, Prop.~\ref{prop:technical_3_1_1} and our discussion from Appendix.~\ref{app:d=4} imply that $\bb_{\max}$ as per Eq.~\eqref{eq:bbest_3_1_1} is in $\bset^{\blam}_{\bn}$ and is majorized by any other $\bb \in \bset^{\blam}_{\bn}$.

\subsection{\texorpdfstring{$\boldsymbol{\lambda}$}{lambda} $ = (2,2,1)$} \label{subsubsec:2,2,1}

Set $\blam = (2,2,1)$. Fix $n_1 \geq \dots \geq n_5$, let $\eta = \sum_{j=1}^5 n_j/5$. Then $\bb_{\max} = \bb_{\max}^{\blam}(\bn)$ is given by

\begin{widetext}
	\begin{equation} \label{eq:bbest_2_2_1_d5}
\boxed{
\bb_{\max} =
	\begin{cases}
	\left( \frac{n_1+n_4+n_5}{3}, \frac{n_2+n_3}{2} \sbs \frac{n_1+n_4+n_5}{3}, \frac{n_2+n_3}{2} \sbs \frac{n_1+n_4+n_5}{3} \right) & \text{if } n_1 \geq \eta \geq \frac{n_2+n_3}{2} \\
	\left( \frac{n_2+n_3+n_5}{3}, \frac{n_1+n_4}{2} \sbs \frac{n_2+n_3+n_5}{3}, \frac{n_1+n_4}{2} \sbs \frac{n_2+n_3+n_5}{3} \right) & \text{if } \frac{n_2+n_3}{2} \geq \eta \geq \frac{n_1+n_4}{2} \\
	(\eta, \eta \sbs \eta, \eta \sbs \eta) & \text{if } \frac{n_1+n_4}{2}, \frac{n_2 + n_3}{2} \geq \eta \geq \frac{n_3 + n_4}{2}, \frac{n_2+n_5}{2} \\
	\left( \frac{n_2+n_5}{2}, \frac{n_1+n_3+n_4}{3} \sbs \frac{n_2+n_5}{2}, \frac{n_1+n_3+n_4}{3} \sbs \frac{n_1+n_3+n_4}{3} \right) & \text{if } \frac{n_2+n_5}{2} \geq \eta \geq \frac{n_3+n_4}{2} \\
	\left( \frac{n_3+n_4}{2}, \frac{n_1+n_2+n_5}{3} \sbs \frac{n_3+n_4}{2}, \frac{n_1+n_2+n_5}{3} \sbs \frac{n_1+n_2+n_5}{3} \right) & \text{if } \frac{n_3+n_4}{2} \geq \eta \geq n_5 .
	\end{cases}
}
\end{equation}
\end{widetext}
We now prove that $\bb_{\max}$ is in $\bset^{\blam}_{\bn}$ and is majorized by any other $\bb \in \bset^{\blam}_{\bn}$. \\

(i) \textbf{Regime $n_1 \geq \eta \geq \frac{n_2+n_3}{2}$.} $\bb_{\max} \in \bset^{\blam}_{\bn}$ is proven by conjugating diag$(\bn)$ with two discrete Fourier transform matrices acting separately on span$(\be_1, \be_4, \be_5)$ and span$(\be_2, \be_3)$. Next, let $\gamma = \gamma^\dagger$ be a $5 \times 5$ matrix with spectrum $\bn$. Let $\bb = (\alpha_1, \alpha_2 \sbs \beta_1, \beta_2 \sbs \delta)$ be its block spectrum, with $\alpha_1 \geq \alpha_2$ and $\beta_1 \geq \beta_2$. Also, let $\tilde{n}_1 \geq \ldots \geq \tilde{n}_4$ be the eigenvalues of $\gamma_{1:4}$. They interlace $\bn$, by Thm.~\ref{thrm:interlacing}. Also, by Thm.~\ref{thrm:mikes_maj_rel} we get
\begin{equation}
	\balpha + \bbeta \succ (\tilde{n}_1 + \tilde{n}_4, \tilde{n}_2 + \tilde{n}_3 ) \; .
\end{equation}
Combining these two facts, we get
\begin{align} \label{eq:tech_221_A}
	\alpha_2 + \beta_2 &\leq \tilde{n}_2 + \tilde{n}_3 \leq n_2 + n_3 \\ \label{eq:tech_221_B}
	\alpha_2 + \beta_2 &\leq \tilde{n}_1 + \tilde{n}_4 \leq n_1 + n_4  \; .
\end{align}
From inequality~\eqref{eq:tech_221_A} we get $\alpha_2 + \beta_2  \leq 2 \frac{ n_2 + n_3}{2}$, which allows us to use Lemma~\ref{lemm:technical_maj_rel}, leading to $\bb \succ \bb_{\max}$. \\

(ii) \textbf{Regime $\frac{n_2+n_3}{2} \geq \eta \geq \frac{n_1+n_4}{2}$.} The inclusion $\bb_{\max} \in \bset^{\blam}_{\bn}$ is proven as for the first regime. The rest of the argument is completely analogous, too. We adopt the same notation. From inequality~\eqref{eq:tech_221_B} we get $\alpha_2 + \beta_2  \leq 2 \frac{ n_1 + n_4}{2}$, which allows us to use Lemma~\ref{lemm:technical_maj_rel}, leading to $\bb \succ \bb_{\max}$. \\

(iii) \textbf{Regime $\frac{n_1+n_4}{2}, \frac{n_2 + n_3}{2} \geq \eta \geq \frac{n_3 + n_4}{2}, \frac{n_2+n_5}{2}$.} For this regime, $\bb_{\max}$ is maximally mixed, so the task is to prove the inclusion relation $\bb_{\max} \in \bset^{\blam}_{\bn}$. $\bb \succ \bb_{\max}$ holds trivially true $\forall \; \bb \in \bset^{\blam}_{\bn}$.

The regime-defining inequalities imply the existence of $\mu_1 \geq \mu_2 \geq 0$ satisfying
\begin{align} \label{eq:tech1}
	n_2 - \eta , \eta - n_4 &\leq \mu_1 \leq n_1 - \eta, \eta - n_5 \\ \label{eq:tech2}
	n_3 - \eta , \eta - n_3 &\leq \mu_2 \leq n_2 - \eta, \eta - n_4 \; .
\end{align}
Consider the Hermitian matrix
\begin{equation}
	\tilde{\gamma} =
\begin{pmatrix}
	\begin{array}{cc|cc}
		\eta & 0 & \mu_1 & 0 \\			
		0 & \eta & 0 & \mu_2 \\
		\hline
		\mu_1 & 0   & \eta & 0  \\
		0 & \mu_2 & 0 & \eta \\
	\end{array} 
\end{pmatrix} \; .
\end{equation}
The eigenvalues of $\tilde{\gamma}$ are $\eta \pm \mu_1$, $\eta \pm \mu_2$. By inequalities~\eqref{eq:tech1} and~~\eqref{eq:tech2}, they interlace $\bn$. Then, by Thm.~\ref{thrm:interlacing}, there exists a $5 \times 5$ matrix $\gamma = \gamma^\dagger$ with spectrum $\bn$ s.t. $\gamma_{1:4} = \tilde{\gamma}$. Also, $\gamma_{55} = \tr[\gamma] - \tr[\tilde{\gamma}] = \eta$, so $\gamma$ has block spectrum $(\eta, \eta \sbs \eta, \eta \sbs \eta) = \bb_{\max}$, which is thus an element of $\bset^{\blam}_{\bn}$. \\

(iv) \textbf{Regime $\frac{n_2+n_5}{2} \geq \eta \geq \frac{n_3+n_4}{2}$.} One can map this regime to the second one via the \bt particle-hole' change of variable $\bn \rightarrow \bn' = \mathbf{1} - \bn $. \\

(v) \textbf{Regime $\frac{n_3+n_4}{2} \geq \eta \geq n_5$.} One can map this regime to the first one via the \bt particle-hole' change of variable $\bn \rightarrow \bn' = \mathbf{1} - \bn $.

\subsection{\texorpdfstring{$\boldsymbol{\lambda}$}{lambda} $ = (2,1,1,1)$}

Set $\blam = (2,1,1,1)$. Fix $n_1 \geq \dots \geq n_5$, let $\eta = \sum_{j=1}^5 n_j/5$. Then $\bb_{\max} = \bb_{\max}^{\blam}(\bn)$ is given by

\begin{widetext}
	\begin{equation} \label{eq:bbest_2_1_1_1}
\boxed{
\bb_{\max} =
\begin{cases}
\left( \frac{n_1+n_3+n_4+n_5}{4}, n_2 \sbs \frac{n_1+n_3+n_4+n_5}{4} \sbs \frac{n_1+n_3+n_4+n_5}{4} \sbs \frac{n_1+n_3+n_4+n_5}{4} \right) & \text{if } n_1 \geq \eta \geq n_2 \\
(\eta, \eta \sbs \eta \sbs \eta \sbs \eta) & \text{if } n_2 \geq \eta \geq n_4 \\
\left( n_4, \frac{n_1+n_2+n_3+n_5}{4} \sbs \frac{n_1+n_2+n_3+n_5}{4} \sbs \frac{n_1+n_2+n_3+n_5}{4} \sbs \frac{n_1+n_2+n_3+n_5}{4} \right) & \text{if } n_4 \geq \eta \geq n_5 .
\end{cases}
}
\end{equation}
\end{widetext}

We now prove that $\bb_{\max}$ is in $\bset^{\blam}_{\bn}$ and is majorized by any other $\bb \in \bset^{\blam}_{\bn}$. Both Eq.~\eqref{eq:bbest_2_1_1_1} and the derivation follow closely Appendix~\ref{app:d=4}. \\

(i) \textbf{Regime $n_1 \geq \eta \geq n_2$.} The inclusion relation $\bb_{\max} \in \bset_{\bn}^{\blam}$ is proven by conjugating diag$(\bn)$ with a discrete Fourier transform unitary acting on span$(\be_1, \be_3, \be_4, \be_5)$. Next, let $\bb \in \bset^{\blam}_{\bn}$, so that $b_2$ is the second component of the first two-dimensional block of $\bb$. By Thm.~\ref{thrm:interlacing}, $b_2 \leq n_2$. We can thus use Lemma~\ref{lemm:technical_maj_rel} and conclude that $\bb \succ \bb_{\max}$. \\

(ii) \textbf{Regime $n_2 \geq \eta \geq n_4$.} Since $\bb_{\max}$ is maximally mixed, the task is to prove the inclusion relation $\bb_{\max} \in \bset^{\blam}_{\bn}$, while $\bb \succ \bb_{\max}$ is evident $\forall \; \bb \in \bset^{\blam}_{\bn}$. We can construct a Hermitian matrix with spectrum $\bn$ and a maximally mixed $\blam$-block spectrum as follows. We start with diag$(\bn)$ and subsequently conjugate it with three unitaries, diag$(\bn) \rightarrow \gamma \rightarrow \gamma' \rightarrow \gamma''$, as we did in Appendix~\ref{app:d=4}.
\begin{enumerate}
	\item the first unitary acts only on span($\be_1, \be_4$) and ensures $\gamma_{11} = \eta$---this can be achieved because $n_1 \geq \eta \geq n_4$;
	\item the second unitary acts only on span($\be_2, \be_5$) and ensures $\gamma'_{22} = \eta$---this can be achieved because $n_2 \geq \eta \geq n_5$;
	\item the third unitary acts only on span($\be_3, \be_4, \be_5$) and ensures $\gamma''_{33} = \gamma''_{44} = \eta$.
\end{enumerate}
Note that the unitaries are not uniquely defined. It is easy to check that this construction ensures $\gamma''_{12}=\gamma''_{21}=0$. Therefore the $\blam$-block spectrum of $\gamma''$ is the same as its diagonal, namely $(\eta, \eta \sbs \eta \sbs \eta \sbs \eta ) = \bb_{\max}$, as desired.\\

(iii) \textbf{Regime $n_4 \geq \eta \geq n_5$.} One can map this regime to the first one via the \bt particle-hole' change of variable $\bn \rightarrow \bn' = \mathbf{1} - \bn $.

\section{Slater determinants and multipartitions} \label{app:SDs}
In this appendix we focus on SDs, as in Sec.~\ref{sec:main_problem}. We consider an arbitrary space dimension $d$ and multipartition $\blam$. We provide a systematic simplification of the problem of maximal orbital total correlation. This is allowed by the simple structure of the spectrum
\begin{equation}
	\bn = (\underbrace{1,\ldots, 1}_{N^P \text{ times}}, \underbrace{0,\ldots, 0}_{N^H \text{ times}})
\end{equation}
of the 1RDM of any SD with $N^P$ particles and $N^H = d - N^P$ holes. Equivalently, the possible $\gamma$ matrices appearing in Eqs.~\eqref{eq:Iblam_max_rewritten} and~\eqref{eq:def_bset} are the rank-$N^P$ projectors on $\C^d$, which $1:1$ correspond to $N^P$-dimensional subspaces of $\C^d$. So instead of optimizing over $U(d)$ unitaries (used for conjugating diag$(\bn) \rightarrow \gamma$) we can optimize over the much lower-dimensional Grassmanian $\text{Gr}_{N^P}(\C^d)$.

We first introduce some new notation specific to the case of SD spectra. We group all relevant parameters in the 3-tuple $(N^P,N^H,\blam)$, where $\blam$ is a partition of $d = N^P + N^H$. It fixes a set of projectors $\{ \Pi_{\mu} \}_{\mu=1}^{|\blam|}$, whose ranges are the $\lambda_{\mu}$-dimensional subspaces $ V_{\mu}=\text{span}(|\lambda_1+...+\lambda_{\mu-1}+1 \rangle, ..., | \lambda_1+...+\lambda_{\mu-1} + \lambda_{\mu}  \rangle)$. Using the projectors $\Pi_\mu$, we introduce the block-diagonal compression of a $d \times d$ matrix $\gamma$,
\begin{equation} \label{eq:gammablock}
	\gamma^B := \sum_{\mu=1}^{|\blam|} \Pi_{\mu} \gamma \Pi_{\mu}  \; .
\end{equation}
Also, we conveniently adapt the notation and treat the block spectrum as a function of $\gamma$ by setting
\begin{equation}
	\bb^{\blam}(\gamma) := (\specdown(B^\gamma_1) \sbs \ldots \sbs \specdown(B^\gamma_{|\blam|}) ) \; ,
\end{equation}
where $B_\mu^{\gamma}$ is the $\mu$th diagonal block of $\gamma$, so that
\begin{equation}
	\bb^{\blam}(\gamma) = \specdown(\gamma^B) \;\;\; \text{   up to a permutation.}
\end{equation}
Given an arbitrary SD state $\rho$ of $N^P$ particles and $N^H$ holes, we set 
\begin{equation}
\Tilde{C}^{\blam}_{\max}(N^P,N^H):=C^{\blam}_{\max}(\rho) \til 
\end{equation}
1P unitary invariance of $C^{\blam}_{\max}$ ensures that the right-hand side does not depend on the specific SD state $\rho$. Using the fact that $n_j = 0, 1$ for SD states, Eq.~\eqref{eq:Iblam_max_rewritten} yields
\begin{equation}
\Tilde{C}^{\blam}_{\max}(N^P,N^H) = C^{\blam}_{\max}(\rho) =  \max_{\gamma}  H(\bb^{\blam}(\gamma))\til,
\end{equation}
where the maximization is over all rank-$N^P$ projectors $\gamma$. In view of Schur-concavity of the objective function, the idea to perform this maximization is to try to get the block spectrum $\bb^{\blam}(\gamma)$ to be as mixed as possible. This is the same approach used for generic free states. But now the SD restriction, i.e., $n_i\in\{0,1\}$, is expected to bring about simplifications. The basic questions are:
\begin{itemize}
	\item (Q1) Are there any eigenvalues in $\bb^{\blam}(\gamma)$ which are \bt trivially' $0$ or $1$, due to the SD structure? How many?
	\item (Q2) Is it possible to get the non-trivial eigenvalues to be all equal, and thus have $C^{\blam}_{\max}$ hit an upper bound of the form of Eq.~\eqref{eq:basic_bounds}?
\end{itemize}

This appendix is organized as follows. In Sec.~\ref{SDexamples} we show simple examples. In Sec.~\ref{trivialandsmallblocks}, in view of answering (Q1), we rigorously define the number $T(N^P,N^H,\blam)$ of trivial eigenvalues. There we also define a \bt small-blocks condition' on 3-tuples $(N^P,N^H,\blam)$. This condition is expected to select out those 3-tuples with no trivial eigenvalues, $T(N^P,N^H,\blam)=0$. In Sec.~\ref{trivialeigen}, Prop.~\ref{prop:trivialeigen} gives a simple formula for the number $T(N^P, N^H, \blam)$ of trivial eigenvalues; and in Cor.~\ref{coro} it is proved that $T(N^P, N^H, \blam)=0$ if and only if $(N^P,N^H,\blam)$ fulfills the small-blocks condition, indeed. In Sec.~\ref{smallblocksreduction} we present Prop.~\ref{prop:reduction}, which allows us to reduce the case of a generic 3-tuple $(N^P,N^H,\blam)$ to the case of a simpler 3-tuple $(N'^P,N'^H,\blam')$ which fulfills the small-blocks condition, i.e., which exhibits no trivial eigenvalues. Sec.~\ref{subsec:new3tuple} contains a technical calculation. In Sec.~\ref{nontrivial} we then focus on the small-blocks scenario only, and provide a counterexample where the spectrum $\bb^{\blam}(\gamma)$ cannot be made maximally mixed, even though there are no trivial eigenvalues $0$ or $1$. This answers (Q2) negatively, and shows that the function $\Tilde{C}^{\blam}_{\max}$ is interesting (non-trivial) even within the small-blocks scenario.

\subsection{Examples} \label{SDexamples}

In this paragraph we compute $\Tilde{C}^{\blam}_{\max}(N^P,N^H)$ for some simple 3-tuples $(N^P, N^H, \blam)$.\\

\nit (i) For any $N^P$ and $N^H$ and given the trivial, coarsest partition $\blam=(\lambda_1)=d=N^P+N^H$, there is no freedom: $\Pi_1 = \mathbb{I}$ and $\bb^{\blam}(\gamma)=\specdown(\gamma)$ is made up of $0$'s and $1$'s, which we call trivial eigenvalues. Then $\Tilde{C}^{\blam}_{\max}(N^P,N^H)=0$. \\

\nit (ii) For $N^P=1, N^H=2, \blam =(2,1)$, $\gamma^B$ has two blocks. The first block, $\Pi_1 \gamma \Pi_1$, is easily understood to have $\text{rank}(\Pi_1 \gamma \Pi_1) \leq \text{rank}(\gamma) = 1$. Thus $\Pi_{1}\gamma\Pi_{1} \big|_{V_1} = \Pi_{1}\gamma \big|_{V_1} $ has a trivial eigenvalue $0$, independent of how $\gamma$ is chosen. The second eigenvalue of the first block and (the eigenvalue of) the second block can be chosen to be $\frac{1}{2}$ and $\frac{1}{2}$. To achieve that, set, e.g., $\gamma = |\phi \rangle \langle \phi |$ with $| \phi \rangle = (|2\rangle + | 3 \rangle )/\sqrt{2}$. Then $\Tilde{C}^{\blam}_{\max}(N^P,N^H)=2~h(1/2)=2 \log 2$. Similar reasoning applies for $N^P=2,N^H=1, \blam =(2,1)$.\\

\nit (iii) For any $N^P \geq 1, N^H \geq 1$ and given the trivial partition $\blam =(1,...,1)$ we know from the case of general spectra that $\bb^{\blam}(\gamma)$ can be made maximally mixed. In particular there are no trivial $0, 1$ eigenvalues, and the upper bound Eq.~\eqref{eq:basic_bounds} is hit, i.e., $\Tilde{C}^{\blam}_{\max}(N^P,N^H)= d \cdot h(N^P/d)=d \cdot h(N^H/d)$.\\

\nit (iv) For $N^P=2,N^H=2, \blam =(2,2)$ we can put the two particles (or the two holes) in $| \phi \rangle = (| 1\rangle + | 3 \rangle)/\sqrt{2}$ and in $| \phi' \rangle = (| 2 \rangle + | 4 \rangle)/\sqrt{2}$. One then gets $\gamma = | \phi \rangle \langle \phi |+ | \phi' \rangle \langle \phi' |$ and $\gamma^B=\mathbb{I}/2$, i.e., a $\bb^{\blam}(\gamma)$ which is maximally mixed. We have no trivial eigenvalues, and the upper bound of Eq.~\eqref{eq:basic_bounds} is again hit.\\

\nit (v) For $N^P=2,N^H=2, \blam =(2,1,1)$, notice that $\blam$ is obtained from the partition in (iv) by refining, so that the same construction and results as in (iv) hold.

\subsection{Trivial eigenvalues and the small-blocks condition} \label{trivialandsmallblocks}

The examples above can be split into two groups:
\begin{itemize}
	\item In (i) and (ii) there are trivial eigenvalues $0,1$; therefore the bound Eq.~\eqref{eq:basic_bounds} is not reached;
	\item In (iii)-(v) there are no trivial eigenvalues $0,1$; therefore the bound Eq.~\eqref{eq:basic_bounds} can in principle be reached; and it is in fact reached.
\end{itemize}
In (i) and (ii) there are two reasons why there are trivial eigenvalues $0,1$: the high \emph{degeneracy} of the eigenvalues $0,1$, and the fact that the first block $\Pi_{1}\gamma\Pi_{1}$ is \emph{too large}, i.e., the dimension dim($V_1$) of the first subspace $V_1$ is too high. For example, the hand-wavy idea behind case (ii) is the following: $ V_2$ is one dimensional and can thus host at most one of the $N^H = 2$ holes. As a consequence, the big subspace $V_1$ must host one entire hole, to which a trivial eigenvalue $0$ of $\Pi_{1}\gamma \big|_{V_1} $ corresponds. More precisely $\lambda_1 > N^P =$ rank$(\gamma)$, so that $\Pi_{1}\gamma \big|_{V_1} $ cannot have full rank, i.e., must have trivial eigenvalues $0$ (see Lemma~\ref{lem:evalsubmatrix}). Similarly, there must be trivial eigenvalues $1$ whenever $\lambda_1 > N^H = $ rank$(\mathbb{I} - \gamma)$. Instead, in (iii)-(v) we have $\lambda_1 \leq N^P,N^H$, so that no trivial eigenvalues are expected.

In this explanation, the first key concept is that of trivial eigenvalues, which is captured in the following definition.

\begin{defn} \label{trivialeigendef}
The \emph{number of trivial eigenvalues} associated to a 3-tuple $(N^P,N^H,\blam)$ is

\begin{align} \nonumber
	T(N^P,N^H,\blam) &:= \min_{\gamma} \left[ \text{\emph{dim(ker $\gamma^B$)}} + \text{\emph{dim(fix $\gamma^B$)}} \right]
	\\ \label{trivialeigendefformula}
	&= \min_{\gamma} \left[ \text{\emph{\# of 0's and 1's in $\bb^{\blam}(\gamma)$}} \right] ~,
\end{align}
where the minimization is over rank-$N^P$ projectors $\gamma$ on $\H_1 \cong \mathbb{C}^{N^P+N^H} $ and $\gamma^B$ is given by Eq.~\eqref{eq:gammablock}.
\end{defn}
\nit In the examples above we have implicitly computed $T$: in (i) $T=d$, in (ii) $T=1$ and in (iii)-(v) $T=0$. 

The second key concept is the relation ($>,=$ or $<$) between the $\lambda_{\mu}$'s and $N^P, N^H$.

\begin{defn} \label{mnumbersmallblocksdef}
For $N^P,N^H,\lambda_{\mu} \in \mathbb{N}$ we define the coefficients
\begin{align} \nonumber
	m^{H}(N^P,\lambda_{\mu}) &:= \max\{ 0, \lambda_{\mu}-N^{P} \}\\
	m^{P}(N^H,\lambda_{\mu}) &:= \max\{ 0, \lambda_{\mu}-N^{H} \} ~.
\end{align}
%where $N^H=d-N^P$ and dependence on $(N^P,N^H,\blam)$ is implicit.\\
Also, we say that a 3-tuple $(N^P,N^H,\blam)$ satisfies the \emph{small-blocks condition} iff the corresponding coefficients $m^{H}(N^{P},\lambda_{\mu}),m^{P}(N^{H},\lambda_{\mu})$ all vanish, i.e., $\forall \mu =1,...,|\blam|$
\begin{equation}
\lambda_{\mu} \leq N^{P},N^H ~.
\end{equation}
In view of $\lambda_1 \geq ... \geq \lambda_{|\blam|}$, the small-blocks condition is equivalent to $\lambda_{1} \leq N^P,N^H$.
\end{defn}
\nit The coefficient $m^{H/P}(N^{P/H},\lambda_{\mu})$ has the meaning of minimal number of holes/particles that a $\lambda_{\mu}$-dimensional subspace of the 1P space must contain if one considers an SD state with $N^{P/H}$ particles/holes. This physical interpretation follows from Lemma~\ref{lem:evalsubmatrix}. Regarding the examples above: in (i) $m^{H/P}(N^{P/H},\lambda_{1})=N^{H/P}$, in (ii) $m^{H}(N^{P},\lambda_{1})=1$ and the other three $m^{P/H}$ coefficients vanish, and (iii)-(v) satisfy the small-blocks condition.

%In~\ref{smallblocksreduction} we show how one can \textbf{reduce} the case of a general 3-tuple $(d, N^P,\blam)$ to the case of a simpler 3-tuple $(d', N'^P,\blam)$ which fulfills the small-blocks condition. In~\ref{nontrivial} we show an example where the small-blocks condition is fulfilled and $\bb^{\blam}(\gamma)$ can actually be chosen to contain no $0$'s nor $1$'s at all, but nonetheless the upper bound~\ref{upperbound} is \textbf{not} hit.

\subsection{\bluecheck Formula for the number of trivial eigenvalues} \label{trivialeigen}

The number of trivial eigenvalues $T(N^P,N^H,\blam)$ defined in Def.~\ref{trivialeigendef} can be expressed with a simple formula in terms of the $m_{\mu}^{P/H}$ coefficients introduced in Def.~\ref{mnumbersmallblocksdef}:

\begin{prop} \label{prop:trivialeigen}

For any 3-tuple $(N^P,N^H,\blam)$, the number of trivial eigenvalues is
\begin{equation} \label{trivialeigenformula}
	T(N^P,N^H,\blam) = \sum_{\mu=1}^{|\blam|} \left [ m^H(N^P,\lambda_{\mu}) + m^P(N^H,\lambda_{\mu}) \right ] ~.
\end{equation}

\end{prop}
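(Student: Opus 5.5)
The proof splits into a lower bound and a matching construction. For the lower bound, fix any rank-$N^P$ projector $\gamma$ and any block $\mu$. The $1$-eigenspace of $\gamma$ has dimension $N^P$, and its $0$-eigenspace has dimension $N^H$. Applying Lemma~\ref{lem:evalsubmatrix} to the compression of $\gamma$ onto $V_\mu$, with $V'=V_\mu$ and $d'=\lambda_\mu$, gives at least $\max\{0,N^H-(d-\lambda_\mu)\}=\max\{0,\lambda_\mu-N^P\}=m^H(N^P,\lambda_\mu)$ eigenvalues $0$ in $B^\gamma_\mu$. In the same way, the block has at least $m^P(N^H,\lambda_\mu)$ eigenvalues $1$. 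Since $0\neq 1$, these counts add within each block, and since the blocks are distinct they add across $\mu$. Hence, for every $\gamma$, the number of $0$'s and $1$'s in $\bb^{\blam}(\gamma)$ is at least the right-hand side of Eq.~\eqref{trivialeigenformula}. Note that $m^H$ and $m^P$ cannot both be positive for one block unless $\lambda_\mu>\max\{N^P,N^H\}$, which is allowed; it only requires $\lambda_\mu\ge m^H+m^P$, which holds because $\lambda_\mu - N^P + \lambda_\mu - N^H \le \lambda_\mu$ when $\lambda_\mu\le d$.

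For the matching upper bound I would build an explicit $\gamma$ that attains the count. In each $V_\mu$, pick $m^P_\mu$ basis vectors that will be fully occupied and $m^H_\mu$ basis vectors that will be empty. Discarding these leaves reduced blocks of size $\lambda'_\mu=\lambda_\mu-m^P_\mu-m^H_\mu$. The remaining particles number $N'^P=N^P-\sum_\mu m^P_\mu$ and the remaining holes $N'^H=N^H-\sum_\mu m^H_\mu$, with $d'=N'^P+N'^H=\sum_\mu\lambda'_\mu$. The key check is that the reduced tuple satisfies the small-blocks condition, $\lambda'_\mu\le N'^P,N'^H$. At most one block can have $m^H_\mu>0$ or $m^P_\mu>0$ unless it is the big block: if $\lambda_\mu>N^P$ and $\lambda_\nu>N^P$ with $\mu\ne\nu$, then $d>2N^P$, which is possible, so this needs care. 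I would verify it by a short case analysis on $\lambda_1$ versus $N^P,N^H$, using $\lambda'_\mu\le\min\{\lambda_\mu,N^P,N^H\}$ together with the bookkeeping identities. Presumably this is the content of Prop.~\ref{prop:reduction} and Sec.~\ref{subsec:new3tuple}, which I would invoke or reproduce.

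It then remains to show that a small-blocks tuple admits a rank-$N'^P$ projector whose block spectrum contains no $0$ or $1$. Here I would use a genericity argument. For a generic $\gamma$ in the Grassmannian, every compression $\Pi_\mu\gamma\Pi_\mu|_{V_\mu}$ has full rank exactly when $\lambda_\mu\le N'^P$, because $\mathrm{im}\,\gamma$ meets $V_\mu^\perp$ trivially when $\dim V_\mu^\perp+N'^P\le d'$, that is, when $\lambda_\mu\ge N'^P$... The dimension count to state is the correct one: $\ker(\Pi_\mu\gamma\Pi_\mu|_{V_\mu})=V_\mu\cap\ker\gamma$, which is generically trivial iff $\lambda_\mu+N'^H\le d'$, i.e.\ $\lambda_\mu\le N'^P$. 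Symmetrically, $V_\mu\cap\mathrm{im}\,\gamma$ is generically trivial iff $\lambda_\mu\le N'^H$. Each of these finitely many conditions defines a Zariski-open nonempty set, so a common $\gamma$ exists. Embedding this $\gamma$ together with the trivial occupied and empty vectors gives a projector whose number of trivial eigenvalues equals the formula.

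The main obstacle is ensuring the reduction is consistent: the total number of designated occupied vectors must not exceed $N^P$, the number of empty vectors must not exceed $N^H$, and the reduced tuple must satisfy the small-blocks condition. I expect this to follow from the observation that $m^H_\mu>0$ forces $\lambda_\mu>N^P$, so $\sum_\nu\lambda_\nu>N^P+\ldots$ limits such blocks. This is the inequality I would check first.
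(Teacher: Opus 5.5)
Your argument is correct in substance, but it reaches the upper bound by a different route from the paper.

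\textbf{What is the same.} The lower bound is exactly the paper's: Lemma~\ref{lem:evalsubmatrix} applied block by block. For the upper bound you also follow the paper's reduction to the small-blocks tuple $(N'^P,N'^H,\blam')$.

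\textbf{What differs.} The paper builds $\gamma'$ explicitly. It splits the reference basis into the classes $j \bmod \lambda_1$, gives each class at least one particle and one hole, and uses DFT (maximally unbiased) bases, so that every block becomes diagonal with entries $n^P_k/\dim S_k\in(0,1)$. You replace this with a Schubert-genericity argument.

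\textbf{What each buys.} The paper's route is constructive and gives the block spectrum in closed form. However, it genuinely needs the small-blocks condition, to have $\dim S_k\ge 2$ and a particle and a hole in every $S_k$. That is why it needs the reduction and the bookkeeping of Sec.~\ref{subsec:new3tuple}. Your route is non-constructive but stronger: the minimum defining $T$ is attained on an open dense set of projectors.

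\textbf{The reduction is unnecessary in your route.} You already have the identities $\ker(\Pi_\mu\gamma\Pi_\mu|_{V_\mu})=V_\mu\cap\ker\gamma$ and $\mathrm{fix}(\Pi_\mu\gamma\Pi_\mu|_{V_\mu})=V_\mu\cap\mathrm{im}\,\gamma$. For generic $W=\mathrm{im}\,\gamma\in\mathrm{Gr}_{N^P}(\C^d)$, one has $\dim(V_\mu\cap W)=\max\{0,\lambda_\mu-N^H\}=m^P_\mu$ and $\dim(V_\mu\cap W^\perp)=\max\{0,\lambda_\mu-N^P\}=m^H_\mu$, for all $\mu$ simultaneously. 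This gives the upper bound directly, degenerate cases included. So the step you flagged as the main obstacle, and left partly unverified, can simply be dropped. Your side remark about how many blocks can have $m_\mu>0$ is muddled, and this removes the need for it. If you keep the reduction, the consistency check is easy: $M^P\le\sum_{\lambda_\mu>N^H}\lambda_\mu-N^H\le N^P$ whenever some $\lambda_\mu>N^H$, and symmetrically for $M^H$. The small-blocks property of the reduced tuple is the three-case analysis of Sec.~\ref{subsec:new3tuple}, which does not depend on this proposition.

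\textbf{One technical fix.} As written, the condition on $\ker\gamma$ is not a Zariski-open condition in the same Grassmannian as the condition on $\mathrm{im}\,\gamma$, because $W\mapsto W^\perp$ is antiholomorphic. Use $\dim(V_\mu\cap W^\perp)=\lambda_\mu-N^P+\dim(V_\mu^\perp\cap W)$ to rewrite it as $\dim(V_\mu^\perp\cap W)=\max\{0,N^P-\lambda_\mu\}$. When $\lambda_\mu\le N^P$ this is simply $V_\mu^\perp+W=\C^{d}$. This is an open Schubert condition on $W$. Then finitely many nonempty Zariski-open subsets of the irreducible $\mathrm{Gr}_{N^P}(\C^d)$ have a common point, as you need.
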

\begin{proof}
%'$\geq$': trivial direction\\
%'$\leq$': use reduction to small-blocks 3-tuples + hardcore bookkeeping within small-blocks scenario\\ \\
In the following, $m^{H/P}_{\mu}$ is short for $m^{H/P}(N^{P/H},\lambda_{\mu})$; their sum over $\mu$, appearing in the RHS of Eq.~\eqref{trivialeigenformula}, will be denoted by $M^{H/P}=\sum_{\mu=1}^{|\blam|}m^{H/P}_{\mu}$. If $N^P=0$, then the only rank-$0$ projector is $\gamma=0$, so that $\bb^{\blam}(\gamma)=(0,...,0)$; also, $m^H_{\mu}=\lambda_{\mu}$. Then $T(N^P,N^H,\blam)=d=M^H$. Similar reasoning applies for $N^H=0$. For the case $\blam =(\lambda_1)=(d)$ the result is straightforward, too.

Suppose $N^P,N^H \geq 1$ and $\lambda_1 < d$. First we prove $T(N^P,N^H,\blam) \geq M^H + M^P$. This amounts to proving that, for every rank-$N^P$ projector $\gamma$, the number of $0$'s and $1$'s in $\bb^{\blam}(\gamma)$ is at least $M^H + M^P$. This follows from the fact that $\forall \mu=1,...,|\blam|$ the block $\Pi_{\mu}\gamma\Pi_{\mu}\big|_{V_{\mu}}$ has at least $m^H_{\mu}$ eigenvalues $0$ and $m^P_{\mu}$ eigenvalues $1$, according to Lemma~\ref{lem:evalsubmatrix}. 

Secondly, we prove $T(N^P,N^H,\blam) \leq M^H + M^P$. To do that, we construct a rank-$N^P$ projector $\gamma$ s.t. $\bb^{\blam}(\gamma)$ has only $M^H$ eigenvalues $0$ and only $M^P$ eigenvalues $1$.

According to the first part of the present proof, for any $\gamma$ we can choose $\forall \mu=1,...,|\blam|$ (in a non-univocal way) an orthogonal decomposition of $V_\mu$ of the form
\begin{equation}
V_{\mu} = V^{H}_{\mu} \oplus V^{P}_{\mu}\oplus V'_{\mu}~,
\end{equation}
where $V^{H}_{\mu} \subset \text{ker($\Pi_{\mu}\gamma\big |_{V_\mu}$)}$, $V^{P}_{\mu}\subset \text{fix($\Pi_{\mu}\gamma\big |_{V_\mu}$)}$, dim($V^{H}_{\mu}$)$=m^H_{\mu}$ and dim($V^{P}_{\mu}$)$=m^P_{\mu}$. Plus, up to applying a block-diagonal unitary which leaves $\bb^{\blam}(\gamma)$ unchanged, we can assume $V^{H}_{1}=\text{span}(|1\rangle,...,|m^H_{1}\rangle)$, $V^{P}_{1}=\text{span}(|m^H_{1} + 1\rangle,...,|m^H_{1}+m^P_{1}\rangle)$ etc, i.e., we may assume the subspaces $V^{H}_{\mu}, V^{P}_{\mu} $ and $ V'_{\mu}$ are constructed from the reference basis as follows:
\begin{align} \nonumber
	&\{| j \rangle \}_{j=1}^d = \{ \underbrace{| 1 \rangle, ..., | m^H_{1} \rangle,}_{\text{basis of } V^{H}_{1}}  \underbrace{| m^H_{1} + 1  \rangle, ..., | m^H_{1}+m^P_{1} \rangle,}_{\text{basis of } V^{P}_{1}} \\ \label{eq:basis_splitting}
	&\underbrace{| m^H_{1}+m^P_{1} +1 \rangle,...,|\lambda_{1}\rangle}_{\text{basis of } V'_{1}},
	\underbrace{| \lambda_{1} + 1 \rangle, ..., | \lambda_{1} + m^H_{2} \rangle,}_{\text{basis of } V^{H}_{2}} ...\}~ .
\end{align}
Next, an important observation is that any eigenvector $|u \rangle \in V_{\mu}$ of $\Pi_{\mu}\gamma|_{V_\mu}$ with eigenvalue $\lambda =0$ or $1$ is also an eigenvector of $\gamma$, with the same eigenvalue. In fact, using $\Pi_{\mu} |u\rangle=|u \rangle $ and $\gamma^2=\gamma=\gamma^{\dagger}$ we get
\begin{equation}
	\lVert \gamma | u \rangle \rVert^2 = \langle u | \gamma | u \rangle = \langle u | ( \Pi_{\mu}\gamma|_{V_\mu} ) | u \rangle = \lambda \lVert | u \rangle \rVert^2 ~.
\end{equation}
From that, using $\lambda =0$ or $1$, and using $0 \leq\gamma\leq \mathbb{I}$, we get $\gamma |u \rangle = \lambda |u \rangle$. Notice that the last passages are based on features which are specific to the 1RDM spectra of SD states.

It follows that $V^{H}_{\mu} \subset \text{ker($\gamma$)}$ and $V^{P}_{\mu} \subset \text{fix($\gamma$)}$. That is, the subspaces $V^{H}_{\mu}$ and $V^{P}_{\mu}$ are made up of eigenvectors of $\gamma$ with trivial eigenvalues $0,1$ (respectively). We can then write a spectral representation of $\gamma$ in the following \emph{normal form}:
\begin{equation} \label{normalgamma}
	\gamma = 0 \cdot \mathbb{P}^H + 1 \cdot \mathbb{P}^P +\mathbb{P}' \, \gamma \, \mathbb{P}' ~.
\end{equation}
Here,
\begin{itemize}
	\item $\mathbb{P}^{H/P}$ and $\mathbb{P}'$ are three projectors with orthogonal ranges, $\bigoplus_{\mu=1}^{|\blam|} V_{\mu}^{H/P}$ and $\bigoplus_{\mu=1}^{|\blam|} V_{\mu}'$ respectively; they add up to $\mathbb{I}$. Notice that the ranges $\bigoplus_{\mu=1}^{|\blam|} V_{\mu}^{H/P}$ have dimension $M^{H/P}$.
	\item In particular we name $\H_1':=\bigoplus_{\mu=1}^{|\blam|} V_{\mu}'$ the range of $\mathbb{P}'$. It has dimension $d' = N'^P+N'^H$, where $N'^P := N^P - M^P $ and $N'^H := N^H - M^H$ are the number of particles/holes which are left after putting $M^{P/H}$ of them in the $V_{\mu}^{P/H}$'s.
	\item $\mathbb{P}' \, \gamma  \, \mathbb{P}'$ should be thought of as the non-trivial part of $\gamma$. We will soon only need its compression $\gamma':=\gamma\big|_{\H_1'}$. Notice that $\gamma'$ entails the remaining eigenvalues of $\gamma$, i.e., $\spec(\gamma')$ contains exactly $N'^P$ eigenvalues $1$ and $N'^H$ eigenvalues $0$, i.e., $\gamma'$ is a rank-$N'^P$ projector acting on $\H_1'$. 
\end{itemize}
The corresponding matrix normal form, with reference to Eq.\til\eqref{eq:basis_splitting} is given by
\begin{equation} \nonumber \label{normalgammamatrix}
	\gamma \rightarrow
\begin{bmatrix}
	\begin{array}{ccc|ccc|c}
		\mathbb{O}_{m^H_1} &  & &  & & &  \\
%      \hline
		& \mathbb{I}_{m^P_1} &  & & & &  \\
%     \hline
		&  & * &  & & * &  \\
		\hline 
		&  &  &  \mathbb{O}_{m^H_2} & & &  \\
		&  &  &  & \mathbb{I}_{m^P_2} & &  \\
%      \hline
		&  & *  &   & & *  &    \\
		\hline
		&  &  &   &   & &    \ddots  \\
	\end{array}
\end{bmatrix}
\end{equation}
\begin{equation}
	\rightarrow  \begin{bmatrix}
	\begin{array}{c|c|ccc}
		\mathbb{O}_{M^H} &  &  & & \\
		\hline
		& \mathbb{I}_{M^P} &  & &  \\
		\hline
		&   &  * & * &  \\
	&  & *  & * &  \\

		&  &   &  & \ddots  \\
	\end{array}
\end{bmatrix} .
\end{equation}
Here all blank blocks are filled with zeros. Asterisks denote nonzero subblocks. From the first matrix to the second one we have changed the ordering in the underlying basis $\{ |j \rangle \}$. In the first matrix, the blocks correspond to the subspaces $V_{\mu}$; in the second matrix the blocks refer to the normal form Eq.~\eqref{normalgamma}. For the last matrix, the first group of basis elements spans range($\mathbb{P}^H$), the second group of basis elements spans range($\mathbb{P}^P$), and the third group of basis elements spans $\H_1'$. The third block then represents $\gamma'$.

Now we consider the consequences of these manipulations on $\gamma^B$ and its spectrum. Notice that $\Pi_{\mu}$, $\mathbb{P}^P$, $\mathbb{P}^H$, $\mathbb{P}'$ are all built on the reference basis $\{|j\rangle \}$, and hence commute with each other. Then, recalling Eq.~\eqref{eq:gammablock} and using Eq.~\eqref{normalgamma}, we get
\begin{equation}
	\gamma^B = 0 \cdot \mathbb{P}^H + 1 \cdot \mathbb{P}^P + \sum_{\mu=1}^{|\blam|} \Pi'_{\mu}  \gamma  \Pi'_{\mu} ~.
\end{equation}
Here we introduced the projectors $\Pi'_{\mu}:=\mathbb{P}' \Pi_{\mu}=\Pi_{\mu}\mathbb{P}'$ with orthogonal ranges $V'_{\mu}$. When we restrict to $\H_1' = \bigoplus_{\mu=1}^{|\blam|} V_{\mu}'$, we get $\sum_{\mu=1}^{|\blam|} \left( \Pi'_{\mu}\big|_{\H_1'} \right) = \mathbb{I}\big|_{\H_1'}$. For later convenience, we also set
\begin{equation}
	\lambda'_{\mu}:=\text{dim}(V'_{\mu})=\lambda_{\mu}-m^P_{\mu}-m^H_{\mu} \; ,
\end{equation}
so that $\blam'$ is a partition of $d'$. It is straightforward to check that
\begin{equation}
	\lambda'_{\mu} \geq 1 \; .
\end{equation}

The matrix representing $\gamma^B$, starting from the last matrix representation of $\gamma$ in Eq.~\eqref{normalgammamatrix}, is
\begin{equation} \label{normalgammablockmatrix}
	\gamma^B \rightarrow  \begin{bmatrix}
	\begin{array}{ccc|ccc|ccc}
		0 &  &  & & & & & & \\
		& \ddots &  & & & & & & \\
		&  &  0 &  & & & & & \\
		\hline
		&  &  & 1 &  & & & &  \\
		&  &  &   & \ddots & & &  & \\
		&  &  &   &  & 1 & & &  \\
		\hline 
		&&&&  &  & * &   &   \\
		&&&&  &  &   & * &  \\
		&&&&  &  &   &  & \ddots  \\
	\end{array}
\end{bmatrix} ,
\end{equation}
where we simply set to $0$ the off-diagonal subblocks of the third block. The spectrum $\bb^{\blam}(\gamma)$ of $\gamma^B$ is
\begin{align} \nonumber
	\bb^{\blam}(\gamma) = &\big( \underbrace{0,...,0,}_{M^{H} \; \text{times}} \underbrace{1,...,1,}_{ M^P \; \text{times}} \;\underbrace{\specdown(\gamma'^B)}_{d'\text{ entries}} \big) \\ \label{specblocknormal}
	&\text{up to a permutation.} 
\end{align}
Here we have introduced the operator
\begin{equation} \label{gammaprimeB}
	\gamma'^B = \sum_{\mu}\Pi_{\mu}\big|_{\H_1'} \gamma' \, \Pi_{\mu}\big|_{\H_1'} = \Big(\sum_{\mu} \Pi'_{\mu}  \gamma  \Pi'_{\mu} \Big)\Big|_{\H_1'} ~,
\end{equation}
which acts on $\H_1'$ and is represented by the third block in Eq.~\eqref{normalgammablockmatrix}. \\
Now we focus on what happens as $\gamma$ is varied. We recall that $\gamma$ can be any generic rank-$N^P$ projector acting on $\H_1$ s.t. $V^{P}_{\mu} \subset \text{fix}(\gamma)$ and $V^{H}_{\mu} \subset \text{ker}(\gamma)$. These conditions fix the eigenvectors of $\gamma$ in $\H_1'^{\perp}$, but we are free to choose the eigenvectors of $\gamma$ in $\H_1'$. More precisely, $\gamma$ can be chosen such that $\gamma'$ be \textit{any} fixed rank-$N'^P$ projector acting on $\H_1'$. From this, and in view of Eq.~\eqref{specblocknormal}, it follows that the following are equivalent:
\begin{enumerate}
	\item $\exists$ rank-$N^P$ projector $\gamma$ on $\H_1$ s.t. $\gamma^B$ has exactly $M^P + M^H$ eigenvalues $0,1$;
	\item $\exists$ rank-$N'^P$ projector $\gamma'$ on $\H_1'$ s.t. $\gamma'^B$ has \emph{no} eigenvalues $0,1$.
\end{enumerate}
This means that we can \textit{reduce} the number of dimensions and just check the existence of a $\gamma'$ with the desired properties. To summarize, we need to look at a new 3-tuple $(N'^P,N'^H,\blam')$, following this reduction scheme:
\begin{equation}
(N^P,N^H,\blam) \rightarrow (N'^P,N'^H,\blam')
\end{equation}
\begin{align} \nonumber
\text{with} ~~~~~~~~~~ N'^P&=N^P-M^P \\ \nonumber
	N'^H&=N^H- M^H \\ 
%   \nonumber    d'&=N'^P + N'^H \\ 
\label{reductionscheme}
	\lambda'_{\mu}&= \lambda_{\mu} - m^P_{\mu} - m^H_{\mu} \geq 1 ~.
\end{align}
A careful check, see Sec.~\ref{subsec:new3tuple}, shows that the new 3-tuple $(N'^P,N'^H,\blam')$ \textit{fulfills the small-blocks condition}.
%\textcolor{blue}{,so that the reduction~\ref{reductionscheme} can only be applied once}
It is then in principle possible to build a rank-$N'^P$ projector $\gamma'$ on $\H_1'$ s.t. $\gamma'^B$ has no eigenvalues $0,1$. In fact, the reduction $\gamma \rightarrow \gamma'$ was all about getting rid of trivial eigenvalues. Below is a construction of such $\gamma'$.

\hideorblue{We have reduced the proof for an arbitrary $(N^P,N^H,\blam')$ to the proof for a $(N'^P,N'^H,\blam)$ fulfilling the small-blocks condition, for which $m^P_{\mu} =m^H_{\mu} = 0$.}

We set $\Pi'_{\mu}\big|_{\H_1'} = \Pi'_{\mu}$, etc.; and drop all \bt primes', e.g, $\H_1 \rightarrow \H'_1$. The construction is broken down into three steps.\\

\nit \textbf{Step 1:} We define a suitable partition of the reference basis $\mathcal{B}$ into $\lambda_1$ subsets $( \mathcal{B}_k)_{k=1}^{\lambda_1}$ in the following way:
\begin{equation} \nonumber
	\underline{| 1 \rangle}, \underline{\underline{| 2 \rangle}},...,\underline{\underline{\underline{| \lambda_1 \rangle}}},\underline{| \lambda_1+1 \rangle},\underline{\underline{| \lambda_1 + 2 \rangle}}, ... , \underline{| m \lambda_1 + 1\rangle}, ... ~ .
\end{equation}
By that, we mean
\begin{equation}
	\mathcal{B}_k =\{ | j \rangle ~ | ~ 1 \leq j \leq d, j = k \text{ mod }\lambda_1 \}~ .
\end{equation}
We define $S_k = \text{span}(\mathcal{B}_k)$. These subspaces and the partition of $\mathcal{B}$ above should not be confused with the subspaces $V_{\mu}$ and the corresponding partition of $\mathcal{B}$. We have dim$(S_k) \in \{p, p+1\}$ with $p=\lfloor \frac{d}{\lambda_1} \rfloor \geq 2$. The latter inequality follows from the small-blocks condition via $d=N^P+N^H \geq 2 \lambda_1$. Also, in view of $\lambda_1 \geq \lambda_{\mu}$, by construction $\mathcal{B}_k$ and the subset of $\mathcal{B}$ generating $V_{\mu}$ have at most one element $|j \rangle$ in common, i.e., $S_k \cap V_{\mu}$ is at most one-dimensional, $\forall k,\mu$.\\

\nit \textbf{Step 2:} We will designate the subspaces $S_k$ as invariant subspaces (but not eigenspaces) of the to-be-defined $\gamma$. In that view we distribute particles (eigenvalues 1) and holes (eigenvalues 0) to the subspaces $S_k$ in a way such that $S_k$ receives dim($S_k$) among particles and holes, and at least 1 particle and 1 hole. More precisely, for $k=1,...,\lambda_1$ we choose $n_k^P,n_k^H$ s.t. 
\begin{itemize}
	\item $n_k^P,n_k^H \geq 1$;
	\item $n_k^P + n_k^H =$ dim($S_k$).
\end{itemize}
Such a choice is possible thanks to the small-blocks condition, and because $N^P+N^H = d = \sum_{k=1}^{\lambda_1}$dim($S_k$). For example, $S_1$ receives a particle-hole couple, $S_2$ receives a particle-hole couple, ..., $S_{\lambda_1}$ receives a particle-hole couple; and then the remaining $d-2\lambda_1 \geq 0$ particles and holes are distributed randomly to those subspaces $S_k$ which are not yet full.\\

\nit \textbf{Step 3:} For $k=1,...,\lambda_1$ we choose a basis for $S_k$, $\mathcal{B}'_k = \{ | f^{(k)}_{l} \rangle \}_{l=1}^{\text{dim}(S_k)}$, which is maximally unbiased with respect to $\mathcal{B}_k$, i.e., $|\langle j | f^{(k)}_{l} \rangle |^2 = \frac{1}{\text{dim}(S_k)}$ $\forall$ $| j \rangle \in \mathcal{B}_k$ and $l=1,...,\text{dim}(S_k)$. For example a basis $\mathcal{B}'_k $ connected to $\mathcal{B}_k $ via a discrete Fourier transformation matrix is maximally unbiased with respect to $\mathcal{B}_k $. We finally define $\gamma$ by using the \bt Fourier' modes $| f^{(k)}_{l} \rangle$ as its eigenvectors and assigning eigenvalues based on the coefficients $n_k^P,n_k^H$: \hideorblue{This corresponds to performing a complete Schur Horn procedure among $\mathcal{B}_k $ elements.}

\begin{align} \nonumber
	\gamma &:= 1 \cdot \left ( \sum_{k=1}^{\lambda_1}\sum_{l=1}^{n_k^P} | f^{(k)}_{l} \rangle \langle f^{(k)}_{l} | \right ) \\
		& + 0 \cdot \left ( \sum_{k=1}^{\lambda_1}\sum_{l=1}^{n_k^H} | f^{(k)}_{n_k^P+l} \rangle \langle f^{(k)}_{n_k^P+l} |  \right )~,
\end{align}
Now we check that $\gamma$ has the desired property, namely that $0,1 \notin \bb^{\blam}(\gamma)$. We then study the matrix elements of the $\mu$-th block $\Pi_{\mu}\gamma\big|_{V_{\mu}}$. For $(\sum_{\nu=1}^{\mu-1}\lambda_{\nu}) +1 \leq j,j' \leq (\sum_{\nu=1}^{\mu-1}\lambda_{\nu}) +\lambda_{\mu}$, 
\begin{align}
	\langle j | \gamma | j' \rangle &=\sum_{k=1}^{\lambda_1}\sum_{l=1}^{n_k^P} \langle j | f^{(k)}_{l} \rangle \langle f^{(k)}_{l} | j' \rangle \\
	&= \delta_{jj'} \sum_{k=1}^{\lambda_1}\sum_{l=1}^{n_k^P} | \langle j | f^{(k)}_{l} \rangle |^2 \\
	&= \delta_{jj'} \sum_{l=1}^{n_{k_j}^P} | \langle j | f^{(k_j)}_{l} \rangle |^2
\end{align}
Here we used that, for any $k=1,...,\lambda_1$, if $j\neq j'$ then either $| j \rangle \perp S_k$ or $| j' \rangle \perp S_k$ or both (see Step 1). Similarly we noticed that $\langle j | f^{(k)}_{l} \rangle \neq 0 $ can only happen if $k=k_j$, where we defined $k_j\in\{1,...,\lambda_1\}$ via $k=j$ mod($\lambda_1$). We then see that the block $\Pi_{\mu}\gamma\Pi_{\mu}\big|_{V_{\mu}}$ is diagonal in the standard basis of $V_{\mu}$. The corresponding eigenvalues are
\begin{equation}
	\mu_{k_j} =\sum_{l=1}^{n_{k_j}^P} | \langle j | f^{(k_j)}_{l} \rangle |^2 = \frac{n_{k_j}^P}{\text{dim}(S_{k_j})} \in (0,1) ~.
\end{equation}
where we used the maximal unbiasedness of $\mathcal{B}'_k$ with respect to $\mathcal{B}_k$ and the fact that $0 < n_{k_j}^P < \text{dim}(S_{k_j})$.\\
Putting all blocks together, we get
\begin{align} \nonumber
		\bb^{\blam}(\gamma) = &\big( \underbrace{\mu_1,...,\mu_1,}_{\text{dim}(S_1) \text{ times}}\underbrace{\mu_2,...,\mu_2,}_{\text{dim}(S_2) \text{ times}}..., \underbrace{\mu_{\lambda_1},...,\mu_{\lambda_1}}_{\text{dim}(S_{\lambda_1}) \text{ times}} \big) \\
		&\text{up to a permutation,} 
\end{align}
which does not contain any eigenvalues $0,1$.
\end{proof}
\nit The meaning of Prop.\til\ref{prop:trivialeigen} is the following: within the $\mu$th subspace $V_{\mu}$ of $\H_1$, $m^H(N^P,\lambda_{\mu})$ dimensions must be occupied by holes (trivial eigenvalues $0$), $m^P(N^H,\lambda_{\mu})$ dimensions must be occupied by particles (trivial eigenvalues $1$), but we can choose $\gamma$ in such a way that no other dimensions are completely occupied by particles or holes (no further trivial eigenvalues in any of the subspaces).

A straightforward and important consequence of Prop.~\ref{prop:trivialeigen} is the following characterization of those 3-tuples with no trivial eigenvalues:

\begin{coro} \label{coro}
A 3-tuple $(N^P,N^H,\blam)$ has no trivial eigenvalues, i.e., $T(N^P,N^H,\blam)=0$, if and only if it fulfills the small-block condition.
\end{coro}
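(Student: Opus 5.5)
The plan is to read Cor.~\ref{coro} off Prop.~\ref{prop:trivialeigen}, which gives
\[
T(N^P,N^H,\blam) = \sum_{\mu=1}^{|\blam|} \left[ m^H(N^P,\lambda_\mu) + m^P(N^H,\lambda_\mu) \right].
\]
Every summand is of the form $\max\{0,\cdot\}$ and hence nonnegative. A finite sum of nonnegative integers vanishes if and only if each term vanishes. So $T(N^P,N^H,\blam)=0$ holds if and only if $m^H(N^P,\lambda_\mu)=m^P(N^H,\lambda_\mu)=0$ for every $\mu$.

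The next step is to unpack these conditions with Def.~\ref{mnumbersmallblocksdef}. The equality $m^H(N^P,\lambda_\mu)=\max\{0,\lambda_\mu-N^P\}=0$ is equivalent to $\lambda_\mu\le N^P$. Likewise, $m^P(N^H,\lambda_\mu)=0$ is equivalent to $\lambda_\mu\le N^H$. Requiring both for all $\mu$ is exactly the small-blocks condition. Because $\lambda_1\ge\dots\ge\lambda_{|\blam|}$, it is in turn equivalent to $\lambda_1\le N^P,N^H$. Chaining these equivalences proves both directions at once.

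I expect no real obstacle here. The only point to check is that Prop.~\ref{prop:trivialeigen} covers the degenerate cases $N^P=0$, $N^H=0$ and $\blam=(d)$, so that the corollary needs no further case split. Its proof handles these explicitly. In each of these cases the small-blocks condition fails, since $\lambda_1\ge1>0$ or $\lambda_1=d>N^P,N^H$, and $T>0$, which is consistent with the corollary. The substantive content, namely the construction of a projector with no trivial block eigenvalues under the small-blocks condition, was already done in the proof of Prop.~\ref{prop:trivialeigen}.
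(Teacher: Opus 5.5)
Your proposal is correct and follows the same route as the paper, whose proof simply combines Def.~\ref{mnumbersmallblocksdef} with Prop.~\ref{prop:trivialeigen}. A sum of nonnegative coefficients $m^{H}(N^P,\lambda_\mu)$ and $m^{P}(N^H,\lambda_\mu)$ vanishes if and only if each one does, i.e., if and only if $\lambda_\mu \le N^P, N^H$ for all $\mu$; your separate check of the degenerate cases is a harmless extra.
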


\begin{proof}
Combine Def.~\ref{mnumbersmallblocksdef} with Prop.~\ref{prop:trivialeigen}.
\end{proof}

\subsection{\bluecheck Systematic reduction to small-blocks scenario} \label{smallblocksreduction}
%\textcolor{blue}{Make the logic and order of things across the two lemmas as smooth as possible}

In Sec.~\ref{trivialeigen} we saw that when a 3-tuple $(N^P, N^H,\blam)$ does not fulfill the small-blocks condition, there are $T(N^P, N^H,\blam)>0$ trivial eigenvalues $0,1$. One can get rid of those with a systematic reduction method, which is part of the proof of Prop.~\ref{prop:trivialeigen}. After doing so, one ends up with a simpler 3-tuple $(N'^P, N'^H,\blam')$ fulfilling the small-blocks condition. This reduced 3-tuple can be then used to compute the value of $\Tilde{C}^{\blam}_{\max}(N^P,N^H)$ for the initial 3-tuple, as stated in the lemma below. We recall that $m^{H/P}_{\mu}$ is short for $m^{H/P}(N^{P/H},\lambda_{\mu})$ and $M^{H/P}=\sum_{\mu=1}^{|\blam|}m^{H/P}_{\mu}$.

\begin{prop} \label{prop:reduction}

Pick a 3-tuple $(N^P,N^H,\blam)$ with $N^P,N^H \neq 0$ and $\lambda_1 < d=N^P+N^H$. Consider the reduced 3-tuple $(N'^P,N'^H,\blam')$ given by

\begin{align} \nonumber
	N'^P&=N^P-M^P \\ \nonumber
	N'^H&=N^H-M^H \\ 
%   \nonumber    d'&=N'^P + N'^H \\ 
	\lambda'_{\mu}&= \lambda_{\mu} - m^P_{\mu} - m^H_{\mu} \geq 1 ~,
\end{align}
where $\mu =1,...,|\blam|=|\blam'|$. Then
\begin{equation} \label{reductionforImax}
	\Tilde{C}^{\blam}_{\max}(N^P,N^H) = \Tilde{C}^{\blam'}_{\max}(N'^P,N'^H)~.
\end{equation}
Plus, $(N'^P,N'^H,\blam')$ fulfills the small-blocks condition, and thus cannot be further reduced.
\end{prop}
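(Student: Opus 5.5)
The plan is to reuse the normal form built in the proof of Prop.~\ref{prop:trivialeigen} and to show that $H(\bb^{\blam}(\gamma))$ depends on $\gamma$ only through the reduced block spectrum. The trivial entries $0$ and $1$ contribute $h(0)=h(1)=0$, so only the nontrivial part should matter.

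The first step is the inequality $\Tilde{C}^{\blam}_{\max}(N^P,N^H) \leq \Tilde{C}^{\blam'}_{\max}(N'^P,N'^H)$. Take an arbitrary rank-$N^P$ projector $\gamma$ on $\H_1$. By Lemma~\ref{lem:evalsubmatrix}, each block $\Pi_\mu\gamma|_{V_\mu}$ has at least $m^H_\mu$ eigenvalues $0$ and at least $m^P_\mu$ eigenvalues $1$. The corresponding eigenvectors span subspaces $V^H_\mu$ and $V^P_\mu$ of $V_\mu$. These are contained in $\ker\gamma$ and $\mathrm{fix}\,\gamma$ respectively; this uses $\gamma^2=\gamma$, as in the proof of Prop.~\ref{prop:trivialeigen}.

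Next, conjugate by a block-diagonal unitary, which leaves $\bb^{\blam}(\gamma)$ unchanged, so that $V^H_\mu$, $V^P_\mu$ and $V'_\mu$ are spanned by reference-basis vectors. This yields the decomposition of Eq.~\eqref{specblocknormal}. The block spectrum $\bb^{\blam}(\gamma)$ then equals $M^H$ zeros, $M^P$ ones, and $\specdown(\gamma'^B)$, up to a permutation. Here $\gamma'=\gamma|_{\H_1'}$ is a rank-$N'^P$ projector on the $d'$-dimensional space $\H_1'=\bigoplus_\mu V'_\mu$, and $\gamma'^B$ is its block compression relative to $\blam'$.

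From this, $H(\bb^{\blam}(\gamma)) = H(\bb^{\blam'}(\gamma'))\leq \Tilde{C}^{\blam'}_{\max}(N'^P,N'^H)$. One detail must be checked: $\blam'$ may need reordering to be non-increasing, but $H$ and the maximum are invariant under permuting blocks, so this is harmless.

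The second step is the reverse inequality. Given any rank-$N'^P$ projector $\gamma'$ on $\C^{d'}$ with block structure $\blam'$, construct $\gamma = 0\cdot\mathbb{P}^H + 1\cdot \mathbb{P}^P + \gamma'$ on $\H_1$. Here one embeds $\C^{d'}$ as $\bigoplus_\mu V'_\mu$, with $V^{H/P}_\mu$ chosen as fixed coordinate subspaces of $V_\mu$ of dimensions $m^{H/P}_\mu$. This $\gamma$ is a projector of rank $M^P+N'^P=N^P$. Because all the projectors involved are diagonal in the reference basis and commute with each $\Pi_\mu$, the block compression splits exactly as in Eq.~\eqref{normalgammablockmatrix}. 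Hence $H(\bb^{\blam}(\gamma))=H(\bb^{\blam'}(\gamma'))$, and maximizing over $\gamma'$ gives equality~\eqref{reductionforImax}.

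The last claim, that $(N'^P,N'^H,\blam')$ satisfies the small-blocks condition, is the bookkeeping step and the main obstacle; it is the content deferred to Sec.~\ref{subsec:new3tuple}. I would argue by cases using that at most one of $m^H_\mu$, $m^P_\mu$ can be large. In the case $\lambda_\mu>N^P$ we have $\lambda'_\mu=N^P-m^P_\mu$. Then we need $N^P-m^P_\mu\leq N^P-M^P$, i.e.\ $M^P=m^P_\mu$ plus contributions of other blocks. Those vanish because $\sum_\nu\lambda_\nu=d$ forbids two blocks exceeding $N^H$ simultaneously with one exceeding $N^P$, a counting argument I would carry out explicitly. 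We also need $N^P-m^P_\mu\leq N^H-M^H$, which I would reduce to $\lambda_\mu+\sum_{\nu\ne\mu}\lambda_\nu=d$ together with the fact that $m^H_\nu\le\lambda_\nu$. The remaining cases are symmetric under particle-hole exchange, and blocks with $m^{H}_\mu=m^P_\mu=0$ satisfy $\lambda_\mu\le N^P,N^H$ already.
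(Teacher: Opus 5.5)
Your proof of the equality of the two maxima follows the paper's own proof. You restrict to normal-form projectors via block-diagonal unitaries, read off Eq.~\eqref{specblocknormal}, use $h(0)=h(1)=0$, and realize every rank-$N'^P$ projector on $\H_1'$ as the nontrivial block of some normal-form $\gamma$.

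The gap is in the small-blocks bookkeeping, which you flagged yourself. For $\lambda_\mu>N^P$ you correctly obtain $\lambda'_\mu=N^P-m^P_\mu$. You also correctly get $M^P=m^P_\mu$, since no $\nu\neq\mu$ can have $\lambda_\nu>N^H$ when $\lambda_\mu+\lambda_\nu\le d$. But the inequality $\lambda'_\mu\le N'^H$ does not follow from $\sum_\nu\lambda_\nu=d$ and $m^H_\nu\le\lambda_\nu$. These two facts only give
\[
N'^H=N^H-M^H=\sum_{\nu\neq\mu}\left(\lambda_\nu-m^H_\nu\right)\;\geq\;0 ,
\]
and not $N'^H\geq N^P-m^P_\mu$. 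You need the exact value $\lambda_\nu-m^H_\nu=\min\{\lambda_\nu,N^P\}$ and a case split:
\begin{itemize}
\item If another block has $\lambda_\nu\geq N^P$, the sum is already at least $N^P$.
\item Otherwise the sum equals $d-\lambda_\mu$. This is $\geq N^P$ when $\lambda_\mu\le N^H$ (where $m^P_\mu=0$), and it equals $N^P-m^P_\mu$ exactly when $\lambda_\mu>N^H$.
\end{itemize}
This is precisely the paper's distinction $\tilde\mu\geq 2$ versus $\tilde\mu=1$ in Case~2, together with Case~3.

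Second, your remark that blocks with $m^H_\mu=m^P_\mu=0$ already satisfy $\lambda_\mu\le N^P,N^H$ compares with the wrong numbers. What must be shown is $\lambda'_\mu\le N'^P,N'^H$, and $N'^{P/H}$ can be strictly smaller than $N^{P/H}$. The paper handles these blocks by noting that $\blam'$ comes out non-increasing:
\begin{itemize}
\item In Case~2, $\lambda'_\mu=N^P\geq\lambda_\nu=\lambda'_\nu$ for $\mu\le\tilde\mu<\nu$.
\item In Case~3, $\lambda'_1=d-\lambda_1\geq\lambda_\nu=\lambda'_\nu$.
\end{itemize}
Hence only $\lambda'_1$ needs checking. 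Your remark that $\blam'$ might need reordering misses exactly this point.

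Alternatively, apply the displayed identity with $\mu$ replaced by any block $\nu$ having $m^H_\nu>0$. This gives $N'^H=\sum_{\kappa\neq\nu}\min\{\lambda_\kappa,N^P\}\geq\lambda_\mu$ for every untouched block $\mu$. If no such $\nu$ exists, then $N'^H=N^H$. The bound on $N'^P$ follows symmetrically.

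Finally, you never verify $\lambda'_\mu\geq 1$, which is needed for $\blam'$ to be a partition of $d'$ with $|\blam'|=|\blam|$. It follows from $\lambda'_\mu=\min\{\lambda_\mu,\,d-\lambda_\mu,\,N^P,\,N^H\}$ together with $\lambda_1<d$.
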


\begin{proof}
We prove Eq.~\eqref{reductionforImax} by simplifying the maximization entailed in $\Tilde{C}^{\blam}_{\max}(N^P,N^H)=\max_{\gamma}\tr \, h(\gamma^B)$, which by definition runs over all rank-$N^P$ projectors on the underlying 1P space $\H_1$. As discussed in the proof of Prop.~\ref{prop:trivialeigen}, using the the \bt gauge' group of block-diagonal unitaries, we can restrict the maximization to one over projectors $\gamma$ which take the normal form Eq.~\eqref{normalgamma}. We recall that any rank-$N'^P$ projector on $\mathcal{H'}$ can be realized as the block $\gamma'$ of a suitable $\gamma$ in normal form. Finally we need to use Eq.~\eqref{specblocknormal} and $h(0)=h(1)=0$. Then the maximization can be reformulated as follows:

\begin{equation} \label{restrictmin}
	\max_{\substack{\gamma \rightarrow  N^P}}\tr \, h(\gamma^B) = \max_{\substack{\gamma \rightarrow N^P\\\text{norm. form}}}\tr \, h(\gamma^B) = \max_{\substack{\gamma' \rightarrow N'^P}}\tr \, h(\gamma'^B)~.
\end{equation}
Here the arrows indicate the rank. The new projectors $\gamma'$ act on $\mathcal{H'}$. $\gamma'^B$ (see Eq.~\eqref{gammaprimeB}) is made up of $|\blam|=|\blam'|$ blocks, each of which acts nontrivially on a $\lambda'_{\mu}$-th dimensional subspace of $\H_1'$. Eq.~\eqref{restrictmin} is by definition the same as Eq.~\eqref{reductionforImax}.

The second part of Prop.~\ref{prop:reduction}, i.e., the fact that $(N'^P,N'^H,\blam')$ fulfills the small-blocks condition, is proved in Sec.~\ref{subsec:new3tuple}.
\end{proof}

\subsection{Reduced 3-tuples fulfill the small-blocks condition} \label{subsec:new3tuple}

We finally show that the 3-tuple reduction $(N^P,N^H,\blam) \rightarrow (N'^P,N'^H,\blam')$, defined in the scheme~\ref{reductionscheme}, is such that the output 3-tuple $(N'^P,N'^H,\blam')$ fulfills the small-blocks condition, Def.~\ref{mnumbersmallblocksdef}. This fact is used in Prop.~\ref{prop:trivialeigen} and is the second statement in Prop.~\ref{prop:reduction}.

With reference to Def.~\ref{mnumbersmallblocksdef} and the shortened notation introduced in Sec.~\ref{trivialeigen}, we recall the reduction procedure~\ref{reductionscheme}:

\begin{equation}
(N^P,N^H,\blam) \rightarrow (N'^P,N'^H,\blam')
\end{equation}
\begin{align} \label{reductionschemebis}
\text{with} ~~~~~~~~~~ N'^P&=N^P-M^P \\ 
	N'^H&=N^H- M^H \\ 
%   \nonumber    d'&=N'^P + N'^H \\ 
	\lambda'_{\mu}&= \lambda_{\mu} - m^P_{\mu} - m^H_{\mu} \geq 1 ~.
\end{align}

We need to prove the statement under the nontriviality requirements $0 < N^P, N^H < d$ and $\lambda_1 < d$, from which $\lambda'_{\mu} \geq 1$ follows. Also, for $\mu<\nu$ we have $\lambda'_{\mu} \geq \lambda'_{\nu}$, i.e., $\blam'$ comes out of the scheme~\ref{reductionschemebis} automatically ordered. Without loss of generality, we check that in the following two cases:\\
%(i) $\lambda_{\mu} \leq N^P, N^H \Rightarrow$ $\lambda'_{\mu} = \lambda_{\mu} \geq \lambda_{\nu} = \lambda'_{\mu}$;
\\(i) $N^H \geq \lambda_{\mu} > N^P \Rightarrow$ $\lambda'_{\mu} \exclaeq N^P \geq \lambda'_{\nu}$;
\\(ii) $\lambda_{\mu}  > N^H,N^P$, which can only happen for $\mu=1$ $\Rightarrow$ $\lambda'_1 \exclaeq d - \lambda_1 \geq \lambda_{\nu} \geq \lambda'_{\nu} $, $\forall \nu \geq 2$.\\\\
Then checking the small-blocks condition on $(N'^P,N'^H,\blam')$ reduces to checking $\lambda'_{1} \leq N'^P, N'^H$.\\
%\emph{Case 0:} $\lambda_1\leq N^P, N^H$. The reduction does not change the 3-tuple, i.e., $(N'^P,N'^H,\blam')=(N^P,N^H,\blam)$, which already fulfills the small-blocks condition.\\

\nit \textbf{Case 1:} $\lambda_1 \leq N^P, N^H$. The 3-tuple $(N^P,N^H,\blam)$ fulfills the small-blocks condition and the reduction~\ref{reductionschemebis} leaves the 3-tuple unchanged.\\

\nit \textbf{Case 2:} $N^P < \lambda_1 \leq N^H$ (the case $N^H < \lambda_1 \leq N^P$ is equivalent). We have $M^P=0$, and hence $N'^P=N^P$. Instead, there are \bt trivial' holes which are removed in the reduction, so that $N'^H<N^H$. Let $\tilde{\mu} \geq 1$ be the biggest index satisfying $\lambda_{\tilde{\mu}} > N^P$. Then
\begin{align}
	\lambda'_{\mu} &= N^P ~~ \text{for } 1\leq \mu \leq \tilde{\mu} \\
	\lambda'_{\mu} &= \lambda_{\mu} ~~~ \text{for }  \tilde{\mu}+1 \leq \mu \leq |\blam|~.
\end{align}
On the one hand, $\lambda'_1 = N^P = N'^P$.\\
On the other hand, we have
\begin{equation}
M^H = \sum_{\mu = 1}^{\tilde{\mu}} m^H_{\mu}= (\sum_{\mu = 1}^{\tilde{\mu}} \lambda_{\mu} )-\tilde{\mu}\cdot N^P
\end{equation}
so that
\begin{align}\nonumber
N'^H &= N^H - M^H = (d - N^P)-(\sum_{\mu = 1}^{\tilde{\mu}} \lambda_{\mu} )+ \tilde{\mu}\cdot N^P \\
& \geq (\tilde{\mu} - 1) \cdot N^P \geq N^P = \lambda'_{1},
\end{align} 
where in the end we assumed $\tilde{\mu} \geq 2$. Instead, if $\tilde{\mu} = 1$, we similarly get $N'^H = d - \lambda_1 \geq N^P= \lambda'_{1}$.\\

\nit \textbf{Case 3:} $\lambda_1 > N^H, N^P  $. We have $m_{1}^{H/P} = \lambda_{1} - N^{P/H}$. Now pick $\mu \geq 2$: one has $\lambda_2 \leq d - \lambda_1 < d - N^{P/H} = N^{H/P}$, so that $m^{P/H}_{\mu} = 0$. Then

\begin{align}
	\lambda'_{1} & = d - \lambda_1 \\
	\lambda'_{\mu} &= \lambda_{\mu} ~~~~~~~ \text{for } 2 \leq \mu \leq |\blam| ~.
\end{align}
One then has $M^{P/H} = m^{P/H}_1 = \lambda_1 - N^{H/P}$, so that $N'^{P/H}=N^{P/H} - M^{P/H} = d-\lambda_1 =\lambda'_1$.

\subsection{A non-triviality example within the small-blocks scenario}\label{nontrivial}

Given the reduction procedure and Prop.~\ref{prop:reduction}, we only have to compute $\Tilde{C}^{\blam}_{\max}$ within the small-blocks scenario. The first key piece of information is given by the \bt only if' direction of Cor.~\ref{coro}: given a 3-tuple $(N^P,N^H,\blam)$ fulfilling the small-blocks condition, the number of trivial eigenvalues is always $0$, i.e., it is possible to find a rank-$N^P$ projector $\gamma$ s.t. $\bb^{\blam}(\gamma) \in (0,1)^d$ (no 0's, no 1's). We then go back to the initial question (Q2) and ask whether it is even possible to get $b^{\blam}_j(\gamma)=N^P/d$ $\forall j=1,...,d$. This is for example the case in the examples (iii)-(v) in Sec.~\ref{SDexamples}. A positive answer would imply that we can compute $\Tilde{C}^{\blam}_{\max}(N^P,N^H)$ for \textit{any} SD spectrum $\bn$ and partition $\blam$. The answer to (Q2) is in general \textit{negative}, though, as shown in the following counterexample. So even within the small-blocks scenario $\Tilde{C}^{\blam}_{\max}$ does \textit{not} necessarily hit the upper bound~\eqref{eq:basic_bounds}.

The counterexample is given by the following choice of 3-tuple:
\begin{equation}
	N^P=2, ~~~ N^H=3, ~~~\blam =(2,2,1) ~.
\end{equation}
It can be in fact shown that no rank-2 projector $\gamma$ on $\mathbb{C}^5$ satisfies $\bb^{\blam}(\gamma) = (\frac{2}{5},...,\frac{2}{5})$. More precisely, $\bb^{\blam}(\gamma)$ cannot even get arbitrarily close to $(\frac{2}{5},...,\frac{2}{5})$. This is confirmed by our result of Sec.~\ref{subsubsec:2,2,1}: the optimal block spectrum for $\bn = (1,1,0,0,0)$ is
\begin{equation}
	\bb_{\max}^{\blam}(\bn) = \left ( \frac{1}{2}, \frac{1}{3} \sbs \frac{1}{2}, \frac{1}{3} \sbs  \frac{1}{3} \right ) \; ,
\end{equation}
which is not maximally mixed. As a consequence,
\begin{equation}
	\Tilde{C}^{\blam}_{\max}(N^P,N^H) < 5 \cdot h\left(\frac{N^P}{5}\right) \til,
\end{equation}
\hideorblue{
- think of shape of $h$...\\
- Actually Kaustav's numerics show that probably $v_{\max}=(1/2,*,1/3,*,*)$, can you prove this? ...interlacing inequalities won't suffice...\\}
i.e., the upper bound~\eqref{eq:basic_bounds} is sometimes not reached, even within the small-block scenario. We chose the counterexample referring to a number of dimensions $d=5$ because no similar counterexamples can be found for $d<5$. That is, for $d<5$ and for any 3-tuple $(N^P, N^H, \blam)$ satisfying $N^P+N^H=d$ and the small-blocks condition, the corresponding SD spectrum $\bn$ always reaches the upper bound~\eqref{eq:basic_bounds}. For $d=5$, the counterexample shown in Sec.~\ref{nontrivial} and the one obtained by exchanging $N^P$ and $N^H$ are the only counterexamples. For $d=6$, again no counterexamples exist.
Are there any patterns? Whenever the bound~\eqref{eq:basic_bounds} is not reached, how can we compute $\Tilde{C}_{\max}^{\blam}$? These questions are currently still open.

\hideorblue{-The goal then would be to find $\Tilde{C}^{\blam}_{\max}(N^P,N^H)$ for any SD spectrum $\bn$, but I do not know if there is a complete answer. If we don't have one, we can show some qualitative behavior, e.g., for all $d<10$ and all $\blam$'s\\
-section Patterns, discussion and conjectures
}

\section{Treatment of superselection rule enforcement} \label{app:SSR}

In this Appendix, we present the mathematical details of Sec.\til\ref{subsec:extensions_SSR}.

\subsection{Superselected density operators} \label{subsec:superselected_dens_ops}

Fix a partition $\blam$ of the 1P space dimension, and define the corresponding subspaces $V_\mu \cong \C^{\lambda_\mu}$ as in the beginning of Sec.~\ref{sec:generalizing}. We aim to properly quantify the multipartite total correlation between the orbital subspaces $V_{\mu}$ which is contained in a given state $\rho$ on the fermionic Fock space. A sound measure for correlation must be based on correlations among outcomes of measurements that are local (i.e., involving orbitals of a single subspace $V_{\mu}$ only) and physically accessible, i.e., in compliance with the local-number SSR. The mathematical counterpart of the measurement constraints dictated by the SSR is the following set of physically accessible operators, which is a subalgebra of the total algebra $\mathcal{A}$ of operators on the Fock space:
\begin{equation}
	\mathcal{A}^{SSR} = \{ \mathcal{O}^{SSR} \in \mathcal{A} ~ | ~ [\mathcal{O}^{SSR},N_{\mu}] = 0 ~ \forall \mu = 1,...,|\blam|\}~,
\end{equation}
where $N_{\mu}$ is the operator for the local number of particles in $V_{\mu}$. Consider now the physical situation where only observables in $\mathcal{A}^{SSR}$ can be measured. Then we correspondingly restrict the action of states $\rho$ defined on the original algebra $ \mathcal{A}$ to the new subalgebra $ \mathcal{A}^{SSR}$. But then states get naturally grouped into equivalence classes: given $\rho,\rho'$ states on $ \mathcal{A}$,
\begin{align} \nonumber
\rho \sim \rho' &:\Leftrightarrow
\Tr[\rho \mathcal{O}^{SSR}] = \Tr[\rho' \mathcal{O}^{SSR}] ~ \forall \mathcal{O}^{SSR} \in  \mathcal{A}^{SSR} \\ \nonumber
&\stackrel{\text{!}}{\Leftrightarrow} \Pi_{1}^{m_1} \ldots \Pi_{|\blam|}^{m_{|\blam|}}\rho \Pi_{1}^{m_1} \ldots \Pi_{|\blam|}^{m_{|\blam|}} \\ \nonumber 
&= \Pi_{1}^{m_1} \ldots \Pi_{|\blam|}^{m_{|\blam|}}\rho' \Pi_{1}^{m_1} \ldots \Pi_{|\blam|}^{m_{|\blam|}} \\
& \forall \text{ local number eigenvalues } m_1,\ldots,m_{|\blam|} ~,
\end{align}
where $\Pi_{\mu}^{m_\mu}$ is the projector onto the eigenspace of $N_{\mu}$ relative to the eigenvalue $m_\mu$. Two equivalent states \emph{cannot be distinguished} by any measurements or more general operations that are described by operators in $ \mathcal{A}^{SSR}$. These equivalence classes are (or better: one-to-one correspond to) the states on $ \mathcal{A}^{SSR}$. An equivalent and more handy way to refer to such states on $ \mathcal{A}^{SSR}$ is the following. Consider this \bt projection' for density operators:
\begin{equation} \label{eq:SSR_state_projection}
	\rho \rightarrow \rho^{SSR} := \sum_{m_1,...,m_{|\blam|}} \Pi_{1}^{m_1} \ldots \Pi_{|\blam|}^{m_{|\blam|}}\rho \Pi_{1}^{m_1} \ldots \Pi_{|\blam|}^{m_{|\blam|}} ~.
\end{equation}
Then projected density operators, i.e., $\rho$ such that $\rho = \rho^{SSR}$, one-to-one correspond to states on  $\mathcal{A}^{SSR}$. We can rewrite $\rho^{SSR}$ as
\begin{equation} \label{eq:rhoSSR_as_mixture}
	\rho^{SSR} = \sum_{m_1 , \ldots, m_{|\blam|}} p_{m_1 , \ldots , m_{|\blam|}} \rho_{m_1 , \ldots ,m_{|\blam|}} \; ,
\end{equation}
where
\begin{equation} \label{eq:SSR_probs}
	p_{m_1, \ldots, m_{|\blam|}} = \tr [\rho \Pi_{1}^{m_1} \ldots \Pi_{|\blam|}^{m_{|\blam|}}]
\end{equation}
is the probability of measuring local numbers $m_\mu$ on the state $\rho$, and
\begin{equation} \label{eq:SSR_blocks}
	\rho_{m_1 , \ldots , m_{|\blam|}} =  \Pi_{1}^{m_1} \ldots \Pi_{|\blam|}^{m_{|\blam|}}\rho \Pi_{1}^{m_1} \ldots \Pi_{|\blam|}^{m_{|\blam|}} / p_{m_1, \ldots, m_{|\blam|}}
\end{equation}
is the corresponding post-measurement state. We remark the implicit dependence of the new quantities, and in particular of $\rho^{SSR}$, on the partition $\blam$ and the corresponding subspaces $V_\mu$.

\subsection{SSR-adapted total correlation} \label{subsec:SSR_mutualinfo}
Refs.~\cite{wiseman_vaccaro_ferreting, gesa_lexin_2024} provide the following operational interpretation of the projection $\rho \rightarrow \rho^{SSR}$, which is fundamental in our context. Suppose that one aims to extract the orbital entanglement by \bt swapping' the state $\rho$ of the fermionic orbitals and the state of some qubits. Then it turns out that the swap only partially succeeds, due to to the fermionic local-number SSR, and that the resulting state of the qubits is a copy of $\rho^{SSR}$ instead of $\rho$. This directly leads to the following operational definition of an SSR-adapted total correlation:
\begin{equation}
	C^{\blam, SSR}_{proj} (\rho) : = C^{\blam} (\rho^{SSR}) \; .
\end{equation}
Another valid option, advocated in Ref.~\cite{wisemanvaccaro}, is to take the average of the total correlation over all `blocks' $\rho_{m_1 , \ldots , m_{|\blam|}}$ (see Eqs.~\eqref{eq:SSR_probs} and~\eqref{eq:SSR_blocks}):
\begin{equation}
	C^{\blam, SSR}_{block} (\rho) : = \sum_{m_1 , \ldots, m_{|\blam|}} p_{m_1 , \ldots , m_{|\blam|}} C^{\blam} (\rho_{m_1 , \ldots , m_{|\blam|}}) \; .
\end{equation}
This quantity is the figure of merit for scenarios where the entanglement extraction protocol implies a measurement of the local particle numbers. For pure states, it coincides with twice the entanglement measure $E_P$ introduced in Ref.~\cite{wisemanvaccaro}.

We now state some basic properties of $C^{\blam, SSR}_{proj}$ and $C^{\blam, SSR}_{block}$. Pick a density operator $\rho$ which is global number-superselected, i.e., $[\rho, N]=0$. We denote the orbital-reduced states of $\rho^{SSR}$ ($\rho$) as $\rho^{SSR}_\mu$ ($\rho_\mu$), where $\mu=1,\ldots,|\blam|$. Then computing $C^{\blam, SSR}_{proj}(\rho) = (\sum_{\mu = 1}^{|\blam|} S(\rho^{SSR}_{\mu})) - S(\rho^{SSR})$ is facilitated by the following proposition.
\begin{prop}~\label{prop:marginals_of_SSR_states}
Pick a state $\rho$ on the Fock space satisfying $[\rho, N] = 0$ and a partition $\blam$ of the 1P space dimension. Then
\begin{equation}\label{reducedstateunchanged}
	\rho^{SSR}_\mu = \rho_\mu \; \; \;  \mu=1,\ldots,|\blam| \;.
\end{equation}
\end{prop}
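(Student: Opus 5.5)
We have to show that the local-number projection of Eq.~\eqref{eq:SSR_state_projection} leaves every orbital-reduced state unchanged. The approach mirrors the proof of Prop.~\ref{prop:marginalsoffreestates}: two states on $\F_{V_\mu}$ coincide if and only if they give the same expectation value for every operator $\O_\mu$ acting on $\F_{V_\mu}$. By the definition of the orbital partial trace~\cite{Friis_2016}, we have $\tr_{\F_{V_\mu}}[\rho_\mu \O_\mu] = \tr[\rho\, \O_\mu]$, where $\O_\mu$ is lifted to the full Fock space. The analogous identity holds for $\rho^{SSR}$. It therefore suffices to prove
\begin{equation}
\tr[\rho^{SSR} \O_\mu] = \tr[\rho\, \O_\mu] \quad \text{for all } \O_\mu \text{ on } \F_{V_\mu} .
\end{equation}

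\textbf{Step 1: expand the trace.} Insert the definition of $\rho^{SSR}$. Write $\Pi_{\boldsymbol m} := \Pi_1^{m_1}\cdots\Pi_{|\blam|}^{m_{|\blam|}}$. These projectors are mutually commuting, idempotent, and sum to the identity. Cyclicity of the trace gives
\begin{equation}
\tr[\rho^{SSR}\O_\mu] = \sum_{\boldsymbol m} \tr[\rho\, \Pi_{\boldsymbol m}\O_\mu\Pi_{\boldsymbol m}] .
\end{equation}

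\textbf{Step 2: handle the factors with $\nu \neq \mu$.} The operator $\O_\mu$ is built from creation and annihilation operators of orbitals in $V_\mu$. Each $\Pi_\nu^{m_\nu}$ is a function of $N_\nu$, which is even in the modes of $V_\nu$, so $\O_\mu$ commutes with $\Pi_\nu^{m_\nu}$ even when $\O_\mu$ is odd. Summing over $m_\nu$ for every $\nu\neq\mu$ then collapses those projectors to the identity. We are left with $\sum_{m}\tr[\rho\,\Pi_\mu^{m}\O_\mu\Pi_\mu^{m}]$.

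\textbf{Step 3: handle the $\mu$-projectors.} Decompose $\O_\mu = \sum_{q}\O_\mu^{(q)}$, where $\O_\mu^{(q)}$ changes $N_\mu$ by exactly $q$. Such a decomposition exists because $\O_\mu$ is a combination of monomials in the modes of $V_\mu$. Then $\Pi_\mu^{m}\O_\mu^{(q)}\Pi_\mu^{m}$ vanishes for $q\neq 0$, and summing over $m$ returns $\O_\mu^{(0)}$. The remaining task is to show $\tr[\rho\,\O_\mu^{(q)}]=0$ for $q\neq0$. Here the global number superselection $[\rho,N]=0$ enters: $\O_\mu^{(q)}$ also changes the total number $N$ by $q$, since it only involves modes of $V_\mu$. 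A standard argument then applies: $\rho$ is block diagonal in eigenspaces of $N$, while $\O_\mu^{(q)}$ maps the $n$-particle sector into the $(n+q)$-particle sector, so the trace vanishes. Combining the steps gives $\tr[\rho^{SSR}\O_\mu]=\tr[\rho\,\O_\mu^{(0)}]=\tr[\rho\,\O_\mu]$, which proves Eq.~\eqref{reducedstateunchanged}.

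The main care point is Step 2: fermionic signs could a priori spoil the commutation of odd $\O_\mu$ with $\Pi_\nu^{m_\nu}$. This is resolved by noting that the projectors are functions of the even operators $N_\nu$.
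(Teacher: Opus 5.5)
Your proof is correct and follows essentially the same route as the paper's. Both arguments test $\rho^{SSR}$ against operators of the local algebra of $V_\mu$, commute the projectors $\Pi_\nu^{m_\nu}$ ($\nu\neq\mu$) past the local operator so that they collapse, and then use $[\rho,N]=0$ to remove the contributions that are off-diagonal in $N_\mu$. The only difference is cosmetic: the paper kills the blocks $\Pi_\mu^{m}\rho\,\Pi_\mu^{m'}$ with $m\neq m'$ (at fixed particle numbers in the other regions) on the state side, whereas you split $\O_\mu$ into $N_\mu$-changing components and show $\tr[\rho\,\O_\mu^{(q)}]=0$ for $q\neq 0$ on the operator side.
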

\begin{proof}

Eq.~\eqref{reducedstateunchanged} can be expected based on the fact that for any fermionic state $\rho$ which is global number-superselected, i.e., $[\rho, N]=0$, the orbital-reduced state $\rho_{V}$ relative to any subspace $V$ of the 1P space is automatically local number-superselected, i.e., $[\rho_{V}, N_{V}]=0$ (even when $[\rho,N_V] \neq 0$). We prove this result in Prop.~\ref{prop:rho_ssr_implies_rhoA_ssr}.

To show Eq.~\eqref{reducedstateunchanged}, we use the fundamental definition of orbital-reduced states in terms of expectation values, see~\cite{Friis_2016}. Let $\mathcal{A}_{\mu} \subset \mathcal{A} $ be the subalgebra of operators relative to the 1P subspace $V_{\mu} \subset \H_1$, i.e., the subalgebra generated by $\{f_{|a\rangle},f^{\dagger}_{|a\rangle}\}_{|a\rangle \in V_{\mu}}$ via sums and products. Pick an operator $A \in \mathcal{A}_{\mu}$. Then we use the local number projectors $\Pi_{\mu}^{m_\mu}$ to write
\begin{align} \nonumber
	& \rho A = \sum_{\substack{m_1 , \dots, m_{| \blam |} \\ m'_1 , \dots, m'_{| \blam |} }}  \Pi^{m_1}_1 \dots \Pi^{m_{|\blam|}}_{| \blam|} \rho \Pi^{m'_1}_1 \dots \Pi^{m'_{|\blam|}}_{| \blam|} A \\ \nonumber
	& = \sum_{\substack{m_1 , \dots, m_{| \blam |} \\ m'_1 , \dots, m'_{| \blam |} }} \Pi^{m_1}_1 \dots \Pi^{m_{|\blam|}}_{| \blam|} \rho  \\
	& \;\;\;\;\;\;\;\;\;\;\;\;\;\;\;\;\;\;\;\;\; \cdot \Pi^{m'_\mu}_\mu A \Pi^{m'_1}_1 \dots \Pi^{m'_{\mu-1}}_{\mu-1}\Pi^{m'_{\mu+1}}_{\mu+1} \dots \Pi^{m'_{|\blam|}}_{| \blam|} ~,
\end{align}
where for $\nu =1,...,\mu-1,\mu+1,...,|\blam| $ we have commuted $\Pi_{\nu}^{m'_{\nu}}$ and $A$. But then when we take the trace $\Tr[\rho A]$, due to the cyclicity of the trace, we get nonzero contributions only from summands relative to indices $m_{\nu}=m'_{\nu} ~ \forall \nu \neq \mu$. We can then write
\begin{align} \nonumber
&\Tr[\rho A] = \\
&\sum_{\substack{m_1 , \dots, m_{| \blam |} \\ m'_\mu}} \tr \left [ \Pi^{m_1}_1 \dots  \Pi^{m_\mu}_\mu \dots \Pi^{m_{|\blam|}}_{| \blam|} \rho \Pi^{m_1}_1 \dots  \Pi^{m'_\mu}_\mu \dots \Pi^{m_{|\blam|}}_{| \blam|} A \right ] ~.
\end{align}
Now, $\Pi^{m_1}_1 \dots  \Pi^{m_\mu}_\mu \dots \Pi^{m_{|\blam|}}_{| \blam|} \rho \Pi^{m_1}_1 \dots  \Pi^{m'_\mu}_\mu \dots \Pi^{m_{|\blam|}}_{| \blam|}$ vanishes for $m_{\mu}\neq m'_{\mu}$. In fact, the range of $\Pi^{m_1}_1 \dots  \Pi^{m_\mu}_\mu \dots \Pi^{m_{|\blam|}}_{| \blam|}$ is a subspace of the $m_1 + \dots + m_\mu + \dots + m_{|\blam|}$-eigenspace of the global number $N$; the range of $\Pi^{m_1}_1 \dots  \Pi^{m'_\mu}_\mu \dots \Pi^{m_{|\blam|}}_{| \blam|}$ is a subspace of the $m_1 + \dots + m'_\mu + \dots + m_{|\blam|}$-eigenspace of $N$; and $\rho$ leaves invariant the eigenspaces of $N$. We are then left with
\begin{align} \nonumber
\Tr[\rho A] &= \sum_{m_1 , \dots, m_{| \blam |}} \tr \left [ \Pi^{m_1}_1 \dots \Pi^{m_{|\blam|}}_{| \blam|} \rho \Pi^{m_1}_1 \dots   \Pi^{m_{|\blam|}}_{| \blam|} A \right ]\\ \nonumber
&= \Tr \Big [\underbrace{\sum_{m_1 , \dots, m_{| \blam |}} \Pi^{m_1}_1 \dots \Pi^{m_{|\blam|}}_{| \blam|} \rho \Pi^{m_1}_1 \dots   \Pi^{m_{|\blam|}}_{| \blam|} }_{\rho^{SSR}} A  \Big ] \\
& = \Tr[\rho^{SSR} A] ~~~~ \forall  A\in \mathcal{A}_{\mu}~,
\end{align}
which is equivalent to Eq.~\eqref{reducedstateunchanged}.
\end{proof}
The following corollary is a straightforward consequence of Prop.~\ref{prop:marginals_of_SSR_states}.
\begin{coro} \label{cor:I_proj}
	For any state $\rho$ satisfying $[\rho, N] = 0$,
	\begin{align} \nonumber
		C_{proj}^{\blam, SSR} (\rho) &= \left ( \sum_{\mu = 1}^{|\blam|} S(\rho_{\mu}) \right ) - S(\rho^{SSR}) \\ \label{eq:proj_m_info_simplified}
		&= C^{\blam}(\rho)+S(\rho)-S(\rho^{SSR}) \; .
	\end{align}
\end{coro}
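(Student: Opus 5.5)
The plan is to start from the definition $C^{\blam,SSR}_{proj}(\rho) = C^{\blam}(\rho^{SSR}) = \big(\sum_{\mu} S(\rho^{SSR}_\mu)\big) - S(\rho^{SSR})$. Here $\rho^{SSR}_\mu$ denotes the orbital-reduced state of the projected operator $\rho^{SSR}$ on the Fock space of $V_\mu$. The whole computation rests on Prop.~\ref{prop:marginals_of_SSR_states}. It says that, under the hypothesis $[\rho,N]=0$, these reduced states coincide with the reduced states $\rho_\mu$ of $\rho$ itself. Substituting $S(\rho^{SSR}_\mu)=S(\rho_\mu)$ for each $\mu$ gives the first equality of Eq.~\eqref{eq:proj_m_info_simplified} directly.

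Before using $C^{\blam}(\rho^{SSR})$ at all, I would check that it is well defined as a total correlation. Concretely, $\rho^{SSR}$ must be a parity-superselected state, as required in Sec.~\ref{sec:generalizing} and Appendix~\ref{app:mutual_info_fermions}. This follows from Eq.~\eqref{eq:SSR_state_projection}.
\begin{itemize}
\item The projectors $\Pi_\mu^{m_\mu}$ are orthogonal.
\item They commute with one another.
\item They sum to the identity.
\end{itemize}
Hence $\rho^{SSR}$ is positive and has unit trace. It also commutes with every $N_\mu$, and therefore with the total parity $\e^{\i\pi N}$.

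For the second equality, I would add and subtract $S(\rho)$:
\begin{equation}
\Big(\sum_{\mu} S(\rho_\mu)\Big) - S(\rho^{SSR}) = \Big[\Big(\sum_{\mu} S(\rho_\mu)\Big) - S(\rho)\Big] + S(\rho) - S(\rho^{SSR}).
\end{equation}
The bracket on the right is by definition $C^{\blam}(\rho)$, from Eq.~\eqref{eq:total_corr}. This requires $\rho$ to be parity-superselected, and that holds because $[\rho,N]=0$ implies $[\rho,\e^{\i\pi N}]=0$.

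There is no real obstacle here. The only substantive input is Prop.~\ref{prop:marginals_of_SSR_states}, which is already established, and the remaining care goes into checking that both total correlations are defined.
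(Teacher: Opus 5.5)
Your proposal is correct and follows the same route as the paper. The paper calls the corollary a straightforward consequence of Prop.~\ref{prop:marginals_of_SSR_states}: substitute $S(\rho^{SSR}_\mu)=S(\rho_\mu)$, then add and subtract $S(\rho)$. Your extra check that $\rho^{SSR}$ is a valid parity-superselected density operator, so that $C^{\blam}(\rho^{SSR})$ is defined, is a sound detail that the paper leaves implicit.
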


We now state some basic bounds among the three variants of the total correlation.
\begin{prop} \label{prop:SSR_bounds}
	For any fermionic state $\rho$ satisfying $[\rho, N] = 0$,
	\begin{equation} \label{eq:target_bounds_ssr}
	C^{\blam}(\rho) \geq C^{\blam, SSR}_{proj}(\rho) \geq C^{\blam, SSR}_{block}(\rho) \; .
\end{equation}
\end{prop}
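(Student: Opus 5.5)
We prove the two inequalities of Eq.~\eqref{eq:target_bounds_ssr} separately. Throughout, $\Pi_{\boldsymbol{m}} := \Pi_1^{m_1}\cdots\Pi_{|\blam|}^{m_{|\blam|}}$ denotes the joint projectors, which are mutually orthogonal and sum to the identity.

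\textbf{First inequality.} By Cor.~\ref{cor:I_proj},
\begin{equation}
	C^{\blam}(\rho) - C^{\blam,SSR}_{proj}(\rho) = S(\rho^{SSR}) - S(\rho) ,
\end{equation}
so it suffices to show $S(\rho^{SSR}) \geq S(\rho)$. The map $\rho \mapsto \rho^{SSR} = \sum_{\boldsymbol{m}} \Pi_{\boldsymbol{m}} \rho \Pi_{\boldsymbol{m}}$ of Eq.~\eqref{eq:SSR_state_projection} is a pinching, i.e., a unital channel. Unital channels do not decrease von Neumann entropy. One way to see this is to note that the spectrum of $\rho^{SSR}$ is majorized by that of $\rho$ and to use Schur-concavity of entropy. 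Another is Klein's inequality:
\begin{equation}
	S(\rho^{SSR}) - S(\rho) = D(\rho,\rho^{SSR}) \geq 0 ,
\end{equation}
which holds because $\tr[\rho\log\rho^{SSR}] = \tr[\rho^{SSR}\log\rho^{SSR}]$, as $\log\rho^{SSR}$ is block diagonal with respect to the $\Pi_{\boldsymbol{m}}$.

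\textbf{Second inequality.} We work with the decomposition Eq.~\eqref{eq:rhoSSR_as_mixture}, $\rho^{SSR} = \sum_{\boldsymbol{m}} p_{\boldsymbol{m}}\rho_{\boldsymbol{m}}$, whose blocks have orthogonal supports.

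First, orthogonal supports give
\begin{equation}
	S(\rho^{SSR}) = H(p) + \sum_{\boldsymbol{m}} p_{\boldsymbol{m}} S(\rho_{\boldsymbol{m}}) ,
\end{equation}
where $H(p)$ denotes the Shannon entropy of the distribution $p$.

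Second, we need the reduced states on each region $\mu$. Since each $\Pi_{\boldsymbol{m}}$ fixes the local number $m_\mu$, the reduced state $(\rho_{\boldsymbol{m}})_\mu$ is supported in the $m_\mu$-particle sector of $\F_{V_\mu}$. By the reduction computation in the proof of Prop.~\ref{prop:marginals_of_SSR_states}, we then have
\begin{equation}
	\rho_\mu = \sum_{\boldsymbol{m}} p_{\boldsymbol{m}} (\rho_{\boldsymbol{m}})_\mu .
\end{equation}
Grouping the terms by $m_\mu$, this is a mixture of states $\sigma_{\mu,m_\mu}$ with mutually orthogonal supports, weighted by the marginal distribution $p^{(\mu)}$ of $p$. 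Orthogonality gives
\begin{equation}
	S(\rho_\mu) = H(p^{(\mu)}) + \sum_{m_\mu} p^{(\mu)}_{m_\mu} S(\sigma_{\mu,m_\mu}) .
\end{equation}
Each $\sigma_{\mu,m_\mu}$ is itself a mixture of the $(\rho_{\boldsymbol{m}})_\mu$ with conditional weights, so concavity of entropy yields
\begin{equation}
	S(\rho_\mu) \geq H(p^{(\mu)}) + \sum_{\boldsymbol{m}} p_{\boldsymbol{m}} S\bigl((\rho_{\boldsymbol{m}})_\mu\bigr) .
\end{equation}

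Now we combine these. Using $S(\rho^{SSR}_\mu) = S(\rho_\mu)$ from Prop.~\ref{prop:marginals_of_SSR_states}, summing over $\mu$, and subtracting $S(\rho^{SSR})$ gives
\begin{equation}
	C^{\blam,SSR}_{proj}(\rho) \geq C^{\blam,SSR}_{block}(\rho) + \Bigl[\sum_\mu H(p^{(\mu)}) - H(p)\Bigr] .
\end{equation}
The bracket is the classical total correlation of the distribution $p$ of local particle numbers, which is nonnegative by subadditivity of Shannon entropy. This proves the second inequality.

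\textbf{Main obstacle.} The delicate point is the block decomposition of $\rho_\mu$. One must check that the partial trace in the fermionic sense of Ref.~\cite{Friis_2016} maps $\Pi_{\boldsymbol{m}}\rho\Pi_{\boldsymbol{m}}$ into the $m_\mu$ sector, and that the cross terms vanish for the global-number-conserving $\rho$. We would verify both via expectation values of operators $A\in\mathcal{A}_\mu$, exactly as in the proof of Prop.~\ref{prop:marginals_of_SSR_states}, inserting $\Pi_\mu^{m_\mu}$ on both sides of $A$.
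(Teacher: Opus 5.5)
Your proposal is correct and follows the paper's proof essentially step by step. The first bound comes from Cor.~\ref{cor:I_proj} together with $S(\rho^{SSR})\geq S(\rho)$ for the pinching, where the paper uses the majorization argument you mention. The second comes from the orthogonal-support entropy decompositions of $S(\rho^{SSR})$ and $S(\rho_\mu)$, concavity of entropy within each $m_\mu$ sector, and nonnegativity of the classical total correlation of the local-number distribution. The ``main obstacle'' you flag is already handled by Prop.~\ref{prop:marginals_of_SSR_states} combined with linearity of the partial trace applied to $\rho^{SSR}$, exactly as in the paper.
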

\begin{proof}		
	For the first bound, the fact that $\rho^{SSR}$ is a projected version of $\rho$ implies that spec$(\rho) \succ $ spec$(\rho^{SSR})$, implying $S(\rho) - S(\rho^{SSR}) \leq 0$. $C^{\blam}(\rho) \geq C^{\blam, SSR}_{proj}(\rho)$ then follows from Eq.\til\eqref{eq:proj_m_info_simplified}.

	To prove the second bound, our starting point is Eq.\til\eqref{eq:rhoSSR_as_mixture}. Taking orbital-partial traces yields, for $\mu = 1, \dots, | \blam |$,
	\begin{equation}
		\rho_\mu^{SSR} = \rho_\mu = \sum_{m_\mu} p_{\mu, m_\mu} \rho_{\mu ,m_\mu} \; ,
	\end{equation}
	with
	\begin{align}
		& p_{\mu, m_\mu} = \sum_{\substack{m_1, \dots, m_{\mu-1}, \\ m_{\mu + 1} , \dots, m_{| \blam |}}} p_{m_1, \dots, m_{\mu}, \dots, m_{| \blam |}} \\ \nonumber
		& \rho_{\mu ,m_\mu} = \\
		& = \sum_{\substack{m_1, \dots, m_{\mu-1}, \\ m_{\mu + 1} , \dots, m_{| \blam |}}} \frac{p_{m_1, \dots, m_{\mu}, \dots, m_{| \blam |}} }{p_{\mu, m_\mu}} \rho_{\mu, m_1, \dots, m_{\mu}, \dots, m_{| \blam |}}  \; ,
	\end{align}
	where $\rho_{\mu, m_1, \dots, m_{\mu}, \dots, m_{| \blam |}}$ is the $\mu$-th reduced state of $\rho_{m_1, \dots, m_{\mu}, \dots, m_{| \blam |}}$.
	Importantly, the representation\til\eqref{eq:rhoSSR_as_mixture} of $\rho^{SSR}$ features summands $\rho_{m_1 , \ldots ,m_{|\blam|}}$ whose supports are mutually orthogonal. The same holds true for the summands $\rho_{\mu, m_\mu}$ of $\rho_\mu$. This allows us to simplify the entropies:
	\begin{align} \nonumber
		S(\rho^{SSR})& = H(\{ \bp \}) + \sum_{m_1, \dots, m_{|\blam|}} p_{m_1 , \ldots ,m_{|\blam|}} S(\rho_{m_1 , \ldots ,m_{|\blam|}}) \\
		S(\rho_\mu)& = H(\{ \bp_\mu \}) + \sum_{m_\mu} p_{\mu, m_\mu} S(\rho_{\mu, m_\mu}) \; ,
	\end{align}
	where $H$ here is the \emph{standard} (rather than binary) classical Shannon entropy. We can then write
	\begin{align} \nonumber
		& C(\rho^{SSR}) = \left ( \sum_{\mu = 1}^{| \blam |} S(\rho_\mu) \right ) - S(\rho^{SSR})\\ \nonumber
		& = \left ( \sum_{\mu = 1}^{| \blam |} H(\bp_\mu)  \right ) - H(\{ \bp \})\\ \nonumber
		& \,\,\, + \left ( \sum_{\mu = 1}^{| \blam |} \sum_{m_\mu} p_{\mu, m_\mu} S(\rho_{\mu, m_\mu}) \right ) \\ \label{eq:medium_align}
		& - \sum_{m_1, \dots, m_{|\blam|}} p_{m_1 , \ldots ,m_{|\blam|}} S(\rho_{m_1 , \ldots ,m_{|\blam|}}) \; .
	\end{align}
	In the last expression, the first line contains the classical total correlation of the joint probability distribution $\bp$ (in fact, $\bp_\mu$ are its marginals) and is thus nonnegative. Using
	\begin{align} \nonumber
		&S(\rho_{\mu, m_\mu}) \geq  \\
		& \sum_{\substack{m_1, \dots, m_{\mu-1}, \\ m_{\mu + 1} , \dots, m_{| \blam |}}}  \frac{p_{m_1, \dots, m_{\mu}, \dots, m_{| \blam |}} }{p_{\mu, m_\mu}} S(\rho_{\mu, m_1, \dots, m_{\mu}, \dots, m_{| \blam |}} )
	\end{align}
	we can bound the last two lines of Eq.\til\eqref{eq:medium_align} from below with
	\begin{align} \nonumber
			&\sum_{m_1, \dots, m_{| \blam |}}  p_{m_1, \dots, m_{| \blam |}} \cdot \\ \nonumber
			& \,\,\,\,\,\;\;\;\;\;\;\;\;\;\;\;\; \cdot \left [ \left ( \sum_{\mu = 1}^{| \blam |}  S(\rho_{ \mu, m_1, \dots, m_{| \blam |}}  ) \right ) - S(\rho_{m_1 , \ldots ,m_{|\blam|}}) \right ] \\ \nonumber
		& = \sum_{m_1, \dots, m_{| \blam |}}  p_{m_1, \dots, m_{| \blam |}} C^{\blam} (\rho_{m_1 , \ldots ,m_{|\blam|}}) \\
		& = C^{\blam, SSR}_{block} ( \rho ) \; .
	\end{align}
	This proves the second bound in Eq.\til\eqref{eq:target_bounds_ssr}.
\end{proof}

We are interested in the maximization of the modified total correlation with respect to 1P unitary rotations. Following Eq.~\eqref{eq:Iblam_max_def}, we define
\begin{align}
	C^{\blam, SSR}_{proj, \, \max}(\rho) & :=\max_{U}C^{\blam, SSR}_{proj}(U\, \rho \, U^{\dagger}) \\
	C^{\blam, SSR}_{block, \, \max}(\rho) & :=\max_{U}C^{\blam, SSR}_{block}(U\, \rho \, U^{\dagger})~,
\end{align}
where the maximization is over all 1P unitaries $U$ acting on the Fock space. Prop.~\ref{prop:SSR_bounds} directly implies the following bounds: if $[\rho, N] = 0$, then
\begin{equation} \label{eq:SSR_max_ineq}
	C^{\blam}_{\max}(\rho) \geq C^{\blam, SSR}_{proj, \, \max}(\rho) \geq C^{\blam, SSR}_{block, \, \max}(\rho) \; .
\end{equation}
We now restrict to \textit{free} states $\rho$, which have SDs as eigenstates and thus do commute with $N$.

Let us first consider $C^{\blam, SSR}_{proj, \, \max}(\rho)$. The terms $S(\rho^{SSR}_\mu) = S(\rho_\mu)$ can be computed using Prop.~\ref{prop:marginalsoffreestates} and Prop.~\ref{prop:entropyfreestates}, as in the SSR-agnostic problem, as $\rho^{SSR}_\mu = \rho_\mu$ is free. Then we need to understand the dependence of $S(\rho^{SSR})$ on $\rho$'s 1RDM $\gamma$. New challenges are expected to arise from the fact that $\rho^{SSR}$ is \emph{not a free state}, and therefore $S(\rho^{SSR})$ depends on $\rho$'s 1RDM $\gamma$ not only via its spectrum. We will show some examples in the rest of this section.

We now turn to $C^{\blam, SSR}_{block, \, \max}(\rho)$. For any fermionic state $\rho$, $C^{\blam, SSR}_{block}(\rho)$---and thus $C^{\blam, SSR}_{block, \, \max}(\rho)$, too---\emph{vanishes if the partition takes the form $\blam = (\lambda_1, 1, \dots, 1)$}, i.e., if the second, third, \dots, $|\blam|$-th subset of orbitals contain only one orbital each. In fact, in such cases the $\mu$-th local space for fixed local number eigenvalue $m_\mu \in \{ 0,1\}$ is just one-dimensional (for $\mu > 1$). Therefore, each of the projected states $\rho_{m_1, \dots, m_{|\blam|}}$ is a product state over the $|\blam|$ orbital subsystems, and thus $C^{\blam}(\rho_{m_1, \dots, m_{|\blam|}}) = 0$. For $d=2,3$, all partitions take the form $\blam = (\lambda_1, 1, \dots, 1)$. The first nontrivial partition is $\blam = (2,2)$ for $d=4$, which we will address in Sec.~\ref{subsec:SSR_2_2_SD} for SD states.

\subsection{$d=2$, \texorpdfstring{$\boldsymbol{\lambda}$}{lambda} $ = (1,1)$}

We now set out to compute $C^{\blam, SSR}_{proj, \, max}(\rho)$ for free states $\rho$ for the simplest case $d = 2$, $\blam = (1,1)$. Given the reference basis $\{ |1 \rangle, | 2 \rangle \}$ of $\H_1$, we denote with $\{ |0\rangle,|1 \rangle,|2 \rangle,|F \rangle \}$ the corresponding basis of the Fock space, where the phases of the vacuum $\ket{0}$ and the full state $\ket{F} \propto f_1^\dagger f_2^\dagger \ket{0}$ are unimportant.

A free state $\rho$ is specified by its 1P-reduced operator $\gamma$. The SSR-agnostic total correlation reads
\begin{align}\nonumber
C^{\blam}(\rho)  &= S(\rho_{1})+S(\rho_{2})-S(\rho)  \\ \nonumber
&= h(\gamma_{11}) + h(\gamma_{22}) \\ \label{nonsuperselected}
&- h(n_1)-h(n_2)~,
\end{align}
where the $n_j$'s are the eigenvalues of $\gamma$ and $\gamma_{jk}$ are its matrix elements relative to the reference 1P basis. To compute $C^{\blam, SSR}_{block}(\rho)$, we first express $\rho$ in terms of $\gamma$:
\begin{align}\nonumber
\rho &= n_1 n_2 | F \rangle \langle F |  + (1 - n_1) (1 - n_2)  | 0 \rangle \langle 0 |+ \\
&+ n_1(1 - n_2) | \phi_1 \rangle \langle \phi_1 |+ n_2(1 - n_1) | \phi_2 \rangle \langle \phi_2 | \til,
\end{align}
where $| \phi_j \rangle$ are orthonormal eigenvectors of $\gamma$. Then we project $\rho \rightarrow \rho^{SSR}$ as in Eq.~\eqref{eq:SSR_state_projection}:
\begin{align} \nonumber
	\rho^{SSR} &= n_1 n_2 | F \rangle \langle F | + (1 - n_1) (1 - n_2)  | 0 \rangle \langle 0 |  \\ \nonumber
	&+ n_1 (1 - n_2) \Big(|\langle 1 | \phi_1 \rangle|^2 | 1 \rangle \langle 1 | + |\langle 2 | \phi_1 \rangle|^2 | 2 \rangle \langle 2 | \Big) \\ \nonumber
	&+n_2 (1 - n_1)\Big(|\langle 1 | \phi_2 \rangle|^2 | 1 \rangle \langle 1 | + |\langle 2 | \phi_2 \rangle|^2 | 2 \rangle \langle 2 | \Big) \\ \nonumber
	&= n_1 n_2 | F \rangle \langle F | + (1 - n_1) (1 - n_2)  | 0 \rangle \langle 0 |  \\ \nonumber
	&+\underbrace{\Big ( n_1 (1 - n_2) |\langle 1 | \phi_1 \rangle|^2 + n_2 (1 - n_1) |\langle 1 | \phi_2 \rangle|^2 \Big)}_{\alpha} | 1 \rangle \langle 1| \\ 
	&+\underbrace{\Big ( n_1 (1 - n_2) |\langle 2 | \phi_1 \rangle|^2 + n_2 (1 - n_1) |\langle 2 | \phi_2 \rangle|^2 \Big)}_{\beta} | 2 \rangle \langle 2|~.
\end{align}
The eigenvalues $\alpha$ and $\beta$ can be expressed as
\begin{align} \nonumber
\alpha&=\gamma_{11}-n_1 n_2\\
\beta&=\gamma_{22}-n_1 n_2~.
\end{align}
As anticipated, the state $\rho^{SSR}$ is in general \emph{not free}, and its entropy is not a function of the $n_j$'s.

The single orbital-reduced density operators of $\rho^{SSR}$ are the same as those of $\rho$. Their matrix representation relative to the particle-hole basis reads
\begin{equation}
\rho^{SSR}_{j} = \rho_{j} \rightarrow \begin{pmatrix}
\gamma_{jj} & 0 \\
0 & 1-\gamma_{jj} \\
\end{pmatrix}
\text{ for }j=1,2~.
\end{equation}

We can finally compute, and maximize over all $\gamma$ with fixed spectrum $(n_1,n_2)$,
\begin{align} \nonumber
	&C^{\blam, SSR}_{proj}(\rho) = C^{\blam}(\rho^{SSR})\\ \nonumber
	&=S(\rho_{1})+S(\rho_{2})-S(\rho^{SSR})\\ \nonumber
	&= n_1 n_2 \log(n_1 n_2)+ (1-n_1)(1- n_2 ) \log((1-n_1)(1- n_2 ))\\
	&+h(\gamma_{11})+h(\gamma_{22})+ \alpha \log (\alpha) +\beta \log (\beta)~.
\end{align}

In the last expression, terms in the first line only depend on the spectrum of $\gamma$ and can be omitted in the optimization. We then have to maximize the remaining terms, to be compared with Eq.~\eqref{nonsuperselected}:
\begin{align} \nonumber
&h(\gamma_{11})+h(\gamma_{22})+ \\ \label{superselected}
&(\gamma_{11} - n_1 n_2) \log (\gamma_{11} - n_1 n_2) +(\gamma_{22} - n_1 n_2) \log (\gamma_{22} - n_1 n_2)~.
\end{align}
The first line is the function we had to optimize in the non-superselected context. The second line contains new terms. We still have no dependence on off-diagonal matrix elements of $\gamma$. Therefore, the problem is still about optimizing a function of the block spectrum $\bb = (\gamma_{11}\sbs\gamma_{22})$. On its domain $\bset^{\blam}_{\bn}$, this function is still Schur-concave (but, interestingly, not concave, in contrast with the non-superselected case: see Fig.~\ref{SSRcorrelationplot}). Thus, it reaches its maximum value for $\gamma_{11}=\gamma_{22} = (n_1 + n_2)/2$. In particular, this means that the 1P unitary to be applied to an initial free state to maximize the orbital correlation is the same whether the local-number SSR is taken into account or not. The maximal SSR-aware total correlation follows as
\begin{align} \nonumber
	&C^{\blam, SSR}_{proj, max}(\rho) = \\ \nonumber
	&= n_1 n_2 \log(n_1 n_2)+ (1-n_1)(1- n_2 ) \log((1-n_1)(1- n_2 ))\\ \nonumber
	&+2 h \left (\frac{n_1 + n_2}{2} \right ) \\
	&+(n_1+n_2-2 n_1 n_2) \log \left (\frac{n_1+n_2}{2} - n_1 n_2 \right )~.
\end{align}

We compare $C^{\blam}$ and $C^{\blam, SSR}_{proj}$ in Fig.~\ref{SSRcorrelationplot}. There, we can see the effects of enforcing the SSR: the total correlation gets reduced by a factor of two or more, for any free state, while the slices at fixed $\bn$ maintain the same shape.
\begin{figure*}
\centering
\includegraphics[width=0.83\textwidth]{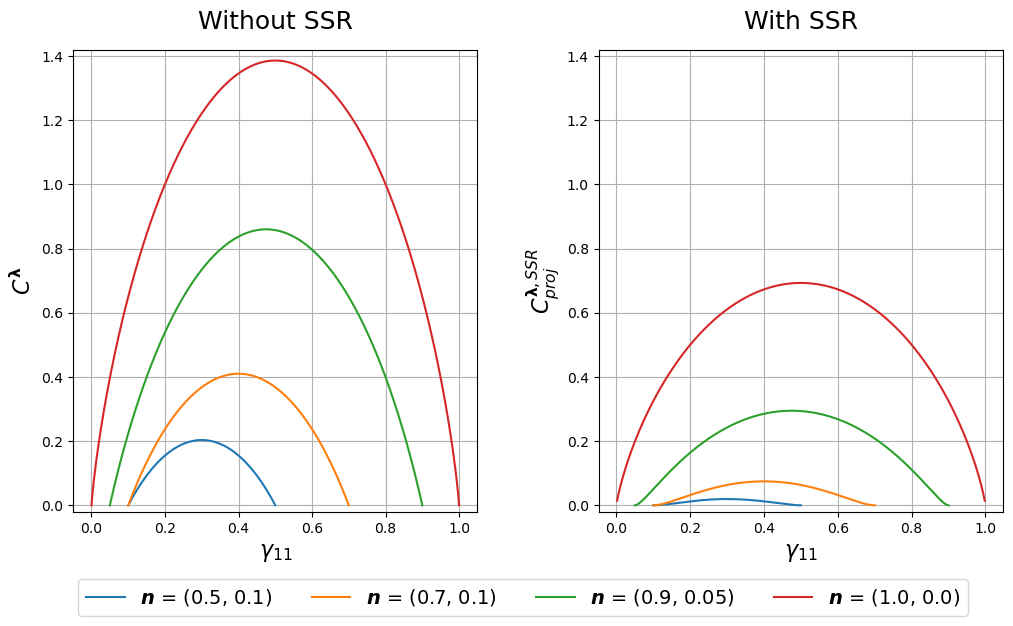}
\caption{The quantities $C^{\blam}$ (left panel) and $C^{\blam, SSR}_{proj}$ (right panel) for $d = 2, \blam = (1,1)$ and for free states, here described by their 1RDM elements $\gamma_{jk}$. For given NONs $\bn$, the constraint $\gamma_{11} + \gamma_{22} = n_1 + n_2$ allows us to express $C^{\blam}$ and $C^{\blam, SSR}_{proj}$ as functions of $\gamma_{11}$ only.}
\label{SSRcorrelationplot}
\end{figure*}

\subsection{$d=3$, \texorpdfstring{$\boldsymbol{\lambda}$}{lambda} $ = (2,1)$}

We now consider the second-to-easiest case, namely $d=3$, $\blam=(2,1)$. $C^{\blam, SSR}_{block}(\rho)$ vanishes for any $\rho$ (see Sec.~\ref{subsec:SSR_mutualinfo}), so we consider $C^{\blam, SSR}_{proj}$. Consider a free state $\rho$ and let $ \gamma = \sum_{j=1}^{3}n_j |\phi_j \rangle \langle \phi_j |$ be the spectral decomposition of its 1P-reduced operator. We define $\overline{\gamma}=\mathbb{I}-\gamma,\overline{n_j}=1-n_j,\delta = \text{det}(\gamma),\overline{\delta} = \text{det}(\overline{\gamma})$. For simplicity, we assume $\delta, \overline{\delta} \neq 0$. We will work with the Fock space basis made up of the following 8 vectors: $|0\rangle,|F\rangle=f^{\dagger}_1f^{\dagger}_2f^{\dagger}_3|0\rangle, |j^P\rangle=f^{\dagger}_j |0\rangle,|j^H\rangle=f_j |F\rangle$. Then $\rho$ can be written as
\begin{align} \nonumber
	\rho &= \overline{\delta} |0 \rangle \langle 0 | \\ \nonumber 
	&+ \overline{\delta} \sum_{j} \frac{n_j}{\overline{n}_j} |\phi_j^P \rangle \langle \phi_j^P | \\ \nonumber
	&+  \delta \sum_j \frac{\overline{n}_j}{n_j} |\phi_j^H \rangle \langle \phi_j^H | \\
	&+ \delta |F \rangle \langle F | ,
\end{align}
where the first line describes the part of $\rho$ with total number of fermions $N=0$, the second line corresponds to $N=1$, and so on. A convenient rewriting is
\begin{align} \nonumber
	\rho &= \overline{\delta} |0 \rangle \langle 0 | \\ \nonumber
	&+ \overline{\delta} \sum_{j,k} (\overline{\gamma}^{-1}-\mathbb{I})_{jk} |j^P \rangle \langle k^P | \\ \nonumber
	&+  \delta \sum_{j,k} (\gamma^{-1}-\mathbb{I})_{kj} |j^H \rangle \langle k^H | \\
	&+ \delta |F \rangle \langle F | \; .
\end{align}
The second (third) line should be thought of as an operator acting on the one-particle (one-hole) three-dimensional subspace as $\overline{\gamma}^{-1}-\mathbb{I}$ ($\gamma^{-1, T}-\mathbb{I}$), up to a prefactor $\overline{\delta}$ ($\delta$); and sending to zero vectors orthogonal to the one-particle (one-hole) subspace.

Computing $\rho^{SSR}$ then turns out to be all about keeping the diagonal blocks of said operators, e.g., $\overline{\gamma}^{-1}\rightarrow (\overline{\gamma}^{-1})^B$, using the same transformation as in Eq.~\eqref{eq:gammablock}:
\begin{align} \nonumber
	\rho^{SSR} &= \overline{\delta} |0 \rangle \langle 0 | \\ \nonumber
	&+ \overline{\delta} \sum_{j,k} \big((\overline{\gamma}^{-1})^B-\mathbb{I}\big)_{jk} |j^P \rangle \langle k^P | \\ \nonumber
	&+  \delta \sum_{j,k} \big((\gamma^{-1})^B-\mathbb{I}\big)_{kj} |j^H \rangle \langle k^H | \\
	&+ \delta |F \rangle \langle F | .
\end{align}
This implies that the eight eigenvalues $\omega_i$ of $\rho^{SSR}$ are given by 
\begin{align} \nonumber
	\omega_1 &= \overline{\delta} \\ \nonumber
	(\omega_2,\omega_3,\omega_4) & =\overline{\delta} ~ \times \textup{ block spectrum of $\overline{\gamma}^{-1}-\mathbb{I}$} \\ \nonumber
	(\omega_5,\omega_6,\omega_7) & =\delta ~ \times \textup{ block spectrum of $\gamma^{-1}-\mathbb{I}$} \\
	\omega_8 & = \delta \; .
\end{align}
These eigenvalues $\omega_i$ can then be used to compute 
\begin{equation} \label{eq:entropy_omega_i}
	S(\rho^{SSR})=-\sum_{i=1}^8\omega_i \log(\omega_i),
\end{equation}
which appears in $C^{\blam, SSR}_{proj}(\rho)$, see Eq.~\eqref{eq:proj_m_info_simplified}.

The natural question arises whether the eigenvalues $\omega_i$, and thus $C^{\blam, SSR}_{proj}(\rho)$, can be expressed solely in terms of the spectra $\bn$ of $\gamma$ and $\bb$ of $\gamma^B$. A numerical analysis has suggested that this is still the case, as for $C^{\blam}(\rho)$. However, we conjecture that this is due to the low dimensionality, $d=3$; and that for higher $d$ examples exist where $C^{\blam, SSR}_{proj}$ cannot be recast as a function over $\bset^{\blam}_{\bn}$.

Notwithstanding, we remark that $C^{\blam, SSR}_{proj}(\rho)$ is significantly harder to treat analytically than $C^{\blam}(\rho)$, even for the present low-dimensional example. For the maximization yielding $C^{\blam,SSR}_{proj,max}(\rho)$, we resorted to a numerical approach. This confirmed that the 1P unitary to be applied to an initial free state to maximize the orbital correlation is the same whether the local-number SSR is taken into account or not. For example, if the initial 1RDM is diag$(n_1, n_2, n_3)$ and $n_1 \geq n_2 \geq n_3$, then the (non-unique) optimal unitary is a discrete Fourier transform on span$(\be_1, \be_3)$. Equivalently, the (non-unique) optimally rotated 1RDM is given by Eq.~\eqref{eq:best_matrix_2,1}. Correspondingly, the numerical result for $C^{\blam,SSR}_{proj,max}(\rho)$ in terms of the NONs of $\rho$ is
\begin{equation}
	C^{\blam,SSR}_{proj,max}(\rho) \stackrel{\text{num.}}{=} 2 h \left ( \frac{n_1 + n_3}{2} \right )  + h(n_2) + \sum_{i=1}^8\omega_i \log(\omega_i) \; ,
\end{equation}
where the eigenvalues $\omega_i$ of the projected optimally rotated free state are
\begin{align} \nonumber
	\omega_1 &= \overline{\delta} \\ \nonumber
	\omega_2 = \omega_3  &= \overline{\delta} \left ( \frac{2-n_1-n_3}{2 \overline{\Delta}}  - 1 \right ) \\ \nonumber
	\omega_4 &= \overline{\delta} \left ( \frac{1}{1 - n_2} - 1 \right ) \\ \nonumber
	\omega_5 = \omega_6 &= \delta \left ( \frac{n_1 + n_3}{2 \Delta} - 1 \right )\\ \nonumber
	\omega_7 &= \delta \left ( \frac{1}{n_2} - 1 \right ) \\
	\omega_8 &= \delta \; .
\end{align}
Here, $\delta = n_1 n_2 n_3$, $\overline{\delta} = (1 - n_1)(1- n_2)(1- n_3)$, $\Delta = n_1 n_3$ and $\overline{\Delta} = (1-n_1)(1-n_3)$.

\subsection{$d=4$, \texorpdfstring{$\boldsymbol{\lambda}$}{lambda} $ = (2,2)$: SD states of two fermions} \label{subsec:SSR_2_2_SD}

Next, we consider the case $d = 4$, $\blam = (2,2)$. This is the simplest case where $C^{\blam, SSR}_{block}$ does not vanish for all states. We focus on SDs $\ket{SD}$ of $N = 2$ fermions, and look for the one, $\ket{SD_{\max}}$, that maximizes $C^{\blam, SSR}_{proj/block}$. It is convenient to think of $C^{\blam, SSR}_{proj/block}(\ket{SD}\bra{SD})$ as a function of the data of the two occupied orthonormal orbitals that constitute $\ket{SD}$; and to exploit the invariance of $C^{\blam, SSR}_{proj/block}$ under local orbital rotations. These techniques yield relatively simple expressions for $C^{\blam, SSR}_{proj}(\ket{SD}\bra{SD})$ and $C^{\blam, SSR}_{block}(\ket{SD}\bra{SD})$, but we still had to perform the maximization numerically, as in the last paragraph. Once again, the numerics showed that the correlation-maximizing SD is the same whether or not the local-number SSR is considered. Namely, the (non-unique) optimally rotated SD follows the construction of Result~\ref{prop:Emax_SD_bipartition}, and is given by
\begin{equation} \label{eq:SD_SSR_2_2}
	\ket{SD_{\max}} = \frac{f^\dagger_1 + f^\dagger_3 }{\sqrt{2}} \frac{f^\dagger_2 + f^\dagger_4 }{\sqrt{2}} \ket{0} \; , 
\end{equation}
where $f_{1/2}^\dagger$ ($f_{3/4}^\dagger$) are creation operators relative to any pair of orthonormal orbitals in the first (second) orbital subspace. This is the simultaneous maximizer for both $C^{\blam, SSR}_{proj}$ and $C^{\blam, SSR}_{block}$ (as well as of $C^{\blam}$). A straightforward computation then yields the following: for any SD $\ket{SD}$ of two fermions in four modes,
\begin{align} \nonumber
	C^{\blam,SSR}_{proj,max}(\ket{SD}\bra{SD}) &\stackrel{\text{num.}}{=} C^{\blam,SSR}_{proj}(\ket{SD_{\max}}\bra{SD_{\max}}) \\ 
	&= \frac{5}{2} \log 2 \\ \nonumber
	C^{\blam,SSR}_{block,max}(\ket{SD}\bra{SD}) &\stackrel{\text{num.}}{=} C^{\blam,SSR}_{block}(\ket{SD_{\max}}\bra{SD_{\max}}) \\
	&= \log 2 \; .
\end{align}

\subsection{Large-$N,d$ asymptotics for SD states} \label{subsec:SSR_asymptotics}

Generalizing and adapting the analysis of the previous paragraph to SDs of $N$ fermions in $d = N + N$ orbitals leads to the following result:

\begin{prop} \label{prop:asymptotics_SSR}
	Let $\blam_N = (N, N)$. Let $f(N)$/$f_{proj}(N)$/$f_{block}(N)$ be the value of $C^{\blam_N}_{\max}$/$C^{\blam_N, SSR}_{proj,max}$/$C^{\blam_N, SSR}_{block,max}$ evaluated at any SD state of $N$ fermions in $2N$ orbitals. Then
	\begin{equation} \label{eq:target_limit}
		\lim_{N \rightarrow \infty} \frac{f_{proj}(N)}{f(N)} = \lim_{N \rightarrow \infty} \frac{f_{block}(N)}{f(N)} = 1 \; .
	\end{equation}
\end{prop}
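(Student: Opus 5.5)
By Result~\ref{prop:Emax_SD_bipartition} and Eq.~\eqref{eq:first_def_of_Imax}, $f(N) = C_{\max} = 2N\log 2$. By Eq.~\eqref{eq:SSR_max_ineq}, $f(N) \geq f_{proj}(N) \geq f_{block}(N)$. It therefore suffices to exhibit, for each $N$, one SD with $C^{\blam_N,SSR}_{block} \geq 2N\log 2 - o(N)$. The natural candidate is the maximizer from Result~\ref{prop:Emax_SD_bipartition},
\begin{equation*}
\ket{SD_{\max}} = \prod_{j=1}^N \frac{f^\dagger_{A,j} + f^\dagger_{B,j}}{\sqrt{2}} \ket{0} .
\end{equation*}
This reproduces the numerically optimal state~\eqref{eq:SD_SSR_2_2} for $N=2$. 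Lower-bounding $f_{block}$ then squeezes $f_{proj}$ as well.

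Expanding the product, each factor places its particle either in $A$ or in $B$ with amplitude $1/\sqrt{2}$. Hence
\begin{equation*}
\ket{SD_{\max}} = 2^{-N/2} \sum_{S \subseteq \{1,\ldots,N\}} \pm \ket{S}_A \ket{S^c}_B ,
\end{equation*}
where $\ket{S}_A$ occupies the modes $A,j$ with $j \in S$ and $\ket{S^c}_B$ occupies $B,j$ with $j \notin S$. The fermionic signs only give phases and do not affect the Schmidt structure.

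Projecting onto local particle numbers $(N_A, N_B) = (m, N-m)$ gives probability $p_m = \binom{N}{m}2^{-N}$. The post-measurement state is a uniform superposition of $\binom{N}{m}$ orthonormal product terms $\ket{S}_A\ket{S^c}_B$ with $|S| = m$. This is a pure state in Schmidt form, so its bipartite total correlation is $2\log\binom{N}{m}$. Therefore
\begin{equation*}
f_{block}(N) \geq 2\sum_{m=0}^N p_m \log\binom{N}{m} = 2\left[N\log 2 - H_{\mathrm{bin}}(N)\right] ,
\end{equation*}
where $H_{\mathrm{bin}}(N) = -\sum_m p_m \log p_m$ is the Shannon entropy of the binomial distribution $\mathrm{Bin}(N,1/2)$.

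The key remaining step is the asymptotic estimate $H_{\mathrm{bin}}(N) = \tfrac12\log(\pi e N/2) + o(1) = O(\log N)$. A crude bound suffices: the support has $N+1$ points, so $H_{\mathrm{bin}}(N) \leq \log(N+1)$. This gives
\begin{equation*}
1 \geq \frac{f_{proj}(N)}{f(N)} \geq \frac{f_{block}(N)}{f(N)} \geq 1 - \frac{\log(N+1)}{N\log 2} \longrightarrow 1 ,
\end{equation*}
which is Eq.~\eqref{eq:target_limit}.

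The only genuinely delicate point is the expansion and block decomposition of $\ket{SD_{\max}}$. One must verify that the local-number projectors select exactly the terms with $|S| = m$. One must also check that anticommutation signs do not spoil the orthonormality of the Schmidt vectors. Both follow because distinct $S$ give distinct occupation patterns in $A$ and in $B$ separately.
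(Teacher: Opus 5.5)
Your proposal is correct and follows the paper's proof. It uses the same chain $f \geq f_{proj} \geq f_{block} \geq C^{\blam_N,SSR}_{block}$ evaluated at the paired-orbital Slater determinant, with the same weights $p_m = \binom{N}{m}2^{-N}$ and block entropies $\log\binom{N}{m}$. Your identity $\sum_m p_m \log\binom{N}{m} = N\log 2 - H_{\mathrm{bin}}(N)$, combined with the support bound $H_{\mathrm{bin}}(N) \leq \log(N+1)$, makes explicit the ``standard computation of the sum'' that the paper leaves unstated.
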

\begin{proof}
Given creation operators $f^\dagger_1, \ldots , f^\dagger_N$ ($f^\dagger_{N+1}, \ldots , f^\dagger_{2N}$) associated to the first (second) orbital subspace, we generalize the SD of Eq.~\eqref{eq:SD_SSR_2_2} as follows:
\begin{equation}
	\ket{SD_N} = \frac{f^\dagger_1 + f^\dagger_{N+1}}{\sqrt{2}} \ldots \frac{f^\dagger_N + f^\dagger_{2N} }{\sqrt{2}} \ket{0} \; .
\end{equation}
According to Result~\ref{prop:Emax_SD_bipartition}, this SD maximizes $C^{\blam_N}$ over all $N$-fermion SDs. Equivalently, if $\ket{SD}$ is an arbitrary $N$-fermion SD,
\begin{align} \nonumber
	f(N) &= C^{\blam_N}_{\max} (\ket{SD} \bra{SD}) \\
	&= C^{\blam_N} (\ket{SD_N} \bra{SD_N}) = 2N \log 2 = O(N) \; .
\end{align}
Also, Eq.~\eqref{eq:SSR_max_ineq} and the definition of $C^{\blam_N, SSR}_{block, max}$ imply
\begin{equation}
	f(N) \geq f_{proj}(N) \geq f_{block}(N) \geq C^{\blam_N, SSR}_{block}(\ket{SD_N} \bra{SD_N}) \; .
\end{equation}
We now show that the right-hand side equals $f(N) + O(\log N)$ and is therefore asymptotic to $f(N)$, thus proving Eq.~\eqref{eq:target_limit}. 

Let $\Pi_{m}$ be the projector onto states with $m$ particles in the first orbital subspace and $N - m$ in the second subspace. For fixed $N$, let $p_m = \braket{SD_N | \Pi_m |SD_N}$, and $\ket{\Phi_m} = \Pi_{m} \ket{SD_N} / \sqrt{p_m}$. Then
\begin{align} \nonumber
	C^{\blam_N, SSR}_{block}(\ket{SD_N} \bra{SD_N}) &= \sum_{m = 0}^{N} p_m C^{\blam_N}(\ket{\Phi_m}\bra{\Phi_m}) \\
	&=  2 \sum_{m = 0}^{N} p_m  S(\rho_{m, 1})  \;,
\end{align}
where $\rho_{m,1}$ is the orbital-reduced state of $\ket{\Phi_m}$ relative to the first subset of orbitals. A straightforward computation yields
\begin{align}
	p_m &= \binom{N}{m} 2^{-N} \\
	S(\rho_{m, 1})  &= \log \binom{N}{m} \; ,
\end{align}
hence
\begin{equation}
	C^{\blam_N, SSR}_{block}(\ket{SD_N} \bra{SD_N}) = 2^{1-N} \sum_{m = 0}^{N} \binom{N}{m} \log \binom{N}{m} \; .
\end{equation}
A standard computation of the sum shows that the above quantity is indeed $2N \log 2 + O(\log N) = f(N) + O(\log N)$.

\hideorblue{We start with $C^{\blam_N, SSR}_{proj}(\ket{SD_N} \bra{SD_N})$. Let $\rho_N = \ket{SD_N} \bra{SD_N}$ and, as usual, denote with $\rho_N^{SSR}$ its projected version relative to the partition $\blam_N$. It is given by
\begin{equation}
	\rho_N^{SSR} = \sum_{m = 0}^{N} \Pi_1^{m} \rho \Pi_1^{m} \; ,
\end{equation}
where $\Pi_1^{m}$ projects onto states with $m$ particles in the first orbital subspace and $N - m$ in the second subspace. Since $\rho_N$ and thus, $\forall \, m$, $\Pi_1^{m} \rho \Pi_1^{m}$ each have rank one, $\rho_N^{SSR} $ has rank at most $N+1$. Therefore, its entropy is bounded as
\begin{equation}
	S(\rho_N^{SSR} ) \leq \log (N+1 ) \; .
\end{equation}
Then, using Cor.~\ref{cor:I_proj}, we get
\begin{align} \nonumber
	C^{\blam_N, SSR}_{proj}(\rho_N) &=  C^{\blam_N}(\rho_N^{SSR}) \\ \nonumber
	& = S(\rho_{N, 1}) + S(\rho_{N, 2}) - S(\rho_N^{SSR}) \\ 
	& \geq S(\rho_{N, 1}) + S(\rho_{N, 2}) - \log (N+1 ) \\
	& = \underbrace{C^{\blam_N}(\rho_N)}_{f(N)=2N \log 2} - \log (N+1 ) \; .
\end{align}
We observe that the correction $- \log (N+1 )$ due to the SSR is negligible if compared to the leading term. Using the bounds of Prop.~\ref{prop:SSR_bounds}, we get
\begin{equation}
	f(N) \geq f_{proj}(N)
\end{equation}
}

\end{proof}
Prop.~\ref{prop:asymptotics_SSR} suggests that the effect of the local-number SSR on the maximal orbital correlation might be negligible in the large-$d$ limit, under appropriate conditions.

A similar conclusion, limited to the quantity $C^{\blam, SSR}_{block}$, is drawn in Ref.~\cite{wisemanvaccaro}. There, multiple copies of a given pure state are considered, and it is argued that `for asymptotically many copies of the state' $C^{\blam, SSR}_{block}$ and $C^{\blam}$ coincide. Similarly, in Ref.~\cite{botero_normal_form_ferm_Gaussian_states} on orbital entanglement in pure fermionic Gaussian states it is stated that the SSR-agnostic correlation measure $C^{\blam}$ `should give a reasonable indication of the usable entanglement'. 

\end{document}